\def\C{\mathbb{C}}
\def\R{\mathbb{R}}
\def\T{\mathbb{T}}
\def\C{\mathbb{C}}
\def\N{\mathbb{N}}
\def\Z{\mathbb{Z}}
\def\bq{\begin{equation}}
\def\eq{\end{equation}}
\def\bqy{\begin{eqnarray}}
\def\eqy{\end{eqnarray}}
\def\bal#1\eal{\begin{align}#1\end{align}}
\def\ncr{\nonumber\\}
\newcommand{\vin}{\rotatebox[origin=c]{+90}{$\in$}}
\newcommand{\veq}{\rotatebox[origin=c]{+90}{$=$}}
\newcommand{\ol}[1]{\overline{#1}}
\let\div\undefined
\newcommand{\transpose}[1]{{}^{t}#1}
\newcommand{\suchthat}[0]{\colon\;}
\DeclareMathOperator{\rank}{rank}
\DeclareMathOperator{\div}{div}
\DeclareMathOperator{\curl}{curl}
\DeclareMathOperator{\dom}{dom}
\DeclareMathOperator{\rng}{ran}
\DeclareMathOperator{\spn}{span}
\newcommand{\fun}[1]{\mathcal{#1}}
\newcommand{\TT}[0]{\mathsf{T}}
\newcommand{\xx}[0]{\mathsf{x}}
\newcommand{\vv}[0]{\mathsf{v}}
\theoremstyle{plain}
\newtheorem{theorem}{Theorem}
\newtheorem{lemma}[theorem]{Lemma}
\newtheorem{proposition}[theorem]{Proposition}
\theoremstyle{definition}
\newtheorem{example}{Example}
\let\oldfinalMaketitle\finalMaketitle
\renewcommand{\finalMaketitle}{
  \oldfinalMaketitle
  \tableofcontents\vspace{\baselineskip}}
\begin{document}

\title{Metriplectic relaxation to equilibria}
  
\author[]{\texorpdfstring{C. Bressan\fnref{fn1}}{C. Bressan}} 
  
\author[label1]{M. Kraus}
\ead{michael.kraus@ipp.mpg.de}
  
\author[label1]{O. Maj}
\ead{omar.maj@ipp.mpg.de}
  
\author[label3]{P. J. Morrison}
\ead{morrison@physics.utexas.edu}
  
\fntext[fn1]{
  Formerly at the Max Planck Institute for Plasma Physics,
  Boltzmannstrasse 2, 85747, Garching, Germany. 
}

\address[label1]{
  Max Planck Institute for Plasma Physics,
  Boltzmannstrasse 2, 85747, Garching, Germany.
}

\address[label3]{
  Department of Physics and Institute for Fusion Studies,
  University of Texas at Austin, Austin TX 78712-1060, USA.
}

\journal{Commun. Nonlinear Sci. Numer. Simul.}


\begin{frontmatter}
  
  \begin{abstract}
    Metriplectic dynamical systems consist of a special combination of a
    Hamiltonian and a (generalized) entropy-gradient flow, such that the
    Hamiltonian is conserved and entropy is dissipated/produced (depending on a
    sign convention). It is natural to expect that, in the long-time limit, the
    orbit of a metriplectic system should converge to an extremum of entropy
    restricted to a constant-Hamiltonian surface.
    In this paper, we discuss sufficient conditions for this to occur.
    Then, we construct a class of metriplectic systems inspired by the
    Landau operator for Coulomb collisions in plasmas, which is included 
    as special case. For this class of brackets, checking the conditions for
    convergence reduces to checking two usually simpler conditions, and we
    discuss examples in detail.
    We apply these results to the construction of relaxation methods for the
    solution of equilibrium problems in fluid dynamics and plasma physics.  
  \end{abstract}


  \begin{keyword}
    Dissipative Dynamical Systems \sep Metriplectic Systems \sep
    Hamiltonian Systems \sep Lyapunov theorem \sep Landau Collision Operator 
    \sep Magnetohydrodynamics \sep Fluid Dynamics 
  \end{keyword}
  
\end{frontmatter}

\section{Introduction}
\label{sec:introduction}

There are two main purposes of this paper:  investigate sufficient conditions
for metriplectic relaxation (reviewed in Section~\ref{sec:metriplectic}) to
occur and  use metriplectic relaxation  to find   equilibria of  fluid dynamics
and plasma physics systems.  Metriplectic dynamical systems, as introduced in
\cite{Morrison1984,pjm84b,Morrison1986},  are designed to formally converge to
an extremum of entropy while being restricted to a constant Hamiltonian surface.
Here we more rigorously examine the conditions for such relaxation and then
construct and investigate metriplectic systems that achieve the relaxation for
finding equilibria  of a collection of fluid and plasma physical systems. In the
remainder of this section, in Section~\ref{ssec:Oequil}  we first give an
overview of the  challenges and previous methods for calculating equilibria,
followed in Section~\ref{ssec:Mrelax} by our and others' previous metriplectic
relaxation methods. 
    
\subsection{Overview of equilibrium calculations}
\label{ssec:Oequil}

The calculation of equilibria of physical systems often leads to ill-posed
nonlinear problems, where the ill-posedness is due to the nonuniqueness of the
solution. Additional constraints are needed to define uniquely the equilibrium
of interest, depending on the application at hand. In some situations,
prescribing enough constraints to determine a unique equilibrium may not be
straightforward. This lack of uniqueness for equilibrium problems is precisely
discussed below in Section~\ref{sec:testcases} for examples taken from fluid
dynamics and magnetohydrodynamics (MHD): equilibria of the Euler equations in
vorticity form reduced to two dimensions \cite[p.488]{Morrison1998},
axisymmetric MHD equilibria \cite{Freidberg2014}, linear and nonlinear Beltrami
fields.

In some cases, after providing additional physical constraints, the equilibrium
conditions can be reformulated as a well-posed mathematical problem. This is
the case, for instance, for the Euler equations, axisymmetric MHD equilibria,
and linear Beltrami fields.

In more complicated situations, such as for nonlinear Beltrami fields and full
MHD equations in three dimensions, the problem of computing an equilibrium point
has no good solution.
The difficulties were shown in \cite{Berk1986} to be related to the
Kolmogorov-Arnold-Moser (KAM) theorem (see for example \cite{Llave2001}).
A mathematical perspective on these difficulties can be found in the introduction
of the paper by Bruno and Laurence \cite{Bruno1996}, cf. also the recent
developments by Enciso et al.\ \cite{Enciso2025spe}. A large fraction of MHD
equilibrium calculations in three dimensions are based on a reformulation of the
problem in which one assumes that the magnetic field is tangent to a family of
nested toroidal surfaces \cite{Bauer1978, Hirshman1983}. 
On the one hand, such a configuration is a natural generalization to
three-dimensions of the confined region in axisymmetric MHD equilibria. 
In addition, by basic considerations of topology, the confinement of a plasma
in a volume bounded by a closed orientable pressure isosurface, where the
pressure gradient is balanced by the electromagnetic force, requires the
surface to be a torus in the simplest case \cite{Kruskal1958a}. Therefore
searching for equilibria with nested toroidal flux surfaces appears to be the
simplest and most appealing approach to the three-dimensional equilibrium
problem. On the other hand, Grad conjectured that such equilibria may not
exist \cite{Grad1967} unless they are axisymmetric or we allow for weak
solutions characterized by singular current sheets localized on specific flux
surfaces, namely the resonant surfaces. The non-existence of smooth
non-axisymmetric equilibria with nested flux surfaces is referred to as the
\emph{Grad's conjecture}, and (to the best of our knowledge) it is still an
open question. Nevertheless, the variational formulation adopted for
equilibria with nested flux surfaces \cite{Bauer1978,Hirshman1983}, in
principle at least, allows for weak solutions
\cite{Garabedian1998,Garabedian2008}, although usually this possibility is not
exploited in state-of-the-art codes such as VMEC \cite{Hirshman1983},
DESC \cite{Dudt2020}, and GVEC \cite{Hindenlang2025}, which use a highly
regular representation of the magnetic and pressure fields. 
(For sake of completeness, we note that the GVEC code has the built-in
possibility of relaxing the regularity of the magnetic field allowing for
current sheets on prescribed surfaces, but this possibility has not been
exploited yet.) The singular current layers that are expected according to the
Grad's conjecture cannot be considered physical; these equilibria are regarded
as computationally efficient proxies for equilibria that may have a
complicated field-line topology in a neighborhood of some resonant surfaces,
but have nested toroidal surfaces elsewhere for good confinement.
This strategy has been extremely successful for the design of stellarators
\cite{Imbert-Gerard2024}, since it reduces significantly the complexity of the
problem. While this approach gives an acceptable representation of the magnetic
field in the core of a stellarator with modest computational cost, it cannot
account for more complicated magnetic field configurations, such as those with
magnetic islands and chaotic field lines, due to the built-in foliation of the
domain by toroidal surfaces. Yet magnetic-field islands and chaotic regions are
relevant in practice and calculations of MHD equilibria beyond the paradigm of
nested toroidal surfaces are needed. An iterative procedure for the calculation
of general MHD equilibria has been outlined by Grad \cite{Grad1958}, and a
similar iterative procedure is implemented in the PIES code \cite{Reiman1986}.
Such iteration schemes are purely heuristics: there is no theoretical control on
the convergence. 

Another approach to the computation of equilibria is based on artificial
relaxation. Relaxation methods solve the Cauchy problem for a fictitious
dissipative evolution law that contains a tailored dissipation mechanism. If the
dissipation mechanism is well designed, the solution of the Cauchy problem, with
a given (well-prepared) initial condition, exists globally in time, has a limit
for $t \to +\infty$, and the limit is an equilibrium of the considered physical
system. The dynamical evolution itself might not be physical, but the solution
should converge to a physical equilibrium as fast as possible. Some care may be
taken to preserve important properties of the solution. For the specific case of
MHD, for instance, one can evolve the magnetic field $B$ according to  
Faraday's equation, 
\begin{equation*}
  \partial_t B + c \curl E = 0, \quad B(0,x) = B_0(x)\,,
\end{equation*}
but with a properly chosen effective electric field $E$, which does not need to
have physical meaning. (Note, Gaussian units are used throughout this paper,
with $c$ being the speed of light in free space.) If the initial condition $B_0$
satisfies $\div B_0 = 0$, then $\div B = 0$ for all $t \geq 0$.
Probably the most intuitive relaxation method can be obtained by choosing
$E = - (U \times B)/c$, where the advecting velocity field $U$ solves the
viscous MHD momentum balance equation \cite{Moffatt1985}. The idea of this
method is physically intuitive: magnetic energy is converted into kinetic energy
and dissipated by viscosity. Since there is no resistivity, magnetic helicity is
preserved and this provides a lower bound for the dissipation of magnetic
energy \cite{Arnold1998,Moffatt2021}. The evolution of the system is not
physically consistent (the viscosity term is usually very simple and resistivity
is zero), but the relaxed state, if it is reached and it is smooth enough, is
guaranteed to be an ideal MHD equilibrium. However, while a relaxation method
in general seeks to find an ideal MHD equilibrium as the long-time limit of
the evolution of a given initial configuration, in most applications, we ask
for the answer to a different question: we seek an equilibrium that is
compatible with given data. For instance one might need to impose given
pressure and current profiles, i.e., the constant value of the pressure and
the concatenated plasma current on the surfaces tangent to the magnetic field
(flux surfaces). This raises the question of relaxing to an equilibrium
compatible with the given data starting from a suitable initial condition.
This problem is related to the concept of \emph{accessibility} of an
equilibrium since the relaxation mechanisms usually entail constraints: the
solution of a relaxation method, formally at least, evolves on the constrained
submanifold that contains the initial condition, but this submanifold may not
contain equilibria compatible with the given data. One therefore needs either
to prepare the initial condition appropriately (by making assumption on the
targeted equilibrium) or to adapt the solution during the evolution, using the
available data.  

For instance, with the choice of electric field $E = -(U\times B)/c$ mentioned
above, the Faraday's equation reduces to Lie-dragging of the magnetic field and
therefore, smooth solutions preserve the magnetic flux and the field-line
topology of the initial condition (frozen-in law \cite{Alfven1942}, cf. also
general MHD textbooks \cite{Freidberg2014,Davidson2001}). In this case, Moffatt
has introduced two different concepts \cite{Moffatt1985, Moffatt2021}: 
\begin{itemize}

\item \emph{Topological equivalence.} Two vector fields $B_0$ and $B_1$ are
  topologically equivalent if there is a diffeomorphism that maps one field
  into the other via push-forward. We may think of topologically equivalent
  fields as one being a smooth deformation of the other. 
    
\item \emph{Topological accessibility}. A vector field $B_1$ is topologically
  accessible from the vector field $B_0$ if $B_1(x) = B(t,x)$ for some
  $t \geq 0$, where $B$ is the solution of the Faraday's equation with
  $E = -(U \times B)/c$ and some velocity field $U$, i.e., the MHD induction
  equation. We note that in its original definition, Moffatt restricted $U$ to
  be solenoidal \cite{Moffatt1985}. 

\end{itemize}
If $B_1$ is topologically accessible from $B_0$ with $U$  sufficiently smooth
(e.g., of class $C^1$ with a $C^2$ flow, as functions of $(t,x)$), then $B_0$
and $B_1$ are also topologically equivalent. In general however, the solution
$B$ of the induction equation may develop a singularity in finite time.
More precisely, the magnetic field may develop tangential discontinuities at
certain surfaces that correspond to current sheets. In fact, according to an
argument put forward by Parker \cite{Parker1972}, the formation of current
sheets is a general occurrence for braided fields, i.e., ``most'' braided
initial conditions should develop current sheets. This is known as the
\emph{Parker's conjecture}. Recently, Enciso and Peralta-Salas
\cite{Enciso2025} have proven that, on axisymmetric toroidal domains, there
exists a set of smooth braided solenoidal vector fields that are \emph{not}
topologically equivalent to any MHD equilibrium. Furthermore, this set is
rather large, in the sense that it is dense in a nonempty open subset of the
space of smooth braided solenoidal fields (equipped  with the $C^\infty$
topology). This suggests that a relaxation method based on the MHD induction
equation should either allow for low-regularity solutions with the possible
formation of current sheets, as conjectured by Parker, or be complemented with
a way to prepare a suitable initial condition. We note that equilibria with
current sheets can be acceptable in some applications as discussed above in
relation to the Grad's conjecture. 

In summary, from the applications point of view, relaxation methods
constructed in this way are not fully satisfactory because of the following
drawbacks: 
(1) not all equilibrium points are accessible from a given initial condition.
(2) The method does not offer any mechanism to control important properties of
the equilibrium such as the pressure profile and current profiles.
(3) The relaxation mechanisms based solely on viscosity do not necessarily lead
to the shortest path from the initial condition down to an equilibrium point.  

An example of a relaxation method based on viscosity is implemented in the HINT
code \cite{Harafuji1989, Suzuki2006}. In HINT, pressure is relaxed with an
ad hoc algorithm, in a separate step, during the magnetic field relaxation. If
resistivity is accounted for in the relaxation of the magnetic field, then the
topology of the magnetic field lines can change, but with finite resistivity
helicity is no longer preserved and there is no lower bound for the magnetic
energy. 

Another relaxation method that seeks a faster way to relax the magnetic energy
is based on the variational principle for the equilibrium conditions (reviewed
in~\ref{sec:VP} for the case of Beltrami fields). This method has been proposed
by Chodura and Schl\"uter \cite{Chodura1981}, cf.\  also Moffatt
\cite[sec.\ 8.2]{Moffatt2021}, and it can be specialized to the case of
nonlinear Beltrami fields \cite{Wiegelmann2012}. The idea is to Lie drag both
the magnetic field and the pressure with an advecting velocity field $U$ chosen
to guarantee the maximum decay rate of the magnetic energy. These ideas are
strictly related to the modern theory of optimal transport of differential forms
\cite{Brenier2018}. 

The idea of Lie dragging both the magnetic field and the pressure has been
exploited in the SIESTA code as well \cite{Hirshman2011}, where each
displacement of both magnetic field and pressure is generated by an
infinitesimal ``Lie dragging step''.

\subsection{Overview of metriplectic relaxation}
\label{ssec:Mrelax}

Metriplectic dynamics is a class of dynamical systems. Its mathematical
structure has associated relaxation methods with desirable properties
for calculating equilibria. Metriplectic dynamics and its concomitant
variational principles for equilibria will be  thoroughly reviewed in Section
\ref{sec:testcases-and-VP}. 
In this subsection we give a  overview of the paper while describing  the usage
in this work, where we explore using {\it artificially} constructed metriplectic
dynamical systems in order to construct relaxation methods for the calculation
of equilibrium  points.    Metriplectic dynamics was  introduced by Morrison
\cite{Morrison1984,pjm84b,Morrison1986} as a generalization of noncanonical
Hamiltonian dynamics with the aim of including dissipative phenomena.
(See \cite{Coquinot2020,pjmU24,pjmZB24,pjmS24,Sato2025,pjmZ24} for recent
developments.) The equation of evolution is constructed in terms of two
algebraic structures: a Poisson bracket \cite{Morrison1998}, which is
antisymmetric and defines the Hamiltonian part of the equation, and a metric
bracket, which is symmetric and accounts for dissipation. In addition to the
brackets, a Hamiltonian function $\fun{H}$ and an entropy function $\fun{S}$ are
given, satisfying appropriate compatibility conditions. As a direct consequence
of the construction, the Hamiltonian $\fun{H}$ is conserved and the entropy
$\fun{S}$ is dissipated (more precisely, it is nonincreasing). The fact that
entropy is a monotonic function of time quantifies the dissipation in the
system. Defining entropy to be nonincreasing is inconsistent with its usual 
physical interpretation as a measure of uncertainty or ``disorder'', which
would require it to be nondecreasing. In this work however, entropy is treated
as a Lyapunov function \cite{Hirsch2013}, and thus we prefer to reverse the
sign and work with a nonincreasing entropy. Many physically relevant
mathematical models have been found to possess a metriplectic structure.
For instance the Vlasov-Maxwell-Landau system \cite{Morrison1986}, various fluid
mechanical systems \cite{pjm84b,pjmZB24,pjmZ24,Materassi2015}, visco-resistive MHD
\cite{Materassi2012}, and dissipative extended MHD \cite{Coquinot2020}.
This structure can also be exploited in order to design numerical schemes that
preserve the key features of the physical model \cite{Kraus2017a,pjmBZ24}. Here
instead we shall use metric brackets as equilibrium solvers. 
 
Not to be confused with metriplectic relaxation is a relaxation method that uses
the Hamiltonian structure. This method, which is based on squaring the
Poisson bracket, was introduced in \cite{Vallis-1989,Carnevale-1990} for
two-dimensional vortical motion of neutral fluids; it  was later generalized so
as  to make it more effective and work in a broader context in Flierl and
Morrison \cite{Flierl2011} and it was applied to reduced MHD problems by
Furukawa and coworkers
\cite{Chikasue2015a,Chikasue2015,Furukawa2017,Furukawa2018} (see \cite{pjmF24}
for recent review). This approach, which has been named ``double bracket''
dynamics or simulated annealing, has different properties as compared to
metriplectic dynamics: with double brackets, the Hamiltonian is dissipated while
the system evolves on the a specific hypersurface (a symplectic leaf) determined
by the constants of motion built into the Poisson brackets (Casimir invariants).
Another early approach is that of Brockett who used a version of the double
bracket for matrices constructed out of the commutator \cite{Brockett1991} (see
also  the work of Bloch and coauthors \cite{Bloch1992, Bloch2013}).  

Since metriplectic dynamics dissipates entropy but preserves the Hamiltonian,
one might expect that a global solution, if it exists, will approach
a minimum of the entropy function restricted to the level set of the
Hamiltonian that contains the initial condition. Specifically, if we denote by
$u$ a point in the phase space $V$ of the system, we might expect that the
solution of a generic metriplectic system with Hamiltonian $\fun{H}(u)$,
entropy $\fun{S}(u)$, and initial condition $u(0) = u_0 \in V$ would  converge 
to a solution of the variational principle 
\begin{equation}
  \label{eq:entropy-principle}
  \min \{\fun{S}(u) \colon u \in V,\; \fun{H}(u) = \fun{H}(u_0)\}.
\end{equation}

Variational principles of the form~(\ref{eq:entropy-principle}) are physically
relevant since, the resulting state of a physical relaxation mechanism can, in
certain cases, be well approximated by the solution of such a minimum entropy
principle. Typically one envisages a situation where the physical evolution of
the system is nearly ideal, that is, dissipation mechanisms are very small, but
such small nonideal effects are sufficient to dissipate one of the ideal
constant of motion, while causing negligible variations in the other constants
of motion. This is referred to as selective decay and plays an important role in
the relaxation of fluids and plasmas to a self-organized state
\cite{Hasegawa1985,Yoshida2002}. In MHD for instance, linear Beltrami fields are
found as a result of processes that dissipate magnetic energy, while
(approximately) preserving magnetic helicity as argued by Woltjer
\cite{Woltjer1958a}. The precise physical relaxation mechanism has been
discussed in detail by Taylor and it is referred to as Woltjer-Taylor
relaxation \cite{Taylor1974,Taylor1986,Qin2012}. In Section~\ref{sec:testcases}
we shall see that equilibrium problems can often be reformulated as a
variational principle of the form~(\ref{eq:entropy-principle}): among the many
possible solutions of the equilibrium conditions, the variational
principle~(\ref{eq:entropy-principle}) selects only those that minimize entropy
on a constant energy hypersurface, and thus reduces significantly the issue of
nonuniqueness of the equilibrium. If the equilibrium problem of interest is
formulated as a variational principle of the form (\ref{eq:entropy-principle}),
then a relaxation method for such a problem should converge to a minimum of
entropy on the constant-energy surface. A significant part of this work is
dedicated to understanding when metriplectic systems have this long-time
convergence property.

When the solution of a metriplectic system has a limit for $t \to +\infty$ and
the limit is a solution of~(\ref{eq:entropy-principle}), we say that  
the  system has \emph{completely relaxed}. Unfortunately, complete relaxation
does not always happen. It depends on the null space of the metric brackets,
which is defined precisely in Section~\ref{sec:metriplectic}. We shall
demonstrate complete relaxation (or lack thereof) by means of numerical
experiments. We shall propose and test a particular class of brackets modeled 
upon Morrison's brackets for the Landau collision operator
\cite{Morrison1984, Morrison1986} and show by means of numerical experiments
that these new brackets completely relax an initial condition. These new
brackets are referred to as collision-like metric bracket.  

Collision-like brackets have the disadvantage of generating integro-differential
evolution equations, that are usually computationally expensive (although
efficient methods exist \cite{Adams2017}). In an attempt to reduce the
computational cost of the relaxation methods, we have introduced a simplified
version of the collision brackets that are local and thus lead to pure partial
differential equations that have the structure of diffusion equations. We
refer to these simplified brackets as diffusion-like. We are able to
recover known relaxation methods, such as the metriplectic bracket based on
Nambu dynamics of \cite{Bloch2013}, which was also obtained and proposed in the
context of vortex dynamics in \cite{Gay-Balmaz2013,Gay-Balmaz2014}, and the
method of Chodura and Schl\"uter \cite{Chodura1981}, as special cases of
diffusion-like brackets. All of the brackets used in this paper follow naturally
from the inclusive 4-bracket construction given in \cite{pjmU24}.  

The remainder of the paper is structured as follows. As noted above, in
Section~\ref{sec:testcases-and-VP} we cover review material: we recall the
precise definition of metriplectic systems in Section \ref{sec:metriplectic} and
describe the equilibrium problems that we consider as test cases together with
corresponding variational principles in Section~\ref{sec:testcases}.
Section~\ref{sec:remarks-relax-equil} presents some mathematical results on
relaxation and the relaxation rate for metriplectic systems, including results
that we are unable to find in the literature.
In Sections~\ref{sec:finite-dim-Lyapunov} and \ref{sec:finite-dim-PL} we prove
extensions of the Lyapunov stability theorem and the Polyak--{\L}ojasiewicz
condition for the rate of relaxation, for finite-dimensional systems with the
inclusion of constraints, respectively, while in Section~\ref{sec:infinite-dim}
we make some comments on the extension of these results to infinite-dimensional
systems. In Section~\ref{sec:simple}, we address simple examples of metric
brackets and study the issue of complete relaxation, both analytically and
numerically. Section \ref{sec:metr-double-brackets} describes brackets,
which we refer to as metric double brackets, that fail to completely relax,
while Section~\ref{sec:projector-based-metric-bracket} describes projection
based brackets that do completely relax. Collision-like brackets are
introduced in Section~\ref{sec:coll-like-metr} with theory presented in
Sections \ref{sec:c-general}, \ref{sec:c-div-grad}, and \ref{sec:c-curl-curl}
and numerical  experiments in  Sections \ref{sec:app-euler} and \ref{sec:app-GS}
for the Euler equations in vorticity form and for the Grad-Shafranov MHD
equilibria, respectively. For these applications complete relaxation of the
solution is critical. Section~\ref{sec:diff-like-metr} is dedicated to
diffusion-like brackets, with their general construction and various forms
given in Sections \ref{sec:d-general}, \ref{sec:d-div-grad}, and
\ref{sec:d-curl-curl}. We shall see that their properties make them most
suitable for applications such as the calculation of \emph{nonlinear} Beltrami
fields, which are considered in Section \ref{sec:app-Beltrami}, for which the
property of complete relaxation is not needed. Finally, we conclude in
Section~\ref{sec:conclusions}.

\section{Metriplectic dynamics and variational principles for equilibria of
  fluids and plasmas} 
\label{sec:testcases-and-VP}

In this section, we review the necessary background material. 
First we briefly recall the definition and basic properties of metriplectic
dynamics. Then we formulate and discuss  examples of equilibrium problems,
and their variational formulations. These problems will be used as a test
bed for the metriplectic relaxation method proposed in this work.

\subsection{Metriplectic dynamics}
\label{sec:metriplectic}

Metriplectic dynamics, being a special kind of continuous-time dynamical system,
is determined by a phase space and a law governing the evolution in time of a
point in the phase space. In this work, we are mainly concerned with
infinite-dimensional metriplectic systems with a phase space given by a Banach
space $V$ of functions over a domain $\Omega \subset \R^d$ with values in
$\R^N$, where $d,N \in \N$. (We use ``domain''  as a shorthand for
``open, connected set''.) We shall however briefly discuss
finite-dimensional examples for the sake of clarity and simplicity. In the
finite-dimensional case the phase space is chosen to be an open connected subset
$\mathcal{Z} \subseteq \R^n$, $n \in \N$, with coordinates $z = (z^i)_{i=1}^n$.

For any $\fun{F} \in C^1(V)$, we denote by $D\fun{F}(u)$ its Fr\'echet
derivative \cite{Hunter2001} at the point $u \in V$. We shall also need the
functional derivative of $\fun{F}$. If $W$ is another Banach space with a
nondegenerate pairing
$\langle \cdot, \cdot \rangle_{V\times W} \colon V \times W \to \R$,
we can define the functional derivative $\delta \fun{F}/\delta u$ of $\fun{F}$
in $W$ as the unique element of $W$, if it exists, such that \cite{Marsden2001}
\begin{equation}
  \label{eq:functional-derivatives}
  D\fun{F}(u)v = \Big\langle v, \frac{\delta \fun{F}(u)}{\delta u}
  \Big\rangle_{V\times W}, \quad
  \forall v \in V.
\end{equation}
Unless otherwise stated, in this work we assume
$V \subseteq L^2(\Omega,\mu;\R^N)$, $W = L^2(\Omega,\mu; \R^N)$, and
the pairing is the $L^2$ product with respect to a given measure
$d\mu = m(x) dx$, where $m$ is a smooth and integrable function and $dx$ the
Lebesgue measure (volume element) on $\Omega$,
\begin{equation*}
  (u,v)_{L^2} = \int_\Omega  u(x) \cdot  v(x) d\mu(x), \quad
  u,v \in L^2(\Omega,\mu;\R^N).
\end{equation*}
(The nontrivial measure is needed in order to accommodate cases such as
Grad-Shafranov equilibria discussed below.)  

A metriplectic dynamical system is specified by giving two functions
$\fun{H}, \fun{S} \in C^\infty(V)$, namely the Hamiltonian and the entropy, 
respectively, together with compatible Poisson and metric brackets on $V$.
We recall the definitions
\cite{Morrison1984,Morrison1986,Morrison1998,Marsden2001,Morrison1982}.  

A \emph{Poisson bracket} on $V$ is a bilinear \emph{antisymmetric} map
\begin{equation*}
  \{\cdot,\cdot\} \colon C^\infty(V) \times C^\infty(V) \to C^\infty(V),
\end{equation*}
such that, for any $\fun{F}$, $\fun{G}$, and $\fun{H}$ in $C^\infty(V)$,
\begin{subequations}
  \label{eq:Poisson}
  \begin{align}
    \label{eq:Leibniz-for-Poisson}
    &\{\fun{F},\fun{G}\fun{H}\} = \{\fun{F},\fun{G}\} \fun{H} +
    \fun{G} \{\fun{F},\fun{H}\}, \\
    \label{eq:Jacobi}
    &\big\{\fun{F},\{\fun{G},\fun{H}\}\big\} +
    \big\{\fun{G},\{\fun{H},\fun{F}\}\big\} +
    \big\{\fun{H},\{\fun{F},\fun{G}\}\big\} = 0.
  \end{align}
\end{subequations}
Equations~(\ref{eq:Leibniz-for-Poisson}) and~(\ref{eq:Jacobi}) are referred to
as the Leibniz identity and the Jacobi identity, respectively. Then a Poisson
bracket defines a Lie algebra structure on $C^\infty(V)$ which in  addition is a
derivation in each argument. 

A \emph{metric bracket} on $V$ is a bilinear \emph{symmetric} map
\begin{equation*}
  (\cdot,\cdot) \colon C^\infty(V) \times C^\infty(V) \to C^\infty(V),
\end{equation*}
such that, for any $\fun{F}$ in $C^\infty(V)$, 
\begin{equation*}
  (\fun{F},\fun{F}) \geq 0.
\end{equation*}
By definition, the Poisson bracket must satisfy Leibniz and Jacobi
identities. Leibniz identity, in particular, implies at least formally that the
bracket can be written in term of the functional derivatives of its arguments,
cf.~\ref{sec:Leibniz} for a precise definition. Usually, the symmetric
bracket does not need to satisfy any condition other then bilinearity,
symmetry, and positive semidefiniteness. However, if one requires the symmetric
bracket to satisfy Leibniz identity, 
\begin{equation}
  \label{eq:Leibniz-for-metric}
  (\fun{F},\fun{G}\fun{H}) = (\fun{F},\fun{G}) \fun{H}
  + \fun{G} (\fun{F},\fun{H}),
\end{equation}
then both the Poisson and the symmetric brackets have a similar representation,
i.e.,  
\begin{subequations}
  \label{eq:J-K-kernels}
  \begin{equation}
    \label{eq:J-kernel}
    \{\fun{F},\fun{G}\} = \sum_{i,j=1}^N \int_\Omega \int_\Omega
    \frac{\delta \fun{F}(u)}{\delta u_{i}} (x)
    \mathscr{J}_{ij}(u;x,x')
    \frac{\delta \fun{G}(u)}{\delta u_j} (x') \, d\mu(x') \, d\mu(x), 
  \end{equation}
  and analogously
  \begin{equation}
    \label{eq:K-kernel}
    (\fun{F},\fun{G}) = \sum_{i,j =1}^{N} \int_\Omega \int_\Omega
    \frac{\delta \fun{F}(u)}{\delta u_i} (x)
    \mathscr{K}_{ij}(u;x,x')
    \frac{\delta \fun{G}(u)}{\delta u_j} (x') \, d\mu(x') \, d\mu(x), 
  \end{equation}
\end{subequations}
where the functional derivatives are computed with respect to the $L^2$ product
with a given measure $\mu$ on $\Omega$.
The kernels $\mathscr{J}(u)$ and $\mathscr{K}(u)$ define an anti-symmetric and a
symmetric, positive semidefinite operator, $J(u)$ and $K(u)$, respectively.
In finite dimensions, Eqs.~(\ref{eq:J-K-kernels}) take the form
\begin{align*}
  \{\fun{F}, \fun{G}\} &= J^{ij}(z)
  \frac{\partial \fun{F}(z)}{\partial z^{i}}
  \frac{\partial \fun{G}(z)}{\partial z^{j}},
  \\
  (\fun{F}, \fun{G}) &= K^{ij}(z) \frac{\partial \fun{F}(z)}{\partial z^i}
  \frac{\partial \fun{G}(z)}{\partial z^j},
\end{align*}
where here the sum over repeated indices ranges to $n$,
$J(z)$ is an  antisymmetric contravariant tensor, and $K(z)$ is a symmetric
positive semidefinite contravariant tensor over the domain
$\mathcal{Z} \subseteq \R^n$.
In particular $J$ is referred to as the Poisson tensor.

The evolution equation for a metriplectic system $u(t) \in V$ is formulated as
an evolution equation for arbitrary functions of $u(t)$, that is,
\begin{subequations}
  \label{eq:metriplectic-system}
  \begin{equation}
    \label{eq:metriplectic-system-equation}
    \frac{d\fun{F}}{dt} = \{\fun{F},\fun{H}\} - (\fun{F},\fun{S}), \quad
    \text{ for all } \fun{F} \in C^\infty(V),
  \end{equation}
  where $\fun{H},\fun{S} \in C^\infty(V)$  are the Hamiltonian and
  entropy functions, respectively, whereas $\{\cdot,\cdot\}$ and $(\cdot,\cdot)$
  are the Poisson and metric bracket on $V$, respectively, satisfying the
  compatibility conditions  
  \begin{equation}
    \label{eq:compatibility}
    \{\fun{F},\fun{S}\} = 0, \quad (\fun{F},\fun{H}) = 0, \quad
    \text{ for all } \fun{F} \in C^\infty(V).
  \end{equation}
\end{subequations}
If both brackets satisfy the Leibniz identity, the evolution equation then reads
\begin{equation*}
  \partial_t u = J(u) \frac{\delta \fun{H}(u)}{\delta u} -
  K(u) \frac{\delta \fun{S}(u)}{\delta u}.
\end{equation*}
In general, both $J(u)$ and $K(u)$ have nontrivial null spaces. Per definition,
the null space of a bracket is identified with that the corresponding operators
$J(u)$ and $K(u)$. We note that, in general, the null space of a bracket depends
on the phase-space point $u$, since $J$ and $K$ depend on $u$. The null space of
the Poisson bracket is due to the noncanonical form that often originates from a
reduction procedure based on the symmetries of the system
\cite{Meyer1973,Marsden1974a} (see for example the texts
\cite{Marsden1999,Holm2009}), while the null space of the metric bracket is due
to the requirement that at least energy is preserved,
cf. equation~(\ref{eq:compatibility}). In finite dimensions, the null space of
$J(z)$ is spanned by the gradients of Casimir invariants and equilibria from the
variational principle align with those of the equations of motion
\cite{Morrison1998}. We note, however, that there are subtleties at points where
the rank of $J(z)$ changes \cite{pjmE86} and equilibria at such points can
possess nearby behavior that is not Hamiltonian \cite{pjmYT17,pjmY20}.  
The null space of $K(z)$ contains at least the gradient of the Hamiltonian, due
to the compatibility condition~\eqref{eq:compatibility}; similar remarks on rank
changing could apply. For a metric bracket defined on a Banach space
$V \subseteq L^2(\Omega, \mu;\R^N)$ and corresponding to a bounded operator
$K(u)$ on $L^2(\Omega, \mu;\R^N)$, the null space at $u \in V$ can be
equivalently characterized as the space of functions $\fun{F}$ over $V$ such
that $\big(\fun{F},\fun{F}\big)(u) = 0$, since this is equivalent to
$\delta \fun{F}(u) / \delta u \in \ker K(u)$.

In general, the vector field
$J(u) \delta \fun{H}/\delta u$ can be viewed as a generalization of Hamiltonian
flow $\omega(u)^{-1} \delta \fun{H}/\delta u$ where the inverse of the
symplectic operator $\omega(u)$ is replaced by a possibly noninvertible Poisson
operator $J(u)$. On the other hand, the vector field
$-K(u) \delta \fun{H}/\delta u$ can be viewed as generalization of a gradient
flow $-G(u)^{-1} \delta \fun{H}/\delta u$ where the inverse of the metric
operator $G(u)$ is replaced by a symmetric, positive \emph{semi}definite
operator.  
Metriplectic dynamics combines the (generalized) symplectic and gradient flows.
Typically, the symplectic part describes the ideal dynamics, while the
gradient flow accounts for a nonideal relaxation mechanism. For this reason,
accepting a slight abuse of terminology, we refer to symmetric brackets
$(\cdot,\cdot)$ with the Leibniz property as \emph{metric brackets}. We shall
always tacitly assume that $\fun{H}$ and $\fun{S}$ are not functionally
dependent, i.e., $\delta \fun{S}(u) / \delta u$ is not everywhere parallel to 
$\delta \fun{H}(u) / \delta u$, otherwise the metric bracket part vanishes
identically.

We are mostly interested in the dynamical systems generated by metric brackets,
i.e., we shall drop the Poisson bracket part,
\begin{subequations}
  \label{eq:metric-system}  
  \begin{equation}
    \label{eq:metric-system-equation}
    \frac{d\fun{F}}{dt} = -(\fun{F},\fun{S}), \quad
    \text{ for all } \fun{F} \in C^\infty(V),
  \end{equation}
  where $\fun{S}$ is the entropy function and the metric bracket satisfies
  the compatibility condition 
  \begin{equation}
    \label{eq:metric-system-compatibility}
    (\fun{F},\fun{H}) = 0, \quad
    \text{ for all } \fun{F} \in C^\infty(V),
  \end{equation}
\end{subequations}
where $\fun{H}$ is the Hamiltonian.

A solution of either~(\ref{eq:metriplectic-system}) or~(\ref{eq:metric-system})
satisfies 
\begin{equation}
  \label{eq:metric-system-properties}
  \frac{d \fun{H}(u)}{dt} = 0, \qquad \frac{d\fun{S}(u)}{dt} \leq 0,
\end{equation}
that is, both system~(\ref{eq:metriplectic-system}) and~(\ref{eq:metric-system})
dissipate entropy on the surface of constant energy (Hamiltonian).

Because of~(\ref{eq:metric-system-properties}) one may expect that solutions of
the variational principle~(\ref{eq:entropy-principle}) are necessarily
equilibria of a metriplectic system. Indeed this is the case and it follows from
the method of Lagrange multipliers \cite{Marsden2001}, which gives a necessary
condition for~(\ref{eq:entropy-principle}): if $u$ is a solution
of~(\ref{eq:entropy-principle}), then there is a constant $\lambda \in \R$ 
(the Lagrange multiplier) such that  
\begin{equation}
  \label{eq:entropy-principle-2}
  D\fun{S}(u) - \lambda D\fun{H}(u) = 0, \quad \fun{H}(u) = \fun{H}_0\,.
\end{equation}
Alternatively, one can write the Lagrange condition using the functional
derivative, if they exist, 
\begin{equation*}
  \frac{\delta \fun{S}(u)}{\delta u} - \lambda \frac{\delta
    \fun{H}(u)}{\delta u} = 0, \quad \fun{H}(u) = \fun{H}_0\,.
\end{equation*}
Equations~(\ref{eq:entropy-principle-2}) constitute a system of two equations
for the pair $(u, \lambda) \in V \times \R$. Let us assume that the set of
solutions 
\begin{equation*}
  \mathfrak{C}_{\fun{H}_0} \coloneqq \{u \in V \colon  \exists \lambda \in \R
  \text{ such that $(u,\lambda)$ solves~(\ref{eq:entropy-principle-2})}\},
\end{equation*}
is nonempty ($\mathfrak{C}_{\fun{H}_0} \not=\emptyset$),
the restriction $\fun{S}$ to $\mathfrak{C}_{\fun{H}_0}$ is bounded from below,
and the minimum is attained, that is, there are points
$u_e \in \mathfrak{C}_{\fun{H}_0}$ where
$\fun{S}(u_e) = \min \{\fun{S}(u) \colon u \in \mathfrak{C}_{\fun{H}_0}\}$.
Then, the solutions of~(\ref{eq:entropy-principle}) correspond to those points
$u_e$ of $\mathfrak{C}_{\fun{H}_0}$ where $\fun{S}$ attains its minimum. 
The set $\mathfrak{C}_{\fun{H}_0}$ is the set of constrained critical points of
$\fun{S}$, that is, of critical points of $\fun{S}$ restricted to the energy
surface $\fun{H}(u) = \fun{H}_0$.

If the symmetric bracket satisfies the Leibniz identity, the Lagrange
condition together with either the compatibility
condition~(\ref{eq:metriplectic-system-equation})
or~(\ref{eq:metric-system-compatibility}) imply that any point in
$\mathfrak{C}_{\fun{H}_0}$, i.e., any constrained critical point of $\fun{S}$,
is an equilibrium point of the metriplectic system, and thus, in particular, any
solution $u_e$ of~(\ref{eq:entropy-principle}) is necessarily an equilibrium
point \cite[sec. 4.1]{Bloch2013}. The converse, however, is not true, since all
constrained critical points are equilibria, not just the minima. In addition,
there can be equilibrium points of either~(\ref{eq:metriplectic-system})
or~(\ref{eq:metric-system}) that are not constrained critical points of
$\fun{S}$. One example is given in Section~\ref{sec:metr-double-brackets} below.
  
The fact that the set of equilibrium points of a metriplectic system can, in
general, be (much) larger than the set $\mathfrak{C}_{\fun{H}_0}$ of constrained
critical points of $\fun{S}$ can be an obstruction to convergence of an orbit
$u(t)$ to a solution of~(\ref{eq:entropy-principle}), and this is important in
some (but not all) applications.
In the next section, we review a few physically relevant equilibrium problems
and discuss their relation to variational principles of the
form~(\ref{eq:entropy-principle}). The main result of the paper is the
construction of appropriate metric brackets that relax a given initial condition
to a solution of such equilibrium problems.

\subsection{Examples of equilibrium problems}
\label{sec:testcases}

In this section, we review the examples of equilibrium problems that we shall
use as test cases for metriplectic relaxation. 
All considered test problems are mathematically ill-posed, because they admit
multiple solutions. In some cases, the ill-posedness can be mitigated by
adding additional physics constraints. We shall also discuss the variational
principles for the considered equilibrium problems.

\subsubsection{Reduced Euler  equations}
\label{sec:Euler-problem}

We begin with the Euler equations reduced to two dimensions 
\cite[p.488 and references therein]{Morrison1998}, which is the simplest of a
hierarchy of models including the reduced MHD model \cite{Yoshida2016}.
Let $x = (x_1,x_2)$ be Cartesian coordinates in a bounded domain
$\Omega \subset \R^2$ with a sufficiently regular boundary $\partial \Omega$.
In $\Omega$, we consider an incompressible flow 
$U = (U_1, U_2) = \big(\partial_2 \phi, -\partial_1 \phi \big)$, given in
terms of a stream function $\phi(x)$  with $\partial_i = \partial/\partial x_i$.
Then $\div U = 0$ and from the definition of the scalar vorticity
$\omega \coloneqq \partial_1 U_2 - \partial_2 U_1$ one obtains the Poisson
equation 
\begin{equation}
  \label{eq:Poisson-eq}
  -\Delta \phi = \omega \text{ in $\Omega$}\,, \quad
  \phi = 0 \text{ on } \partial \Omega\,,
\end{equation}
where $\Delta$ is the Laplace operator in $\R^2$. Vice versa, given $\omega$,
we can solve~(\ref{eq:Poisson-eq}) for $\phi$ and reconstruct the flow $U$.
Hence, the incompressible Euler equations in two-dimensions
\begin{equation*}
  \partial_t U + U \cdot \nabla U = -\nabla p, \quad
  \div U = 0, 
\end{equation*}
with $p$ being the pressure field, amount to an evolution equation for the
scalar vorticity $\omega$, 
\begin{equation*}
  \left\{
  \begin{aligned}
    \partial_t \omega + [\omega,\phi] &= 0,  && \text{ in $\Omega$}, \\
    -\Delta \phi - \omega &= 0, && \text{ in $\Omega$}, \\
    \phi &= 0,  && \text{ on $\partial \Omega$},
    \end{aligned}
  \right.
\end{equation*}
where $[\omega,\phi] \coloneqq \partial_1 \omega\,  \partial_2\phi -
\partial_2 \omega\,  \partial_1 \phi$ 
is the canonical Poisson bracket in  $\R^2$. This model is referred to as the 
reduced Euler equations.

The phase space $V$ of the reduced Euler equations is the space of vorticity
fields, i.e., $u = \omega$, and $\phi = -\Delta^{-1}_{\Omega,0}\, \omega$ is
regarded as a function of $\omega$, given by the inverse of the Laplacian on
$\Omega$ with homogeneous Dirichlet boundary conditions.

The equilibrium problem for the reduced Euler equations then reads 
\begin{equation}
  \label{eq:Euler-equilibrium}
  \left\{
  \begin{aligned}
    [\omega,\phi] &= 0,  && \text{ in $\Omega$}, \\
    -\Delta \phi - \omega &= 0, && \text{ in $\Omega$}, \\
    \phi &= 0,  && \text{ on $\partial \Omega$}.
  \end{aligned}
  \right.
\end{equation}
Problem~(\ref{eq:Euler-equilibrium}) admits many solutions and therefore is
mathematically ill-posed. One can in fact construct a large class of solutions
upon noticing that $[\omega,\phi] = 0$ implies that $\omega$ is constant on the
isolines (contours) of the potential $\phi$. The contours may have many
connected components, each one with a possibly different topology (i.e.,
homeomorphic to a different model space) and the constant value of $\omega$ on
different connected components may be different.
Given a function $f \in C^1(\R)$, we may set $\omega = \lambda f(\phi)$
with a normalization factor $\lambda \in \R$ to be determined; 
in this way we assign the same value of $\omega$ to all connected components of
the same contour of $\phi$. This is a special case which is considered here for
sake of simplicity. Then we consider the problem: find $(\phi,\lambda)$, with
$\lambda \not= 0$, such that 
\begin{equation}
  \label{eq:Euler-semi-linear-eq}
  \left\{
  \begin{aligned}
    -\Delta \phi &= \lambda f(\phi),  && \text{ in $\Omega$}, \\
    \phi &= 0,  && \text{ on $\partial \Omega$}.
  \end{aligned}
  \right.
\end{equation}
Any solution $(\phi,\lambda)$ of~(\ref{eq:Euler-semi-linear-eq})
yields a solution $\omega = \lambda f(\phi)$ of the equilibrium
problem~(\ref{eq:Euler-equilibrium}). The case $\lambda = 0$ leads 
to the trivial equilibrium $\omega = 0$ and it is not considered.

Problem~(\ref{eq:Euler-semi-linear-eq}) is a
``eigenvalue problem'' for a semilinear elliptic equation.
If $f'(y) \leq 0$ (respectively, $f'(y)\geq 0$) for all $y$, the equation has
a unique solution for any $\lambda \geq 0$ (respectively, $\lambda \leq 0$)
\cite{Taylor3}. In the other cases, the solution may not exist for all
$\lambda$; if a solution exists, uniqueness is not guaranteed, e.g., for
degenerate eigenvalues. For instance, when $f(y) = y$,
problem~(\ref{eq:Euler-semi-linear-eq}) reduces to the standard eigenvalue
problem for the Laplace operator with Dirichlet boundary conditions; then, we
have discrete, possibly degenerate, positive eigenvalues
$\lambda_n>0$, $n \in \N$, each with a corresponding finite set of
eigenfunctions $\phi_{n,k}$ depending on the multiplicity of $\lambda_n$.
For $\lambda \leq 0$, the trivial solution $\phi=0$ is the unique solution. 

This, in particular, shows that problem~(\ref{eq:Euler-equilibrium}) is
ill-posed, since there is a rich set of solutions for each choice of $f$, and
many choices of $f$ are possible. In order to mitigate the nonuniqueness
problem, in practice the function $f$, which will be referred to as the
\emph{equilibrium profile}, is prescribed, and among the solutions
of~(\ref{eq:Euler-semi-linear-eq}), the one with the lowest $\lambda$ is
considered. The reformulation of the equilibrium
problem~(\ref{eq:Euler-equilibrium}) into the nonlinear eigenvalue
problem~(\ref{eq:Euler-semi-linear-eq}) with fixed $f$ is good enough in
practice. There are efficient iterative algorithms \cite{Takeda1991} for the
solution of this type of eigenvalue problem with the lowest $\lambda$.  

An alternative reformulation of the equilibrium
problem~(\ref{eq:Euler-equilibrium}) is possible, based on a variational
principle of the form~(\ref{eq:entropy-principle}). For $\fun{H}_0 > 0$ and
$s \in C^2(\R)$ satisfying $s''(y) \not= 0$, $y \in \R$, let us consider the
problem 
\begin{equation}
  \label{eq:VP-Euler}
  \min \{\fun{S}(\omega) \colon \fun{H}(\omega) = \fun{H}_0 \}.
\end{equation}
where
\begin{equation}
  \label{eq:Euler-S-H}
  \fun{S}(\omega) = \int_\Omega s(\omega) dx, \quad  
  \fun{H}(\omega) = \frac{1}{2} \int_\Omega |\nabla \phi|^2 dx,
\end{equation}
are the entropy and Hamiltonian functions, respectively, with $\phi$ depending
on $\omega$ via the Poisson equation~(\ref{eq:Poisson-eq}). The condition on the
entropy profile $s(y)$ implies that $s'(y)$ is a strictly monotonic function,
either decreasing or increasing. In general, $s(y)$ will be chosen \emph{ad hoc}
and does not necessarily have a physical meaning. 
The Hamiltonian $\fun{H}$ is the kinetic energy of the incompressible fluid,
since $|\nabla \phi|^2 = |U|^2$, where $U$ is the flow velocity. 

Problem~(\ref{eq:entropy-principle-2}) in this case reads:
Find $(\omega,\lambda)$, such that
\begin{equation}
  \label{eq:complete-relaxation-vorticity2d}
  \left\{
  \begin{aligned}
    & s'(\omega) - \lambda \phi = 0, \quad \fun{H}(\omega) = \fun{H}_0, \\
    & \text{with $\phi$ solution of~(\ref{eq:Poisson-eq}).}
  \end{aligned}
  \right.
\end{equation}
The set $\mathfrak{C}_{\fun{H}_0}$ of constrained critical points,
cf.\ Section~\ref{sec:metriplectic}, amounts to 
\begin{equation*}
  \mathfrak{C}_{\fun{H}_0} =
  \{ \omega \suchthat (\omega, \lambda)
  \text{ solves~(\ref{eq:complete-relaxation-vorticity2d})
    for some $\lambda \in \R$} \}.
\end{equation*}
All elements of the set $\mathfrak{C}_{\fun{H}_0}$ are solutions of the original
equilibrium problem~(\ref{eq:Euler-equilibrium}). In fact, for
$\lambda \not=0$, we have 
$[\omega, \phi] = \lambda^{-1} [\omega, s'(\omega)]=0$.
The case $\lambda = 0$ is somewhat special, since $s'(\omega) = 0$ with
$s''(y) \not= 0$ implies that $\omega(x) = \omega_c(x) = y_c = $ constant,
where $y_c$ is the unique zero of $s'(y)$; the corresponding
potential $\phi_c$ is given by the solution of problem~(\ref{eq:Poisson-eq})
with constant right-hand side. There is therefore only one solution for
$\lambda = 0$ and this carries the energy
$\frac{1}{2} \|\nabla \phi_c\|^2_{L^2} = \fun{H}_c$ and it is always an
equilibrium, since $\omega_c$ is constant. If $\fun{H}_0 = \fun{H}_c$,  this
solution belongs to $\mathfrak{C}_{\fun{H}_0}$, otherwise $\lambda = 0$ is
not a possible value for the Lagrange multiplier. 

Under the hypothesis $s''(y) \not= 0$, $s'(y)$ is monotonic, and thus an
invertible function of $y \in \R$.
Problem~(\ref{eq:complete-relaxation-vorticity2d}) is
related to the eigenvalue problem~(\ref{eq:Euler-semi-linear-eq}) with
equilibrium profile given by $f(y) = (s')^{-1}(y)$. Precisely, if
$(\omega,\lambda)$ is a solution of~(\ref{eq:complete-relaxation-vorticity2d})
with $\lambda \not=0$, we can define $\tilde{\omega} = \lambda \omega$ and
$\tilde{\phi} = \lambda \phi$, and obtain
from~(\ref{eq:complete-relaxation-vorticity2d}) 
\begin{equation*}
  -\Delta \tilde{\phi} = \lambda \omega = \lambda (s')^{-1}(\lambda \phi) =
  \lambda f(\tilde{\phi}),
\end{equation*}
which shows that $\tilde{\phi}$ solves~(\ref{eq:Euler-semi-linear-eq}) with the
eigenvalue being the same as the Lagrange multiplier $\lambda$.

Among all these equilibria, the minimization in~(\ref{eq:VP-Euler}) selects
those with minimum entropy, thus mitigating the nonuniqueness problem, as
shown in the following special case. As in Yoshida and Mahajan
\cite{Yoshida2002}, we choose $s(y) = y^2/2$, hence $s'(y)=y$ and solutions
of~(\ref{eq:complete-relaxation-vorticity2d}) must necessarily solve the 
eigenvalue problem for the Laplace operator on $\Omega$ with homogeneous
Dirichlet boundary conditions. From standard theory \cite{Taylor1}, we know
that there is an orthonormal basis $\{\phi_{j,k}\}$ in $L^2(\Omega)$ of
eigenfunctions, $-\Delta \phi_{j,k} = \lambda_j \phi_{j,k}$, labeled by
$j \in \N_0$ with $k = 1,\ldots, d_j$ counting the multiplicity and $d_j$
being the dimension of the eigenspace corresponding to the eigenvalue
$\lambda_j > 0$. Then, the set $\mathfrak{C}_{\fun{H}_0}$ comprises all and only
the vorticity fields of the form 
\begin{equation*}
  \omega_j = \lambda_j \phi_j, \quad
  \phi_j = \sum_{k=1}^{d_j} a_{j,k} \phi_{j,k},
\end{equation*}
for any $j \in \N_0$ and $a_j = (a_{j,k}) \in \R^{d_j}$ satisfying the energy
constraint
\begin{equation*}
  \lambda_j a_j^2 = 2 \fun{H}_0.
\end{equation*}
The energy constraint fixes the length of the vector $a_j$ but
not its direction. The function $\fun{S}$ restricted to
$\mathfrak{C}_{\fun{H}_0}$ is given by    
\begin{equation*}
  \fun{S}(\omega_j) = \fun{H}_0 \lambda_j,
\end{equation*}
which is bounded from below, since the spectrum of $-\Delta$ is bounded from
below and $\fun{H}_0$ is a constant. The solutions of~(\ref{eq:VP-Euler})
correspond to the eigenfunctions with minimum eigenvalue (the ground
states).
Usually the eigenspace corresponding to the lowest eigenvalue is one-dimensional
hence we have two solutions that differ only by the sign of the vorticity. In
this example, the variational principle~(\ref{eq:VP-Euler}) picks the solution
of~(\ref{eq:Euler-semi-linear-eq}) with the lowest $\lambda$.  

In this paper, we use metriplectic dynamics in order to solve the
variational principle~(\ref{eq:VP-Euler}). For this particular application it
is essential that the orbit of the chosen metriplectic dynamical system
relaxes completely to a constrained entropy minimum.

\subsubsection{Axisymmetric MHD equilibria}
\label{sec:GS-problem}

A similar equilibrium problem arises from axisymmetric ideal magnetohydrodynamic
(MHD) equilibria of electrically conducting fluids. For MHD, the general
equilibrium condition with zero flow amounts to
\cite{Freidberg2014,Bruno1996,Kruskal1958a,Grad1958,Grad1964}
\begin{equation}
  \label{eq:ideal-mhd}
  J \times B = c\nabla p, \quad \curl B = 4\pi J /c, \quad \div B = 0,
\end{equation}
where $B$ and $J$ are vector fields and $p$ is a scalar field. Physically $B$
and $J$ are the magnetic field and the electric current density, respectively,
while $p$ is the fluid pressure. As noted above, Gaussian units are used with
$c$ being the speed of light in free space. The first equation expresses the
force balance between the Lorentz force $J \times B/c$ and the 
pressure gradient $\nabla p$. The force balance implies the necessary conditions
\begin{equation}
  \label{eq:pressure-constraints}
  B \cdot \nabla p = 0, \quad
  J \cdot \nabla p = 0,
\end{equation}
that is, $p$ is constant on the field lines of both $B$ and $J$.

For axisymmetric solutions, i.e., solutions that have rotational symmetry around
an axis, we introduce cylindrical coordinates $(r,\varphi,z)$ around the
symmetry axis $z$, and from $\div B = 0$ it follows that \cite{Freidberg2014},  
\begin{align*}
  B&= \chi \nabla \varphi + \nabla \psi \times \nabla \varphi\,, 
  \ncr 
   4\pi J/c = \curl B &=
  - \Delta^* \psi \nabla \varphi + \nabla \chi \times\nabla\varphi,
\end{align*}
where $\Delta^* = r [\partial_r(r^{-1} \partial_r)] + \partial_z^2$ is
a linear elliptic second-order differential operator in $(r,z)$ coordinates, the
Grad-Shafranov operator, whereas $\chi(r,z)$, $p(r,z)$, and $\psi(r,z)$ are
real-valued scalar functions. The operator $\nabla$ is the full
three-dimensional gradient. Then axisymmetric equilibria with zero flow must
satisfy the conditions  
\begin{align*}
  &- \Delta^*\psi \nabla \psi - \chi \nabla\chi - 4\pi r^2 \nabla p
  = [\psi, \chi] r \nabla \varphi,
  \\
  &  [\psi,p] = 0,
  \quad  [\chi,p] = 0,
\end{align*}
where the brackets $[\cdot,\cdot]$ are the canonical Poisson brackets in the
$(r,z)$-plane, e.g.,
$[\chi,\psi] = r \nabla\varphi \cdot (\nabla \psi \times \nabla \chi) =
\partial_r \chi \partial_z \psi - \partial_r \psi \partial_z \chi$. 
The first equilibrium condition expresses the force balance
of (\ref{eq:ideal-mhd}), while the latter
two follow from the necessary conditions~(\ref{eq:pressure-constraints}),
respectively. With homogeneous Dirichlet boundary conditions, we can
formulate the problem
\begin{equation}
    \label{eq:Grad-Shafranov-conditions}
    \left\{
    \begin{aligned}
      u \nabla \psi - \chi \nabla\chi - 4\pi r^2 \nabla p &= 0,
      && \text{ in $\Omega$}, \\
      [\psi, p] = 0, \quad [\psi, \chi] &= 0, && \text{ in $\Omega$},\\
      -\Delta^* \psi -u &= 0, && \text{ in $\Omega$}, \\
      \psi &= 0,  && \text{ on $\partial \Omega$},
    \end{aligned}
    \right.
  \end{equation}
where $\Omega \subset \R_+ \times \R$ is a domain in the $(r,z)$ plane
satisfying $r > 0$ in the closure $\ol{\Omega}$ (so that $\Omega$ is bounded
away from the singularity of cylindrical coordinates at $r=0$).
The auxiliary variable $u$ is related to the $\varphi$ component of the current
density, since 
$J_\varphi = J \cdot \nabla \varphi / |\nabla \varphi|
= -c \Delta^* \psi /(4\pi r) = cu /(4\pi r)$.
Without specifying other constraints this problem is ill-posed in the same way
as problem~(\ref{eq:Euler-equilibrium}): the vanishing of the two Poisson
brackets in~(\ref{eq:Grad-Shafranov-conditions}) implies that, if $\nabla \psi$,
$\nabla p$, and $\nabla \chi$ are all nonzero, both $\chi$ and $p$ are constant
on the level sets of $\psi$. However, the functional relation between $\chi$,
$p$ and $\psi$ is undetermined. Fortunately, providing such information is
straightforward. If we prescribe $\chi = \sqrt{\lambda} F(\psi)$ and
$p = \lambda G(\psi)$ for given functions $F,G \in C^1(\R)$ and $\lambda > 0$,
we obtain $u = \lambda [(F^2/2)'(\psi) + 4\pi r^2 G'(\psi)] = \lambda f(r,\psi)$
and the equilibrium problem reduces to
\begin{equation}
  \label{eq:Grad-Shafranov-equation}
  \left\{
  \begin{aligned}
    - \Delta^*\psi &= \lambda f(r,\psi), && \text{ in $\Omega$}, \\
    \psi &= 0, && \text{ on $\partial \Omega$},
  \end{aligned}
  \right.
\end{equation}
which is referred to as the Grad-Shafranov equation, and is the analog
of~(\ref{eq:Euler-semi-linear-eq}).  
(We remark that in realistic applications proper care must be taken to
assign physically meaningful values of $\chi$ and $p$ to different connected
components of the $\psi$-contours.) 
Equation (\ref{eq:Grad-Shafranov-equation}) is an ``eigenvalue problem'' for a 
semilinear elliptic equation and thus the same remarks about the well-posedness
of Eq.~(\ref{eq:Euler-semi-linear-eq}) are valid here. Solving this
eigenvalue problem is the standard way of computing axisymmetric MHD equilibria
in tokamaks \cite{Takeda1991, LoDestro1994, Pataki2013}. As for the Euler
equations, the Grad-Shafranov problem can be considered solved, and it is used
in this work as a benchmark problem.  

As for the case of reduced Euler equilibria, solutions of the axisymmetric
MHD equilibrium conditions~(\ref{eq:Grad-Shafranov-conditions}) can be
characterized by a variational principle of the
form~(\ref{eq:entropy-principle}).  
For a state variable we choose $u(r,z) = (4\pi/c) r J_\varphi(r,z)$ defined
over the bounded domain $\Omega \subset \R_+\times \R$, with $r>0$ on
$\ol{\Omega}$, and we consider the measure $d\mu = r^{-1} drdz$  on $\Omega$.
We assume that the profiles $F,G \in C^2(\R)$ are given so that
$f(r,y) \coloneqq (F^2/2)'(y) + 4\pi r^2 G'(y)$ satisfies
$\partial_y f(r,y) \not= 0$.
Therefore the map $y \mapsto f(r,y)$ is monotonic and has an inverse
$y \mapsto \partial_y s(r,y)$ for any fixed $r$. After integration in $y$
we find a function $s \in C^2(\R_+ \times \R)$, with
$\partial_y^2 s(r,y) \not= 0$ and such that
$\partial_y s(r,\cdot)^{-1} = f(r,\cdot)$. Then we consider the problem
\begin{equation}
  \label{eq:VP-GS}
  \min \{\fun{S}(u) \colon \fun{H}(u) = \fun{H}_0 \},
\end{equation}
with entropy and Hamiltonian
\begin{equation}
  \label{eq:GradShafranov-S-H}
  \fun{S}(u) = \int_\Omega s(r,u) d\mu, \quad
  \fun{H}(u) = \frac{1}{2} \int_\Omega |\nabla_{r,z}\psi|^2  d\mu,
\end{equation}
where $\psi$ is regarded as a function of $u$ given by the solution of the
linear elliptic problem
\begin{equation}
  \label{eq:GradShafranov-psi}
  -\Delta^* \psi = u, \quad
  \psi|_{\partial \Omega} = 0.
\end{equation}
Here $\fun{H}$ amounts to the magnetic energy stored in the
poloidal component $\nabla \psi \times \nabla \varphi$ of the magnetic field.

The functional derivatives are defined with respect to the $L^2$-product
with the measure $\mu$ on $\Omega$, so that
\begin{equation*}
  \frac{\delta \fun{S}(u)}{\delta u} = \partial_y s(r,u), \quad
  \frac{\delta \fun{H}(u)}{\delta u} = \psi.
\end{equation*}
Equation (\ref{eq:entropy-principle-2}) gives
\begin{equation}
  \label{eq:GradShafranov-complete-relaxation}
  \left\{
  \begin{aligned}
    & \partial_y s(r,u) - \lambda \psi = 0, \quad
    \fun{H}(u) = \fun{H}_0, \\
    & \text{with $\psi$ solution of~(\ref{eq:GradShafranov-psi}). }
  \end{aligned}
  \right.
\end{equation}
The set $\mathfrak{C}_{\fun{H}_0}$ of constrained critical points is then
  \begin{equation*}
    \mathfrak{C}_{\fun{H}_0} = \{ u \colon (u, \lambda)
    \text{ solves~(\ref{eq:GradShafranov-complete-relaxation}) for some
      $\lambda>0$}\},
  \end{equation*}
and among the elements of $\mathfrak{C}_{\fun{H}_0}$, those with minimum entropy
are solutions of the entropy principle~(\ref{eq:entropy-principle}).  

Under the assumption on $s(r,y)$, each element $u \in \mathfrak{C}_{\fun{H}_0}$
with $\lambda \not= 0$ corresponds to an axisymmetric equilibrium. In order to
see this, we have to find the fields $p$ and $\chi$ corresponding to $u$.
Given the profiles $F$ and $G$ from which $s$ has been derived, let us define
\begin{equation*}
  \chi = \sqrt{\lambda} F(\lambda \psi), \quad
  p = \lambda G(\lambda \psi),
\end{equation*}
together with the scaled variables $\tilde{u} = \lambda u$ and
$\tilde{\psi} = \lambda \psi$. One can check that $\tilde{u},\tilde{\psi}$
together with $\chi$ and $p$ given above solve
conditions~(\ref{eq:Grad-Shafranov-conditions}). We also have that
$\tilde{\psi}$ is a solution of the Grad-Shafranov equation. 

We remark that in general, the variational principle~(\ref{eq:VP-GS}) can be
formulated for generic profiles $s(r,y)$ and as long as
$\partial_y^2 s(r,y) \not= 0$,
we can still find $f(r,\cdot) = \partial_y s(r,\cdot)^{-1}$ and a
correspondence between~(\ref{eq:GradShafranov-complete-relaxation})
and~(\ref{eq:Grad-Shafranov-equation}). However, for general profiles, $f(r,y)$
cannot be written in terms of $F(y)$ and $G(y)$, since $f$ may not be a
quadratic function of $r$; therefore, one cannot always find $\chi$ and $p$
and thus an equilibrium in the sense of~(\ref{eq:Grad-Shafranov-conditions}). 

As in the case of the reduced Euler equations, for the application of
metriplectic dynamics to the solution of~(\ref{eq:VP-GS}) it  is  essential that
the orbits of the metriplectic system  completely relax to a constrained
entropy minimum.

\subsubsection{Beltrami fields}
\label{sec:beltrami-problem}

Let $\Omega \subset \R^3$ be a bounded simply connected domain, with
sufficiently regular boundary $\partial \Omega$, and let
$n\colon \partial \Omega \to \R^3$ be the outward unit normal to $\Omega$. A
vector field $B \colon \Omega \to \R^3$ is called a \emph{Beltrami field}
(also known as \emph{nonlinear} or \emph{weak} Beltrami field) if it satisfies
the Beltrami conditions 
\begin{equation*}
    (\curl B) \times B = 0, \quad \div B = 0,
  \end{equation*}
which, with the addition of the natural homogeneous boundary condition for a
divergence-free field, lead to \cite{Boulmezaoud2000, Amari2009}
\begin{equation}
  \label{eq:nonlinear-Beltrami}
  \left\{
  \begin{aligned}
    (\curl B) \times B = 0, \quad \div B &= 0, \quad &&\text{in } \Omega, \\
    n \cdot B &= 0, \quad &&\text{on } \partial \Omega.
  \end{aligned}
  \right.
\end{equation}
Solutions of~(\ref{eq:nonlinear-Beltrami}) satisfy the ideal MHD equilibrium
conditions~(\ref{eq:ideal-mhd}) with constant pressure, hence Beltrami fields
are force-free MHD equilibria \cite{Wiegelmann2012}. The force-free condition
can be equivalently rewritten as $\curl B = f B$ for a real scalar function $f$,
referred to as the proportionality factor. From the divergence-free condition,
$\div B = 0$, it follows that $B \cdot \nabla f = 0$, i.e., if $B$ is a smooth
nonlinear Beltrami field corresponding to a smooth non-constant scalar
multiplier $f$, then $f$ is a first integral of $B$. Enciso and Peralta-Salas
have shown that the Beltrami conditions constrain the field in such a way that
nontrivial solutions with a sufficiently regular proportionality factor $f$
exist only if $f$ satisfies a very restrictive condition, so that smooth
nonlinear Beltrami fields are ``rare'' \cite{Enciso2016}. On the other hand, one
can search for solutions with low regularity requirements, e.g.,
$B \in H^1(\Omega)^3$, the Sobolev space of fields $B : \Omega \to \R^3$ such
that $B \in L^2$ and $\nabla B \in L^2$ componentwise. With the nonhomogeneous
boundary condition $n \cdot B = g$ on $\partial \Omega$, an existence result has
been proven for $B \in H^1(\Omega)^3$ and $f \in L^\infty(\Omega)$ by using a
fixed-point argument \cite{Boulmezaoud2000}, but on a simply connected domain
this solution reduces to $B=0$ if $g=0$. For the purposes of this work, we are 
interested in a weaker formulation of the Beltrami condition, which will
become relevant in Section~\ref{sec:app-Beltrami}. For this formulation, we
need to introduce the Sobolev space $H_0(\div,\Omega)$ of $L^2$ vector fields
$w$ with $\div w$ in $L^2$ and $w \cdot n = 0$ on $\partial \Omega$. We also
need the space $H(\curl,\Omega)$ of $L^2$ vector fields $w$ with $\curl w$ in
$L^2$, and its subspace $H_0(\curl,\Omega)$ of vector fields
$w \in H(\curl,\Omega)$ satisfying the homogeneous boundary condition
$w \times n = 0$ on $\partial \Omega$. Specifically, we consider the problem
of finding $B \in H_0(\div,\Omega) \cap H(\curl,\Omega)$, such that
\begin{equation*}
  (\curl B) \times B = 0, \quad \div B = 0.
\end{equation*}
In this formulation the current $J = (4\pi/c) \curl B$ is only required to be
in $L^2$; thus, it can have singularities as long as they are
squared-integrable. Since the space $H_0(\div,\Omega) \cap H(\curl,\Omega)$ is
not convenient for the purposes of a finite element discretization, in our
numerical experiment in Section~\ref{sec:app-Beltrami}, we consider an even
weaker formulation: find  $B \in H_0(\div,\Omega)$, such that 
\begin{equation*}
  j \times H = 0, \quad \div B = 0,
\end{equation*}
where the current $j$ (different than $J= (4\pi/c) \curl B$)
and $H$ are the unique elements in
$H_0(\curl,\Omega)$ such that $(j, k)_{L^2} = (B, \curl k)_{L^2}$ and
$(H,G)_{L^2} = (B,G)_{L^2}$ for any $k,G \in H_0(\curl,\Omega)$.
This formulation, in principle allows for even
stronger current singularities. We are not aware of any existence result for
either of these two formulations of the Beltrami problem. While looking for
solutions with low regularity might appear physically obscure, there are two
reasons to consider them. The first is that existence of smooth equilibria is
an issue, and equilibria with singular currents are acceptable in some
applications as discussed in Section~\ref{ssec:Oequil} in the context of the
Grad's conjecture. The second reason is that these spaces of functions are
natural for modern numerical methods in MHD
\cite[and references therein]{Hu2017, Hu2021}. 
  
A special class of Beltrami fields is given by the solutions of the eigenvalue
problem for the $\curl$ operator: find $(B,\lambda)$, $\lambda \in \R$, such
that 
\begin{equation}
  \label{eq:linear-Beltrami}
  \left\{
  \begin{aligned}
    \curl B &= \lambda B, \quad &&\text{ in $\Omega$,} \\
    n \cdot B &= 0 &&\text{ on $\partial \Omega$.}
  \end{aligned}
  \right.
\end{equation}
The eigenfunctions of this problem will be referred to as \emph{linear} Beltrami
fields and they are necessarily divergence-free. Since there is a countable
family of eigenvalues of the $\curl$ operator, each corresponding to a
finite-dimensional space of eigenfunctions \cite{Yoshida1990a,Boulmezaoud1999},
both problem~(\ref{eq:nonlinear-Beltrami}) and~(\ref{eq:linear-Beltrami}) are
mathematically ill-posed due to nonuniqueness of the solution (although
for (\ref{eq:linear-Beltrami}), this is standard, since it is an eigenvalue
problem). 

Linear Beltrami fields can be characterized by a variational principle of the
form~(\ref{eq:entropy-principle}), which has been proposed by Woltjer
\cite{Woltjer1958a} and later applied to self-organized states in fusion plasmas
by Taylor \cite{Taylor1974}. This variational principle is also central in
multi-region relaxed MHD \cite{Dewar2008, Hudson2012}. Generalizations of
  Woltjer's variational principle have been proposed by Dixon and co-workers,
  including the free-boundary case \cite{Dixon1989}.

For the formulation of the variational principle, let $\fun{H}_0 \in \R$, and
let $V$ be the space of $L^2$ vector fields $u = B$ on $\Omega$ such
$\div B = 0$ and $n \cdot B = 0$ on $\partial\Omega$. Then we consider the
problem 
\begin{equation}
  \label{eq:linear-Beltrami-VP}
  \min \{\fun{S}(B) \colon \fun{H}(B) = \fun{H}_0\},
\end{equation}
with the entropy and Hamiltonian functions given by
\begin{equation}
  \label{eq:Beltrami-S-H}
  \fun{S}(B) = \frac{1}{2} \int_\Omega |B|^2 dx, \quad
  \fun{H}(B) = \frac{1}{2}H_m(B) \coloneqq \frac{1}{2} \int_\Omega A \cdot B dx,
\end{equation}
where $A$ is the vector potential for $B$ defined as the unique solution of the
problem 
\begin{equation}
  \label{eq:problem-A}
  \left\{
  \begin{aligned}
    \curl A &= B, && \text{ in $\Omega$,} \\
    \div A &= 0, && \text{ in $\Omega$,} \\
    n \times A &= 0, && \text{ on $\partial \Omega$}.
  \end{aligned}
  \right.
\end{equation}
The necessary condition for solutions of~(\ref{eq:linear-Beltrami-VP}) reads
\begin{equation}
  \label{eq:Beltrami-complete-relaxation}
  \left\{
  \begin{aligned}
    & B - \lambda A = 0, \quad \fun{H}(B) = \fun{H}_0, \\
    & \text{with $A$ given by~(\ref{eq:problem-A}),}
  \end{aligned}
  \right.
\end{equation}
where $\lambda \in \R$ is the Lagrange multiplier. The set of constrained
critical points of the entropy function is
\begin{equation*}
  \mathfrak{C}_{\fun{H}_0} = \{ B \colon (B,\lambda)
  \text{ solves~(\ref{eq:Beltrami-complete-relaxation}) for some }
  \lambda \in \R\}.
\end{equation*}
From~(\ref{eq:problem-A}) and~(\ref{eq:Beltrami-complete-relaxation}), one has
that each constrained critical point $B \in \mathfrak{C}_{\fun{H}_0}$ is a
linear Beltrami field since $\curl B = \lambda \curl A = \lambda B$,  and the
  same holds true for the corresponding potential, $\curl A = \lambda A$. 

The formal analysis of the variational problem proceeds as in the example of the
reduced Euler  equations with a linear profile $s'(y)$. 
Condition~(\ref{eq:Beltrami-complete-relaxation}) implies that $B \in V$
satisfies the eigenvalue problem~(\ref{eq:linear-Beltrami}). From the theory
of this eigenvalue problem \cite{Yoshida1990a}, we know that there is an
orthonormal basis of eigenfunctions corresponding to
the eigenvalue $\lambda =\lambda_j \in \R \setminus \{0\}$ labeled by
$j \in \N_0$ and with $k = 1, \ldots, d_j$ counting the multiplicity. 
As in the case of the Euler equations, the set $\mathfrak{C}_{\fun{H}_0}$
comprises all and only the vector fields $B_j = \lambda_j A_j$ with
 $A_j = \sum_k a_{j,k} A_{j,k}$, 
  $a_j = (a_{j,k})_k \in \R^{d_j}$, and that satisfy the constraint
$\fun{H}(B)=\fun{H}_0$, which is $2\fun{H}_0 \lambda_j = |a_j|^2$. Different 
from the case of the reduced Euler  equations, both $\fun{H}_0$ and the
eigenvalues of the $\curl$ operator can be negative. Since $|a_j|^2 > 0$, if
$\fun{H}_0 < 0$ (resp. $\fun{H}_0>0$), only the eigenfunctions $B_j$ with
$\lambda_j < 0$ (resp. $\lambda_j > 0$) belong to $\mathfrak{C}_{\fun{H}_0}$.
The entropy function restricted to $\mathfrak{C}_{\fun{H}_0}$ amounts to 
\begin{equation*}
  \fun{S}(B_j) = \fun{H}_0 \lambda_j > 0,
\end{equation*}
and it is always positive even if $\fun{H}_0$ and $\lambda_j$ can be negative.
It is possible to show that the minimum is attained \cite{Laurence1991}.
Hence the solutions of~(\ref{eq:linear-Beltrami-VP}) are the ground states for
the curl operator.  

The variational principle~(\ref{eq:linear-Beltrami-VP}) asks for 
divergence-free fields $B$ with minimum energy $(1/2)\|B\|_{L^2}^2 = \fun{S}(B)$
subject to the constraint that magnetic helicity $H_m(B) = 2 \fun{H}(B)$ is
held constant. Since magnetic helicity is a global constraint,
i.e. $\fun{H}(B)$ takes values in $\R$, $\lambda$ is a constant in $\R$ and
thus the variational principle selects \emph{linear} Beltrami fields. Magnetic
helicity $H_m$ is related to the topology of the field lines of $B$
\cite{Moffatt1969}. Specifically magnetic helicity is the average asymptotic
linking number of the field lines \cite{Arnold2014, Vogel2003}. Nonlinear
Beltrami fields~(\ref{eq:nonlinear-Beltrami}) on the other hand can be
obtained from a variational principle with a much stronger constraint, i.e.,
Beltrami fields are energy minima constrained to configurations of the field
$B$ that are continuous deformation of a prescribed field $B_0$. This
constraint preserves the topology of the field lines, and thus magnetic
helicity as well \cite{Kendall1960}. An overview of this variational principle
is given in \ref{sec:VP} for sake of completeness. 

In this work, we discuss metriplectic dynamical systems on the space $V$
of divergence-free fields. We address in particular convergence of an
orbit to \emph{nonlinear} Beltrami fields. This is an example 
for which complete relaxation of the orbit is not a desired property, since
complete relaxation would leads to linear Beltrami fields. We shall address
instead brackets that have a class of invariants much richer than just the
Hamiltonian.   

From a computational point of view, the direct numerical solution for
Beltrami fields is possible by a variety of techniques
\cite{Amari2009, Hudson2012, Malhotra2019}. Here, we shall not attempt to
compare the performance of these methods with that of the metriplectic
relaxation. Our aim is rather to study the convergence of metriplectic systems
on a physically relevant problem. We note however that relaxation methods for
force-free equilibria are common in several applications \cite{Wiegelmann2012}
and our study eventually aims at improving the rate of convergence of such
methods.

\section{On the relaxation of metriplectic systems}
\label{sec:remarks-relax-equil}

In this section, we present some remarks and mathematical results pertaining to
equilibrium points of metriplectic systems, their stability, and sufficient
conditions for convergence of nearby orbits. 
  
As described above, metriplectic dynamical systems dissipate entropy at constant
energy, cf.\ (\ref{eq:metric-system-properties}). It is therefore central to
ask whether a solution $u(t)$ of a metriplectic system, defined for
$t \in [0,+\infty)$, has a limit for $t \to +\infty$, and whether the limit,
when it exists, is a minimum of entropy on the surface of constant
Hamiltonian $\fun{H}(u) = \fun{H}_0 = \fun{H}(u_0)$, $u_0 = u(0)$ being the
initial state. This is the variational principle~(\ref{eq:entropy-principle}).
For some applications, e.g., the equilibrium problems introduced in
Sections~\ref{sec:Euler-problem} and~\ref{sec:GS-problem}, complete relaxation
of the orbit is essential. Recall, this means that the solution of the
metriplectic system has a limit as $t \to +\infty$ and the limit is a solution
of~(\ref{eq:entropy-principle}). In general for applications it is also useful
to know the rate of convergence to the limit.  
  
In this section we address implications of general metriplectic structure,
i.e.,  properties~(\ref{eq:metric-system-properties}), for the complete
relaxation of the orbit. A first implication is that, since $\fun{S}$ is
monotonically nonincreasing along an orbit, it is a candidate for a Lyapunov
function \cite{Hirsch2013, Henry1981, Temam1998}. This argument is standard for
nondegenerate gradient flows, but adaptation is needed for metriplectic systems,
which have degeneracy. In Section~\ref{sec:finite-dim-Lyapunov} we first recall
some standard arguments for Lyapunov stability, and then discuss their
adaptation to metriplectic systems. Then, in Section~\ref{sec:finite-dim-PL},
we consider another classical tool in the theory of nondegenerate gradient 
flows, the Polyak--{\L}ojasiewicz condition. We develop the
details in the finite-dimensional setting, but make some comments on extension
to infinite dimensions in Section~\ref{sec:infinite-dim}.

\subsection{Finite-dimensional systems: Lyapunov stability}
\label{sec:finite-dim-Lyapunov} 

Let us start by recalling the Lyapunov stability theorem for finite-dimensional
dynamical systems. Consider a generic vector field
$X\colon \mathcal{Z} \to \R^n$ on a domain $\mathcal{Z} \subseteq \R^n$,
$n \in \N$. We only assume that $X$  is locally Lipschitz continuous (locally
Lipschitz for short), that is, every point $z_0 \in \mathcal{Z}$ has a
neighborhood $\mathcal{U}_{z_0}$ where   
\begin{equation*}
  \big|X(z) - X(z')\big| \leq L_{z_0} \big|z-z'\big|, \quad
  z,z' \in \mathcal{U}_{z_0},
\end{equation*}
for a constant $L_{z_0} > 0$, possibly depending on $z_0$, thus we have 
existence and uniqueness for the ordinary differential equation
system $dz/dt = X(z)$. 

Let $z_* \in \mathcal{Z}$ be an equilibrium point, i.e., $X(z_*) = 0$.
A continuous function $\fun{L} \colon  \mathcal{O} \to \R$ defined on an open
subset $\mathcal{O} \subseteq \mathcal{Z}$ containing $z_*$, and differentiable
in $\mathcal{O} \setminus \{z_*\}$ is a Lyapunov function for $X$, if 
\begin{align*}
  \label{eq:L1}\tag{L1}
  X(z) \cdot \nabla \fun{L}(z) \leq 0, \quad
  &z \in \mathcal{O}\setminus \{z_*\};\\
  \label{eq:L2}\tag{L2}
  \text{$\fun{L}(z_*) = 0$ and $\fun{L}(z) > 0$,} \quad &z \not= z_*.
\end{align*}
Such an $\fun{L}$  is a \emph{strict} Lyapunov function for $X$, if it is a
Lyapunov function and satisfies 
\begin{equation*}
  \label{eq:L3}\tag{L3}
  X(z) \cdot \nabla \fun{L}(z) < 0, \quad z \in \mathcal{O}\setminus \{z_*\}.
\end{equation*}
Condition~(\ref{eq:L2}) in particular implies that $z_*$ is an \emph{isolated}
minimum of $\fun{L}$ in $\mathcal{O}$.

The Lyapunov stability theorem 
\cite{Hirsch2013, Marsden2001, Wiggins2003, Moretti2023} states that, if a
Lyapunov function exists, then $z_* \in \mathcal{O}$ is a \emph{stable}
equilibrium point. By definition this means that for any $\varepsilon > 0$
there is $\delta > 0$, such that $|z_0 - z_*| < \delta$ implies
$\big|z(t)-z_*\big| < \varepsilon$ for all $t \geq 0$, with $z(t)$ 
being an integral curve of $X$ with initial condition $z(0) = z_0$.
In addition, if the Lyapunov function is strict, $z_*$ is an
\emph{asymptotically stable} equilibrium point, that is, $z_*$ is a stable
equilibrium point, in the sense defined above, and $\delta>0$ can be chosen so
that $\lim_{t\to +\infty} z(t) = z_*$, for all initial conditions $z_0$
satisfying $|z_0-z_*| < \delta$. 

Although the  Lyapunov stability theorem is well known
\cite{Hirsch2013, Marsden2001, Wiggins2003, Moretti2023}, 
we recall the proof for sake of completeness, since the other results in
Section~\ref{sec:remarks-relax-equil} rely on the same ideas. The various
arguments available in the literature differ essentially only in the final
step in the proof of asymptotic stability. Here we follow 
Moretti \cite{Moretti2023}, which we find particularly clear. Recall that
$B_r(z) = \{z' \in \R^n \colon |z'-z| < r\}$ denotes the open ball of
radius $r > 0$ in $\R^n$.

\begin{theorem}[Lyapunov stability]
  \label{th:Lyapunov-finite-dim}
  Let $X \colon \mathcal{Z} \to \R^n$ be a locally Lipschitz vector field,
  $z_* \in \mathcal{Z}$ an equilibrium point of $X$,
  $\mathcal{O} \subseteq \mathcal{Z}$ an open subset containing $z_*$, and
  $\fun{L} \colon \mathcal{O} \to \R$ continuous in $\mathcal{O}$ and
  differentiable in $\mathcal{O} \setminus \{z_*\}$.
  \begin{itemize}
  \item[(i)] If $\fun{L}$ is a Lyapunov function for $X$, then $z_*$ is a
    stable equilibrium point. 
  \item[(ii)] If $\fun{L}$ is a strict Lyapunov function for $X$, then $z_*$
    is an asymptotically stable equilibrium point.  
  \end{itemize}
\end{theorem}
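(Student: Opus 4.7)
The plan is to use the classical sublevel-set trapping argument and then upgrade it using the strict monotonicity from (L3). For (i), choose $\varepsilon > 0$ small enough that $\overline{B_\varepsilon(z_*)} \subset \mathcal{O}$, and set $m \coloneqq \min\{\fun{L}(z) \suchthat |z-z_*| = \varepsilon\}$; by (L2) and compactness of the sphere, $m > 0$. Continuity of $\fun{L}$ at $z_*$ with $\fun{L}(z_*) = 0$ yields $\delta \in (0,\varepsilon)$ such that $\fun{L}(z) < m$ on $B_\delta(z_*)$. The key observation is that along any integral curve $z(t)$ starting in $B_\delta(z_*)$, (L1) makes $t \mapsto \fun{L}(z(t))$ nonincreasing, so $\fun{L}(z(t)) < m$ for all $t \geq 0$; therefore $z(t)$ cannot cross the sphere $|z-z_*|=\varepsilon$. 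This proves stability and simultaneously guarantees global forward existence of the orbit inside the compact set $\overline{B_\varepsilon(z_*)}$.

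For (ii), the goal is to show convergence to $z_*$ by contradiction, exploiting (L3). By the stability already proven, the orbit remains in the compact set $\overline{B_\varepsilon(z_*)}$, so its $\omega$-limit set is nonempty. Since $t \mapsto \fun{L}(z(t))$ is nonincreasing and bounded below by $0$, it converges to some $\ell \geq 0$, and continuity of $\fun{L}$ forces $\fun{L}(z_\infty) = \ell$ for every $\omega$-limit point $z_\infty$. Suppose for contradiction that some $z_\infty \neq z_*$ exists, and let $t_n \to +\infty$ with $z(t_n) \to z_\infty$. Flowing from $z_\infty$ for a small time $\tau > 0$, condition (L3) yields $\fun{L}(\Phi_\tau(z_\infty)) < \ell$, where $\Phi_t$ is the local flow of $X$. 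Continuous dependence on initial conditions then gives $\fun{L}(z(t_n+\tau)) \to \fun{L}(\Phi_\tau(z_\infty)) < \ell$, which contradicts $\fun{L}(z(t)) \geq \ell$ for all $t \geq 0$. Hence $z_*$ is the only $\omega$-limit point and $z(t) \to z_*$.

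The main obstacle is the continuous-dependence step in (ii): one must justify that $\Phi_\tau$ is well defined on a neighborhood of $z_\infty$ and that $\Phi_\tau(z(t_n)) \to \Phi_\tau(z_\infty)$. This follows from the local Lipschitz hypothesis on $X$ via Gr\"onwall's inequality, but it requires confining the argument to the compact invariant set $\overline{B_\varepsilon(z_*)}$ obtained in (i), on which a uniform Lipschitz constant is available. Everything else is either a compactness/continuity argument for $\fun{L}$ or a direct application of (L1)--(L3); the subtlety lies in weaving these ingredients together so that the strict sign in (L3) genuinely forces convergence rather than merely ruling out orbits that leave the neighborhood.
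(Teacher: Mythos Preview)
Your proof of part~(i) is essentially identical to the paper's: both trap the orbit in a sublevel set of $\fun{L}$ strictly below the minimum of $\fun{L}$ on the sphere $\partial B_\varepsilon(z_*)$, and both conclude global forward existence from boundedness.

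For part~(ii) your argument is correct but genuinely different from the paper's. You use the $\omega$-limit set approach: the limit set is nonempty by compactness, $\fun{L}$ is constant $=\ell$ on it, and any $\omega$-limit point $z_\infty\neq z_*$ yields a contradiction via continuous dependence, since flowing from $z_\infty$ for a short time strictly decreases $\fun{L}$ below $\ell$. The paper instead argues directly that $\ell=0$: if $\ell>0$, continuity of $\fun{L}$ keeps the orbit outside some ball $B_r(z_*)$, so it lives in the compact annulus $r\le|z-z_*|\le\varepsilon$ where $X\cdot\nabla\fun{L}\le -M<0$ by~(L3), forcing $\fun{L}(z(t))\le\fun{L}(z_0)-Mt\to-\infty$. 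Having $\ell=0$, the paper then runs a separate sublevel-set argument to show that the sets $\{\fun{L}<\lambda\}$ shrink into arbitrarily small balls as $\lambda\to0$, which gives $z(t)\to z_*$. Your route is cleaner and avoids this second step, at the cost of invoking continuous dependence on initial data (which the paper never uses). The paper's route is more elementary in that it uses only compactness and the integral form of~(L3), and it exposes a quantitative mechanism (the linear-in-$t$ decay bound) that foreshadows the later Polyak--{\L}ojasiewicz arguments in the same section; indeed the paper remarks that the various textbook proofs ``differ essentially only in the final step in the proof of asymptotic stability.''
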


\begin{proof} 
  (i) Stability. Since $\mathcal{O}$ is open, we can choose
  $\varepsilon > 0$ so small that the ball $B_\varepsilon(z_*)$ is
  contained in the neighborhood $\mathcal{O}$. On the boundary
  $\partial B_\varepsilon$, $\fun{L}(z) > 0$ because of~(\ref{eq:L2}). Let
  $\fun{L}_\varepsilon \coloneqq
  \min_{z \in \partial B_\varepsilon(z_*)} \fun{L}(z)$.
  The minimum exists since $\partial B_\varepsilon(z_*)$ is compact.
  Let us now choose $\delta > 0$ so small that 
  $\fun{L}(z) < \fun{L}_\varepsilon$ for $z \in B_\delta(z_*)$.
  This is possible since $\fun{L}$ is continuous and $\fun{L}(z_*) = 0$.
  For any $z_0 \in B_\delta(z_*)$, let $z(t)$, $t \in [-\tau_0, +\tau_0]$ be
  an integral curve  of the vector field $X$ with initial condition
  $z(0) = z_0$. Assumption~(\ref{eq:L1}) implies that,
  \begin{equation*}
    \fun{L}\big(z(t)\big) \leq \fun{L}(z_0) < \fun{L}_\varepsilon,\;
    \text{for all $t \in [0, \tau_0]$.}
  \end{equation*}
  The function $t \mapsto |z(t) - z_*|$ is continuous; hence, if there is a
  time $t_o \in [0, \tau_0]$  at which $|z(t_o) - z_*| \geq \varepsilon$,
  the intermediate value theorem implies that there is $t_e \in (0,t_o]$ at
  which $|z(t_e) - z_*| = \varepsilon$ and thus
  $\fun{L}\big(z(t_e)\big) \geq \fun{L}_\varepsilon$, and this is a
  contradiction. Therefore $|z(t) - z_*| < \varepsilon$ for all
  $t \in [0, \tau_0]$. Since the integral curve stays in a bounded subdomain,
  the solution can be extended to the interval $[-\tau_0, +\infty)$ and
  $|z(t) - z_*| < \varepsilon$ for all $t \in [0, +\infty)$.
      
  (ii) Asymptotic stability. If $\fun{L}$ is a strict Lyapunov function, then
  in particular, it is a Lyapunov function and thus $z_*$ is a stable
  equilibrium point. We can choose $\varepsilon_0 > 0$ and a $\delta_0 > 0$ such
  that any integral curve $z(t)$ with $z(0) \in B_{\delta_0}(z_*)$ stays in
  $B_{\varepsilon_0}(z_*)$. 

  We want to show that for any $z_0 \in B_{\delta_0}(z_*)$ and for any
  $\varepsilon \in (0, \varepsilon_0)$ there is a time $T_\varepsilon > 0$ such
  that the integral curve $z(t)$ with initial condition $z(0) = z_0$ satisfies
  $|z(t) - z_*| < \varepsilon$ for all $t > T_\varepsilon$.

  Uniqueness of the orbit passing through a given point implies that,
  if $z_0 \not= z_*$, then $|z(t) - z_*| > 0$ (since $z(t) = z_*$ is a
  solution) and thus $d \fun{L}\big(z(t)\big) / dt < 0$ for all $t > 0$.  
  Therefore function $t \mapsto \fun{L}\big(z(t)\big)$ is strictly
  monotonically decreasing, and it is bounded form below, hence it has a limit
  \begin{equation*}
    \fun{L}\big(z(t)\big) \to \ell = \inf_{t\geq 0} \fun{L}\big(z(t)\big)
    \geq 0,
  \end{equation*}
  The limit must be $\ell = 0$. If not, $\fun{L}\big(z(t)\big) \geq \ell > 0$
  for all $t > 0$, and continuity of $\fun{L}$ implies that there a radius
  $r \in (0, \varepsilon_0)$ such that $|z(t) - z_*| > r$. This leads to a
  contradiction, since,  if $z(t)$ stays in the compact region
  $\mathscr{R} = \{z \colon r \leq |z-z_*| \leq \varepsilon\}$ for all
  $t \geq 0$, then with
  $-M = \max_{z \in \mathscr{R}} X(z) \cdot \nabla \fun{L}(z)<0$, 
  \begin{equation*}
    \fun{L}\big(z(t)\big) = \fun{L}(z_0) + \int_0^t
    X\big(z(s)\big) \cdot \nabla \fun{L}\big(z(s)\big) ds
    \leq \fun{L}(z_0) - Mt.
  \end{equation*}
  For $t > \fun{L}(z_0)/M > 0$, we have $\fun{L}\big(z(t)\big) < 0$, which is
  impossible. Hence, the limit must be $\ell = 0$, that is, for every
  $\lambda > 0$ there is $T_\lambda > 0$ such that
  $\fun{L}\big(z(t)\big) < \lambda$ for $t > T_\lambda$. 
  
  We claim that for any $\varepsilon > 0$ we can find
  $\lambda = \lambda_\varepsilon>0$ such that
  \begin{equation*}
    A_\lambda \coloneqq \{ z \in B_{\varepsilon_0}(z_*) \colon
    \fun{L}(z) < \lambda \},
  \end{equation*}
  is contained in the ball $B_{\varepsilon}(z_*)$, i.e.
  $A_\lambda \subset B_\varepsilon(z_*)$. If this is the case, corresponding
  to $\lambda_\varepsilon$, there is a time $T_\varepsilon>0$ such that,
  for all $t > T_\varepsilon$,
  $z(t) \in A_{\lambda_\varepsilon} \subset B_\varepsilon(z_*)$, which is the
  thesis. Therefore it remains to prove the claim. Let us assume that the
  claim is false, i.e. there is a value $\varepsilon_* > 0$ such that, for all
  $\lambda$, there is at least one point $\tilde{z}_\lambda$ that satisfies
  the conditions $\fun{L}(\tilde{z}_\lambda) < \lambda$ and 
  $\varepsilon_* \leq |\tilde{z}_\lambda - z_*| \leq \varepsilon_0$.
  Upon choosing $\lambda = 1/n$ for $n \in \N$ we obtain a sequence
  $z_n = \tilde{z}_{1/n}$, which belongs to a compact set. Hence there is a
  converging subsequence $z_{n_k} \to \tilde{z}_*$, and
  $\varepsilon_* \leq |\tilde{z}_* - z_*| \leq \varepsilon_0$ so that
  necessarily $\fun{L}(\tilde{z}_*) > 0$. On the other hand, we have
  $\fun{L}(z_{n_k}) < 1 / n_k \to 0$ and, by continuity of $\fun{L}$,
  $\fun{L}(\tilde{z}_*) = 0$, which is a contradiction.  
\end{proof}

We want to apply Theorem~\ref{th:Lyapunov-finite-dim} to metriplectic vector
fields of the following form:  
\begin{equation}
  \label{eq:X-metripl}
  X(z) = J(z)\nabla \fun{H}(z) - K(z) \nabla \fun{S}(z).
\end{equation}
For comparison, we also address the case of a standard nondegenerate gradient
flow
\begin{equation}
  \label{eq:X-grad-flow}
  X(z) = - \nabla \fun{S}(z),
\end{equation}
with entropy $\fun{S} \in C^2(\mathcal{Z})$.

First, recall that in the case of nondegenerate gradient
flows~(\ref{eq:X-grad-flow}), with $\fun{S} \in C^2(\mathcal{Z})$,
the function $\fun{L}(z)=\fun{S}(z)-\fun{S}(z_*)$ is a strict Lyapunov function
in a neighborhood of any \emph{isolated}, local minimum $z_*$ of $\fun{S}$
\cite[Proposition~15.0.2]{Wiggins2003}. 
Therefore, the Lyapunov stability theorem implies that any
integral curve of the gradient flow with initial condition near an isolated
entropy minimum converges to that entropy minimum for $t \to + \infty$
(asymptotic stability). This result is a \emph{local version} of the property
we called complete relaxation.

Now, consider the case of a metriplectic vector field~(\ref{eq:X-metripl}).
If $z_* \in \mathcal{Z}$ is an equilibrium point, the function
$\fun{L}(z) = \fun{S}(z) - \fun{S}(z_*)$ satisfies condition~(\ref{eq:L1}),
because of the general properties of metriplectic systems, 
cf.~(\ref{eq:metric-system-properties}). If the entropy function
$\fun{S}$ has an isolated minimum at $z_* \in \mathcal{Z}$,  $\fun{L}(z)$ is a
Lyapunov function of the system in a neighborhood $\mathcal{O}$ of $z_*$.
Therefore $z_*$ is a stable equilibrium. However, an orbit $z(t)$ can converge
to $z_*$ only if the initial condition $z_0 = z(0)$ and the local entropy
minimum $z_*$ belong to the same energy isosurface, i.e., 
$\fun{H}(z_0) = \fun{H}(z_*)$. This is a necessary condition that follows from
the continuity of $\fun{H}$: if $z(t) \to z_*$, passing to the limit in
$\fun{H}(z_0) = \fun{H}(z(t))$ yields $\fun{H}(z_0) = \fun{H}(z_*)$. This
observation leads to the following conclusion: 

\begin{proposition}
  \label{th:no-strict-Lyapunov}
  Let $X \colon  \mathcal{Z} \to \R^n$ be a locally Lipschitz vector field on a
  domain $\mathcal{Z} \subseteq \R^n$, $z_* \in \mathcal{Z}$ an equilibrium
  point of $X$, and $\fun{H}, \fun{S} \in C^1(\mathcal{Z})$ such that
  \begin{enumerate}
  \item $X(z) \cdot \nabla \fun{H}(z) = 0$ and
    $X(z) \cdot \nabla \fun{S}(z) \leq 0$, $z \in \mathcal{Z}$;
  \item $\nabla \fun{H}(z_*) \not= 0$.
  \end{enumerate}
  Then, if $\fun{L} = \fun{S} - \fun{S}(z_*)$ satisfies~(\ref{eq:L2}) in a
  neighborhood $\mathcal{O} \subseteq \mathcal{Z}$ of $z_*$, there is at least
  one point $z' \in \mathcal{O} \setminus \{z_*\}$ such that 
  $X(z') \cdot \nabla \fun{S}(z') = 0$. 
\end{proposition}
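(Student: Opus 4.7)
The plan is a proof by contradiction, leveraging Theorem~\ref{th:Lyapunov-finite-dim}. Assume, toward a contradiction, that no such $z'$ exists, i.e.\ $X(z) \cdot \nabla \fun{S}(z) < 0$ for every $z \in \mathcal{O} \setminus \{z_*\}$. Combined with hypothesis~1 of the proposition, this means $\fun{L} = \fun{S} - \fun{S}(z_*)$ satisfies~(\ref{eq:L3}) on $\mathcal{O}$; together with the assumed~(\ref{eq:L2}) (and the sign condition from hypothesis~1, which gives~(\ref{eq:L1})), $\fun{L}$ would be a \emph{strict} Lyapunov function for $X$ at $z_*$.

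By Theorem~\ref{th:Lyapunov-finite-dim}(ii), $z_*$ would then be asymptotically stable: there exists $\delta > 0$ such that, for any initial condition $z_0 \in B_\delta(z_*)$, the corresponding integral curve $z(t)$ of $X$ is defined for all $t \geq 0$, remains in a neighborhood of $z_*$ contained in $\mathcal{O}$, and satisfies $z(t) \to z_*$ as $t \to +\infty$. The conservation property $X \cdot \nabla \fun{H} \equiv 0$ from hypothesis~1 yields $\fun{H}\bigl(z(t)\bigr) = \fun{H}(z_0)$ for every $t \geq 0$; passing to the limit and using continuity of $\fun{H}$ at $z_*$ gives $\fun{H}(z_0) = \fun{H}(z_*)$. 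Since $z_0$ was an arbitrary point of $B_\delta(z_*)$, $\fun{H}$ must be constant on the open ball $B_\delta(z_*)$, hence $\nabla \fun{H}(z_*) = 0$, contradicting hypothesis~2. The contradiction forces the existence of at least one $z' \in \mathcal{O} \setminus \{z_*\}$ with $X(z') \cdot \nabla \fun{S}(z') = 0$.

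I expect no substantive obstacle; the entire content of the argument is the intuitive observation that strict relaxation to $z_*$ from an open neighborhood would force every nearby initial datum to lie on the energy level set through $z_*$, which is incompatible with $\fun{H}$ having a nonzero gradient there. The only bookkeeping points to verify are that Theorem~\ref{th:Lyapunov-finite-dim}(ii) applies with the present regularity ($\fun{S} \in C^1$ suffices, since the theorem only needs continuity in $\mathcal{O}$ and differentiability in $\mathcal{O} \setminus \{z_*\}$) and that forward-global existence of the orbit is available, which is automatic because the stability conclusion traps it in a compact subset of $\mathcal{Z}$ where $X$ is Lipschitz.
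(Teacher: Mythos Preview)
Your proof is correct and follows essentially the same contradiction argument as the paper: assume strict dissipation away from $z_*$, invoke Theorem~\ref{th:Lyapunov-finite-dim}(ii) to get asymptotic stability, then use conservation of $\fun{H}$ along orbits converging to $z_*$ to force $\fun{H}$ constant on a ball, contradicting $\nabla\fun{H}(z_*)\neq 0$. The paper's version differs only cosmetically, introducing a second radius $\delta''$ on which $\nabla\fun{H}\neq 0$ before reaching the contradiction, whereas you conclude $\nabla\fun{H}(z_*)=0$ directly from constancy on the ball.
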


\begin{proof} 
  By contradiction, let us assume that $X \cdot \nabla \fun{S} < 0$ in
  $\mathcal{O} \setminus \{z_*\}$. Then hypothesis 1 implies that
  $\fun{L}(z) = \fun{S}(z) - \fun{S}(z_*)$ satisfies the conditions for a
  strict Lyapunov function for $X$. For the Lyapunov stability theorem, there
  exists $\delta' >0$ such that for any $z_0$ with $|z_0-z_*| < \delta'$
  the orbit $z(t)$ of the dynamical system $dz/dt = X(z)$ with initial
  condition $z(0) = z_0$ exists for all $t \geq 0$ and  $z(t) \to z_*$ as
  $t \to +\infty$. Hypothesis 1 also implies that $\fun{H}$ is a constant of
  motion and it is continuous, therefore $\fun{H}(z_0) = \fun{H}(z_*)$.
    
  From hypothesis 2 and the continuity of the derivative $\nabla \fun{H}$ we
  can find a ball of radius $\delta'' > 0$ around $z_*$ where
  $\nabla \fun{H} \not= 0$.

  We choose $\delta < \min \{\delta',\delta''\}$. In the ball of radius
  $\delta$ centered at $z_*$ there is at least one point $z_0$ such that
  $\fun{H}(z_0) \not= \fun{H}(z_*)$. If not, then $\fun{H}$ is constant in the
  ball, and this is not possible since $\nabla \fun{H} \not= 0$. On the other
  hand, since $|z_0-z_*| < \delta < \delta'$ we must have
  $\fun{H}(z_0)=\fun{H}(z_*)$, which is a contradiction.
\end{proof}

For metriplectic systems, hypothesis 1 is verified,
cf.~(\ref{eq:metric-system-properties}). Hypothesis 2 holds away from critical
points of $\fun{H}$, which are usually isolated. Therefore, this proposition
implies that the entropy $\fun{S}$ of a metriplectic system in most cases
(specifically under hypothesis 2) cannot be used as a strict Lyapunov function,
since it is not strictly decaying everywhere in $\mathcal{O} \setminus \{z_*\}$.
Without a strict Lyapunov function, asymptotic stability of local entropy minima
does not follows directly from the usual Lyapunov theorem. This is in stark
contrast with the case of pure gradient flows~(\ref{eq:X-grad-flow}) discussed
above. 

Asymptotic stability with Lyapunov functions that are not strictly
dissipated have been addressed by De Salle and Lefschetz \cite{LaSalle1961},
who showed that any integral curve, defined for $t\geq 0$, in a bounded strict
sublevel set of the Lyapunov function must approach the largest invariant set
contained in the region where $X(z) \cdot \nabla \fun{L}(z) = 0$. More specific
results for systems with a conserved energy and a dissipated entropy were
considered  by Beretta  \cite{Beretta1986} with applications to quantum
thermodynamics. Here we apply similar ideas to metriplectic systems. 
Suppose a locally Lipschitz vector field $X \colon  \mathcal{Z} \to \R^n$ on a
domain $\mathcal{Z} \subseteq \R^n$ has $k$ constants of motion 
$\fun{I}^1, \ldots, \fun{I}^k \in C^\infty(\mathcal{Z})$, for $1 \leq k < n$,
which means $X(z) \cdot \nabla \fun{I}^\alpha(z) = 0$,
for $\alpha \in \{1,\ldots,k\}$. The functions $\fun{I}^\alpha$ are independent
at $z \in \mathcal{Z}$ if  
\begin{equation}
  \label{eq:lin-indep-C}
  \nabla \fun{I}^1(z) ,\ldots, \nabla \fun{I}^k(z)
  \text{ are linearly independent in $\R^k$.}
\end{equation}
It is convenient to define the function
$\fun{I}\coloneqq\big(\fun{I}^1,\ldots,\fun{I}^k\big)\in
C^\infty(\mathcal{Z},\R^k)$.
Then~(\ref{eq:lin-indep-C}) is equivalent to $\rank \nabla \fun{I}(z) = k$. 
  
If the constants of motion are independent at an equilibrium point
$z_*\in \mathcal{Z}$, where  $X(z_*) = 0$, then the local submersion theorem
\cite[Theorem~2.5.13]{Marsden2001} allows us to find a neighborhood 
$\mathcal{U}$ of $z_*$ where the level sets 
\begin{equation*}
  \mathcal{U}_\eta \coloneqq \{z \in \mathcal{U}  \colon \fun{I}(z) = \eta\},
\end{equation*}
are closed submanifolds of $\mathcal{U}$, for $\eta$ is a neighborhood of
$\eta_* \coloneqq \fun{I}(z_*)$. Since $\fun{I}^\alpha$ are constants of
motion, $X$ is tangent to $\mathcal{U}_\eta$, so that the dynamical system can
be reduced locally to each submanifold $\mathcal{U}_\eta$. This leads to the
following straightforward result, which is essentially a finite-dimensional
version of Theorem~2 in \cite{Beretta1986}.   

\begin{proposition}
  \label{th:Lyapunov-constrained}
  Let $X \colon  \mathcal{Z} \to \R^n$ be a locally Lipschitz vector
  field on a domain $\mathcal{Z} \subseteq \R^n$,
  $\fun{I} = (\fun{I}^1, \ldots, \fun{I}^k) \in C^\infty(\mathcal{Z},\R^k)$
  be constants of motion with $1 \leq k < n$, $z_* \in \mathcal{Z}$  be an
  equilibrium point of $X$, $\rank \nabla \fun{I}(z_*) = k$,
  $\eta_* \coloneqq \fun{I}(z_*)$, and let 
  $\mathcal{U}_{\eta_*} = \{z \in \mathcal{U}  \colon \fun{I}(z) = \eta_*\}$,
  where $\mathcal{U}$ is the neighborhood of $z_*$ given by the local
  submersion theorem. If there is $\fun{L} \in C^1(\mathcal{U})$ satisfying
  \begin{align*}
    \label{eq:CL1}\tag{L1${}^\prime$}
    & X(z) \cdot \nabla \fun{L}(z) \leq 0, 
    && z \in \mathcal{U},\\
    \label{eq:CL2}\tag{L2${}^\prime$}
    & \text{$\fun{L}(z_*) = 0$ and $\fun{L}(z) > 0$,} 
    && z \in \mathcal{U}_{\eta_*} \setminus \{z_*\}, 
  \end{align*}
  then for any sufficiently small $\varepsilon > 0$, there is a $\delta > 0$
  such that the integral curve  $z(t)$ of $X$ with initial condition
  $z(0) = z_0 \in B_\delta(z_*) \cap \mathcal{U}_{\eta_*}$ is defined for all
  $t \geq 0$, and $z(t) \in B_{\varepsilon}(z_*) \cap \mathcal{U}_{\eta_*}$.
  If in addition 
  \begin{equation*}
    \label{eq:CL3}\tag{L3${}^\prime$}
    X(z) \cdot \nabla \fun{L}(z) < 0, \quad
    z \in \mathcal{U}_{\eta_*} \setminus \{z_*\},
  \end{equation*}    
  then $z(t) \to z_*$ for $t \to + \infty$.
\end{proposition}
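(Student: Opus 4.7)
The plan is to mirror the proof of Theorem~\ref{th:Lyapunov-finite-dim} essentially line-by-line, with every occurrence of an open ball $B_\varepsilon(z_*)$ replaced by its intersection with the submanifold $\mathcal{U}_{\eta_*}$, so that the dynamics is effectively reduced to a Lyapunov-stable system on $\mathcal{U}_{\eta_*}$. The key input from hypothesis that makes this reduction work is that $\fun{I}$ is a set of constants of motion: by the ODE uniqueness guaranteed by the local Lipschitz assumption on $X$, any integral curve $z(t)$ with initial datum $z_0 \in \mathcal{U}_{\eta_*}$ satisfies $\fun{I}(z(t)) = \fun{I}(z_0) = \eta_*$ for all $t$ in its domain, hence remains in $\mathcal{U}_{\eta_*}$. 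Thus restricted to this level set, $X$ generates a well-defined flow, and $\fun{L}$ restricted to $\mathcal{U}_{\eta_*}$ satisfies the \emph{unconstrained} Lyapunov conditions~(\ref{eq:L1}) and~(\ref{eq:L2}) (and~(\ref{eq:L3}) if~(\ref{eq:CL3}) holds).

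First, for the stability part, I would shrink $\varepsilon>0$ so that $\overline{B_\varepsilon(z_*)} \subseteq \mathcal{U}$; the set $\partial B_\varepsilon(z_*) \cap \mathcal{U}_{\eta_*}$ is then a closed subset of a compact sphere and so is itself compact. On this set $\fun{L}$ is continuous and strictly positive by~(\ref{eq:CL2}), so it attains a positive minimum $\fun{L}_\varepsilon > 0$. By continuity of $\fun{L}$ and $\fun{L}(z_*)=0$, choose $\delta \in (0,\varepsilon)$ so that $\fun{L}(z) < \fun{L}_\varepsilon$ for all $z \in B_\delta(z_*)$, and in particular on $B_\delta(z_*) \cap \mathcal{U}_{\eta_*}$. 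Given $z_0$ in this set, condition~(\ref{eq:CL1}) implies $\fun{L}(z(t)) \leq \fun{L}(z_0) < \fun{L}_\varepsilon$ for as long as the orbit exists; an intermediate-value-theorem argument identical to part~(i) of Theorem~\ref{th:Lyapunov-finite-dim} then rules out the orbit ever touching $\partial B_\varepsilon(z_*)$, so the orbit remains in $B_\varepsilon(z_*)\cap \mathcal{U}_{\eta_*}$ and extends to all $t \geq 0$.

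For the asymptotic-stability part under the additional hypothesis~(\ref{eq:CL3}), I would follow the same three-step pattern of part~(ii) of Theorem~\ref{th:Lyapunov-finite-dim}. Fix $\varepsilon_0, \delta_0$ from the stability argument. Along any orbit starting in $B_{\delta_0}(z_*) \cap \mathcal{U}_{\eta_*}$ with $z_0 \neq z_*$, the function $t\mapsto \fun{L}(z(t))$ is strictly decreasing and bounded below by $0$, hence has a limit $\ell \geq 0$. To show $\ell = 0$, assume $\ell > 0$; then continuity of $\fun{L}$ on $\mathcal{U}_{\eta_*}$ together with $\fun{L}(z_*)=0$ forces the orbit to stay in a relatively compact annular subset $\mathscr{R}$ of $\mathcal{U}_{\eta_*}$ where $X \cdot \nabla \fun{L}$ attains a strictly negative maximum $-M<0$ by~(\ref{eq:CL3}), producing linear decay of $\fun{L}$ and hence a contradiction. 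Finally, to upgrade $\fun{L}(z(t)) \to 0$ to $z(t) \to z_*$, I would show that for every $\varepsilon>0$ one can find $\lambda>0$ such that $A_\lambda := \{z\in B_{\varepsilon_0}(z_*)\cap \mathcal{U}_{\eta_*} : \fun{L}(z)<\lambda\}$ is contained in $B_\varepsilon(z_*)$; a compactness-and-subsequence argument identical to the one in Theorem~\ref{th:Lyapunov-finite-dim} applies, since the relevant closed annulus $\{z\in \overline{B_{\varepsilon_0}(z_*)}\cap \mathcal{U}_{\eta_*} : |z-z_*|\geq \varepsilon_*\}$ is still compact.

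The only nontrivial point, and hence what I expect to be the main conceptual obstacle, is verifying that the standard arguments transfer cleanly even though $\fun{L}$ is only assumed to vanish and be positive on $\mathcal{U}_{\eta_*}$ (not on the full ambient ball): off the level set $\fun{L}$ may well be zero or negative. The remedy is precisely to take minima, maxima, and closures along the submanifold $\mathcal{U}_{\eta_*}$; the crucial fact that $\partial B_\varepsilon(z_*)\cap \mathcal{U}_{\eta_*}$ and the annular regions inherit compactness from $\R^n$ (being closed subsets of compact sets in the ambient space) is exactly what keeps all the $\min$ and $\max$ arguments alive. With this observation the proof reduces to a mechanical adaptation of Theorem~\ref{th:Lyapunov-finite-dim}, and no genuinely new ideas are required.
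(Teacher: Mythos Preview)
Your proposal is correct, but it takes a different route from the paper. The paper does not repeat the proof of Theorem~\ref{th:Lyapunov-finite-dim} on the submanifold; instead, it uses the local submersion theorem to introduce adapted coordinates $(\eta,\zeta)$ on a neighborhood $\mathcal{U}$, with $\eta = \fun{I}(z)$, so that the system splits as $d\eta/dt = 0$ and $d\zeta/dt = X_\eta(\zeta)$ for a locally Lipschitz reduced field $X_\eta$ on an open subset of $\R^{n-k}$. In these coordinates, (\ref{eq:CL1})--(\ref{eq:CL3}) become exactly the ordinary Lyapunov conditions (\ref{eq:L1})--(\ref{eq:L3}) for the reduced system at $\eta=\eta_*$, and Theorem~\ref{th:Lyapunov-finite-dim} is invoked as a black box.

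Your approach of working extrinsically in $\R^n$ and intersecting every ball, sphere, and annulus with $\mathcal{U}_{\eta_*}$ is also valid, and your identification of the one delicate point --- that compactness of the restricted spheres and annuli must be checked, and follows because they are closed subsets of compact ambient sets once $\overline{B_\varepsilon(z_*)}\subset\mathcal{U}$ --- is exactly right. What you gain is that you avoid setting up local charts and verifying that the reduced field $X_\eta$ is locally Lipschitz; what the paper gains is a cleaner invocation of Theorem~\ref{th:Lyapunov-finite-dim} without re-running its proof. Both arguments are short, and the choice between them is largely a matter of taste.
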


\begin{proof} 
  Since $\rank \nabla \fun{I}(z_*) = k$, the local
  submersion theorem \cite{Marsden2001} allows us to find open subsets
  $\mathcal{N} \subseteq \fun{I}(\mathcal{Z})$ and
  $\mathcal{Z}^\prime \subseteq \R^{n-k} \cong \ker \nabla \fun{I}(z_*)$,
  together with local coordinates
  $(\eta,\zeta) \in \mathcal{N} \times \mathcal{Z}^\prime$, defined 
  in an open, connected subset $\mathcal{U} \subseteq \mathcal{Z}$
  containing the point $z_*$, such that the inverse coordinate map
  $\varphi \colon \mathcal{N} \times \mathcal{Z}^\prime \to \mathcal{U}$ is a
  $C^\infty$-diffeomorphism and satisfies
  \begin{equation*}
    \fun{I}\big(\varphi(\eta,\zeta)\big) = \eta.
  \end{equation*}
  Therefore, the sets
  $\mathcal{U}_{\eta} = \{z \in \mathcal{U} \colon \fun{I}(z)=\eta \}$,
  with $\eta \in \mathcal{N}$, are closed submanifolds parameterized by
  $\zeta \mapsto \varphi(\eta,\zeta)$. Let us denote by
  $(\eta_*, \zeta_*) \in \mathcal{N} \times \mathcal{Z}^\prime$ the point
  corresponding to $z_* = \varphi(\eta_*,\zeta_*)$.
  
  In this local coordinate system the integral curves of the vector field $X$
  in $\mathcal{U}$ solve 
  \begin{equation*}
    \frac{d\eta}{dt} = X \cdot \nabla \eta = 0, \quad
    \frac{d\zeta}{dt} = X_\eta (\zeta)
    \coloneqq (X \cdot \nabla \zeta) \circ \varphi(\eta,\zeta),
  \end{equation*}
  since $X \cdot \nabla \eta = X \cdot \nabla \fun{I} = 0$.
  The field $X_\eta$ defines a tangent vector on $\mathcal{U}_\eta$ for any
  $\eta \in \mathcal{N}$. Since $\nabla \zeta$ is $C^\infty$, one can check
  that $X_\eta$ is a locally Lipschitz continuous function of $\zeta$. Then,
  with $\fun{L}_\eta \coloneqq \fun{L} \circ \varphi(\eta,\cdot)$, one has
  \begin{equation*}
    (X \cdot \nabla \fun{L}) \circ \varphi 
    = X_\eta^a \frac{\partial \fun{L}_\eta }{\partial \zeta^a},
  \end{equation*}
  where the sum over $a \in \{1, \ldots,n-k\}$ is implied. Hence
  hypotheses~(\ref{eq:CL1})-(\ref{eq:CL3}) are equivalent to Lyapunov
  conditions~(\ref{eq:L1})-(\ref{eq:L3}) for the function $\fun{L}_{\eta_*}$
  and for the dynamical system $d\zeta /dt = X_{\eta_*}$ on
  $\mathcal{U}_{\eta_*}$, and the claim follows from
  Theorem~\ref{th:Lyapunov-finite-dim}.  
\end{proof}

Proposition~\ref{th:Lyapunov-constrained} establishes stability and asymptotic
stability for orbits with initial conditions  in $\mathcal{U}_{\eta_*}$, which
being  a lower dimensional set has zero Lebesgue measure in the phase space.
This  issue was addressed in \cite{Beretta1986} by assuming  
$z_*$ is part of a continuous family of equilibria.

At last, we address metriplectic vector fields. We know that there is at least
one constant of motion, the Hamiltonian $\fun{H}$. More generally, let $X$ be a
metriplectic vector field with $k$ constants of motion
$\fun{I}^1,\ldots, \fun{I}^k$, $1 \leq k < n$, and for definiteness
$\fun{I}^k = \fun{H}$. Unlike the case of
Proposition~\ref{th:Lyapunov-constrained}, we assume
that~(\ref{eq:lin-indep-C}) holds in a whole subdomain $\mathcal{U}$. 
Under these conditions, the submersion theorem
\cite[Theorem~3.5.4]{Marsden2001} establishes that the set
\begin{equation}
  \label{eq:Ueta}
  \mathcal{U}_\eta \coloneqq \{z \in \mathcal{U} \colon \fun{I}(z) = \eta\},
\end{equation}
is a closed submanifold of $\mathcal{U}$ for any
$\eta \in \fun{I}(\mathcal{U})$, and the restriction of the entropy to such
submanifolds, i.e., $\fun{S}|_{\mathcal{U}_\eta}$, is smooth.

We make the following hypothesis on the entropy in $\mathcal{U}$: 
\begin{equation}
  \label{eq:U}
  \mathcal{U}    \   \text{is  bounded,}\  
  \ol{\mathcal{U}} \subset \mathcal{Z},  \ 
  \exists \, z_m \in \mathcal{U}\,,  \text{ where}\  \fun{S}(z_m) <
  \inf \{\fun{S}(z) \colon z \in \partial \mathcal{U}\}.
\end{equation}
This ensures that there is a minimum of the entropy in the interior of the
subdomain $\mathcal{U}$.
  
\begin{lemma}
  \label{th:lemma}
  Let $X\colon  \mathcal{Z} \to \R^d$ be a vector field on a domain
  $\mathcal{Z} \subseteq \R^d$, and let $\mathcal{U} \subset \mathcal{Z}$
  and $\fun{S} \in C^1(\mathcal{Z})$ satisfy~(\ref{eq:U}). If in the subdomain 
  $\mathcal{U}$, $X$ is locally Lipschitz and $X \cdot \nabla \fun{S} \leq 0$,
  then there is an open, non-empty subset $\mathcal{O} \subseteq \mathcal{U}$
  such that, for any $z_0 \in \mathcal{O}$ the integral curve $z(t)$ of $X$
  with initial condition $z(0) = z_0$ can be prolonged to the interval
  $t \in [0,+\infty)$ and $z(t) \in \mathcal{O}$ for all $t\geq 0$. 
\end{lemma}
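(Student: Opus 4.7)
The plan is to construct the set $\mathcal{O}$ as a strict sublevel set of the entropy, chosen so that its closure sits well inside $\mathcal{U}$; the monotonicity of $\fun{S}$ along orbits will then trap any orbit starting in $\mathcal{O}$ away from $\partial \mathcal{U}$, and global existence will follow from the standard continuation theorem for ODEs in a compact set. Concretely, hypothesis~(\ref{eq:U}) lets me fix a constant
\begin{equation*}
c \in \R \andq \fun{S}(z_m) < c < \inf_{z \in \partial \mathcal{U}} \fun{S}(z),
\end{equation*}
and set $\mathcal{O} \coloneqq \{z \in \mathcal{U} \colon \fun{S}(z) < c\}$. By continuity of $\fun{S}$, the set $\mathcal{O}$ is open in $\mathcal{Z}$, and it is nonempty because $z_m \in \mathcal{O}$.

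The next step I would carry out is to verify that $\ol{\mathcal{O}}$ is a compact subset of $\mathcal{U}$. Since $\mathcal{U}$ is bounded and $\ol{\mathcal{U}} \subset \mathcal{Z}$, it suffices to show $\ol{\mathcal{O}} \cap \partial \mathcal{U} = \emptyset$. Any $\tilde{z} \in \ol{\mathcal{O}}$ is the limit of a sequence $(z_n) \subset \mathcal{O}$ with $\fun{S}(z_n) < c$, so by continuity $\fun{S}(\tilde{z}) \leq c$; on the other hand, on $\partial \mathcal{U}$ one has $\fun{S} \geq \inf_{\partial \mathcal{U}} \fun{S} > c$, which rules out $\tilde{z} \in \partial \mathcal{U}$. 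Hence $\ol{\mathcal{O}} \subset \mathcal{U}$ and $\ol{\mathcal{O}}$ is compact.

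Finally, for an arbitrary $z_0 \in \mathcal{O}$ I would consider the maximal forward integral curve $z\colon [0,T^*) \to \mathcal{Z}$ with $z(0)=z_0$, whose existence and uniqueness follow from the local Lipschitz hypothesis on $X$ in $\mathcal{U}$. As long as $z(t) \in \mathcal{U}$, the assumption $X\cdot \nabla \fun{S} \leq 0$ gives
\begin{equation*}
\frac{d}{dt}\fun{S}\bigl(z(t)\bigr) = X\bigl(z(t)\bigr) \cdot \nabla \fun{S}\bigl(z(t)\bigr) \leq 0,
\quad\text{so}\quad \fun{S}\bigl(z(t)\bigr) \leq \fun{S}(z_0) < c.
\end{equation*}
Therefore $z(t)$ can neither reach $\partial \mathcal{U}$ (where $\fun{S} > c$) nor cross the level set $\{\fun{S}=c\}$, so $z(t) \in \mathcal{O}$ for all $t \in [0,T^*)$. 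Because the orbit stays in the compact set $\ol{\mathcal{O}} \subset \mathcal{U}$, the escape-from-compacta criterion forces $T^* = +\infty$.

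The argument is essentially one paragraph long once the sublevel set is in place; the only genuinely delicate point, and the one worth flagging, is the \emph{strict} inequality $c < \inf_{\partial \mathcal{U}} \fun{S}$ supplied by~(\ref{eq:U}). Without it, $\ol{\mathcal{O}}$ could touch $\partial \mathcal{U}$ and an orbit might leave $\mathcal{U}$ in finite time, invalidating both the trapping step and the continuation step. The strict inequality gives $\ol{\mathcal{O}}$ a positive distance to $\partial \mathcal{U}$ and makes the whole construction work.
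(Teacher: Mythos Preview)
Your argument is correct and follows essentially the same route as the paper: take $\mathcal{O}$ to be a strict sublevel set of $\fun{S}$ inside $\mathcal{U}$, use monotonicity of $\fun{S}$ along orbits to trap the trajectory, and conclude global forward existence. The only differences are cosmetic: the paper takes $c = \fun{S}_b \coloneqq \inf_{\partial\mathcal{U}}\fun{S}$ rather than a strictly smaller value, and instead of invoking the escape-from-compacta criterion it writes out the Cauchy-sequence continuation argument explicitly (bounding $|z(t_n)-z(t_m)|$ by $\max_{\ol{\mathcal{U}}}|X|\,|t_n-t_m|$). One practical remark: the paper's specific choice $\mathcal{O}=\{z\in\mathcal{U}:\fun{S}(z)<\fun{S}_b\}$ is used verbatim in the subsequent Proposition~\ref{th:mod-Lyapunov} (where $\fun{S}=\fun{S}_b$ on $\partial\mathcal{O}$ is invoked), so for consistency with the rest of the paper you would want to take $c=\fun{S}_b$ rather than a smaller constant.
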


\begin{proof}
  Per hypotheses $\fun{S} \in C^1(\mathcal{Z})$, therefore
  $\fun{S} \in C^1(\ol{\mathcal{U}})$, and we have 
  $\fun{S}_b\coloneqq\inf_{\partial\mathcal{U}}\fun{S}>\fun{S}(z_m)>-\infty$.
  The set 
  \begin{equation*}
    \mathcal{O} \coloneqq \{z \in \mathcal{U} \colon
    \fun{S}(z) < \fun{S}_b \}
  \end{equation*}
  is non-empty, since $z_m \in \mathcal{O}$, and open, since $\fun{S}(z)$ is
  continuous.
  Let $z: (\tau_1, \tau_2) \to \mathcal{U}$, $\tau_1 < 0 < \tau_2$, be the
  maximal solution of the initial value problem
  \begin{equation*}
    dz/dt = X(z), \quad z(0) = z_0 \in \mathcal{O}.
  \end{equation*}
  The solution exists given that $X$ is locally Lipschitz continuous in
  $\mathcal{U}$. Since $X \cdot \nabla \mathcal{S} \leq 0$, the function
  $t \mapsto \mathcal{S}\big(z(t)\big)$ is $C^1$ and non-increasing,
  hence $\fun{S}\big(z(t)\big) \leq \fun{S}(z_0) < \fun{S}_b$, and thus
  $z(t) \in \mathcal{O}$ for all $t \in [0,\tau_2)$.
    
  We show that $\tau_2 = +\infty$. With this aim we rely on a standard
  argument from the theory of ordinary differential equations, which we report
  in full for sake of completeness. If $\tau_2$ is finite, let
  $\{t_n\}_{n\in\N}$ be a sequence, $t_n \in (\tau_1,\tau_2)$, and
  $t_n \to \tau_2$ as $n \to +\infty$. Then 
  \begin{equation*}
    \big|z(t_n) - z(t_m)\big| \leq \max_{z \in \ol{U}}|X(z)| |t_n-t_m|,
  \end{equation*}
  for all $m,n \in \N$. Since $\{t_n\}$ is a convergent sequence,
  $\{z(t_n)\} \subset \mathcal{O}$ is a Cauchy sequence and must have a limit
  $\bar{z} \in \ol{\mathcal{O}}$ as $n \to \infty$. Passing to the limit in
  the inequality $\fun{S}\big(z(t_n)\big) \leq \fun{S}(z_0) < \fun{S}_b$
  yields $\fun{S}(\bar{z}) \leq \fun{S}(z_0) < \fun{S}_b$, hence
  $\bar{z} \in \mathcal{O}$. We can use the limit point $\bar{z}$ as an
  initial condition, and we can extend the solution $z(t)$ beyond $\tau_2$
  contradicting the fact that $z(t)$ is the maximal solution.
  Therefore $\tau_2=+\infty$ as claimed.
\end{proof}

The set $\mathcal{O}$ is given by
\begin{equation*}
  \mathcal{O} = \big\{ z \in \mathcal{U} \colon \;
  \fun{S}(z) < \inf_{\partial \mathcal{U}} \fun{S} \big\}.
\end{equation*}
This lemma shows that, if $\fun{S}$ and $\mathcal{U}$ satisfy~(\ref{eq:U}) and
$\fun{S}$ is non-increasing along the integral curves of a vector field $X$,
then we can find a positively invariant subset $\mathcal{O}$. This lemma can
be applied directly to both standard gradient flows and metriplectic systems.

\begin{proposition}
  \label{th:mod-Lyapunov}
  Let $X = J \nabla \fun{H} - K \nabla \fun{S}$ be a metriplectic vector field
  on a domain $\mathcal{Z} \subseteq \R^n$, with $1 \leq k < n$ constants of
  motion
  $\fun{I}=(\fun{I}^1, \ldots, \fun{I}^k) \in C^\infty(\mathcal{Z},\R^k)$,
  $\rank \nabla \fun{I} = k$ in a subdomain $\mathcal{U}$
  satisfying~(\ref{eq:U}), and let $\mathcal{O}$ be the open set given by 
  Lemma~\ref{th:lemma}. If in addition it holds that
  \begin{align*}
    \label{eq:MpL2}\tag{L2${}^{\prime\prime}$}
    &\begin{aligned}
       & \text{$\forall \eta \in \fun{I}(\mathcal{O})$, on  
         $\,\mathcal{U}_\eta \coloneqq
         \{z \in \mathcal{U} \colon \fun{I}(z) = \eta \}$,} \ 
       \text{$\fun{S}|_{\mathcal{U}_\eta}$ has a unique critical point
         $z_\eta$,}
       \\&
       \text{which is a  strict local minimum,}  
     \end{aligned}
    \intertext{and}
    \label{eq:MpL3}\tag{L3${}^{\prime\prime}$}
    & \ker K = \spn\{\nabla \fun{I}^1,\ldots, \nabla\fun{I}^k\}
    \; \text{ in }\; \ol{\mathcal{U}}.
  \end{align*}
  then for any integral curve $z(t)$ of $X$ with $z(0) = z_0 \in \mathcal{O}$,
  $\lim_{t \to +\infty} z(t) = z_\eta \in \mathcal{O}$,
  where $\eta = \fun{I}(z_0)$.
\end{proposition}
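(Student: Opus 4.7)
The plan is to combine the positive invariance of $\mathcal{O}$ established by Lemma~\ref{th:lemma} with LaSalle's invariance principle, using hypothesis~(\ref{eq:MpL3}) to identify the points where the entropy ceases to decrease with the constrained critical points of $\fun{S}|_{\mathcal{U}_\eta}$, and hypothesis~(\ref{eq:MpL2}) to conclude that this critical point is unique and hence the limit of the orbit.

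First, Lemma~\ref{th:lemma} applies since $X \cdot \nabla \fun{S} = -\nabla\fun{S}\cdot K\nabla\fun{S} \leq 0$ (using that the compatibility condition $\{\fun{F},\fun{S}\} = 0$ for all $\fun{F}$ forces $J \nabla \fun{S} = 0$), so the orbit $z(t)$ exists for all $t\geq 0$ and remains in $\mathcal{O}$. Since the $\fun{I}^\alpha$ are constants of motion, $z(t) \in \mathcal{O}\cap\mathcal{U}_\eta$ with $\eta \coloneqq \fun{I}(z_0)$. Because $\ol{\mathcal{O}} \subset \ol{\mathcal{U}}$ is compact and $\fun{S}$ is continuous, $\fun{S}\bigl(z(t)\bigr)$ is monotone non-increasing and bounded below, hence converges to a limit $S^*$.

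Next, since the orbit is contained in the compact set $\ol{\mathcal{O}}$, its $\omega$-limit set $\omega(z_0)$ is non-empty, compact, and positively invariant under the flow. For any $z^* \in \omega(z_0)$, continuity gives $\fun{S}(z^*) = S^*$ and $\fun{I}(z^*) = \eta$; moreover $\fun{S}(z^*) \leq \fun{S}(z_0) < \inf_{\partial\mathcal{U}} \fun{S}$, so $z^* \in \mathcal{O}$, in particular well inside $\mathcal{U}$. Positive invariance of $\omega(z_0)$ means the orbit through $z^*$ remains in $\omega(z_0)$ where $\fun{S}$ is constant; differentiating at $t=0$ yields $\nabla\fun{S}(z^*)\cdot K(z^*)\nabla\fun{S}(z^*) = 0$, and symmetry together with positive semidefiniteness of $K(z^*)$ then forces $\nabla \fun{S}(z^*) \in \ker K(z^*)$.

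Finally, hypothesis~(\ref{eq:MpL3}) turns this into $\nabla\fun{S}(z^*) \in \spn\{\nabla\fun{I}^\alpha(z^*)\}_\alpha$, which is the Lagrange-multiplier characterisation of $z^*$ as a critical point of the restriction $\fun{S}|_{\mathcal{U}_\eta}$; by hypothesis~(\ref{eq:MpL2}), that restriction has $z_\eta$ as its unique critical point, so $z^* = z_\eta$. Therefore $\omega(z_0) = \{z_\eta\}$, and since the orbit is bounded, $z(t) \to z_\eta$ as $t \to +\infty$. The delicate point in this plan is ensuring that no limit point of the orbit escapes $\ol{\mathcal{U}}$---so that hypotheses~(\ref{eq:MpL2}) and~(\ref{eq:MpL3}) can legitimately be invoked at $z^*$---and this is exactly what the strict-sublevel definition of $\mathcal{O}$ in Lemma~\ref{th:lemma}, and therefore hypothesis~(\ref{eq:U}), is designed to guarantee.
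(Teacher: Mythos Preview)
Your proof is correct and takes a genuinely different route from the paper. The paper proceeds in two stages: it first verifies that $\fun{L}=\fun{S}-\fun{S}(z_\eta)$ satisfies the constrained-Lyapunov conditions (\ref{eq:CL1})--(\ref{eq:CL3}) of Proposition~\ref{th:Lyapunov-constrained} on $\mathcal{U}_\eta$, thereby obtaining \emph{local} asymptotic stability of $z_\eta$, and then uses a separate compactness estimate on $\ol{\mathcal{U}_\eta}\setminus B_\delta(z_\eta)$---where $X\cdot\nabla\fun{S}$ is bounded away from zero thanks to~(\ref{eq:MpL2}) and~(\ref{eq:MpL3})---to force every orbit in $\mathcal{O}\cap\mathcal{U}_\eta$ to eventually enter $B_\delta(z_\eta)$. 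Your LaSalle argument short-circuits this two-step structure: once the $\omega$-limit set is shown to lie in $\mathcal{O}$ (the only place you truly need the strict-sublevel property of~(\ref{eq:U})), the identification $\nabla\fun{S}(z^*)\in\ker K(z^*)=\spn\{\nabla\fun{I}^\alpha(z^*)\}$ reduces the whole $\omega$-limit set to the single point $z_\eta$ in one stroke. What the paper's approach buys is a self-contained derivation from its own Proposition~\ref{th:Lyapunov-constrained}, keeping the exposition internally consistent; what your approach buys is economy---it dispenses with Proposition~\ref{th:Lyapunov-constrained} altogether and never needs to separately argue that $z_\eta$ is an equilibrium or that it lies in $\mathcal{O}$ before the main argument, since both facts emerge from the characterisation of the $\omega$-limit set.
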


\begin{proof}
  First we show that $z_\eta \in \mathcal{U}_\eta \cap \mathcal{O}$.
  Given that $\eta \in \fun{I}(\mathcal{O})$, the set
  $\mathcal{U}_\eta \cap \mathcal{O}$ is non-empty.
  If $z_\eta \not \in \mathcal{U}_\eta \cap \mathcal{O}$,
  there cannot be any extremum of $\fun{S}|_{\mathcal{U}_\eta}$ in
  $\mathcal{U}_\eta \cap \mathcal{O}$ since $z_\eta$ is the only critical point
  in $\mathcal{U}_\eta$. This implies that both the minimum and the maximum of
  $\fun{S}$ restricted to $\mathcal{U}_\eta \cap \ol{\mathcal{O}}$ are attained
  on the boundary $\mathcal{U}_\eta \cap \partial \mathcal{O}$, where we have
  $\fun{S} = \fun{S}_b = \inf\{\fun{S}(z) \colon z \in \partial \mathcal{U}\}$.
  We deduce that $S|_{\mathcal{U}_\eta}$ is constant and equal to $\fun{S}_b$
  in $\mathcal{U}_\eta \cap \ol{\mathcal{O}}$, but per definition, points in
  $\mathcal{O}$ satisfy $\fun{S} < \fun{S}_b$. Therefore, the only critical
  point $z_\eta$ of $\fun{S}|_{\mathcal{U}_\eta}$ must be in
  $\mathcal{U}_\eta \cap \mathcal{O}$.

  The function $\fun{S}|_{\mathcal{U}_\eta}$ is smooth and its critical points
  necessarily satisfy the Lagrange condition 
  \begin{equation}
    \label{eq:constrained-critical-points}
    \nabla \fun{S}(z) = \sum_{\alpha =1}^k \lambda_\alpha \nabla
    \fun{I}^\alpha(z), \quad \fun{I}(z) = \eta,
  \end{equation}
  for $\lambda_\alpha \in \R$ \cite[Theorem~3.5.27]{Marsden2001}.
  Hypothesis~(\ref{eq:MpL2}), states, in particular, that this can only
  happen at the point $z_\eta$. 
  
  Condition~(\ref{eq:MpL2}) also requires $z_\eta$ to be a strict local
  minimum of $\fun{S}|_{\mathcal{U}_{\eta}}$. Then, $z_\eta$ must be an
  equilibrium point of $X$. In order to see this, let us consider the integral
  line $\hat{z}_\eta(t) \in \mathcal{O}$ of the vector field $X$ with initial
  condition $\hat{z}_\eta(0) = z_\eta$. Given that $z_\eta \in \mathcal{O}$,
  $\hat{z}_\eta(t)$ is defined for $t \geq 0$ (Lemma~\ref{th:lemma}).
  Since $\fun{I}$ is a constant of motion,
  $\hat{z}_\eta(t) \in \mathcal{U}_\eta$, and continuity implies that for any
  $\varepsilon > 0$, there is $\delta >0$ such that $|t| \leq \delta$ implies
  $|\hat{z}_\eta(t) - z_\eta| < \varepsilon$. If $\hat{z}_\eta$ is not
  identically equal to $z_\eta$ when $t \in [0,\delta)$, then at some
  $t' \in (0, \delta)$, $\hat{z}_\eta(t') \not = z_\eta$ and
  $\fun{S}\big(\hat{z}_\eta(t')\big) > \fun{S}(z_\eta)$ since
  $z_\eta$ is an entropy local minimum. On the other hand, dissipation of
  entropy requires
  $\fun{S}\big(\hat{z}_\eta(t')\big) \leq \fun{S}\big(\hat{z}_\eta(0)\big)
  = \fun{S}(z_\eta)$.
  Therefore, it must be $\hat{z}_\eta(t) = z_\eta$ identically for 
  $t \in [0,\delta)$, which implies $X(z_\eta) = 0$ and $z_\eta$ is an
  equilibrium. 
    
  Given $z_0 \in \mathcal{O}$, the integral curve of $X$ with initial condition
  $z(0) = z_0$ exists for all $t \geq 0$ (Lemma~\ref{th:lemma}).
  Let us define the function
  \begin{equation*}
    \fun{L}(z) = \fun{S}(z) - \fun{S}(z_\eta),
  \end{equation*}
  where $\eta = \fun{I}(z_0)$. Then $\nabla\fun{L}(z) = \nabla\fun{S}(z)$ and
  \begin{equation}
    \label{eq:L-decay}
    X(z) \cdot \nabla \fun{L}(z) =
    - \nabla \fun{S}(z) \cdot K(z) \fun{S}(z) \leq 0,
  \end{equation}
  so that $\fun{L}$ satisfies condition~(\ref{eq:CL1}) of
  Proposition~\ref{th:Lyapunov-constrained} in the open set $\mathcal{O}$
  defined above.
  
  Since $z_\eta$ is a local minimum of $\fun{S}|_{\mathcal{U}_\eta}$ there is
  a neighborhood of $z_\eta$ on $\mathcal{U}_\eta$ where
  $\fun{S}(z)>\fun{S}(z_\eta)$ for $z \not = z_\eta$.
  But by definition of the submanifold topology,
  this neighborhood has the form $\mathcal{U}_\eta \cap \mathcal{V}$
  with $\mathcal{V}$ an open subset of $\mathcal{U}$. Then,
  condition~(\ref{eq:CL2}) holds true with $\mathcal{U}$ replaced by
  $\mathcal{V}$.   
  
  At last we show that condition~(\ref{eq:CL3}) is also satisfied. If not, one
  could find a point $\tilde{z}_\eta \in \mathcal{U}_\eta$ such that
  $\tilde{z}_\eta \not = z_\eta$ and 
  $\nabla \fun{S}(\tilde{z}_\eta) \in \ker K(\tilde{z}_\eta)$, but then
  assumption~(\ref{eq:MpL3}) implies that $\tilde{z}_\eta$ satisfies the
  Lagrange conditions~(\ref{eq:constrained-critical-points}) and this
  contradicts the uniqueness of the critical point $z_\eta$.
  Therefore, inequality~(\ref{eq:L-decay}) must be strict, which is
  condition~(\ref{eq:CL3}).  
        
  In summary, the point $z_* = z_\eta$, $\eta = \fun{I}(z_0)$, and the
  function $\fun{L}(z) = \fun{S}(z) - \fun{S}(z_\eta)$ on the neighborhood
  $\mathcal{V}$ satisfy all the hypotheses of
  Proposition~\ref{th:Lyapunov-constrained}, including~(\ref{eq:CL3}).
  Therefore, there is a $\delta>0$ such that any integral curve
  of $X$ with initial condition in $\mathcal{U}_\eta \cap B_\delta(z_\eta)$
  converges to $z_\eta$ as $t \to +\infty$.
    
  We claim that, for any integral curve, $z(t) \in \mathcal{O}$, $t \geq 0$ and
  $z(0) = z_0 \in \mathcal{U}_\eta \cap \mathcal{O}$, there must be a time
  $t_\delta$ such that
  $z(t_\delta) \in \mathcal{U}_\eta \cap B_\delta(z_\eta)$.
  If this is the case, then necessarily $z(t) \to z_\eta$, which is the
  thesis. Therefore it remains to prove this last claim.
  
  The sets $\ol{\mathcal{U}_\eta}$ and
  $\ol{\mathcal{U}_\eta} \setminus B_\delta(z_\eta)$ are
  compact in $\R^n$. In view of conditions~(\ref{eq:MpL2}) and~(\ref{eq:MpL3}),
  on the set $\ol{\mathcal{U}_\eta} \setminus B_\delta(z_\eta)$ we have
  $X \cdot \nabla \fun{S} < 0$.
  Let $-M \coloneqq \sup \{X(z) \cdot \nabla \fun{S}(z) \colon z \in
  \ol{\mathcal{U}_\eta} \setminus B_\delta(z_\eta)\}$.  
  Then $M>0$ since the supremum is attained, and if the integral curve $z(t)$ is
  such that $z(t) \in \ol{\mathcal{U}_\eta} \setminus B_\delta(z_\eta)$
  for all $t \geq 0$, 
  \begin{equation*}
    \fun{S}\big(z(t)\big) = \fun{S}\big(z_0\big) + \int_0^t
    (X \cdot \nabla \fun{S})\big(z(s)\big) ds \leq
    \fun{S}\big(z_0\big) - M t.
  \end{equation*}
  This leads to a contradiction as in the proof of the classical Lyapunov
  stability theorem. Hence there must be a time $t_\delta$ at which
  $z(t_\delta) \in \mathcal{U}_\eta \cap B_\delta(z_\eta)$. 
\end{proof}

Condition~(\ref{eq:MpL2}) ensures that~(\ref{eq:CL2}) holds for the function
$\fun{L}(z) = \fun{S}(z) - \fun{S}(z_\eta)$ on $\mathcal{U}_\eta$. The
uniqueness of the critical point of $\fun{S}$ together with~(\ref{eq:MpL3})
ensures that entropy is strictly dissipated by requiring that the metric
bracket generated by the tensor $K$ is ``specifically degenerate'', by which we
mean  it preserves the invariants $\fun{I}^\alpha$ only. Recall we use the
terminology ``minimally degenerate'' to mean that the only degeneracy is that
associated with $\fun{H}$. 

We remark that the limit point $z_\eta \in \mathcal{U}_\eta$ is the unique
solution of the optimization problem  
\begin{equation*}
  \min \{\fun{S}(z) \colon z \in \mathcal{O},\;
  \fun{I}(z) = \fun{I}(z_0) \};
\end{equation*}
hence Proposition~\ref{th:mod-Lyapunov} establishes a local version of the
desired complete-relaxation property for all orbits
starting in the set $\mathcal{O}$.

\subsection{Finite-dimensional systems: the Polyak--{\L}ojasiewicz inequality} 
\label{sec:finite-dim-PL}

Nondegenerate gradient flows and gradient descent methods have been studied
under the assumptions that the entropy function $\fun{S}$ satisfies the
classical Polyak--{\L}ojasiewicz (PL) condition
\cite{Polyak1963,Lojasiewicz1984} and $\nabla \fun{S}$ is Lipschitz continuous,
the latter being a natural hypothesis. 

A function $\fun{S} \in C^1(\mathcal{Z})$ satisfies the PL condition in a
non-empty subset $Q \subseteq \mathcal{Z}$ if
$\fun{S}_* \coloneqq \inf \{\fun{S}(z) \colon z \in Q \} > -\infty$ and
there exists a constant $\kappa > 0$ such that
\begin{equation*}
  \label{eq:PL}\tag{PL}
  \frac{1}{\kappa} \big|\nabla \fun{S}(z)\big|^2 \geq
  \fun{S}(z) - \fun{S}_*, \quad z \in Q.
\end{equation*}
If $\fun{S}$ satisfies the \ref{eq:PL} inequality, the associated gradient
flow has the following properties:

\begin{itemize}
    
\item If $z(t) \in Q$, $t \in [0, +\infty)$, is an integral curve of
  the gradient flow~(\ref{eq:X-grad-flow}), then $\fun{S}\big(z(t)\big)$
  converges exponentially to $\fun{S}_*$ as $t \to + \infty$.
  This follows from
  \begin{equation*}
    \frac{d}{dt} \big[\fun{S}\big(z(t)\big) - \fun{S}_*\big] =
    - \big|\nabla \fun{S}\big(z(t)\big)\big|^2
    \leq - \kappa \big[\fun{S}\big(z(t)\big) - \fun{S}_* \big],
  \end{equation*}
  which implies
  \begin{equation*}
    \big[\fun{S}\big(z(t)\big) - \fun{S}_* \big] \leq
    \big[\fun{S}\big(z_0\big) - \fun{S}_*\big] e^{-\kappa t},
  \end{equation*}
  where $z_0 \in Q$ is the initial condition at $t=0$. The exponential
  rate of convergence is given by the constant $\kappa$ in the inequality.
  Unfortunately convergence of the entropy values alone does not directly
  imply convergence of the orbit $z(t)$ to a limit for $t \to +\infty$
  (cf.\  the counterexample of  Palis and de Melo \cite{Palis1982}).
    
\item All equilibrium points of the gradient flow~(\ref{eq:X-grad-flow})
  are global minima of $\fun{S}$ (not just critical points).
  In fact, if $z_e \in Q$ is an equilibrium 
  point, $\nabla \fun{S}(z_e) = 0$ and the \ref{eq:PL} condition implies 
  $0 = |\nabla \fun{S}(z_e)|^2 \geq \kappa \big(\fun{S}(z_e) - \fun{S}_*\big)
  \geq 0$,  hence $\fun{S}(z_e) = \fun{S}_*$. 
  
\end{itemize}

Any strongly convex function satisfies condition~(\ref{eq:PL}),
cf.\  the short proof reported below, and thus are included as a special case.
In general, (\ref{eq:PL}) is weaker than strong convexity. In fact, it has
been shown that the classical \ref{eq:PL} condition is weaker than several
other conditions introduced in order to address the convergence of gradient
descent methods \cite{Karimi2016}. In addition, no assumption is made on the
global minimum of  $\fun{S}$, which, in particular, does not need to be an
isolated point. For instance, the function $\fun{S}(z) = (|z|^2-1)^2$ satisfies
the \ref{eq:PL} condition with $\kappa = 16 r_0^2$ in the domain
$S = \{z \colon |z| > r_0\}$ for any $r_0 \in (0,1/2)$, and attains
its minimum on the sphere $|z|=1$; hence there is no isolated minimum.  On the
other hand, this function does not satisfy the \ref{eq:PL} condition on  the
whole space $\R^n$, because of the critical point at $z=0$, which is a local
maximum.  

Polyak \cite{Polyak1963} under the additional (natural) hypothesis that the 
vector field $X = -\nabla \fun{S}$ is Lipschitz continuous (\emph{not} just
locally Lipschitz), established convergence of the gradient flow trajectories
to the global entropy minimum. Specifically, Polyak's result in the notation
used here amounts to the following.  

\begin{theorem}[Polyak 1963, Theorem~9 \cite{Polyak1963}]
  \label{th:Polyak}
  Let $z_0 \in \mathcal{Z}$, $\rho, \kappa, L > 0$, and
  $\fun{S} \in C^1(\mathcal{Z})$ be such that~(\ref{eq:PL}) is satisfied and
  $\nabla \fun{S}$ is Lipschitz continuous in the closed ball
  $\ol{B_\rho(z_0)} \subset \mathcal{Z}$ with Lipschitz constant $L$, and
  $\gamma = \sqrt{8L \varphi_0}/(\rho \kappa) \leq 1$, where
  $\varphi_0 = \fun{S}(z_0) - \fun{S}_*$. Then there is
  $z_* \in \ol{B_{\gamma \rho}(z_0)}$ such that the integral
  curve $z(t)$ of the gradient flow~(\ref{eq:X-grad-flow}) with $z(0)=z_0$
  is defined for all $t \geq 0$ and
  $\big|z(t) - z_* \big| \leq \gamma \rho e^{-\kappa t/2}$.
\end{theorem}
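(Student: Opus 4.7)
The plan is to control the arc length of the gradient-flow trajectory: by combining the exponential decay of the entropy gap $\varphi(t) := \fun{S}(z(t)) - \fun{S}_*$ with an upper bound on $|\nabla\fun{S}|$ supplied by Lipschitz continuity of the gradient, one shows that $\int_0^{+\infty}|\dot z(t)|\,dt \leq \gamma\rho$. This simultaneously keeps the trajectory inside $\overline{B_\rho(z_0)}$ (so that the hypotheses are available throughout), provides global existence, and forces convergence to a limit point $z_*$ at the advertised rate.

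The two key estimates, valid as long as $z(t) \in \overline{B_\rho(z_0)}$, are the following. First, differentiating along the flow, $\dot\varphi(t) = -|\nabla \fun{S}(z(t))|^2$, and the (\ref{eq:PL}) inequality gives $\dot\varphi \leq -\kappa\varphi$, so that $\varphi(t) \leq \varphi_0 e^{-\kappa t}$. Second, $L$-Lipschitz continuity of $\nabla\fun{S}$ produces the descent lemma
$$\fun{S}(y) \leq \fun{S}(z) + \langle \nabla\fun{S}(z), y - z\rangle + \tfrac{L}{2}|y-z|^2,$$
which, applied at the trial point $y = z - \nabla\fun{S}(z)/L$, yields the co-coercive bound $|\nabla\fun{S}(z)|^2 \leq 2L\,\varphi(z)$. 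Combining the two,
$$|\dot z(t)| = |\nabla\fun{S}(z(t))| \leq \sqrt{2L\varphi(t)} \leq \sqrt{2L\varphi_0}\,e^{-\kappa t/2}.$$

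A standard open/closed bootstrap closes the argument. Let $T^* := \sup\{T > 0 : z(t) \in \overline{B_\rho(z_0)}\ \forall t\in[0,T]\}$; on $[0,T^*)$ the previous inequality integrates to
$$|z(t) - z_0| \leq \int_0^t|\dot z(s)|\,ds \leq \frac{2\sqrt{2L\varphi_0}}{\kappa}\bigl(1 - e^{-\kappa t/2}\bigr) \leq \gamma\rho \leq \rho,$$
so the trajectory cannot reach $\partial B_\rho(z_0)$ and, since $\dot z$ remains bounded, $T^* = +\infty$. Because $\int_0^{+\infty}|\dot z|\,dt \leq \gamma\rho < +\infty$, the trajectory is Cauchy in $\R^n$ and converges to some $z_* \in \overline{B_{\gamma\rho}(z_0)}$, and the same tail estimate yields $|z(t) - z_*| \leq \int_t^{+\infty}|\dot z(s)|\,ds \leq \gamma\rho\,e^{-\kappa t/2}$.

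The delicate point is the co-coercive bound: the descent-lemma trial point $y = z - \nabla\fun{S}(z)/L$ must lie in the Lipschitz domain $\overline{B_\rho(z_0)}$, and a priori one has no control of $|\nabla\fun{S}(z)|/L$ before the arc-length estimate has been established. I would resolve this circularity either by smoothly extending $\fun{S}$ outside the closed ball without increasing the Lipschitz constant of its gradient, or by running the bootstrap in two stages --- first on a slightly shrunken ball where $y$ is guaranteed to remain inside $\overline{B_\rho(z_0)}$, then transferring the conclusion back to the full ball using the hypothesis $\gamma \leq 1$. This technicality, rather than the global architecture of the proof, is where I expect the fine print to live.
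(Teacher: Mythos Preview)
Your overall architecture---control the arc length via an upper bound $|\nabla\fun{S}(z)|^2 \leq C L\,\varphi(z)$, combine with the exponential decay $\varphi(t)\leq\varphi_0 e^{-\kappa t}$, integrate, and run an open/closed bootstrap---matches the paper's strategy exactly. The genuine difference is in \emph{how} you obtain the upper bound on the gradient. You invoke the descent lemma at the trial point $y=z-\nabla\fun{S}(z)/L$, which gives the sharp constant $C=2$ and hence precisely the $\gamma=\sqrt{8L\varphi_0}/(\rho\kappa)$ appearing in the statement. The paper instead reproduces Polyak's original dynamical argument: along a short time window $[t_1,t_1+(\log 2)/L]$ it uses the Lipschitz bound and Gr{\"o}nwall's inequality to show $|\nabla\fun{S}(z(t))|\geq|\nabla\fun{S}(z(t_1))|\,[2-e^{L(t-t_1)}]$, then integrates the entropy dissipation to obtain $|\nabla\fun{S}(z(t_1))|^2\leq \alpha L\,\varphi(t_1)$ with a numerical constant $\alpha=\big(\int_0^{\log 2}(2-e^s)^2\,ds\big)^{-1}>2$.

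The trade-off is exactly the one you anticipate. Your descent-lemma route is shorter and recovers the stated constant, but requires the auxiliary point $y$ to lie in the region where both the Lipschitz bound holds \emph{and} $\fun{S}(y)\geq\fun{S}_*$---and your proposed fix via a smooth extension of $\fun{S}$ does not obviously preserve the latter inequality. Polyak's argument, by contrast, uses only points on the trajectory itself, so once the bootstrap shows the orbit stays in $\ol{B_\rho(z_0)}$ there is nothing further to check; the price is a slightly worse constant. Both approaches are correct, and the technicality you flag is precisely the reason the paper's presentation takes the longer dynamical route.
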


We shall now give the details concerning the Polyak--{\L}ojasiewicz condition.
Although well-known, we recall for convenience of the reader the short proof
of the fact that a strongly convex entropy satisfies inequality~(\ref{eq:PL}). 

\begin{proof}[Proof: Strongly convex imply \ref{eq:PL}]
  A strongly convex function $\fun{S} \in C^1(\mathcal{Z})$,
  $\mathcal{Z} \subseteq \R^n$ with parameter $\alpha>0$ satisfies
  \begin{equation*}
    \fun{S}(z') \geq \fun{S}(z) + (z'-z) \cdot \nabla \fun{S}(z)
    + \frac{\alpha}{2} |z'-z|^2,
  \end{equation*}
  for any $z,z' \in \mathcal{Z}$. The right-hand side is bounded from below by
  its infimum over $z'$, which is attained at $z'=z-\alpha^{-1}\nabla\fun{S}(z)$
  and thus
  \begin{equation*}
    \fun{S}(z') \geq \fun{S}(z) -
    \frac{1}{2\alpha} \big|\nabla\fun{S}(z)\big|^2.
  \end{equation*}
  This inequality holds for any $z'$, and thus implies that
  $\fun{S}_* = \inf \{\fun{S}(z) \colon z \in \mathcal{Z}\} > -\infty$.
  Taking the infimum over $z'$ yields
  \begin{equation*}
    \fun{S}_* \geq \fun{S}(z) - \frac{1}{2\alpha} \big|\nabla\fun{S}(z)\big|^2.
  \end{equation*}
  This can be rearranged to give the \ref{eq:PL} inequality
  with constant $\kappa = 2\alpha$.
\end{proof} 

In Polyak's original formulation of the theorem, the size $\rho$ of the
ball is determined by the condition $\gamma \leq 1$ in terms of the entropy at
the initial position $z_0$. For instance, $\fun{S}(z) = z^2/2$ satisfies the
hypothesis with $\kappa = 2$ and $L=1$; given any $z_0\in\R^n$, $\gamma\leq 1$
is equivalent to $\rho \geq |z_0|$, hence $\ol{B_\rho(z_0)}$ is large enough to
contain the unique global minimum $z_*=0$. We give a slightly different
statement of Polyak's result.  
  
\begin{proposition}
  \label{th:grad-flow-PL}
  Let $\fun{S} \in C^1(\mathcal{Z})$, $X = -\nabla \fun{S}$ Lipschitz
  continuous in a a subdomain $\mathcal{U} \subset \mathcal{Z}$
  satisfying~(\ref{eq:U}), and let $\mathcal{O}$ be the open, non-empty subset
  given by Lemma~\ref{th:lemma}. If $\fun{S}$ satisfies~(\ref{eq:PL}) in
  $\mathcal{U}$, then for any integral curve $z(t)$ of $X$ with initial
  condition $z(0) = z_0 \in \mathcal{O}$ there is $z_* \in \mathcal{O}$ and a
  constant $\theta > 0$ depending on $z_0$ such that $\fun{S}(z_*) = \fun{S}_*$,
  \begin{equation*}
    \big|z(t) - z_* \big| \leq \theta e^{-\kappa t/2}\,, \quad 
    \big[\fun{S}\big(z(t)\big) - \fun{S}_* \big] 
    \leq
    \big[\fun{S}\big(z_0\big) - \fun{S}_*\big] e^{-\kappa t}\,,\quad
    \text{for}\ t \geq 0\,.
    \end{equation*}
\end{proposition}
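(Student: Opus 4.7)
The plan is to follow Polyak's argument, adapted to the positively invariant open set $\mathcal{O}$ produced by Lemma~\ref{th:lemma}. Since $z_0 \in \mathcal{O}$, that lemma already gives global existence of $z(t)$ for $t \geq 0$ with $z(t) \in \mathcal{O} \subset \mathcal{U}$, so~(\ref{eq:PL}) applies along the entire orbit. Setting $\varphi(t) := \fun{S}(z(t)) - \fun{S}_*$, the equation of motion $\dot z = -\nabla \fun{S}(z)$ together with~(\ref{eq:PL}) yields $\dot\varphi(t) = -|\nabla \fun{S}(z(t))|^2 \leq -\kappa\, \varphi(t)$, and Gr\"onwall's inequality gives the second bound of the proposition, $\varphi(t) \leq \varphi_0 e^{-\kappa t}$ with $\varphi_0 = \fun{S}(z_0) - \fun{S}_* \geq 0$.

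The harder step is to upgrade decay of the entropy value $\fun{S}(z(t))$ to convergence of the position $z(t)$ itself; as noted in the text above the statement, value convergence does not, in general, imply orbit convergence. The standard Polyak device is to differentiate $\sqrt{\varphi(t)}$ and to use~(\ref{eq:PL}) in the square-root form $|\nabla \fun{S}(z(t))| \geq \sqrt{\kappa\,\varphi(t)}$, obtaining
\begin{equation*}
\frac{d}{dt}\sqrt{\varphi(t)} = -\frac{|\nabla \fun{S}(z(t))|^2}{2\sqrt{\varphi(t)}} \leq -\frac{\sqrt{\kappa}}{2}\,|\nabla \fun{S}(z(t))| = -\frac{\sqrt{\kappa}}{2}\,|\dot z(t)|.
\end{equation*}
Integrating between $t_1 < t_2$ and using $|z(t_2) - z(t_1)| \leq \int_{t_1}^{t_2}|\dot z(s)|\,ds$ gives
\begin{equation*}
|z(t_2) - z(t_1)| \leq \frac{2}{\sqrt{\kappa}}\bigl(\sqrt{\varphi(t_1)} - \sqrt{\varphi(t_2)}\bigr) \leq \frac{2\sqrt{\varphi_0}}{\sqrt{\kappa}}\, e^{-\kappa t_1/2}.
\end{equation*}
Thus $\{z(t)\}_{t\geq 0}$ is Cauchy in $\R^n$, so it has a limit $z_* \in \overline{\mathcal{O}}$; letting $t_2 \to +\infty$ with $t_1 = t$ produces $|z(t) - z_*| \leq \theta e^{-\kappa t/2}$ with $\theta := 2\sqrt{\varphi_0/\kappa}$, which depends on $z_0$ only through $\varphi_0$.

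Finally, I would verify that $z_* \in \mathcal{O}$ and that $\fun{S}(z_*) = \fun{S}_*$. From the proof of Lemma~\ref{th:lemma} one has $\mathcal{O} = \{ z \in \mathcal{U} : \fun{S}(z) < \fun{S}_b \}$ with $\fun{S}_b := \inf_{\partial \mathcal{U}}\fun{S} > \fun{S}(z_m) \geq \fun{S}_*$. Continuity of $\fun{S}$ together with the decay of $\varphi$ gives $\fun{S}(z_*) = \lim_{t\to+\infty} \fun{S}(z(t)) = \fun{S}_* < \fun{S}_b$; hence $z_*$ cannot lie on $\partial \mathcal{U}$, and since $z_* \in \overline{\mathcal{O}} \subset \overline{\mathcal{U}}$, one obtains $z_* \in \mathcal{U}$, and thus $z_* \in \mathcal{O}$. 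The main obstacle is the position-convergence step: without the $\sqrt{\varphi}$ trick one only controls the entropy value, not the location of the orbit. The (PL) inequality is precisely what allows one to trade a factor of $|\nabla \fun{S}|$ for a time derivative of $\sqrt{\varphi}$, making the total arc length of the trajectory finite and exponentially small in $t$; the Lipschitz hypothesis on $X = -\nabla\fun{S}$ is only needed to ensure the invariance and smoothness already exploited through Lemma~\ref{th:lemma}.
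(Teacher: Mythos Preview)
Your argument is correct and is actually more streamlined than the paper's own proof, which follows Polyak's original two-step strategy. The paper first uses the Lipschitz constant $L$ of $\nabla\fun{S}$ explicitly to derive a \emph{reverse} bound $|\nabla\fun{S}(z(t))|^2 \leq \alpha L\,\varphi(t)$ (their Step~1, via a Gr\"onwall argument on $|z(t)-z(t_1)|$), and only then integrates $|\dot z|$ to obtain the Cauchy estimate (Step~2); the resulting constant is $\theta = \sqrt{4\alpha L \varphi_0}/\kappa$. Your {\L}ojasiewicz-style $\sqrt{\varphi}$ trick bypasses Step~1 entirely, trades one factor of $|\nabla\fun{S}|$ for $\sqrt{\kappa\varphi}$ directly from~(\ref{eq:PL}), and yields the sharper, $L$-independent constant $\theta = 2\sqrt{\varphi_0/\kappa}$. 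As you note, Lipschitz continuity is then used only through Lemma~\ref{th:lemma} for well-posedness; in the paper it is an essential quantitative ingredient. One small omission: you should dispose of the trivial case $\varphi(\bar t)=0$ for some finite $\bar t$ (then $z(\bar t)$ is a minimum of $\fun{S}$, hence an equilibrium, and the orbit is eventually constant); the paper treats this explicitly before embarking on the main argument. It is also worth remarking that the paper's Step~1 is reused verbatim in the metriplectic extension (Proposition~\ref{th:MpPL}), which is presumably why they chose that route here; your approach can be adapted there too, using $|K\nabla\fun{S}|^2 \leq \|K\|\,\nabla\fun{S}\cdot K\nabla\fun{S}$.
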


Next we give a proof of Proposition~\ref{th:grad-flow-PL}, which is based on
the original argument due to Polyak \cite{Polyak1963}. The only difference
consists in the use of Lemma~\ref{th:lemma} to establish the existence of a
global solution. We give the proof in details since similar ideas are then
needed below for the metriplectic case.

\begin{proof}[Proof of Proposition~\ref{th:grad-flow-PL}.]
  Lemma~\ref{th:lemma} ensures that for any $z_0 \in \mathcal{O}$ there is an
  integral curve $z(t) \in \mathcal{O}$ of the gradient flow
  $X = -\nabla\fun{S}$ passing through $z(0) = z_0$ and this is defined for all
  $t\geq 0$.  As a consequence of~(\ref{eq:PL}),
  $\varphi(t) = \fun{S}\big(z(t)\big) - \fun{S}_*$ decays to zero
  exponentially, i.e., $\varphi(t) \leq \varphi_0 e^{-\kappa t}$,
  with $\varphi_0 = \varphi(0)$.
  
  If there is a finite time $0 \leq \bar{t} < +\infty$ at which
  $\varphi(\bar{t}) = 0$, then $z(\bar{t})$ is a minimum for the entropy
  $\fun{S}$ and thus $\nabla \fun{S}\big(z(\bar{t})\big) = 0$ so that
  $z(\bar{t})$ is an equilibrium point. We deduce $z(t) = z(\bar{t}) = z_0$
  for all $t\geq 0$ and the thesis holds true with $z_* = z_0$.
  
  As for the non-trivial case $\varphi(t) > 0$ for all $t\geq 0$, we
  distinguish two key steps.
    
  \noindent
  \emph{Step 1.} For any $t_1,t_2 \geq 0$, $t_1 < t_2$, we have
  \begin{equation*}
    \fun{S}\big(z(t_1)\big) - \fun{S}\big(z(t_2)\big) =
    \int_{t_1}^{t_2} \big|\nabla \fun{S}\big(z(t)\big)\big|^2 dt.
  \end{equation*}
  Following Polyak, we estimate the right-hand side from below. First the
  triangular inequality gives
  \begin{equation*}
    \big|\nabla \fun{S}\big(z(t)\big) \big| \geq
    \Big|
    \big|\nabla \fun{S}\big(z(t_1)\big) \big| -
    \big|\nabla \fun{S}\big(z(t)\big) - \nabla \fun{S}\big(z(t_1)\big)\big|
    \Big|.
  \end{equation*}
  The second term on the right-hand side is bounded by
  \begin{equation*}
    \big| \nabla \fun{S}\big(z(t)\big) - \nabla \fun{S}\big(z(t_1)\big)\big|
    \leq L \big|z(t) - z(t_1)\big|,
  \end{equation*}
  where $L > 0$ is the Lipschitz constant of $\nabla \fun{S}$. In addition,
  \begin{align*}
    \frac{d}{dt} \big|z(t) - z(t_1)\big| &\leq \Big|\frac{d}{dt}
    \big|z(t) - z(t_1)\big| \Big| \leq \big|
    \nabla \fun{S}\big(z(t)\big)\big| \\
    & \leq \big|\nabla \fun{S}\big(z(t_1)\big)\big| + \big|
    \nabla \fun{S}\big(z(t)\big) - \nabla \fun{S}\big(z(t_1)\big)\big| \\
    & \leq \big|\nabla \fun{S}\big(z(t_1)\big) \big| + L
    \big|z(t) - z(t_1)\big|.
  \end{align*}
  Gr{\"o}nwall's inequality then gives
  \begin{equation*}
    \big|z(t) - z(t_1)\big| \leq \frac{1}{L}
    \big| \nabla \fun{S}\big(z(t_1)\big) \big| \big[e^{L(t-t_1)} - 1 \big],
  \end{equation*}
  and thus
  \begin{equation*}
    \big|\nabla \fun{S}\big(z(t)\big) - \nabla \fun{S}\big(z(t_1)\big)\big|
    \leq
    \big| \nabla \fun{S}\big(z(t_1)\big) \big| \big[e^{L(t-t_1)} - 1 \big].
  \end{equation*}
  If $L (t_2 - t_1) \leq \log 2$, the term on the right-hand side is
  $\leq \big| \nabla \fun{S}\big(z(t_1)\big) \big|$ and thus 
  \begin{equation*}
    \big|\nabla \fun{S}\big(z(t)\big) \big| \geq
    \big|\nabla \fun{S}\big(z(t_1)\big) \big| \big[2 - e^{L (t-t_1)}\big].
  \end{equation*}
  With $t_2 = t_1 + (\log 2)/L$ and $t_1$ arbitrary, we obtain
  \begin{equation*}
    \fun{S}\big(z(t_1)\big) - \fun{S}\big(z(t_2)\big) \geq
    \frac{1}{\alpha L} \big|\nabla \fun{S}\big(z(t)\big) \big|^2,
  \end{equation*}
  where
  \begin{equation*}
    \frac{1}{\alpha} \coloneqq
    \int_0^{\log 2} \big[2 - e^{s}\big]^2 ds
  \end{equation*}
  is a positive numerical constant. At last, we use the fact that
  $\fun{S}\big(z(t)\big)$ is non-increasing and $t_1 < t_2$ so that
  \begin{align*}
    \fun{S}\big(z(t_1)\big) - \fun{S}\big(z(t_2)\big) &=
    \big[\fun{S}\big(z(t_1)\big) - \fun{S}_*\big] -
    \big[\fun{S}\big(z(t_2)\big) - \fun{S}_*\big] \\
    &\leq
    \big[\fun{S}\big(z(t_1)\big) - \fun{S}_*\big],
  \end{align*}
  and deduce that, for any $t_1 \geq 0$,
  \begin{equation}
    \label{eq:PL-grad-estimate}
    \big|\nabla \fun{S}\big(z(t_1)\big) \big|^2 \leq \alpha L \big[
      \fun{S}\big(z(t_1)\big) - \fun{S}_*\big] = \alpha L \varphi(t_1).
  \end{equation}
  We have established that, under the hypotheses, $\nabla \fun{S}$ along the
  orbit is controlled by the entropy decay.
  
  \noindent
  \emph{Step 2.} Given arbitrary points in time $t_1, t_2 \geq 0$,
  $t_1 < t_2$,
  \begin{equation*}
    \big|z(t_1) - z(t_2)\big| \leq \int_{t_1}^{t_2} \big|\nabla
    \fun{S}\big(z(t)\big)\big| dt,
  \end{equation*}
  and~(\ref{eq:PL-grad-estimate}) together with 
  $\varphi(t) \leq \varphi_0 e^{-\kappa t}$ yields
  \begin{equation}
    \label{eq:PL-Cauchy}
    \big|z(t_1) - z(t_2)\big| \leq \frac{\sqrt{4\alpha L \varphi_0}}{\kappa}
    \big[e^{-\kappa t_1 / 2} - e^{-\kappa t_2/2}\big].
  \end{equation}
  This inequality can be used to show that for any sequence
  $\{t_n\}_{n \in \N}$ with $t_n \to +\infty$ as $n \to +\infty$, $z(t_n)$ is
  a Cauchy sequence on $\mathcal{O}$. Therefore there is a point
  $z_* \in \ol{\mathcal{O}}$ such that $z(t) \to z_*$ as $t \to +\infty$.
  Continuity of $\fun{S}$ implies $\fun{S}(z_*) = \fun{S}_*$ as claimed.
  At last, $z_* \in \mathcal{O}$, for, if not, then
  $z_* \in \partial \mathcal{O}$ and
  $\fun{S}(z_*) > \fun{S}(z_0) \geq \fun{S}(z_*)$, which is a contradiction
  (the first inequality is strict). Passing to the limit $t_2 \to +\infty$ in
  inequality~(\ref{eq:PL-Cauchy}) yields
  \begin{equation*}
    \big|z(t_1) - z_*\big| \leq \theta e^{-\kappa t_1 / 2},
  \end{equation*}
  with $\theta = \sqrt{4 \alpha L \varphi_0} / \kappa$ and any $t_1 \geq 0$.
\end{proof}

We now generalize this result to the case of finite-dimensional metriplectic
systems. Under the same conditions as in Proposition~\ref{th:mod-Lyapunov}, a
simple generalization of~(\ref{eq:PL}) for metriplectic vector fields reads:
for any $\eta \in \fun{I}(\mathcal{U})$,
$\inf \{\fun{S}(z) \colon z \in \mathcal{U}_\eta\} \eqqcolon \fun{S}_\eta > -
\infty$ and there is a constant $\kappa_\eta > 0$, depending on $\eta$,
such that  
\begin{equation*}
  \label{eq:MpPL}\tag{PL${}^\prime$}
  \frac{1}{\kappa_\eta} \nabla \fun{S}(z) \cdot K(z) \nabla \fun{S}(z) \geq
  \fun{S}(z) - \fun{S}_\eta, \quad z \in \mathcal{U}_\eta.
\end{equation*}
  
Differently from~(\ref{eq:PL}), which is a condition on $\fun{S}$,
inequality~(\ref{eq:MpPL}) involves both the entropy function and the bracket 
$(\fun{S},\fun{S}) = \nabla \fun{S} \cdot K \nabla \fun{S}$, which gives the
entropy decay rate. Similarly to~(\ref{eq:PL}), if inequality~(\ref{eq:MpPL})
holds, then the metriplectic vector field has the following properties:

\begin{itemize}

\item If $z(t) \in \mathcal{U}_\eta$, $t \in [0,+\infty)$, is an integral curve
  of $X$ with initial condition $z_0 = z(0)$, then
  $\fun{S}\big(z(t)\big) \to \fun{S}_\eta$ as $t \to +\infty$, with exponential
  convergence. In fact, (\ref{eq:MpPL}) implies
  \begin{equation*}
    \big[\fun{S}\big(z(t)\big) - \fun{S}_\eta \big] \leq
    \big[\fun{S}\big(z_0\big) - \fun{S}_\eta \big] e^{-\kappa t},
  \end{equation*}
  which follows as in the case of standard gradient flows.
    
\item If $z_e \in \mathcal{U}_\eta$ is an equilibrium point, then it is
  necessarily a global minimum of $\fun{S}|_{\mathcal{U}_\eta}$ that is,
  $\fun{S}(z_e) = \fun{S}_\eta$.
  This follows from the fact that at an equilibrium point necessarily
  $\nabla \fun{S}(z_e) \cdot K(z_e) \nabla \fun{S}(z_e) = 0$,
  and~(\ref{eq:MpPL}) implies $\fun{S}(z_e) = \fun{S}_\eta$.  
  
\end{itemize}

We can now state the analog of Proposition~\ref{th:grad-flow-PL} for the case
of metriplectic vector fields. However, we consider only the dissipative part,
that is, a metriplectic vector field of the form~(\ref{eq:metric-system}),
without the symplectic part, and make an additional assumption that is
sufficient to ensure that the orthogonal projection onto $\ker K(z)$ is smooth
in $z$. Specifically, we assume that
\begin{equation}
  \label{eq:K-eigenvalues}
  \begin{aligned}
    &\text{there is a constant $r > 0$, such that, for any
      $z \in \mathcal{Z}$,} \\
    &\text{zero is the only eigenvalue of $K(z)$ in the interval $[-r,r]$.}
  \end{aligned}
\end{equation}
Hypothesis~(\ref{eq:K-eigenvalues}) implies that, for any $z$, only
one eigenvalue of $K(z)$, the zero eigenvalue, belongs in the disk of
radius $r>0$ centered at zero in the complex plane.

Concerning the application of metriplectic dynamics to the calculation 
of equilibria of fluids and plasmas, the restriction to systems of the
form~(\ref{eq:metric-system}) is not a significant limitation, since we often
construct the relaxation method from the metric bracket only, as we do in the
examples of Sections \ref{sec:simple}, \ref{sec:coll-like-metr},
and~\ref{sec:diff-like-metr} below. In general however, it could be convenient
to account for the ideal dynamics of the considered system. In that case
Polyak's argument fails since the entropy gradient alone is not sufficient to
control the time derivative of $|z(t)-z(t_1)|$ in the first step of the proof.
We are not aware of any generalization of the \ref{eq:PL} inequality to
completely general metriplectic fields.

\begin{proposition}
  \label{th:MpPL}
  Let $X = - K \nabla \fun{S}$ be a vector field of the
  form~(\ref{eq:metric-system}) on a domain $\mathcal{Z} \subseteq \R^n$, 
  with $K$ satisfying~(\ref{eq:K-eigenvalues}) and let
  $\fun{I}=(\fun{I}^1, \ldots, \fun{I}^k) \in C^\infty(\mathcal{Z},\R^k)$,
  $1 \leq k < n$, be such that $K \nabla \fun{I}^\alpha = 0$.
  Assume that $\rank \nabla \fun{I} = k$ in a subdomain $\mathcal{U}$
  satisfying~(\ref{eq:U}) and let $\mathcal{O}$ be the open set given by
  Lemma~\ref{th:lemma}. If~(\ref{eq:MpPL}) holds in $\mathcal{U}$, then for
  any integral curve $z(t)$ of $X$ with initial condition
  $z(0) = z_0 \in \mathcal{O}$  there is a point $z_\eta \in \mathcal{O}$ and
  a constant $\theta_\eta$ depending on $z_0$, such that
  $\eta = \fun{I}(z_\eta) = \fun{I}(z_0)$, $\fun{S}(z_\eta) = \fun{S}_\eta$
  and
  \begin{equation*}
    \big|z(t) - z_\eta \big| \leq \theta_\eta e^{-\kappa_\eta t/2}\,,\quad 
    \big[\fun{S}\big(z(t)\big) - \fun{S}_\eta \big]
    \leq
    \big[\fun{S}\big(z_0\big) - \fun{S}_\eta\big] e^{-\kappa_\eta t}\,,\quad
    \text{for}\ t\geq 0\,.
  \end{equation*}
\end{proposition}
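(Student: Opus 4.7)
The plan is to mimic the proof of Proposition~\ref{th:grad-flow-PL} (Polyak's strategy) for the metriplectic vector field $X=-K\nabla\fun{S}$, with the spectral gap hypothesis~(\ref{eq:K-eigenvalues}) providing the extra ingredient needed to cope with the degeneracy of $K$. First I would observe that, since $K\nabla\fun{I}^\alpha=0$ for each $\alpha$, the vector $\fun{I}$ is conserved along orbits, so $z(t)\in\mathcal{U}_\eta$ with $\eta=\fun{I}(z_0)$. Lemma~\ref{th:lemma} gives global existence of $z(t)\in\mathcal{O}$ for all $t\geq 0$. Setting $\varphi(t)\coloneqq\fun{S}(z(t))-\fun{S}_\eta$, the identity $\dot\varphi=-\nabla\fun{S}\cdot K\nabla\fun{S}$ combined with~(\ref{eq:MpPL}) yields $\dot\varphi\leq-\kappa_\eta\varphi$, and Gronwall's inequality gives the entropy estimate $\varphi(t)\leq\varphi(0)e^{-\kappa_\eta t}$.

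The core step is to establish the a priori bound $|X(z(t))|^2\leq C\varphi(t)$ for some constant $C>0$. Here I would first use~(\ref{eq:K-eigenvalues}) together with continuity of $K$ on the compact set $\ol{\mathcal{U}}$ to derive the operator inequalities $rK\leq K^2\leq MK$ with $M=\sup_{\ol{\mathcal{U}}}\|K\|<\infty$, yielding the uniform comparison $rD(z)\leq|X(z)|^2\leq MD(z)$ on $\ol{\mathcal{U}}$, where $D(z)=\nabla\fun{S}(z)\cdot K(z)\nabla\fun{S}(z)$ is the entropy dissipation rate. Since $K$ is smooth and $\nabla\fun{S}$ is continuously differentiable, $X$ is Lipschitz on $\ol{\mathcal{U}}$ with some constant $L_X$. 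Polyak's Gronwall argument then gives $|X(z(s))|\geq|X(z(t_1))|(2-e^{L_X(s-t_1)})$ for $L_X(s-t_1)\leq\log 2$; choosing $t_2=t_1+(\log 2)/L_X$ and using $D\geq|X|^2/M$ one finds
\begin{equation*}
  \varphi(t_1)\geq\varphi(t_1)-\varphi(t_2)=\int_{t_1}^{t_2}D(z(s))\,ds\geq\frac{|X(z(t_1))|^2}{\alpha L_X M},
\end{equation*}
with $1/\alpha=\int_0^{\log 2}(2-e^u)^2\,du>0$, whence the bound holds with $C=\alpha L_X M$.

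Combining this with the exponential decay of $\varphi$ gives $|X(z(t))|\leq\sqrt{C\varphi(0)}\,e^{-\kappa_\eta t/2}$, and the displacement estimate
\begin{equation*}
  |z(t_1)-z(t_2)|\leq\int_{t_1}^{t_2}|X(z(s))|\,ds\leq\frac{2\sqrt{C\varphi(0)}}{\kappa_\eta}\bigl(e^{-\kappa_\eta t_1/2}-e^{-\kappa_\eta t_2/2}\bigr)
\end{equation*}
shows that $z(t)$ is Cauchy as $t\to+\infty$ and converges to some $z_\eta\in\ol{\mathcal{O}}$ with rate $\theta_\eta e^{-\kappa_\eta t/2}$, where $\theta_\eta=2\sqrt{C\varphi(0)}/\kappa_\eta$. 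Continuity of $\fun{I}$ and $\fun{S}$ yields $\fun{I}(z_\eta)=\eta$ and $\fun{S}(z_\eta)=\fun{S}_\eta$, while $z_\eta$ cannot lie on $\partial\mathcal{O}$ since on $\partial\mathcal{O}\cap\mathcal{U}$ one has $\fun{S}=\fun{S}_b>\fun{S}(z_0)\geq\fun{S}_\eta$ by construction of $\mathcal{O}$.

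I expect the core step (the adaptation of Polyak's Step~1) to be the main obstacle. In the pure gradient-flow case the dissipation rate coincides with $|X|^2$, so the Gronwall argument directly controls $|X(z(t))|^2$ by $\varphi(t)$; in the metriplectic setting the dissipation involves the $K$-weighted inner product $\nabla\fun{S}\cdot K\nabla\fun{S}$ and differs from $|X|^2=|K\nabla\fun{S}|^2$ by a factor that a priori degenerates where $K$ does. The spectral gap hypothesis~(\ref{eq:K-eigenvalues}) is precisely what converts these quantities into one another with a uniform constant on $\ol{\mathcal{U}}$; without it the chain from entropy dissipation back to the orbit displacement cannot be closed.
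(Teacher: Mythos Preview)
Your argument is correct and in fact cleaner than the paper's. The paper introduces an auxiliary projected gradient $Y(z)=(I-\pi_0(z))\nabla\fun{S}(z)$, where $\pi_0(z)$ is the orthogonal projector onto $\ker K(z)$; hypothesis~(\ref{eq:K-eigenvalues}) is invoked there to guarantee that $\pi_0$ is smooth, so that $Y$ is Lipschitz and Polyak's Gr{\"o}nwall step can be run for $Y$. You bypass this entirely by running the Gr{\"o}nwall step directly on $X=-K\nabla\fun{S}$, which is Lipschitz on $\ol{\mathcal{U}}$ simply because $K$ and $\nabla\fun{S}$ are smooth.

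Notice, though, that the only comparison between dissipation and velocity you actually use is $D\geq |X|^2/M$, i.e.\ $K^2\leq MK$, and this holds for \emph{any} symmetric positive-semidefinite $K$ with spectrum in $[0,M]$; the companion inequality $rK\leq K^2$ that you record is never invoked in your chain. So your proof in fact establishes the proposition \emph{without} hypothesis~(\ref{eq:K-eigenvalues}), and your closing paragraph misidentifies the obstacle: the spectral gap is what the paper's route via $\pi_0$ needs, not what yours needs. Your constant $\theta_\eta=2\sqrt{\alpha L_X M\varphi(0)}/\kappa_\eta$ is accordingly independent of $r$, whereas the paper's $\theta_\eta=\kappa_\eta^{-1}\sqrt{4\alpha R^3 L\varphi_{0,\eta}/r}$ blows up as the gap closes.
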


The proof of Proposition~\ref{th:grad-flow-PL} can be adapted to this case.
The key point is replacing $\nabla \fun{S}$ with $(I-\pi_0) \nabla \fun{S}$,
where $\pi_0$ is the orthogonal projector onto the $\ker K$.

\begin{proof}[Proof of Proposition~\ref{th:MpPL}.]
  Per hypothesis, both $K$ and $\fun{S}$ are smooth on $\mathcal{Z}$ and thus
  $X$ is Lipschitz continuous on any bounded subset and in particular on
  $\mathcal{U}$. Then Lemma~\ref{th:lemma} gives an open subset of
  $\mathcal{O} \subseteq \mathcal{U}$ such that for any point
  $z_0 \in \mathcal{O}$ there is an integral curve $z(t) \in \mathcal{O}$ of
  $X$ through the point $z_0 = z(0)$, defined for all $t \geq 0$. Along the
  orbit $\fun{I}^\alpha\big( z(t) \big) = \fun{I}^\alpha(z_0) = \eta^\alpha$,
  $\alpha \in \{1,\ldots,k\}$, and thus
  $z(t) \in \mathcal{U}_\eta \cap \mathcal{O}$.
  
  At a point $z_\eta \in \mathcal{U}_\eta$ where
  $\fun{S}(z_\eta) = \fun{S}_\eta$ the function $\fun{S}|_{\mathcal{U}_\eta}$
  attains its minimum and thus $\fun{S}$ must
  satisfy~(\ref{eq:constrained-critical-points}). It follows from the
  assumption $K\nabla \fun{I}^\alpha = 0$ that $X(z_\eta) = 0$. Therefore, if
  there is $\bar{t} \geq 0$ such that
  $\fun{S}\big(z(\bar{t})\big) = \fun{S}_\eta$,
  then $X\big(z(\bar{t})\big) = 0$ and $z(t) = z(\bar{t}) = z_0$ for all
  $t\geq 0$. In this case the statement of the proposition is trivially true.
  
  Let us now consider the non-trivial case
  $\fun{S}\big(z(t)\big) > \fun{S}_\eta$, $t\geq 0$. We shall follow the same
  two steps as in Polyak original proof, with the necessary changes to account
  for the degeneracy of the metriplectic flow. This requires a preliminary
  step in which we establish the needed properties of the matrix $K$.
  
  \emph{Step 0.} For any point $z \in \mathcal{Z}$, let $K_i(z)>0$ be the
  $i$-th non-zero eigenvalue of $K(z)$ and $\pi_i(z)$ the orthogonal
  projector on $\ker \big(K(z) - K_i(z) I)$, i.e. on the eigenspace of $K(z)$
  corresponding to the eigenvalue $K_i(z)$. Then
  \begin{equation*}
    K(z) = \sum_i K_i(z) \pi_i(z).
  \end{equation*}
  Since $K(z)$ is symmetric any eigenvalue (including zero) is semisimple
  \cite[Appendix 3.I]{Rauch2012}. We assumed that the closed disk of radius
  $r>0$ centered in $0 \in \C$ contains only the zero eigenvalue of $K(z)$ for
  all $z \in \mathcal{Z}$, hence  $K_i(z) > r$. It follows that
  \cite[Theorem~3.I.1]{Rauch2012} the orthogonal projector $\pi_0(z)$ onto
  $\ker K(z)$ is a smooth function of $z \in \mathcal{Z}$.
  In addition we have that, for any vector $Z \in \R^n$,
  \begin{align*}
    Z \cdot K(z) Z &= \sum_i K_i(z) Z \cdot \pi_i(z) Z \\
    &\geq r \sum_i Z \cdot \pi_i(z) Z =r Z \cdot (I- \pi_0) Z.
  \end{align*}
  If $Z = \nabla \fun{S}(z)$, we deduce
  \begin{equation}
    \label{eq:K-lower}
    \nabla\fun{S}(z) \cdot K(z) \nabla \fun{S}(z)
    \geq r \big|\big(I-\pi_0(z)\big)\nabla\fun{S}(z)\big|^2.
  \end{equation}
  We also have $X =- K \nabla \fun{S} = -K(I-\pi_0)\nabla\fun{S}$, and
  $|X(z)| \leq \|K(z)\|_F \big|\big(I-\pi_0(z)\big) \nabla\fun{S} (z)\big|$,
  where $\|K(z)\|_F$ is the Frobenius norm, which is a continuous function of
  $z$. For any $z$ in the compact set $\ol{\mathcal{U}}$, 
  $\|K(z)\|_F \leq R = \max\{\|K(z')\|_F \colon z' \in \ol{\mathcal{U}}\}$,
  so that
  \begin{equation}
    \label{eq:K-upper}
    \big|X(z)\big| \leq R \big|\big(I-\pi_0(z)\big)\nabla \fun{S}(z)\big|.
  \end{equation}
  Since $\pi_0$ is smooth, $Y(z) = \big(I-\pi_0(z)\big)\nabla \fun{S}(z)$ is
  smooth. This is the component of $\nabla \fun{S}$ orthogonal to $\ker K$.
  
  From~(\ref{eq:K-upper}) we can also deduce that $Y\big(z(t)\big)\not =0$
  for, if not, then~(\ref{eq:MpPL}) implies
  $\fun{S}\big(z(t)\big) = \fun{S}_\eta$, which is the trivial case.
  
  We shall show that Polyak's argument can be repeated with $Y$ instead of
  $\nabla \fun{S}$. 
  
  \emph{Step 1.} Upon using~(\ref{eq:K-lower}), given $0 \leq t_1 < t_2$,
  \begin{equation*}
    \fun{S}\big(z(t_1)\big) - \fun{S}\big(z(t_2)\big) \geq
    r \int_{t_1}^{t_2} \big|Y\big(z(t)\big)\big|^2 dt.
  \end{equation*}
  On the other end, (\ref{eq:K-upper}) yields
  \begin{equation*}
    \frac{d}{dt} \big|z(t) - z(t_1)\big| \leq \big|X\big(z(t)\big)\big|
    \leq R \big|Y\big(z(t)\big)\big|.
  \end{equation*}
  Since $Y \in C^\infty(\mathcal{Z},\R^n)$, it is in particular Lipschitz
  continuous on $\mathcal{O}$. Let $L > 0$ be the Lipschitz constant of $Y$.
  The same argument of step 1 in the proof of
  Proposition~\ref{th:grad-flow-PL} can be repeated leading to 
  \begin{equation}
    \label{eq:Y-estimate}
    \big|Y\big(z(t)\big)\big|^2 \leq \frac{\alpha R L}{r} \big[
      \fun{S}\big(z(t)\big) - \fun{S}\big],
  \end{equation}
  where $\alpha$ is the same constant defined in the proof of
  Proposition~\ref{th:grad-flow-PL}.
  
  \emph{Step 2.} We have already shown that
  \begin{equation*}
    \varphi_\eta (t) \leq \varphi_{0,\eta} e^{-\kappa_\eta t},
  \end{equation*}
  where $\varphi_\eta(t) = \fun{S}\big(z(t)\big) - \fun{S}_\eta$, and
  $\varphi_{0,\eta} = \varphi_\eta(0)$. For any $0 \leq t_1 < t_2$,
  inequality~(\ref{eq:Y-estimate}) gives 
  \begin{align*}
    \big|z(t_1) - z(t_2)\big| &\leq R \int_{t_1}^{t_2}
    \big|Y\big(z(t)\big)\big| dt \\
    &\leq R \sqrt{\frac{\alpha R L}{r}} \int_{t_1}^{t_2}
    \sqrt{\varphi_\eta(t)} dt \\
    &\leq \frac{1}{\kappa_\eta} \sqrt{\frac{4\alpha R^3 L \varphi_{0,\eta}}{r}}
    \big[e^{-\kappa_\eta t_1/2} - e^{-\kappa_\eta t_2/2}\big].
  \end{align*}
  It follows that $z(t)$ has a limit $z_\eta \in \ol{\mathcal{O}}$ for
  $t \to +\infty$ and the limit satisfies
  $\fun{S}(z_\eta)=\lim_{t\to +\infty}\fun{S}\big(z(t)\big)=\fun{S}_\eta$,
  hence $z_\eta \in \mathcal{O}$. In addition,
  $\fun{I}(z_\eta) = \lim_{t\to +\infty}\fun{I}\big(z(t)\big) = \fun{I}(z_0)$,
  hence $z_\eta \in \mathcal{U}_\eta \cap \mathcal{O}$.
  At last, passing to the limit $t_2 \to +\infty$ yields
  \begin{equation*}
    \big|z(t_1) - z_\eta \big| \leq \theta_\eta e^{-\kappa_\eta t_1/2},
  \end{equation*}
  with constant
  \begin{equation*}
    \theta_\eta = \frac{1}{\kappa_\eta}
    \sqrt{\frac{4\alpha R^3 L \varphi_{0,\eta}}{r}}.
  \end{equation*}
  This is the claimed inequality.
\end{proof}

This exponential convergence result becomes less useful when the constant
$\theta_\eta$ is large, which can happen when either $r$ or $\kappa_\eta$
are small. In the former case, an eigenvalue of $K$ becomes small at least in
some region of the domain. In the latter case, the metric bracket is small
even where entropy is far from the minimum.

This result follows from a minimal modification of Polyak's argument for
standard gradient flows. 
One should note that here $\nabla \fun{I}^\alpha$ are assumed to be in the
kernel of $K$, and this assumption is consistent with the
requirement~(\ref{eq:metric-system-compatibility}) for $\fun{H}$. 
  
Unlike the results of Section~\ref{sec:finite-dim-Lyapunov},
convergence results based on Polyak inequalities do not require the uniqueness
of the minimum entropy state. On the other hand, the precise point on the set
of minima at which each orbit converges depends on the specific orbit. We
illustrate inequality~(\ref{eq:MpPL}) with a few examples. 
  
\begin{example}
  \label{ex:1}

  Let the phase space be $\mathcal{Z} = \R^n$, $n \in \N$ and $n \geq 2$,
  with coordinates $z = (z^i)_{i=1}^n$, $s = (s_i)_{i=1}^n$ and
  $h = (h_i)_{i=1}^n \in \R^n$ be covariant vectors, 
  $K = (K^{ij})$ be a symmetric, positive-semidefinite,
  contravariant tensor on $\R^n$ such that $K^{ij}h_j = 0$, and let
  $\sigma = (\sigma_{ij})_{ij}$ be a symmetric positive definite, covariant
  tensor. We define the dissipative part of the metriplectic vector field
  $X = - K \nabla \fun{S}$, with Hamiltonian $\fun{H}(z) = h_i z^i$, and entropy
  $\fun{S}(z) = s_i z^i + \frac{1}{2} \sigma_{ij} z^i z^j$.
  We further assume that the null space of $K$ coincides with the line spanned
  by $h$, i.e., $K\omega = 0$ implies $\omega = \lambda h$ for some
  $\lambda \in \R$. Then the metric bracket defined by $K$ is
  ``minimally degenerate'', in the sense defined above.
    
  We claim that this metriplectic system satisfies condition~(\ref{eq:MpPL}),
  and thus Proposition~\ref{th:MpPL} applies.  
    
  In order to show this, let us first observe that the change of variables
  $z \mapsto \tilde{z} = \sigma^{1/2} z$ transforms the system into an analogous
  one with $\sigma_{ij}$ replaced by $\delta_{ij}$ and with $h$ and $s$ replaced
  by $\sigma^{-1/2} h$ and $\sigma^{-1/2} s$,  respectively. Hence it is enough
  to discuss the case $\sigma_{ij} = \delta_{ij}$. We can also assume
  $|h|^2 = 1$, because the normalization of $h$ only changes the value of the
  Hamiltonian but not its isosurfaces. Then, for any $\eta \in \R$,
  $\mathcal{U}_\eta = \{z \in \R^n \colon \fun{H}(z) = \eta \}$ is the plane
  given by  $h_i z^i = \eta$.   A point $z \in \mathcal{U}_\eta$ can be written
  as $z = \eta h + z_\perp$, with $z_\perp = z - (h \cdot z)h$.
  Given $\eta \in \R$, we can use Lagrange multipliers  to compute the
  constrained entropy minima $\fun{S}_\eta$: we search for
  $(\lambda, z)$ such that (with $\partial_i = \partial/\partial z^i$)
  $\partial_i \fun{S}(z) = \lambda \partial_i \fun{H}(z)$ with
  $\fun{H}(z) = \eta$, which is equivalent to 
  \begin{equation*}
    \left\{
    \begin{aligned}
      z &= \lambda h - s, \\
      h \cdot z &= \eta.
    \end{aligned}\right.
  \end{equation*}
  The solution $(\lambda_\eta, z_\eta)$ is readily found,
  \begin{equation*}
    \lambda_\eta = \eta + h \cdot s, \quad
    z_\eta =  \lambda_\eta h - s = \eta h - s_\perp,
  \end{equation*}
  where $s_\perp = s-(h\cdot s)h$. Therefore, if $z \in \mathcal{U}_\eta$,
  \begin{equation*}
    \fun{S}(z) = s \cdot (\eta h + z_\perp) + \frac{1}{2} |\eta h + z_\perp|^2 
    = \fun{S}(z_\eta) + \frac{1}{2}|z_\perp + s_\perp|^2.
  \end{equation*}
  On the other hand, since $Kh=0$, for any $z \in \mathcal{U}_\eta$, 
  \begin{equation*}
    \nabla \fun{S}(z) \cdot K \nabla \fun{S}(z) = (z+s) \cdot K (z+s) 
    = (z_\perp + s_\perp) \cdot K (z_\perp + s_\perp) 
    \geq K_1 |z_\perp + s_\perp|^2,
  \end{equation*}
  where $K_1 > 0$ is the smallest eigenvalue of $K$ restricted to
  $(\ker K)^\perp$, the orthogonal of its kernel. We deduce
  \begin{equation*}
    \nabla \fun{S}(z) \cdot K \nabla \fun{S}(z) \geq 2 K_1
    \big[\fun{S}(z) - \fun{S}_\eta\big], \quad z \in \mathcal{U}_\eta,
  \end{equation*}
  where $\fun{S}_\eta = \fun{S}(z_\eta)$ is the constrained minimum of the
  entropy. This is condition~(\ref{eq:MpPL}) with $\kappa_\eta=2K_1$.
  
  We remark that, at least in this case, the condition on the kernel of $K$
  being ``minimal'' is crucial for the modified PL condition. In fact, if there
  is a vector $h' \in \R^n$, orthogonal to $h$, and such that $K h' = 0$, then
  $(z_\perp + s_\perp) \cdot K (z_\perp + s_\perp) = 0$ for any non-zero 
  $z_\perp + s_\perp \propto h'$.

  This example is simple enough that an analytical solution of the integral
  curves of $X$ can be obtained. In fact, the equation for the new variable
  $y = s + z$ amounts to the linear system $dy/dt = -Ky$ with initial
  condition $y_0 = s + z_0$. If $\fun{H}(z_0) = h \cdot z_0 = \eta$, we must
  have $ h \cdot y_0 = \eta + h \cdot s$. Upon representing $y$ on the basis
  of the unit eigenvectors $\{e_i\}_{i=0}^{n-1}$ of $K$, with $e_0 = h$ being
  the eigenvector that corresponds to the zero eigenvalue, we obtain
  \begin{equation*}
    y(t) = (h \cdot y_0) h + \sum_{i \geq 1} e^{-t\lambda_i(K)}
    c_i e_i,
  \end{equation*}
  where $\lambda_i(K) > 0$ are the positive eigenvalues of $K$ and
  $c_i = e_i \cdot y_0$. We deduce
  \begin{equation*}
    |z(t) - z_\eta| \leq |z_0 - z_\eta| e^{-t K_1},
  \end{equation*}
  with $K_1 = \min_{i\geq1} \{\lambda_i(K)\}$.

  Hence, in this case we have exponential convergence to the equilibrium point,
  with convergence rate being half of the constant in~(\ref{eq:MpPL}). 
\end{example}

\begin{example}
  \label{ex:2}

  With the same metric bracket and Hamiltonian as in Example~\ref{ex:1}, let
  us consider the entropy function
  \begin{equation*}
    \fun{S}(z) = \frac{|z|^2}{1+|z|^2}, \quad z \in \mathcal{Z}=\R^n.
  \end{equation*}
  As before, this entropy is rotationally symmetric, with a global minimum at
  $z=0$, but it is not a convex function.  

  Since $\fun{H}$ is the same as in Example~\ref{ex:1}, $z\in\mathcal{U}_\eta$
  if and only if $z = \eta h + z_\perp$ and we compute
  \begin{equation*}
    \nabla \fun{S}(z) \cdot K \nabla \fun{S}(z)
    = 4 \frac{z_\perp \cdot K z_\perp}{(1+|z|^2)^4}
    \geq 4 K_1 \frac{|z_\perp|^2}{(1+|z|^2)^4},
  \end{equation*}
  where $K_1$ is defined in Example~\ref{ex:1}. We can use Lagrange
  multipliers in order to compute minima of the entropy constrained to
  $\mathcal{U}_\eta$ with the result that there is a unique minimum at
  $z_\eta = \eta h$ and
  \begin{equation*}
    \fun{S}_\eta = \fun{S}(z_\eta) = \frac{\eta^2}{1+\eta^2}.
  \end{equation*}
  Then, we compute
  \begin{equation*}
    \fun{S}(z) - \fun{S}_\eta =
    \frac{|z_\perp|^2}{(1+\eta^2)(1+|z|^2)},
  \end{equation*}
  from which we deduce
  \begin{equation*}
    \frac{|z_\perp|^2}{(1+|z|^2)^4} = \frac{1+\eta^2}{(1+|z|^2)^3}
    \big[\fun{S}(z) - \fun{S}_\eta\big], \quad z \in \mathcal{U}_\eta,
  \end{equation*}
  hence, for any $R>0$, on the ball $|z| < R$, we have
  \begin{equation*}
    \nabla \fun{S}(z) \cdot K \nabla \fun{S}(z) \geq \kappa_\eta \big[
      \fun{S}(z) - \fun{S}_\eta\big], \quad
    z \in \mathcal{U}_\eta \cap B_R(0),
  \end{equation*}
  with constant $\kappa_\eta = 4 K_1 (1+\eta^2) / (1+R^2)^3$.
  Therefore the modified PL condition is satisfied
  on balls of arbitrary large radius $R$, even though the entropy is
  not convex, but the constant as a function of the radius $R$ is not
  uniformly bounded away from zero.
\end{example}
  
\begin{example}
  \label{ex:3}
    
  As a last example, we consider a strongly nonlinear case with a bracket
  built from an orthogonal projection onto the hyper-plane perpendicular to
  the gradient of the Hamiltonian. This particular metric structure will play
  a key role in the following, even in the infinite-dimensional cases in fluid
  and plasma dynamics.
  
  Given $s \in \R^n$, on the open half-space
  $\mathcal{Z} = \{z \in \R^n \colon z \cdot s < 0\}$, 
  let us consider the field $X(z) = - K(z) \nabla \fun{S}(z)$, with
  \begin{equation*}
    K(z) \coloneqq |z|^2 I - z \otimes z, \quad
    \fun{H}(z) \coloneqq \tfrac{1}{2} |z|^2, \quad \text{and}\quad 
    \fun{S}(z) \coloneqq s \cdot z\,.
  \end{equation*}
  Since $K(z)$ is proportional to the projector onto the subspace normal to
  $\nabla \fun{H}(z)$, we have $K(z) \nabla \fun{H}(z) = 0$ and $K(z)$ is
  a symmetric positive semidefinite tensor; hence,  $X$ is metriplectic with a
  trivial symplectic part. We also stress that $K(z)$ is minimally degenerate
  since $K(z)$ is strictly positive definite on the subspace normal to
  $\nabla \fun{H}(z)$.
  
  The constant-energy surfaces are spheres, for any $\eta > 0$,
  \begin{equation*}
    z \in \mathcal{U}_\eta \iff z = \sqrt{2\eta} \zeta, \quad
    \zeta \in S^{n-1}, \quad \zeta \cdot s < 0,
  \end{equation*}
  where points $\zeta$ on the $(n-1)$-dimensional sphere $S^{n-1}$ are
  identified with unit vectors in $\R^n$. The entropy restricted to
  $\mathcal{U}_\eta$ amounts to
  $\fun{S}|_{\mathcal{U}_\eta} (\zeta) = \sqrt{2\eta} s \cdot \zeta$
  and the minimum $\fun{S}_\eta = - \sqrt{2\eta s^2}$ is attained at
  $\zeta = - s/|s|$. The same result is of course obtained by means of
  Lagrange multipliers that lead to the system
  \begin{equation*}
    \left\{
    \begin{aligned}
      s &= \lambda z, \\
      \tfrac{1}{2} |z|^2 &= \eta,
    \end{aligned}
    \right. \quad \text{and} \quad z \cdot s < 0. 
  \end{equation*}
  Then we compute, for $z = \sqrt{2\eta} \zeta \in \mathcal{U}_\eta$,
  \begin{align*}
    \nabla \fun{S}(z) \cdot K(z) \nabla \fun{S}(z)
    &= |z|^2 |s|^2 - (z \cdot s)^2  
    = \big(\sqrt{2\eta} |s| - \sqrt{2\eta} s \cdot \zeta \big)
    \big(\sqrt{2\eta} |s| + \sqrt{2\eta} s \cdot \zeta \big) \\
    &\geq \sqrt{2\eta} |s| \big[\fun{S}(z) - \fun{S}_\eta\big],
  \end{align*}
  which is inequality~(\ref{eq:MpPL}).
  
  It should be noted that, if one drops the condition $z \cdot s <0$, so that
  $\mathcal{Z} = \R^n$, then the metric system cannot satisfy~(\ref{eq:MpPL})
  since $S|_{\mathcal{U}_\eta}$ has two critical points: a minimum $z_\eta^-$
  with $z_\eta^-\cdot s<0$ and a maximum $z_\eta^+$ with $z_\eta^+\cdot s > 0$.
  The metric bracket 
  $(\fun{S}, \fun{S}) = \nabla \fun{S} \cdot K \nabla \fun{S}$ vanishes at
  both points, but $\fun{S}(z_\eta^+) - \fun{S}_\eta > 0$.
\end{example}

\subsection{Infinite-dimensional systems: tentative generalizations}  
\label{sec:infinite-dim}

A version of the Lyapunov stability theorem, valid for the case of
infinite-dimensional systems, is available under suitable coercivity assumptions
on the Lyapunov function \cite{Marsden2001}. Such assumptions are needed to
compensate for the lack of compactness. For instance, the closed unit ball is
not compact in a Banach space. Compactness of closed and bounded sets in
$\R^n$ is used repeatedly in the classical proofs in finite dimensions 
(cf.~Sections \ref{sec:remarks-relax-equil} and \ref{sec:finite-dim-PL}). 
Analogously, the \eqref{eq:PL} condition can be extended
to infinite-dimensional systems. A more difficult point is the existence of a
global-in-time solution to the equation defining the dynamical system, under
reasonable hypothesis \cite{Temam1998}. In the infinite-dimensional setting,
this means proving the existence of a global solution for highly nonlinear
partial differential equations, which is often difficult and requires special
treatment for each individual case. Nonetheless, under the assumption that a
global-in-time solution exists, one can think of extending
Propositions~\ref{th:mod-Lyapunov} and~\ref{th:MpPL} to infinite dimensions, but
we leave the details for future work. Here we merely state the
infinite-dimensional version of
condition~(\ref{eq:MpL3}) and inequality~(\ref{eq:MpPL}).

Consider a metriplectic system on a Banach space $V$ as introduced in
Section~\ref{sec:metriplectic}. We assume that this system has a finite
(for simplicity) family of constants of motion $\fun{I} \in C^\infty(V,\R^k)$,
that satisfies the hypotheses of the submersion theorem
\cite[Theorem~3.5.4]{Marsden2001} in an open set $\mathcal{U} \subseteq V$.
In particular, the operator $D\fun{I}(u)$ must be surjective with split kernel
for any $u \in \mathcal{U}$.
Then $\mathcal{U}_\eta = \{u \in V \colon \fun{I}(u) = \eta\}$ are closed
submanifolds of $\mathcal{U}$ for any $\eta \in \fun{I}(\mathcal{U})$, as in
the finite-dimensional case. Since we consider systems of the
form~(\ref{eq:metric-system}), satisfying in particular
condition~(\ref{eq:metric-system-compatibility}), there is at least one
invariant, namely the Hamiltonian, and thus we have $k \geq 1$. Then 
condition~(\ref{eq:MpL3}) can be generalized by 
\begin{equation}
  \label{eq:minimal-degenerate}
  (\fun{F}, \fun{F})(u) = 0
  \;\iff\; 
  D \fun{F}(u) = \sum_\alpha \lambda_\alpha D\fun{I}^\alpha(u),
\end{equation}
for some constant $\lambda_\alpha \in \R$. This means that if, for a given
function $\fun{F}$, the bracket $(\fun{F},\fun{F})$ vanishes at a point $u_0$,
then $u_0$ must be a critical point of $\fun{F}$ restricted to the manifold
$\fun{I}(u) = \fun{I}(u_0) = $ constant. We referred to brackets with this
property as specifically degenerate brackets. If the only invariant is the
Hamiltonian, then we called them minimally degenerate.

The equivalent of condition~(\ref{eq:MpPL}) reads:  
$\fun{S}_\eta \coloneqq \inf \{\fun{S} \colon z \in \mathcal{U}_\eta\}>-\infty$
and there exists a constant $\kappa_\eta>0$ depending on $\eta$, such that
\begin{equation*}
  \label{eq:MpPL-V}\tag{PL${}^{\prime\prime}$}
  \frac{1}{\kappa_\eta} (\fun{S}, \fun{S}) \geq \fun{S} - \fun{S}_\eta,
  \quad \text{on $\mathcal{U}_\eta$}.
\end{equation*}
If inequality~(\ref{eq:MpPL-V}) is fulfilled, the exponential convergence of the
entropy follows as in the finite-dimensional case.
Also, $(\fun{S},\fun{S})(u) = 0$ on $\mathcal{U}_\eta$ only if $u$ is a global
minimum of $\fun{S}$ restricted to $\mathcal{U}_\eta$,
i.e. $\fun{S}(u) = \fun{S}_\eta$. 

A necessary condition for~(\ref{eq:MpPL-V}) can be stated for the special
class of specifically degenerate metric brackets, i.e., when
(\ref{eq:minimal-degenerate}) is satisfied.
Then condition~(\ref{eq:MpPL-V}) is satisfied \emph{only if} critical points of
$\fun{S}|_{\mathcal{U}_\eta}$ are global minima. In fact, if
$u \in \mathcal{U}_\eta$ is a critical point of $\fun{S}|_{\mathcal{U}_\eta}$,
the theory of Lagrange multiplier \cite[Theorem~3.5.27]{Marsden2001} gives
$D \fun{S}(u) = \sum_\alpha \lambda_\alpha D\fun{I}^\alpha(u)$, 
hence $(\fun{S},\fun{S})(u) = 0$. But if $\fun{S}(u) > \fun{S}_\eta$,
inequality~(\ref{eq:MpPL-V}) is violated. 
  
Beyond these preliminary  considerations, the mathematical analysis
of~(\ref{eq:MpPL-V}) exceeds the scope of this paper. We conclude with 
an example of a infinite-dimensional metric bracket that is specifically
degenerate and satisfies~(\ref{eq:MpPL-V}).  We shall give
physically relevant examples in Section~\ref{sec:simple}.

\begin{example}
  \label{ex:diffusion}
  
  In this example (cf.\ \cite{pjmU24}) we proceed formally.
  On the space $V$ of smooth functions from $\T \to \R$, where the torus
  $\T \coloneqq \R/2\pi\Z$ is identified with the interval $\Omega = [0,2\pi]$
  with periodic boundary conditions, we consider the metric bracket given by 
  \begin{equation}
    \label{eq:heateq-bracket}
    \big(\fun{F},\fun{G}) = \int_0^{2\pi}
    \Big(\frac{\delta \fun{F}(u)}{\delta u} \Big)^\prime
    \Big(\frac{\delta \fun{G}(u)}{\delta u} \Big)^\prime dx,
  \end{equation}
  where $v'(x) = dv(x)/dx$ denotes the derivative of $v : \T \to \R$, and the
  functional derivatives are computed with respect to the standard $L^2$ product
  (cf.~Section~\ref{sec:metriplectic}). We assume that the functions
  $\fun{F}(u)$ and $\fun{G}(u)$ are regular enough for their functional
  derivative to exist and be sufficiently smooth. The Hamiltonian and entropy
  functions are given by  
  \begin{equation*}
    \fun{H}(u) = \int_0^{2\pi}\! u(x) dx \quad \text{and}\quad
    \fun{S}(u) = \frac{1}{2} \int_0^{2\pi}\! |u(x)|^2 dx
    = \frac{1}{2} \|u\|^2_{L^2(\Omega)}.
  \end{equation*}
  Condition~(\ref{eq:metric-system-compatibility}) is satisfied since
  $\delta \fun{H}(u)/\delta u = 1$. After integration by parts, the strong
  form of~(\ref{eq:metric-system-equation}) amounts to the heat equation
  \begin{equation}
    \label{eq:heateq}
    \left\{
    \begin{aligned}
      \partial_t u &= \partial_x^2 u, &&
      (t,x) \in [0,+\infty) \times [0,2\pi],\\
      u(t,0) &= u(t,2\pi), && t\in[0,+\infty), \\
      u(0,x) &= u_0(x), && x \in [0,2\pi].
    \end{aligned}
    \right.
  \end{equation}
  First we show that~(\ref{eq:heateq-bracket}) is a minimally degenerate
  bracket, i.e., the null space is spanned by $\delta \fun{H}(u)/\delta u$.
  In fact, $(\fun{F},\fun{F}) = 0$ implies $(\delta \fun{F}(u)/\delta u)'=0$ and
  thus $\delta\fun{F}(u)/\delta u = \lambda$ where $\lambda \in \R$ is constant.
  Hence, $\delta \fun{F}(u)/\delta u = \lambda \delta \fun{H}(u)/\delta u$,
  which proves property~(\ref{eq:minimal-degenerate}), with the Hamiltonian
  being the only invariant.

  The manifolds of constant Hamiltonian,
  \begin{equation*}
    \mathcal{U}_\eta = \{u : \fun{H}(u) = \eta \in \R\},
  \end{equation*}
  consist of functions with the same average over $[0,2\pi]$. They are affine
  spaces, rather than generic manifolds. The critical points of entropy
  restricted to the constant-Hamiltonian spaces are determined by 
  \begin{equation*}
    u(x) = \lambda, \quad \int_0^{2\pi} u(x)dx = \eta.
  \end{equation*}
  Therefore, for any $\eta \in \R$  there is only one critical point, that is,
  the constant function
  \begin{equation*}
    u_\eta(x) = \eta / (2\pi).
  \end{equation*}
  The entropy of $u_\eta$ is $\fun{S}_\eta = \fun{S}(u_\eta) = \eta^2/(4\pi)$.
  The Fourier series representation,
  \begin{equation*}
    u(x) = \sum_{n \in \Z} u_n e^{i n x}, \quad u_n \in \C,
  \end{equation*}
  yields that $\fun{H}(u) = \eta$ if and only if $u_0 = \eta / (2\pi)$, hence
  \begin{equation*}
    \fun{S}(u) - \fun{S}_\eta = \pi \sum_{n \not = 0} |u_n|^2 \geq 0, \quad
    u \in \mathcal{U}_\eta,
  \end{equation*}
  with equality only if $u=u_\eta$. This shows that $u_\eta$ is a global
  minimum of $\fun{S}$ restricted to $\mathcal{U}_\eta$. Upon using again the
  Fourier series representation, one finds
  \begin{equation*}
    \big(\fun{S}, \fun{S}\big)(u) = \|u'\|^2_{L^2(\Omega)} = 2 \pi
    \sum_{n \not = 0} n^2 |u_n|^2 
    \geq 2 \pi \sum_{n\not=0} |u_n|^2,
  \end{equation*}
  and
  \begin{equation*}
    \big(\fun{S}, \fun{S}\big)(u) \geq 2 \big[\fun{S}(u) - \fun{S}_\eta\big],
    \quad u \in \mathcal{U}_\eta,
  \end{equation*}
  which is condition~(\ref{eq:MpPL-V}) with $\kappa_\eta = 2$.  
  
  In fact, the solution of~(\ref{eq:heateq}) can be readily written in
  terms of a Fourier series as 
  \begin{equation*}
    u(t,x) = \sum_{n \in \Z} e^{inx-n^2 t} u_{0,n},
  \end{equation*}
  where $u_{0,n} \in \C$ are the Fourier coefficients of the initial condition.
  With $\eta = \fun{H}(u_0) = 2 \pi u_{0,0}$, we deduce
  \begin{equation*}
    \|u(t) - u_\eta \|_{L^2(\Omega)} \leq e^{-t}
    \|u(0) - u_\eta \|_{L^2(\Omega)},
  \end{equation*}
  which shows exponential relaxation toward the entropy minimum on
  $\mathcal{U}_\eta$, with the energy $\eta$ being determined by the initial
  condition.

  In conclusion, this metric system satisfies the generalized
  Polyak--{\L}ojasiewicz condition~(\ref{eq:MpPL-V}) and all orbits completely
  relax exponentially to a solution of~(\ref{eq:entropy-principle}) with
  exponential convergence rate given by $\kappa_\eta/2$, where $\kappa_\eta$
  is the constant in~(\ref{eq:MpPL-V}). We observe that the convergence rate is
  the same as the one predicted in Proposition~\ref{th:MpPL} for
  finite-dimensional systems.
\end{example}

\section{Two examples: metric double brackets and projectors} 
\label{sec:simple}

In this section, we discuss two special cases of metriplectic systems of the
form~(\ref{eq:metric-system}), which we call metric double bracket and
projector bracket systems. We shall see that for the metric double brackets
treated in Section~\ref{sec:metr-double-brackets}, the dissipation mechanism
does not completely relax the state of the system (in the sense of
Section~\ref{sec:remarks-relax-equil}), while for the projector brackets  of
Section~\ref{sec:projector-based-metric-bracket} complete relaxation is
achieved. We shall discuss and compare the properties of these two metric
brackets on the basis of the insights gained in
Section~\ref{sec:remarks-relax-equil}.   

We select two benchmark equilibrium problems, and we attempt to construct a
relaxation method to solve them by using the two considered metric brackets.
The benchmark problems are: the reduced Euler equations  
(cf.~Section~\ref{sec:Euler-problem}), but with Dirichlet boundary conditions
replaced by periodic boundary conditions, and an analytically solvable
model derived from the  reduced Euler  equations.  In both cases periodic
boundary conditions give rise to an additional invariant other than the
Hamiltonian, and this allows us to examine cases where specifically
degenerate brackets are not minimally degenerate. We shall return to
the original problem with Dirichlet boundary conditions later. 

Therefore in both cases, the domain is $\T^2 \coloneqq (\R / 2\pi \Z)^2$
with coordinates $x = (x_1,x_2)$, and the phase space $V$ is the space of
smooth functions $v \colon \T^2 \to \R$. As usual, $\T^2$ is identified with the
square $\Omega = [0,2\pi]^2$ with periodic boundary conditions. 
On such a periodic domain, the scalar vorticity $\omega = - \Delta \phi$
(cf. Section~\ref{sec:Euler-problem} for the definitions) must have zero 
average, i.e.,
\begin{equation*}
  \omega_\Omega \coloneqq \frac{1}{4\pi^2} \int_{\Omega} \omega(x) dx = 0.
\end{equation*}
We use systematically the subscript $\Omega$ to denote the average over the
domain $\Omega$. We choose the whole space $V$ as the phase space and define
the vorticity by 
\begin{equation*}
  \omega = u - u_\Omega,
\end{equation*}
but we impose the additional constraint
\begin{equation*}
  \fun{M}(u) = \int_\Omega u(x) dx = 4 \pi^2 u_\Omega = \fun{M}_0 \in \R,
\end{equation*}
as well as energy conservation $\fun{H}(u) = \fun{H}_0 \in \R$.
Given $u \in V$, and thus $\omega$, the Poisson equation determines the stream
function $\phi$ modulo a constant, which we set to zero;  hence 
\eqref{eq:Poisson-eq}  is replaced by   
\begin{equation}
  \label{eq:Poisson-eq-periodic}
  -\Delta \phi = u - u_\Omega, \quad \phi_\Omega = 0.
\end{equation}
(Equivalently, we could have chosen the phase space to be the subspace
of functions satisfying $u_\Omega=0$ and $u=\omega$.)

Both the considered benchmark problems can be formulated as variational
problems: given a Hamiltonian function $\fun{H}$, and a regular value
$\eta = (\fun{M}_0, \fun{H}_0)$ for the two invariants
$\fun{I} = (\fun{I}^1, \fun{I}^2) \coloneqq (\fun{M},\fun{H})$, find
\begin{subequations}
  \label{eq:test-problems}
    \begin{equation}
      \min \{\fun{S}(u) \colon \fun{I}(u) = \eta\},
    \end{equation}
    with entropy 
    \begin{equation}
      \fun{S}(u) = \frac{1}{2} \int_\Omega \omega^2 dx
      = \frac{1}{2} \|u-u_\Omega\|^2_{L^2(\Omega)}.
      \label{Som2}
    \end{equation}
\end{subequations}
The two test cases differ by the choice of the Hamiltonian. In summary, 
\begin{itemize}
  
\item Analytical test case: Given a function $h\colon \T^2 \to \R$, let
  \begin{equation}
    \label{eq:analytical_H}
    \fun{H}(u) = \int_\Omega h\, \omega\, dx = (h - h_\Omega, u)_{L^2(\Omega)}.
  \end{equation}
  The solutions of (\ref{eq:test-problems}) with~(\ref{eq:analytical_H}) are
  equilibria of the linear advection equation
  \begin{equation*}
    \partial_t u + [h,u] = 0,
  \end{equation*}
  with $[f,g] = \partial_1 f \partial_2 g - \partial_1 g \partial_2 f$.
  
\item Reduced Euler test case: We again use $\fun{S}$ as in \eqref{Som2}, but now
  \begin{equation}
    \label{eq:Euler_H_periodic}
    \fun{H}(u) = \frac{1}{2} \int_\Omega |\nabla \phi|^2 dx =
    \frac{1}{2} (\phi, u)_{L^2(\Omega)},
  \end{equation}
  i.e., we assume \eqref{eq:Euler-S-H} with $s(y)=y^2/2$, and we assume
  $\phi$ depends  on $u$ via \eqref{eq:Poisson-eq-periodic}.  
  Solutions of~(\ref{eq:test-problems}) with \eqref{eq:Euler_H_periodic} are
  equilibria of the reduced Euler  equations on the flat torus $\T^2$. 

\end{itemize}
In both cases, problem~(\ref{eq:test-problems}) can be solved analytically.
In order to compute the solutions, we first find the set of critical points of
entropy restricted to $\mathcal{U}_\eta = \{u \colon \fun{I}(u)=\eta\}$, i.e.,
\begin{equation}
  \label{eq:Ceta}
  \mathfrak{C}_\eta \coloneqq \{ u \colon\;
  D\fun{S}(u) = \sum_\alpha \lambda_\alpha D\fun{I}^\alpha(u), \;
  \fun{I}(u) = \eta
  \}.
\end{equation}
Then we find the minimum of $\fun{S}$ on $\mathfrak{C}_\eta$. We summarize
here the results, which  will be used to assess the
metriplectic relaxation methods. 

\smallskip

\noindent{\it Solution for the analytical test case:} For the case
of Eq.~(\ref{eq:analytical_H}), the set $\mathfrak{C}_\eta$ is given by
\begin{equation}
  \label{eq:complete-relaxation-analytic}
  \begin{aligned}
    &u - u_\Omega = \lambda_1 + \lambda_2 (h - h_\Omega), \quad
    \fun{I}(u) = \eta.
  \end{aligned}
\end{equation}
Then $\lambda_1 = 0$ and $u_\Omega = \fun{M}_0/(4\pi^2)$. Upon
multiplying by $h-h_\Omega$ and integrating over $\Omega$, we deduce
\begin{equation*}
  \fun{H}_0 = \lambda_2 \|h-h_\Omega\|^2_{L^2(\Omega)},
\end{equation*}
from which we can compute $\lambda_2$. Hence, the set $\mathfrak{C}_\eta$
contains the following single point: 
\begin{equation}
  \label{eq:u-eta_analytic}
  u_\eta = \frac{\fun{M}_0}{4\pi^2} +
  \frac{\fun{H}_0}{\|h-h_\Omega\|^2_{L^2(\Omega)}} (h-h_\Omega) 
\end{equation}
and the value of entropy on this  unique critical point is 
\begin{equation}
  \label{eq:Seta-analytic}
  \fun{S}_\eta = \min \{\fun{S}(u) \colon \fun{I}(u)=\eta\} =
  \frac{\fun{H}_0^2/2}{\|h-h_\Omega\|^2_{L^2(\Omega)}}.
\end{equation}
One can check that this is the minimum of the entropy on $\mathcal{U}_\eta$.

\smallskip

\noindent{\it Solution for the reduced Euler equations:} For the
case of  \eqref{eq:Euler_H_periodic}, elements of $\mathfrak{C}_\eta$
satisfy
\begin{equation}
  \label{eq:complete-relaxation-vorticity2d-periodic}
    \left\{
    \begin{aligned}
      &u - u_\Omega = \lambda_1 + \lambda_2 \phi,
      \quad \fun{I}(u) = \eta, \\
      &\text{with $\phi$ solution of~(\ref{eq:Poisson-eq-periodic}).}
    \end{aligned}
    \right.
\end{equation}
Since $\phi_\Omega = 0$, we have $\lambda_1 = 0$ and
$u_\Omega = \fun{M}_0/(4\pi^2)$, as before. Then $(\phi,\lambda_2)$ must be a
solution of the eigenvalue problem  
\begin{equation*}
  -\Delta \phi = \lambda_2 \phi, \qquad \phi_\Omega=0,
\end{equation*}
which is readily solved in terms of Fourier series. We find that the set
$\mathfrak{C}_\eta$ consists of vorticity fields
$u - u_\Omega = \omega = \lambda_2 \phi$,
with $\phi$ being an eigenfunction of $-\Delta$ corresponding to the
eigenvalue $\lambda_2>0$, and with norm
$\|\phi\|_{L^2(\Omega)}^2 = 2\fun{H}_0/\lambda_2$. Then the entropy evaluated
on the constrained critical points amounts to $\lambda_2\fun{H}_0$. 
It follows that the entropy minimum on $\mathcal{U}_\eta$ corresponds to the
lowest non-trivial eigenvalue, which is $\lambda_2 = 1$. The corresponding
stream function must be of the form 
\begin{equation*}
  \phi(x) = a_1 \cos(x_1 + \theta_1) + a_2 \cos(x_2 + \theta_2),
\end{equation*}
with arbitrary phase shifts $\theta_1$, $\theta_2$, and with coefficient
$a_1$, $a_2$ determined by the condition
$\|\phi\|_{L^2(\Omega)}^2=2\fun{H}_0$. Since $\lambda_2 = 1$, this
amounts to $a_1^2 + a_2^2 = \fun{H}_0/\pi^2$. Thus,
\begin{equation}
  \label{eq:u-eta_Euler_periodic}
  \omega(x) = \phi(x) = \frac{\sqrt{\fun{H}_0}}{\pi} \big[
    \cos \theta_0 \cos(x_1+\theta_1) + \sin \theta_0 \cos(x_2+\theta_2)\big],
\end{equation}
with arbitrary phases $\theta_0$, $\theta_1$, and $\theta_2$ in $[0,2\pi)$.
  
From the analytical solution, \eqref{eq:u-eta_Euler_periodic}, we deduce
that the entropy minimum constrained to $\fun{I}(u)=\eta$ is not attained at
an isolated point, but on a family of points parameterized by three
phases. The constrained minimum value of the entropy is given by 
\begin{equation}
  \label{eq:Seta-Euler_periodic}
  \fun{S}_\eta = \min \{\fun{S}(u) \colon \fun{I}(u)=\eta\} = \fun{H}_0,
\end{equation}
since $\lambda_2 = 1$.

\subsection{Metric double brackets}
\label{sec:metr-double-brackets}

The first example is given by the metric double bracket
\cite{Bloch2013,Gay-Balmaz2013,Gay-Balmaz2014} (not to be confused with the
double bracket of Flierl and Morrison \cite{Flierl2011} discussed in the
introduction). We recall the general definition first, but quickly restrict the
discussion to the examples. 

In general, metric double brackets originate from a Lie algebra. 
If $\mathfrak{g}$ is a Lie algebra with Lie brackets $[\cdot,\cdot]$,
let the vector space $V$ be its dual, $V = \mathfrak{g}^*$.
The functional derivative of a function $f \in C^\infty(\mathfrak{g}^*)$ is
computed with respect to the duality pairing between $\mathfrak{g}^*$ and
$\mathfrak{g}$, that is, $\delta f(u) /\delta u \in \mathfrak{g}$ is the unique
element of $\mathfrak{g}$ such that
$Df(u) v = \langle \delta f(u) /\delta u, v\rangle$ for all
$v \in V = \mathfrak{g}^*$. Under these conditions, it is well-known
\cite{Marsden2001} that the Lie bracket in $\mathfrak{g}$ induces two Poisson
brackets in $C^\infty(\mathfrak{g}^*)$, namely,
$\{f,g\}_\pm = \pm \langle u, [\tfrac{\delta f(u)}{\delta u},
  \tfrac{\delta g(u)}{\delta u} ]\rangle$, so that both
$(\mathfrak{g}^*,\{\cdot,\cdot\}_\pm)$ are Poisson manifolds. 
However, if in addition $\mathfrak{g}$ is equipped with a positive definite
bilinear form $\gamma : \mathfrak{g} \times \mathfrak{g} \to \R$, on the space
of smooth functions $C^\infty(\mathfrak{g}^*)$, we can also define the symmetric
bracket
\begin{equation*}
  (f,g) = \gamma\Big( \Big[
    \frac{\delta f(u)}{\delta u},
    \frac{\delta h(u)}{\delta u}\Big], \Big[
    \frac{\delta g(u)}{\delta u},
    \frac{\delta h(u)}{\delta u}\Big]
  \Big),
\end{equation*}
for any fixed function $h \in C^\infty(\mathfrak{g}^*)$. One can readily check
that this is a metric bracket preserving the Hamiltonian function $h$.

Formally at least, this construction can be extended to infinite-dimensional
systems. Let $V$ and $W$ be Banach spaces, with a nondegenerate duality pairing
$\langle \cdot, \cdot \rangle_{V \times W} : V \times W \to \R$, and let $W$ be
equipped with (i) a symmetric positive definite bilinear form
$\gamma : W \times W \to \R$ and (ii) a bilinear antisymmetric operation
$[\cdot,\cdot] : W \times W \to W$. (For the purpose of defining the metric
bracket we do not need to require $[\cdot,\cdot]$ to be a Lie bracket, that is,
we can relax the Jacobi identity.)
Then, given a fixed Hamiltonian $\fun{H} \in C^\infty(V)$,
we can construct the bilinear form 
$(\cdot, \cdot) : C^\infty(V) \times C^\infty(V) \to C^\infty(V)$ given by
\cite[Eq. (2.9)]{Gay-Balmaz2013}
\begin{equation}
  \label{eq:gmdb}
  (\fun{F}, \fun{G}) \coloneqq \gamma\big(
  \Big[\frac{\delta \fun{F}}{\delta u}, \frac{\delta\fun{H}}{\delta u}\Big],
  \Big[\frac{\delta \fun{G}}{\delta u}, \frac{\delta\fun{H}}{\delta u}\Big]
  \big),
\end{equation}
where the functional derivative are evaluated with respect to the duality
pairing between $V$ and $W$, and thus are elements of $W$. We remark that
when $W = V'$ is the (topological) dual of $V$, that is, the space of continuous
linear functionals on $V$, $\delta \fun{F}(u)/\delta u$ exists and it is equal
to $D\fun{F}(u)$ for any $\fun{F} \in C^1(V)$ and for all $u \in V$. In general,
however, $\delta \fun{F}(u)/\delta u$ does not always exists for all $\fun{F}$.

As an example, let $V=W$ be the space of smooth functions from $\T^d \to \R$,
identified with functions over $\Omega = [0,2\pi]^d$ with  periodic boundary
conditions. If $[u,v]_J = \nabla u \cdot J \nabla v$ is a Poisson bracket on
$\R^d$ (not necessarily canonical), $\fun{H}(u)$ is a given Hamiltonian 
function, and $\gamma$ is given by the standard product in $L^2(\Omega)$,
then~(\ref{eq:gmdb}) reduces to 
\begin{equation}
  \label{eq:paired-bracket-L2}
  (\fun{F},\fun{G}) \coloneqq \int_{\Omega}
  \Big[\frac{\delta \fun{F}}{\delta u},\frac{\delta \fun{H}}{\delta u} \Big]_{J}
  \Big[\frac{\delta \fun{G}}{\delta u},\frac{\delta \fun{H}}{\delta u} \Big]_{J}
  dx.
\end{equation}
In the following, let $d=2$,
$x=(x_1,x_2)\in \Omega = [0,2\pi]^2 \subset \R^2$ with periodic boundary
conditions, and for $J$, we choose the canonical Poisson tensor in $\R^2$, so
that $[\cdot,\cdot]_J = [\cdot,\cdot]$ is the canonical bracket defined after
Eq.~(\ref{eq:Euler-equilibrium}).

We construct a relaxation method based on this bracket for the solution of
the two test problems introduced at the beginning of this section. In both
cases we consider a field $u \in V$ evolving from an initial condition
$u_0$ according to Eqs.~(\ref{eq:metric-system}) and with bracket given
by~(\ref{eq:paired-bracket-L2}).

We start from problem~(\ref{eq:test-problems}) with~(\ref{eq:analytical_H})
for the linear advection equation. With those choices of bracket, Hamiltonian
and entropy, Eq.~(\ref{eq:metric-system-equation}) amounts to   
\begin{equation*}
  \int_\Omega \frac{\delta \fun{F}}{\delta u} \frac{\partial u}{\partial t} dx
  = -\int_\Omega \big[\frac{\delta \fun{F}}{\delta u}, h \big]
  \big[u, h\big]dx,
\end{equation*}
for all $\fun{F}$. This can be viewed as the weak form of the evolution
equation, with $\delta \fun{F}/\delta u$ being the test function. After
integration by parts, one obtains the evolution equation for $u$ in strong form,
namely, 
\begin{equation*}
  \partial_t u = \big[h,[h,u]\big].
\end{equation*}
We observe that $[u,h] = \div (X_h u) = X_h \cdot \nabla u$, where
$X_h = \transpose{(\partial_2 h, -\partial_1 h)}$ is the Hamiltonian vector
field generated by $h$ with canonical Poisson bracket in $\R^2$. Hence
\begin{equation}
  \label{eq:parallel-diffusion}
  \partial_t u = \div(X_h \otimes X_h \nabla u),
\end{equation}
which shows that this particular combination of metric bracket and entropy
describes anisotropic diffusion, parallel to the field lines of $X_h$, or
equivalently along the contours of the function $h$.

The Cauchy problem associated to~(\ref{eq:parallel-diffusion}) with periodic
boundary conditions can be solved analytically, at least under suitable
conditions. We consider the equation in a region of the domain where the
contours of $h$ are closed simple curves and $\nabla h \not=0$ (as relevant in
the example discussed below). Let $\xi$ be a function such that 
\begin{equation*}
  [\xi, h] = 1,
\end{equation*}
in the considered subdomain. Since $[\xi,h] \not=0$, the pair of functions
$(\xi,h)$ defines a local coordinate system with inverse Jacobian determinant
$\mathcal{J}^{-1} = [\xi,h] = 1$ and such that 
$X^1_h \coloneqq X_h \cdot \nabla \xi = [\xi,h] = 1$ and
$X^2_h \coloneqq X_h \cdot \nabla h = [h,h] = 0$.
Then, in these coordinates the contravariant components $X^i_h$, $i=1,2$,
of the vector field $X_h$ as well as the Jacobian determinant $\mathcal{J}$
are constant. In order to compute $\xi$, we observe that, along a contour we
must have $dx/ds = X_h/|X_h|$ where $s$ is the arclength (with the Euclidean
metric, $ds^2 = dx_1^2 + dx_2^2$), because the field $X_h$ is tangent to $h=$
constant contours. Hence
\begin{equation*}
  \frac{d\xi}{ds} = \frac{dx}{ds} \cdot \nabla \xi
  = \frac{X_h \cdot \nabla \xi}{|X_h|} = \frac{1}{|\nabla h|},
\end{equation*}
so that $d\xi = |\nabla h|^{-1} ds$. With some abuse of notation,
let $x(\xi,h)$ be the coordinate map $(\xi,h) \mapsto x$. Then we also
have, with $h$ fixed,
\begin{equation*}
  \frac{\partial x(\xi,h)}{\partial \xi} =
  \frac{d x}{d s} \frac{d s}{d \xi} =
  X_h\big(x(\xi,h)\big),
\end{equation*}
hence the coordinate map $x(\xi,h)$ is essentially related to the flow of the
vector field $X_h$. This conclusion also follows from the fact that
$\partial_\xi x$ is a vector of the covariant basis, hence it must hold that
$\partial_\xi x \cdot \nabla h = 0$ and $\partial_\xi x \cdot \nabla \xi = 1$,
which imply $\partial_\xi x = X_h$.

Equation~(\ref{eq:parallel-diffusion}) in the coordinates $(\xi,h)$ takes the
form of a heat equation, 
\begin{equation*}
  \partial_t \tilde{u} - \partial_\xi^2 \tilde{u} = 0,
\end{equation*}
where the new unknown is given by $u(t,x) = \tilde{u}(t,\xi,h)$. Since, per
assumption, the contours of $h$ are closed, the function $\tilde{u}$ must be
periodic in $\xi$ with period possibly depending on $h$, i.e., there is
$\ell_h$ such that $\tilde{u}(t,\xi+\ell_h,h) = \tilde{u}(t,\xi,h)$.
The period $\ell_h$ is given by the variation of $\xi$ over a full loop around
the considered contour of $h$, that is, 
\begin{equation*}
  \ell_h = \int_{C_h} d\xi = \int_{C_h} \frac{ds}{|\nabla h|},
\end{equation*}
where $C_h$ is the considered contour of $h$. This gives a way to compute
$\ell_h$ from $|\nabla h|$ on a contour $C_h$.

We rescale the variable $\xi$ to an angle $\vartheta \in [0,2\pi]$, i.e.
$\vartheta = 2\pi\xi/\ell_h$. In terms of the angle $\vartheta$,
equation~(\ref{eq:parallel-diffusion}) amounts to
\begin{equation*}
  \partial_t v - \kappa_h \partial_\vartheta^2 v = 0,
\end{equation*}
where $\kappa_h = (2\pi/\ell_h)^2$, and the rescaled unknown is given by
$u(t,x) = \tilde{u}(t,\ell_h\vartheta/(2\pi),h) = v(t,\vartheta,h)$.
This is the classic heat equation on $[0,2\pi]$ with periodic boundary
conditions, and it can be readily solved by Fourier series. The solution is
\begin{equation*}
  v(t,\vartheta,h) = \sum_{n \in \Z} \hat{v}_n(0)
  e^{-n^2 \kappa_h t + in\vartheta},
\end{equation*}
where $\hat{v}_n(0)$ are the Fourier coefficients of the initial condition
for $v$, which is explicitly given by $v(0,\theta,h) = u_0\big(x(\xi,h)\big)$.
Each Fourier mode with $n \not=0$ decays exponentially with exponential decay
time given by $1/(n^2\kappa_h)$. The relaxation time is identified with the
decay time of the slowest modes ($n = \pm 1$), 
\begin{equation}
  \label{eq:relaxation-time}
  \tau_h = 1/ \kappa_h = (\ell_h / 2\pi)^2, \qquad
  \ell_h = \int_{C_h} d\xi = \int_{C_h} \frac{ds}{|\nabla h|}.
\end{equation}
This expression allows us to estimate numerically the relaxation time from a
sample of points on the considered contour of $h$, which can be obtained
by integrating the ordinary differential equation for the flow of $X_h$. 
The limit for $t \to +\infty$ of the solution exists and is equal to the average
of the initial condition on the contours of $h$. Explicitly this can be computed
as 
\begin{equation*}
  u_\infty(h) \coloneqq \hat{v}_0(0) = \frac{1}{\ell_h} \int_0^{\ell_h}
  u_0\big(x(\xi,h)\big) d\xi
  = \frac{1}{\ell_h} \int_{C_h} \frac{u_0 ds}{|\nabla h|}.
\end{equation*}
Therefore, in general the limit of the solution retains some
information of the initial condition, while the completely relaxed
solutions~(\ref{eq:u-eta_analytic}) only depends on the energy
$\fun{H}_0 = \fun{H}(u_0)$ of the initial condition $u_0$. This
implies that for generic initial conditions, the limit of the solution of a
metric dynamical system is in general not a solution of the variational
principle~(\ref{eq:test-problems}).

\begin{figure}
  \centering
  \includegraphics[width=\columnwidth]{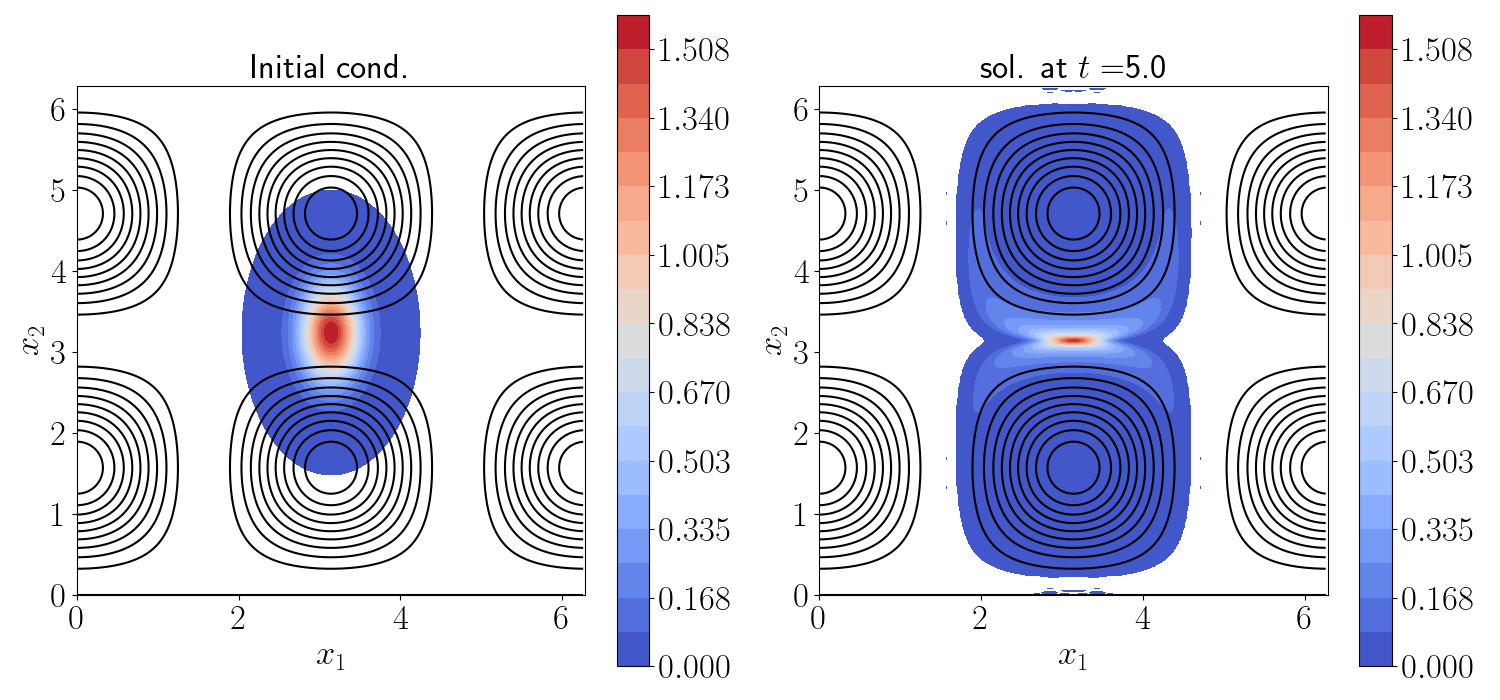}
  \includegraphics[scale=0.35]{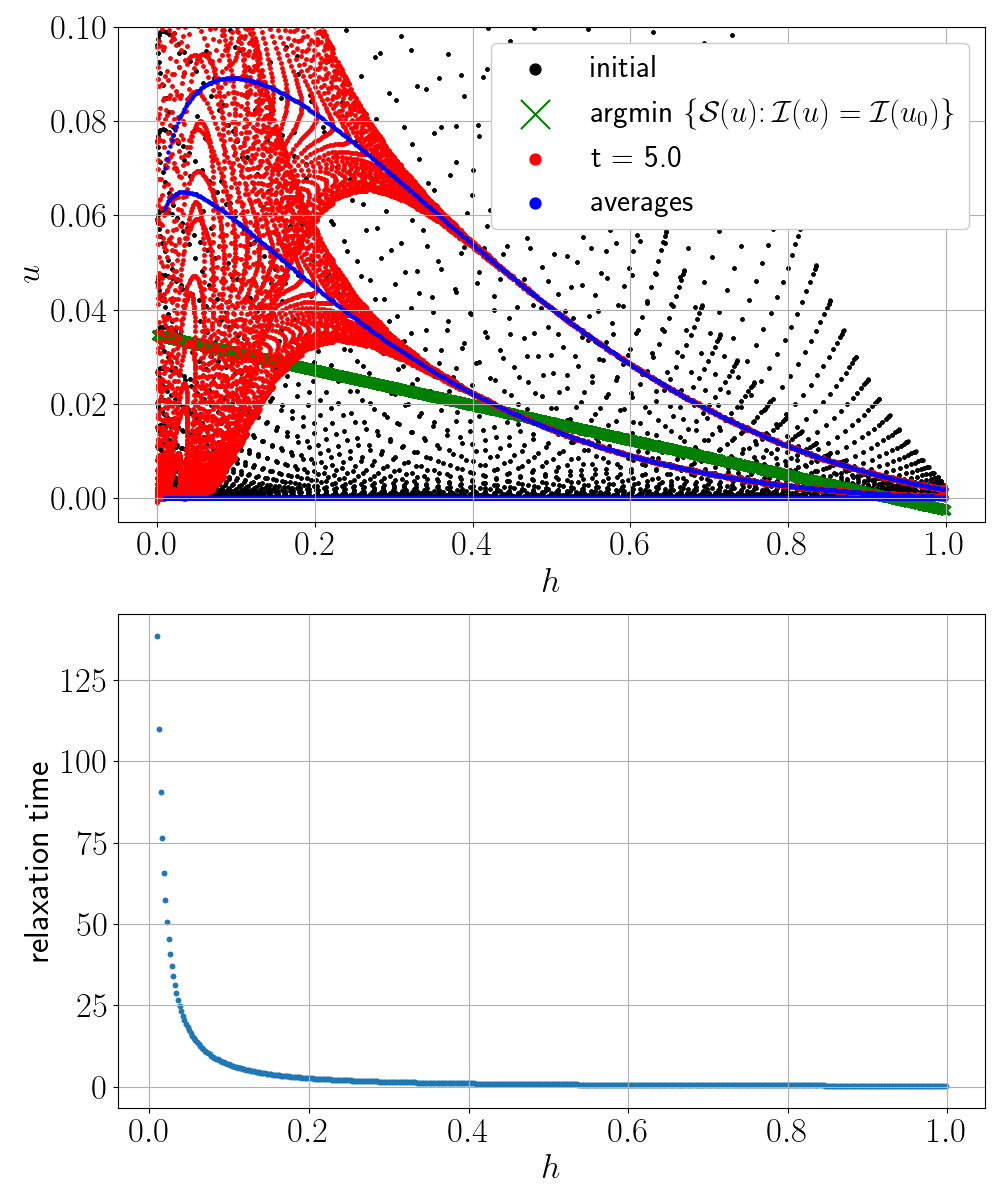}
  \caption{\label{fig:1} Example of solution of
    Eq.~(\ref{eq:parallel-diffusion}) with Hamiltonian~(\ref{eq:islands-h}). 
    Upper panels: initial condition and final state, compared to the contours of
    $h$ (black circular curves). Middle panel: visualization of the functional
    relation between $h$ and $u$ obtained by plotting the points
    $(h_{ij},u_{ij})$, with $h_{ij}$ and $u_{ij}$ being the values of $h$ and
    $u$, at the node $(i,j)$ of the computational grid.
    Lower panel: relaxation time $\tau_h$, computed from
    Eq.~(\ref{eq:relaxation-time}) on the contours of the two central (full)
    islands, as a function of $h$.
    (For clarity, in the color maps we display the solution
    $u$ only where $u \geq 10^{-4}$.)}  
\end{figure}

Figure \ref{fig:1} shows the result of the numerical solution
of~(\ref{eq:parallel-diffusion}) with Hamiltonian
\begin{equation}
  \label{eq:islands-h}
  h(x) =\cos^2(x_1) \sin^2(x_2).
\end{equation}
The contours of $h$ form a periodic array of islands, cf. the black contours in
the upper panels of Fig.~\ref{fig:1}. In each island, $h$ takes the same values.
The initial condition, represented as a color map, is an anisotropic Gaussian
centered between two islands, i.e., $u_0(x) = u_G(x)$ with 
\begin{equation}
  \label{eq:initial_gaussian}
  u_G(x) = \frac{1}{N} \exp\Big[ - \frac{(x_1-x_{0,1})^2}{w_1^2}
    - \frac{(x_2-x_{0,2})^2}{w_2^2}\Big],
\end{equation}
with center $x_0 = (\pi, \pi+0.1)$, $w_1 = 0.25$, $w_2 = 0.4$, and
$N = 2\pi w_1 w_2$. The solution is obtained with a standard spectral method
with Fourier basis, on a $256 \times 256$ uniform grid. The time integrator is
the standard 4th order explicit Runge-Kutta method with time step
$\Delta t = 10^{-4}$. The parallel diffusion
equation~(\ref{eq:parallel-diffusion}) tends to equalize the solution on the
contours of $h$, but the dynamics at the boundary of the islands is very slow:
the Hamiltonian vector field $X_h$ at the boundary of the islands is zero and
the solution remains constant on those boundary contours
(referred to as separatrices). 
The color map of the relaxed state is shown in Fig.~\ref{fig:1} upper panel,
while the middle panel represents the functional relation between $h$ and the
solution $u$ by marking on the $h$-$u$ plane a point $(h_{ij},u_{ij})$ for
each grid node $x_{ij}$. At the initial time, (black markers) there is no
relation between the values of $u$ and those of $h$, showing that the initial
condition~(\ref{eq:initial_gaussian}) is far from an equilibrium. As the
solution evolves, all values of $u$ sampled on the same contour of $h$ tend to a
common value, but the ``condensation'' of points on a line is slower for $h$
small, that is, near the separatrices. Blue markers show the average of the
initial condition on each contour of $h$: one can see that the solution tends to
the averages as predicted by the analytical solution.
In the limit $t \to +\infty$ the relation between $h$ and $u$ is multi-valued
with a countable set of branches, one for each island. In Fig.~\ref{fig:1} one
can distinguish the upper branch (larger values of $u$), corresponding to the
island that contains the maximum of the initial condition. (In order to separate
the two branches the center $x_0$ of the Gaussian has been shifted up in the
direction $x_2$.) A second branch with lower values of $u$ corresponds to the   
neighboring island. All the other islands do not overlap with the initial
condition significantly and therefore appear as a line of points $u=0$ for all
$h$. In Fig.~\ref{fig:1} the analytical solution~(\ref{eq:u-eta_analytic}) for
the completely relaxed state is also shown (green crosses), and it is clearly
different from the obtained equilibrium. Fig.~\ref{fig:1}, lower panel, shows
the relaxation time $\tau_h$ computed according to
Eq.~(\ref{eq:relaxation-time}). This result confirms that the relaxation becomes
progressively slower as $h$ approaches $h=0$, that is, near the separatrices of
the islands. In the limit $h \to 0$ we have $\tau_h \to +\infty$ consistently
with the fact that $\nabla h = 0$ and $X_h = 0$ on the separatrices, cf. the
denominator in Eq.~(\ref{eq:relaxation-time}).

We now consider problem~(\ref{eq:test-problems})
with~(\ref{eq:Euler_H_periodic}) for the Euler equations. For this choice of
entropy and Hamiltonian, Eq.~(\ref{eq:metric-system-equation}) 
with~(\ref{eq:paired-bracket-L2}) amounts to the anisotropic diffusion
equation (\ref{eq:parallel-diffusion}), but with $h$ replaced by $\phi$, which
depends on the state variable $u$, and thus on the vorticity
  $\omega = u - u_\Omega$, via equation~(\ref{eq:Poisson-eq-periodic}).
Hence the problem is nonlinear with a cubic nonlinearity, and in general no
analytical solution is known (to the best of our knowledge). 

Figure \ref{fig:2} shows an example of relaxation of an initially anisotropic
vortex. The initial state is again given by~(\ref{eq:initial_gaussian}), but now
with $x_0=(\pi,\pi)$, $w_1=0.3$, $w_2=1.0$, and $N=1$, on $[0,2\pi]^2$ with a
uniform mesh of $256 \times 256$ nodes. The time integrator is the standard
4th-order explicit Runge-Kutta method with time step $\Delta t = 10^{-3}$. The
solution relaxes to a symmetric vortex, which is an equilibrium of the Euler
equations. During the evolution, the Hamiltonian is constant and the entropy is
monotonically dissipated, consistently with~(\ref{eq:metric-system-properties}).
However, the entropy appears to converge to a value that is higher than its
constrained minimum $\fun{S}_\eta$, given in Eq.~(\ref{eq:Seta-Euler_periodic}),
and indicated by the thick horizontal line in Fig.~\ref{fig:2}.
The fact that the final state is (a numerical approximation of) an equilibrium
of the Euler equations can be deduced from the plot of the final state: the
solution for the vorticity $\omega$ appears to be  constant on the contours of
the potential $\phi$. A more quantitative indication is provided in
Fig.~\ref{fig:3}, where the relation between $\phi$ and $\omega$ is
represented. At the initial time $t=0$, there is no functional relation between
$\omega$ and $\phi$: the values $(\phi,\omega)$ on the computational grid do not
belong to a curve. As the state relaxes, the scatter of points is reduced, and
at the final time, one finds a clear functional relation. 

\begin{figure}
  \centering
  \includegraphics[width=\columnwidth]{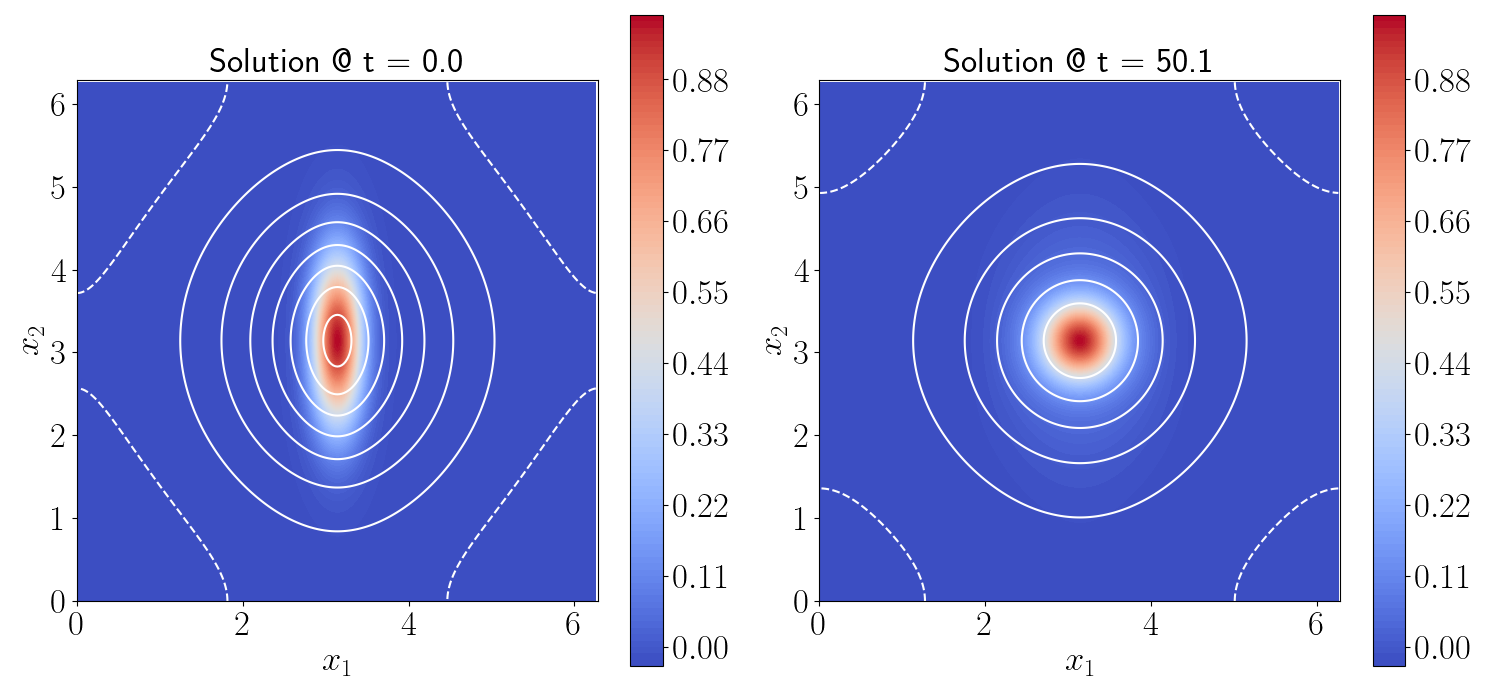}
  \includegraphics[width=\columnwidth]{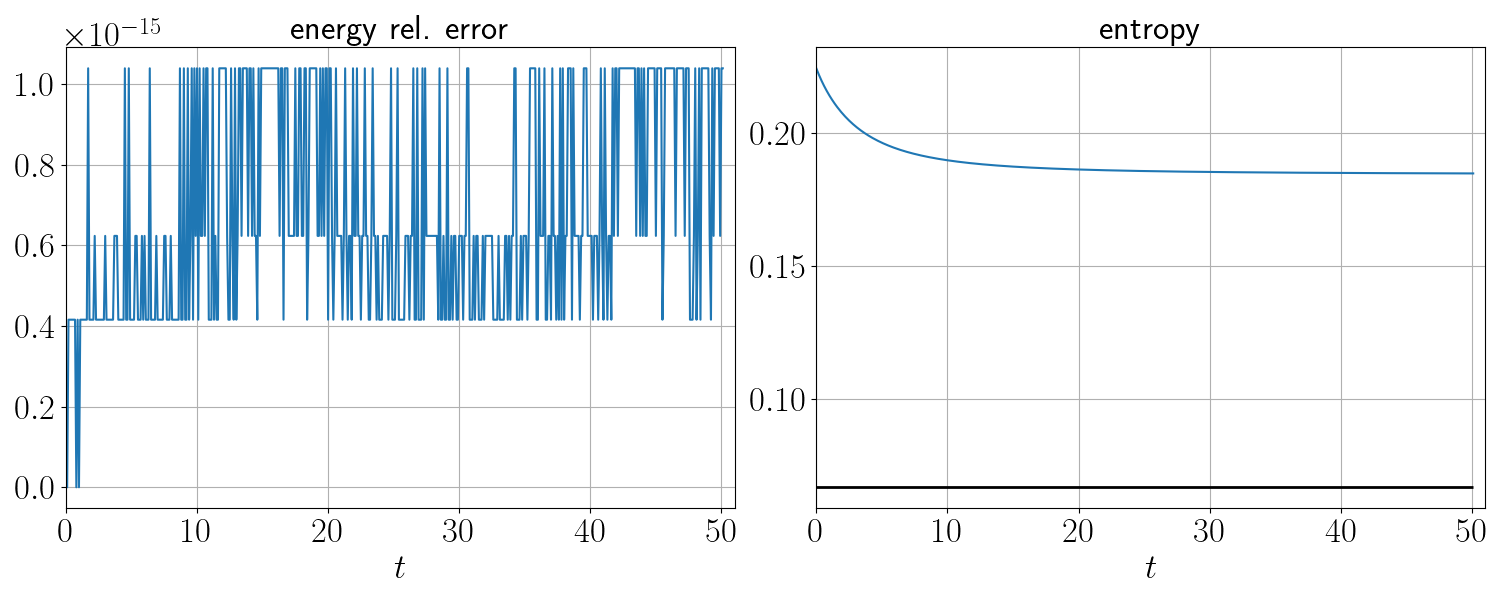}
  \caption{\label{fig:2} Metriplectic relaxation of a vortex toward an
    equilibrium of the reduced Euler equations,
    using~(\ref{eq:paired-bracket-L2}) with (\ref{eq:Euler-S-H}) and
    $s(y) = y^2/2$.
    Top row: initial and final state of the system; 
    the color scheme represents the vorticity $\omega = u - u_\Omega$; white
    lines represent the contours of the potential $\phi$.
    Bottom row: relative error of the Hamiltonian and the value of entropy
    during the evolution. The thick horizontal line
    indicates the constrained entropy minimum $\fun{S}_\eta = \fun{H}_0$, 
    cf.~Eq.~(\ref{eq:Seta-Euler_periodic}). } 
\end{figure}

\begin{figure}
  \centering
  \includegraphics[scale=0.4]{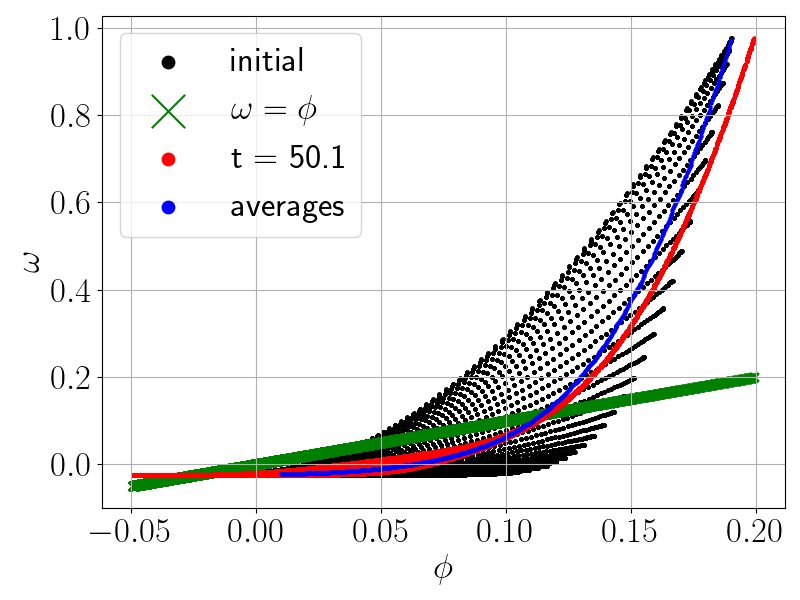}
  \caption{\label{fig:3} Visualization of the relation between the potential
    $\phi$ and the vorticity $\omega$, for the initial and final state of the
    calculation. The green crosses mark the linear relation for a minimum
    entropy state, Eq.~(\ref{eq:u-eta_Euler_periodic}).
    The data marked ``averages'' represent the average of the initial condition
    on the contours of the corresponding potential $\phi$.} 
\end{figure}

For a completely relaxed state, we expect a linear relation between the
potential and the vorticity, Eq.~(\ref{eq:u-eta_Euler_periodic}), and this is
indicated by green crosses in Fig.~\ref{fig:3}.
The obtained relationship is however very different, proving that the dynamical
system reaches an equilibrium that does not satisfy the variational principle of
minimum constrained entropy~(\ref{eq:entropy-principle}). From the numerical
experiments in Fig.~\ref{fig:3}, one can see that the relaxed state is close
to the average of the initial condition on the contours of the initial
potential. The average of the initial condition gives the exact long-time limit
of the solution in the case of the linear problem (\ref{eq:parallel-diffusion}),
and it is not expected to give an accurate prediction of the relaxed state in
general. Yet we observe that the solution converges to a state close to the
average of the initial condition.  

We conclude that the relaxation mechanism of~(\ref{eq:paired-bracket-L2}) fails
to capture the linear profile encoded in the choice of the entropy function. The
relaxed state is an equilibrium of the reduced Euler equations, but
corresponding to a profile that differs from the target one and that depends
on the initial condition in a complicated way.

We can try to understand the behavior of this metric system in terms of the
ideas put forward in Section~\ref{sec:infinite-dim}.
Specifically, for both the analytical case of Fig.~\ref{fig:1} and the
reduced Euler case of Figs.~\ref{fig:2} and~\ref{fig:3}, we shall show that
the metric bracket is not specifically degenerate, and the generalization of
the PL inequality, equation~(\ref{eq:MpPL-V}), is not satisfied. We recall
that in both cases the bracket is given by the metric double bracket defined
in equation~(\ref{eq:paired-bracket-L2}), but the Hamiltonian is different,
and thus the null space of the bracket is different. Therefore we treat
the two cases separately even though there are some similarities.  
  
\begin{itemize}
    
\item Analytical test case.
  We begin by showing that the bracket is \emph{not} specifically
  degenerate. With mass $\fun{I}^1 = \fun{M}$ and the energy
  $\fun{I}^2 = \fun{H}$ as the only two invariants, we want to show that
  $(\fun{F},\fun{F})(u) = 0$ at a point $u$ does not imply 
  $\delta \fun{F}(u)/\delta u = \lambda_1 + \lambda_2 h$. With this aim, we
  observe that
  \begin{equation*}
    (\fun{F}, \fun{F})(u) = 0 \iff
    \Big[\frac{\delta \fun{F}(u)}{\delta u},
      \frac{\delta \fun{H}(u)}{\delta u} \Big] = 
    \Big[ \frac{\delta \fun{F}(u)}{\delta u}, h \Big] = 0.
  \end{equation*}
  This condition is satisfied at any point $u$ for functions of the form
  \begin{equation*}
    \fun{F}(u) = (f(h), u)_{L^2(\Omega)},
  \end{equation*}
  for any sufficiently regular function $f :\R \to \R$. We see that the
  condition $\delta \fun{F}(u)/\delta u = \lambda_1 + \lambda_2 h$,
  corresponds to the special case $f(h) = \lambda_1 + \lambda_2 h$. Therefore
  there are functions $\fun{F}$ for which $(\fun{F}, \fun{F})(u) = 0$, but 
  $\delta \fun{F}(u)/\delta u \not= \lambda_1 + \lambda_2 h$, and thus the
  bracket is \emph{not} specifically degenerate, in the sense of
  Eq.~(\ref{eq:minimal-degenerate}).  
  
  As for inequality~(\ref{eq:MpPL-V}), we have that
  \begin{equation*}
    \big(\fun{S}, \fun{S}\big)(u) = 0 \iff
    \Big[\frac{\delta \fun{S}(u)}{\delta u},
      \frac{\delta \fun{H}(u)}{\delta u} \Big] = [u, h] = 0.
  \end{equation*}
  Therefore any phase-space point of the form $u = f(h)$, with
  $f :\R \to \R$ a sufficiently regular function, is a zero of the
  bracket $(\fun{S}, \fun{S})$. Constrained entropy minima, on the other hand,
  are affine functions $u - u_\Omega = \lambda_1 + \lambda_2 h$ with specific
  values of the multiplier $\lambda_1$ and $\lambda_2$ (the exact formula has
  been given in equation~(\ref{eq:u-eta_analytic}) but it is not needed here);
  hence condition (\ref{eq:MpPL-V}) is false.

  Therefore, neither one of the conditions of Section~\ref{sec:infinite-dim}
  holds true in this case.
  In fact the analytical solution and the numerical experiment show that the
  relaxation method finds a point of the (rather large) set
  $\{u \colon (\fun{S},\fun{S})(u) = 0\}$, instead of the unique entropy
  minimum~(\ref{eq:u-eta_analytic}). 
  
\item Reduced Euler test case.
  We first show that the bracket is not specifically degenerate. With this aim
  we construct a similar counterexample to the one used in the analytical case
  above, the only difference being that now
  $\delta \fun{H}(u)/\delta u = \phi$ is related to $u$ via the Poisson
  equation $-\Delta \phi = u - u_\Omega$, Eq.~(\ref{eq:Poisson-eq-periodic}).
  We consider a point $u$ in phase space given by a solution of the problem
  \begin{equation*}
    -\Delta \phi = u - u_\Omega, \qquad u = f(\phi),
  \end{equation*}
  for a given smooth function $f:\R \to \R$. The existence of nontrivial
  solutions is guaranteed for a large class for functions $f$ \cite{Denny2016}.
  Then, for any smooth function $g : \R \to \R$, and 
  for $\fun{F}(u) = \int_\Omega g(u)dx$, we have that
  \begin{equation*}
    (\fun{F}, \fun{F})(u) = \int_\Omega \big[g'(u), \phi \big] dx
    = \int_\Omega \big[g' \circ f(\phi), \phi\big] dx = 0,
  \end{equation*}
  where $u$ is the phase-space point defined above. On the other hand,
  $\delta \fun{F}(u) / \delta u = g'\circ f (\phi)$ in general is not a linear
  combination of the derivatives $\frac{\delta \fun{M}(u)}{\delta u}$ and
  $\frac{\delta \fun{H}(u)}{\delta u}$ of the two invariants, i.e., 
  \begin{equation*}
    \frac{\delta \fun{F}(u)}{\delta u} \not=
    \lambda_1 \frac{\delta \fun{M}(u)}{\delta u} +
    \lambda_2 \frac{\delta \fun{H}(u)}{\delta u} = \lambda_1 + \lambda_2 \phi,
  \end{equation*}
  therefore the bracket is not specifically degenerate.
    
  As for condition (\ref{eq:MpPL-V}), we observe that the entropy $\fun{S}$ is
  a special case of the class of functions discussed above, corresponding the
  the choice $g(u) = u^2 / 2$, and, if $u$ is the same phase-space point used
  above, we have $(\fun{S},\fun{S})(u) = 0$, but 
  $\delta \fun{S}(u)/\delta u = f(\phi)$, which shows that in general $u$
  is not a constrained entropy minimum, since for a constrained entropy
  minimum we should have $f(\phi) = \phi$, 
  cf. Eq.~(\ref{eq:complete-relaxation-vorticity2d-periodic}).

\end{itemize}

While we have rigorous results in the finite-dimensional case only, these
observations show that, at least in these two cases, failure to relax the
system to a (local) constrained entropy minimum occurs for a bracket that is 
neither specifically degenerate, nor satisfies the generalized PL inequality.
This supports the idea that the convergence results obtained in
Section~\ref{sec:finite-dim-PL} for finite-dimensional systems may also hold
in general.
For comparison, below in section~\ref{sec:projector-based-metric-bracket},
we shall discuss a bracket that is specifically degenerate, and 
for this bracket we observe complete relaxation.

It is worth noting that in both the analytical case and the reduced Euler case
the bracket defined in Eq.~(\ref{eq:paired-bracket-L2}) does find a valid
equilibrium of the system, but this equilibrium, in general, is not a
constrained critical point of the entropy function, and in particular it
cannot be a local constrained minimum of entropy.
This implies that the bracket (\ref{eq:paired-bracket-L2}) could not be used
to solve problems like the Grad-Shafranov equation as the resulting
equilibrium would not be consistent with the imposed profiles that are encoded
in the entropy function, cf. Section~\ref{sec:GS-problem}. Yet they can be
useful in another way as we shall see below for the Beltrami fields.

\subsection{Projector-based metric bracket}
\label{sec:projector-based-metric-bracket}

We address now a construction of metric brackets based on $L^2$-orthogonal
projectors, which are patterned after that  given for finite-dimensional systems
in \cite{Morrison1986}. As before let $\Omega = [0,2\pi]^d$ and $V$ be the space
of functions on $\R^d$, $2\pi$-periodic in each direction. Given a Hamiltonian
function $\fun{H}$, the $L^2$ orthogonal projector onto the direction of
$\delta \fun{H}(u)/\delta u$ is   
\begin{equation}
  \label{eq:L2-projector}
  \begin{aligned}
    \Pi_{\fun{H}}(u) v &\coloneqq
    v - c(u,v) \frac{\delta \fun{H}(u)}{\delta u},\\
    c(u,v) &\coloneqq \Big\|
    \frac{\delta \fun{H}(u)}{\delta u} \Big\|_{L^2(\Omega)}^{-2}
    \Big(\frac{\delta \fun{H}(u)}{\delta u}, v \Big)_{L^2(\Omega)}.
  \end{aligned}
\end{equation}
With the projector, let us define
\begin{equation}
  \label{eq:projector-brackets}
  (\fun{F},\fun{G}) \coloneqq \Big(\frac{\delta \fun{F}}{\delta u},
  \Pi_{\fun{H}} \frac{\delta \fun{G}}{\delta u} \Big)_{L^2(\Omega)}.
\end{equation}
We claim that the symmetric bi-linear form~(\ref{eq:projector-brackets})
satisfies (\ref{eq:metric-system-compatibility}) and the Leibniz identity. Since
$\Pi_{\fun{H}}$ is a projector $\Pi_{\fun{H}} (\delta \fun{H} / \delta u) =0$,
hence $(\fun{F},\fun{H}) = 0$ for all $\fun{F}$; in addition,
\begin{equation*}
  (\fun{F},\fun{F}) = 
  \Big(\frac{\delta \fun{F}}{\delta u}, \Pi_{\fun{H}}
  \frac{\delta \fun{F}}{\delta u} \Big)_{L^2(\Omega)} \geq 0,
\end{equation*}
since projectors are symmetric and nonnegative definite. The Leibniz identity is
straightforward.  

We utilize bracket~(\ref{eq:projector-brackets}) in (\ref{eq:metric-system})
in order to obtain a relaxation method for the variational
problems~(\ref{eq:test-problems}). For the case of the linear advection
equation, energy (\ref{eq:analytical_H}) yields the evolution equation
\begin{equation*}
  \partial_t u = -\left[u - u_\Omega 
    - \frac{\fun{H}(u)}{\|h-h_\Omega\|^2_{L^2(\Omega)}} (h-h_\Omega) \right],
\end{equation*}
where $u_\Omega = \fun{M}(u)/(4\pi^2)$. Since both $\fun{M}$ and $\fun{H}$ are
constants of motion, the affine transformation
$u \mapsto w = u - \fun{M}_0/(4\pi^2) - \big[\fun{H}_0
  /\|h-h_\Omega\|^2_{L^2(\Omega)} \big] (h-h_\Omega)$
with $\fun{M}_0 = \fun{M}(u_0)$ and $\fun{H}_0 = \fun{H}(u_0)$, $u_0$ being
the initial condition, transforms the equation into $\partial_t w = - w$, which
leads to the analytical solution
\begin{equation*}
  u(t,\cdot) = \left[ \frac{\fun{M}_0}{4\pi^2} +
    \frac{\fun{H}_0}{\|h-h_\Omega\|^2_{L^2(\Omega)}} (h-h_\Omega)\right]
  (1-e^{-t}) + u_0 e^{-t}. 
\end{equation*}
The term in square brackets is exactly the unique entropy
minimum~(\ref{eq:u-eta_analytic}) on the manifold $\mathcal{U}_\eta$ with
$\eta= (\fun{M}_0,\fun{H}_0)$ being determined by the initial
condition. Therefore for any initial condition $u_0$, this metriplectic system
relaxes completely to the accessible entropy minimum and with exponential
convergence rate. This is the desired behavior. From the analytical solution
one can see how the initial condition is quickly ``forgotten'', leaving only
the fully relaxed state.

Let us now move to the test case of the reduced Euler equations.
Bracket~(\ref{eq:projector-brackets}) with
Hamiltonian~(\ref{eq:Euler_H_periodic}) and Eq.~(\ref{eq:metric-system})
yields the evolution equation
\begin{equation*}
  \partial_t u = - \left[u - u_\Omega
    - \frac{\fun{H}(u)}{\|\phi\|^2_{L^2(\Omega)}} \phi\right],
\end{equation*}
with $\phi$ depending on $u$ via the Poisson
equation~(\ref{eq:Poisson-eq-periodic}). The evolution of $u$ is therefore
governed by a nonlinear integral operator with a nonpolynomial nonlinearity. 
We begin by considering a numerical experiment. Figure \ref{fig:4} shows the
initial and final state of a solution of the initial value problem for this
equation, and Fig.~\ref{fig:5} gives the representation of the functional
relation between the potential $\phi$ and the vorticity $\omega = u - u_\Omega$.
The initial condition as well as the numerical method and the numerical
parameters (grid size and time steps) are the same as in Fig.~\ref{fig:2}.

Figure \ref{fig:4} shows the relative error in energy conservation and the
entropy as a function of time. The thick horizontal line denotes the minimum
entropy value in Eq.~(\ref{eq:Seta-Euler_periodic}). This time the minimum
entropy value is quickly reached by the system.

From the scatter plot in Fig.~\ref{fig:5} one can see that the final state is
characterized by a linear relation between $\phi$ and $\omega$. This suggests
that the projector-based metric bracket~(\ref{eq:projector-brackets}) relaxes
the state of the system completely. This nice property comes at the price of a
larger dissipation of entropy, cf.\  Fig.~\ref{fig:4}. As a consequence,
the vorticity of the final state is significantly lower than in the initial
condition. The scatter plot in Fig.~\ref{fig:5} also shows the average of the
initial condition on the contours of the initial potential (which is the same as
in Fig.~\ref{fig:3}); in this case, the relaxed state bears little or no
similarity to the initial condition.

In order to check if the relaxed vorticity is in agreement with the
analytical solution~(\ref{eq:u-eta_Euler_periodic}), we have computed the
best fit of the analytical solution~(\ref{eq:u-eta_Euler_periodic})
to the final state in Fig.~\ref{fig:4}, varying the three phases $\theta_0$,
$\theta_1$, and $\theta_2$. The difference between the best fit and the
relaxed state gives an estimate of the distance of the latter from the entropy
minimum and it is shown in Fig~\ref{fig:5}, right-hand-side panel. 

\begin{figure}[t]
  \centering
  \includegraphics[width=\columnwidth]{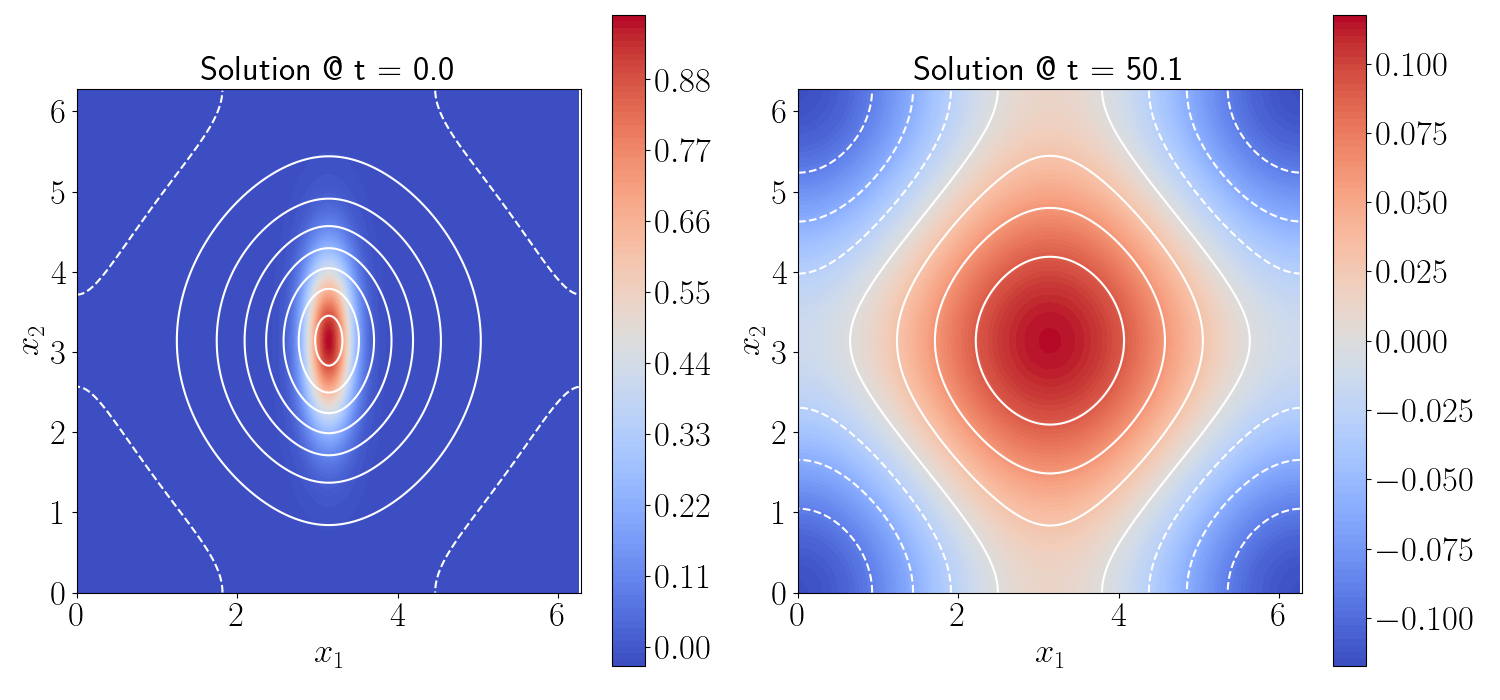}
  \includegraphics[width=\columnwidth]{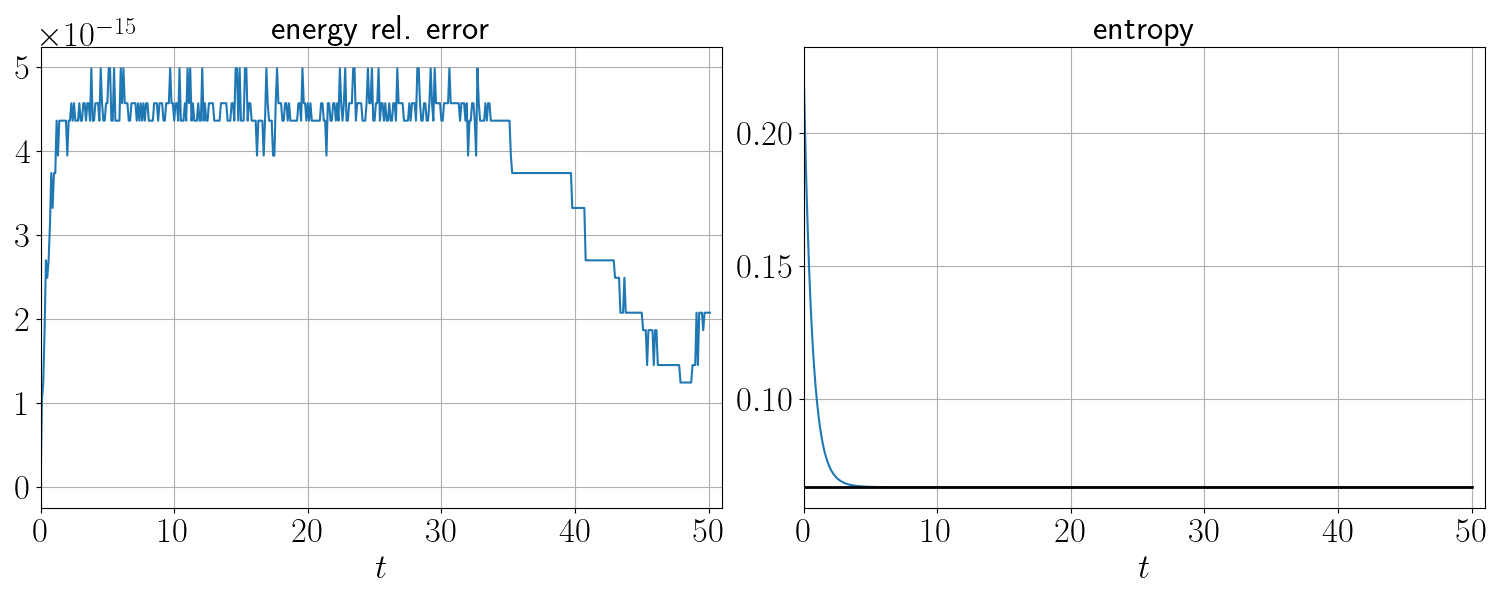}
  \caption{\label{fig:4} The same as in Fig.~\ref{fig:2} but for the metric
    bracket~(\ref{eq:projector-brackets}).}
\end{figure}

\begin{figure}[t]
  \centering
  \includegraphics[scale=0.32]{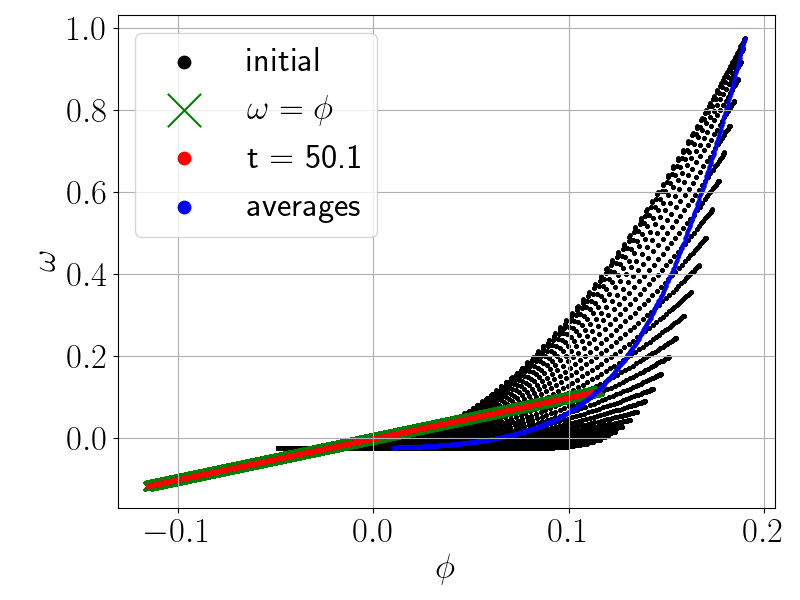}
  \includegraphics[scale=0.33]{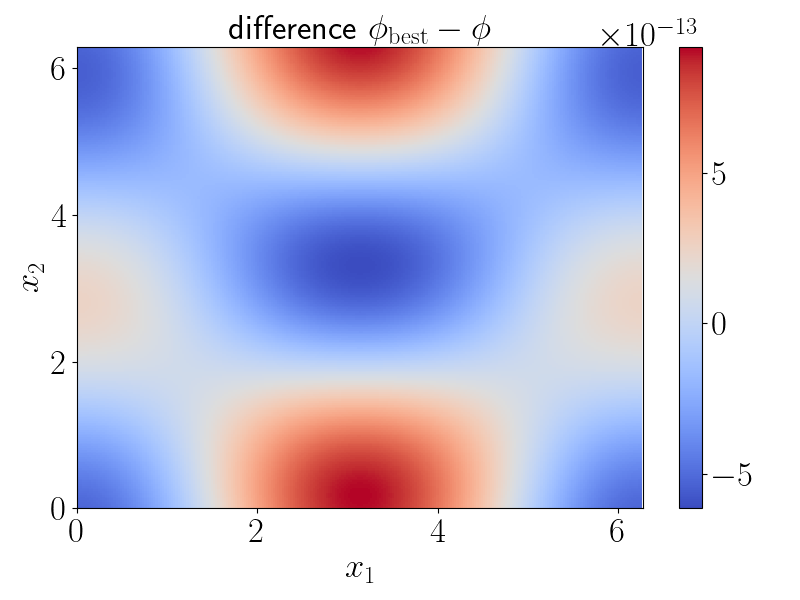}
  \caption{\label{fig:5} Left-hand-side panel: the relation between 
    $\phi$ and $\omega = u - u_\Omega$ for the case of Fig.~\ref{fig:4}.
    Right-hand-side panel: difference between the best fit of the exact
    solution~(\ref{eq:u-eta_Euler_periodic}) and the relaxed state.}
\end{figure}

In the terminology of Section~\ref{sec:infinite-dim}, we claim that
bracket~(\ref{eq:projector-brackets}) is \emph{minimally degenerate}. In
fact, for any function $\fun{F}$, we have $(\fun{F},\fun{F}) = 0$ if and only
if $\delta \fun{F}(u) /\delta u \in \ker \Pi_{\fun{H}}(u)$, or equivalently
\begin{equation*}
  \frac{\delta \fun{F}(u)}{\delta u} = \lambda \phi,
\end{equation*}
which is condition~(\ref{eq:minimal-degenerate}) with $\lambda_\alpha \not=0$
only if $\fun{I}^\alpha = \fun{H}$.  

If $\fun{F} = \fun{S}$, this condition is satisfied at any point of the set
$\mathfrak{C}_\eta$ defined in equation~(\ref{eq:Ceta}), i.e., for any
constrained critical point of the entropy, not just at the minimum. It follows
that the generalization (\ref{eq:MpPL-V}) of the \ref{eq:PL} condition cannot
be true on the whole phase space. Nonetheless,
with~(\ref{eq:paired-bracket-L2}) we have 
\begin{equation}
  (\fun{S}, \fun{S})(u) = 2 \fun{S}(u) -
  \frac{4 \fun{H}_0^2}{\|\phi\|^2_{L^2(\Omega)}}.
\end{equation}
We claim that for any $u$ such that $\fun{H}(u) = \fun{H}_0$, the following
inequalities hold true:
\begin{subequations}
  \label{eq:theoretical-limits}
  \begin{align}
    \label{eq:tb1}
    2 \fun{S}(u) - \frac{4 \fun{H}_0^2}{\|\phi\|^2_{L^2(\Omega)}} &\geq 0,\\
    \label{eq:tb2}
    \fun{S} &\geq \fun{H}_0, \\
    \label{eq:tb3}
    \|\phi\|^2_{L^2(\Omega)} &\leq  \|\nabla \phi\|_{L^2(\Omega)}^2 =
    2 \fun{H}_0.
  \end{align}
\end{subequations}
Specifically, Eq.~(\ref{eq:tb1}) follows from the Cauchy-Schwarz inequality
(which also ensure the positivity of the projector). Equation (\ref{eq:tb2}) is
a consequence of (\ref{eq:Seta-Euler_periodic}), while  (\ref{eq:tb3}) is
a Poincar\'e inequality on $\T^2$, obtained from the solution of the Poisson
problem~(\ref{eq:Poisson-eq-periodic}) via Fourier series. Inequalities
(\ref{eq:theoretical-limits}) imply that the evolution of the solution of the
metriplectic system must be such that
$\big(\fun{S}(u), 1/\|\phi\|^2_{L^2(\Omega)}\big) \in \R_+^2$ remains within
the cone
\begin{equation*}
  \fun{S}(u) \geq \fun{H}_0 = \fun{S}_\eta, \quad
  \frac{1}{2\fun{H}_0} \leq \frac{1}{\|\phi\|^2_{L^2(\Omega)}} \leq
  \frac{1}{2\fun{H}_0} + \frac{\fun{S}(u) - \fun{S}_\eta}{2 \fun{H}_0^2}.
\end{equation*}
We observe that the set $\mathfrak{C}_\eta$ is contained in the ``upper
boundary'' of this cone, i.e., it  satisfies 
\begin{equation*}
  \frac{1}{\|\phi\|^2_{L^2(\Omega)}} =
  \frac{1}{2\fun{H}_0} + \frac{\fun{S}(u) - \fun{S}_\eta}{2 \fun{H}_0^2}.
\end{equation*}
For any $a \in (0,1)$, and for any $u \in \mathcal{U}_\eta$ such that
\begin{equation}
  \label{eq:shrunk-cone}
  \frac{1}{\|\phi\|^2_{L^2(\Omega)}} =
  \frac{1}{2\fun{H}_0} + (1-a) \frac{\fun{S}(u)-\fun{S}_\eta}{2 \fun{H}_0^2},
\end{equation}
we have
\begin{equation*}
  \big(\fun{S},\fun{S}\big) \geq 2 a \big[\fun{S}(u) - \fun{S}_\eta\big],
\end{equation*}
which is inequality~(\ref{eq:MpPL-V}) with constant $\kappa_\eta=2a$. Hence, if
the solution stays within the shrunk cone~(\ref{eq:shrunk-cone}), we expect
exponential convergence with exponent related the angle of the
cone~(\ref{eq:shrunk-cone}). 

\begin{figure}[t]
  \centering
  \includegraphics[scale=0.33]{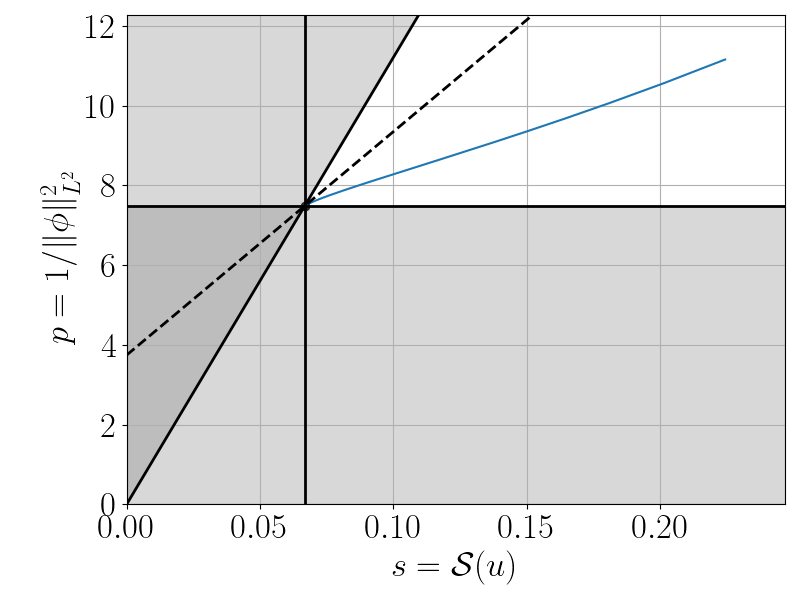}
  \includegraphics[scale=0.32]{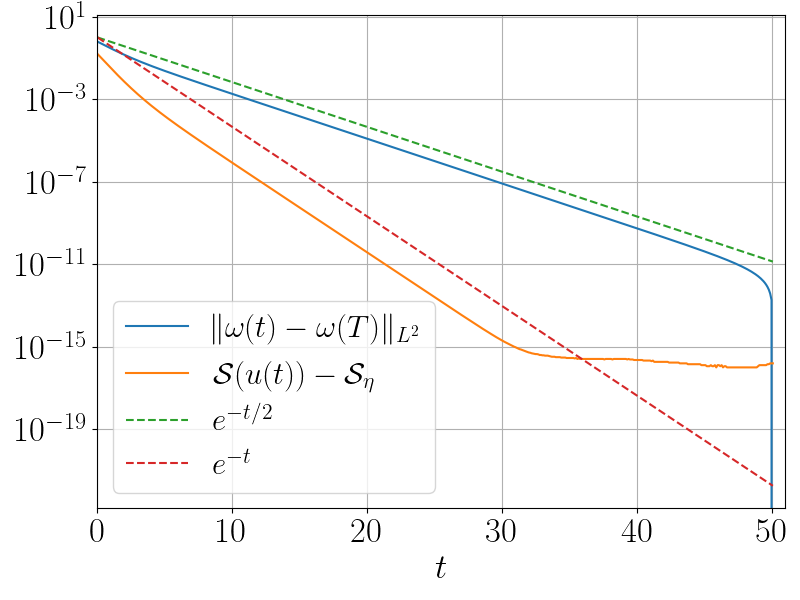}
  \caption{\label{fig:6} Left-hand-side panel: trajectory in the plane
    $(\fun{S}(u), 1/\|\phi\|_{L^2(\Omega)})$ for the case of Fig.~\ref{fig:4};
    the shaded area indicates the region of the plane excluded by
    inequalities~(\ref{eq:theoretical-limits}). The dashed line indicates the
    upper boundary of the shrunk cone (\ref{eq:shrunk-cone}) with $a=1/2$.
    The minimum entropy state corresponds to the vertex of the cone.
    Right-hand-side panel: evolution of the norm of the distance of $\omega(t)$
    from the relaxed state, using the vorticity $\omega$ at the last point in
    time $t=T$ as an approximation of the latter, and the ``excess entropy''
    $\fun{S}\big(u(t)\big)-\fun{S}_\eta$. The semi-log scale shows exponential
    relaxation of entropy with exponential rate $\approx 1$. This is consistent
    with the inequality~(\ref{eq:MpPL-V}). The fact that $\omega$ has relaxation
    rate $\approx 1/2$ is a consequence of the simple choice of the entropy
    function, cf.~Eq.~(\ref{eq:test-problems}).} 
\end{figure}

\begin{figure}[t]
  \centering
  \includegraphics[scale=0.33]{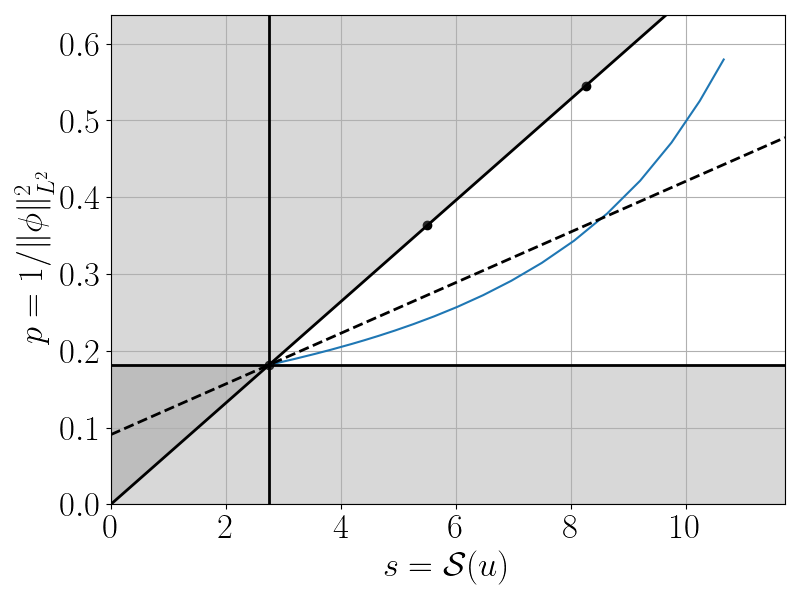}
  \includegraphics[scale=0.32]{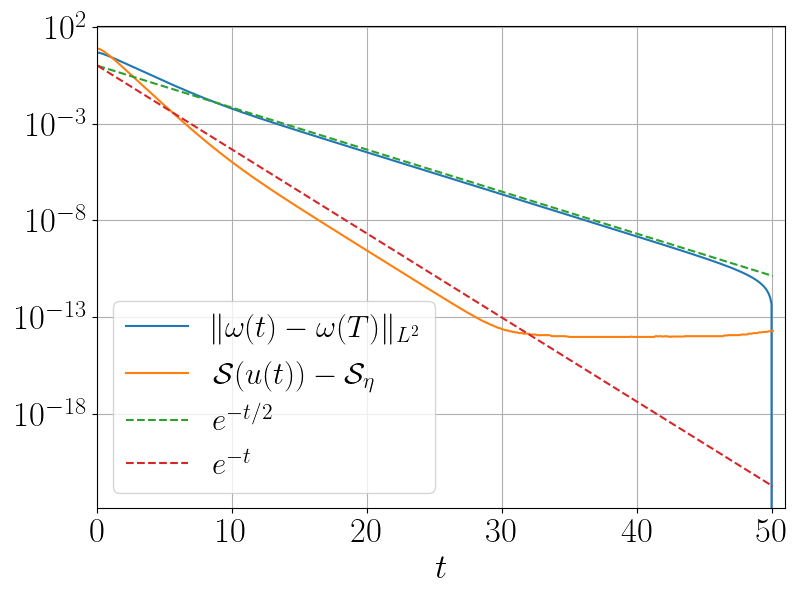}
  \caption{\label{fig:7} The same as in Fig.~\ref{fig:6}, but for the initial
    condition (\ref{eq:ic-projector2}). The black dots marked on the boundary of
    the cone correspond to the points of the set $\mathfrak{C}_\eta$.}
\end{figure}

Figure~\ref{fig:6} shows the trajectory of the solution in the plane
$\big(\fun{S}(u), 1/\|\phi\|^2_{L^2(\Omega)} \big)$ for the case of
Fig.~\ref{fig:4}. The solution indeed stays within the cone defined
by~(\ref{eq:theoretical-limits}). The figure also demonstrates exponential
convergence of both entropy and vorticity with exponential rate consistent
with the generalized \ref{eq:PL} condition. We note that, cf.~Fig.~\ref{fig:4},
the system traverses most of the trajectory in Fig.~\ref{fig:6} quickly, and
reaches the vertex of the cone already at $t \approx 5$.

Figure~\ref{fig:7} shows the same trajectory for a different initial condition,
viz.,
\begin{equation}
  \label{eq:ic-projector2}
  u_0(x) = \cos(2x_2) + u_G(x),
\end{equation}
where $u_G$ is the Gaussian defined in~(\ref{eq:initial_gaussian}) with
$1/N = 1.8$, $x_0 = (\pi, 3\pi/2)$, $w_1 = 0.3$, and $w_2 = 1$. The cosine
terms shift the initial condition  closer to the boundary. As a
consequence the initial entropy relaxation rate is slower, but approaches
$\approx 1$ as the trajectory approaches the vertex of the cone.

\section{Collision-like metric brackets}
\label{sec:coll-like-metr}

The specific structure of the metric bracket for the Landau collision operator
was introduced and generalized  in \cite{Morrison1984,Morrison1986}. Here we
propose a further generalization that we use  as a ``template'' for the 
construction of relaxation methods. This generalized bracket will be referred
to as the collision-like metric bracket \cite{Bressan2023}, as it originates
from Morrison's bracket for the Landau collision operator. The resulting
evolution equation is integro-differential. We demonstrate the use of
collision-like brackets for the solution of the variational principles in
Eqs.~(\ref{eq:VP-Euler}) and~(\ref{eq:VP-GS}) for equilibria of the reduced
Euler equations and axisymmetric MHD, respectively. In both these applications
the specification of \emph{equilibrium profiles}, i.e., the relation between the
state variable $u$ and the corresponding potential (either $\phi$ for the Euler
equations or $\psi$ for axisymmetric MHD), is essential. In the variational
principle, the profile is encoded in the choice of the entropy;  therefore, for
the result of metriplectic relaxation to be consistent with the imposed profile,
the metric bracket must \emph{completely relax} the state of the system, in the
sense made precise in Section~\ref{sec:remarks-relax-equil}. Preliminary,
numerical results on these two problems have been reported in the proceedings of
the joint Varenna-Lausanne workshop on ``The Theory of Fusion Plasmas''
\cite{Bressan2018} and in the Ph.D.\  thesis by Bressan \cite{Bressan2023}.

\subsection{General construction of collision-like brackets}
\label{sec:c-general}

We introduce the class of collision-like metric brackets in a fairly abstract
way, but we make no attempt to give a mathematically rigorous definition, except
for a few basic considerations. The construction given here is somewhat more
general than the one proposed earlier \cite{Bressan2023, Bressan2018}.

As in Section~\ref{sec:metriplectic}, the phase space $V$ is a space of
sufficiently regular functions $v : \Omega \to \R^N$ over a bounded domain 
$\Omega \subset \R^d$ with $N,d \in \N$. We always assume 
$V \subseteq L^2(\Omega,\mu;\R^N) \eqqcolon W$, and the functional derivative of
a function $\fun{F} \in C^1(V)$ is computed with respect to the standard
inner product in $W$, hence $\delta \fun{F}(u)/\delta u \in W$, when it exists.

For the construction of the bracket, we consider a \emph{bounded} domain
$\mathcal{O} \subset \R^n$ equipped with a measure $\nu$, and the space
$\tilde{W} \coloneqq L^2(\mathcal{O},\nu;\R^{\tilde{N}})$ with
$n,\tilde{N} \in \N$. For our purposes it is sufficient to assume that
$d\nu(z) = \tilde{m}(z)dz$ with $\tilde{m} \in C^\infty(\ol{\mathcal{O}})$,
$dz$ being the Lebesgue measure on $\mathcal{O}$. Next, we choose a (possibly
unbounded) linear operator   
\begin{equation*}
  P : W \to \tilde{W}, \qquad \dom(P) = \Phi,
\end{equation*}
where the domain $\Phi$ is a subspace of $W$ with a finer topology, so that
$V \subseteq \Phi \subseteq W \subseteq \Phi'$, with $\Phi'$ being the dual of
$\Phi$ (the space of continuous linear functionals on $\Phi$), and with
continuous inclusions. Then, $W$ has the structure of a rigged Hilbert space, 
with the finer space $\Phi$ containing the phase space $V$. In this way, both
quadratic functionals like $\fun{F}(u) = \int_\Omega u^2 d\mu$ and linear
functionals like $\fun{F}(u) = \int_\Omega w \cdot u d\mu$ for $w \in \Phi$ are
such that $\delta \fun{F}(u) / \delta u \in \Phi$. In addition, given a
Hamiltonian $\fun{H}$ such that $\delta \fun{H}(u)/\delta u \in \Phi$, we choose
\begin{equation*}
  \TT : V \to \mathcal{B}(\tilde{W}),
\end{equation*}
with values in the space $\mathcal{B}(\tilde{W})$ of bounded linear operators
from $\tilde{W}$ into $\tilde{W}$, such that $T(u)$ is symmetric, positive
semidefinite, and
\begin{equation}
  \label{eq:T-energy-cond}
  \TT(u) P \frac{\delta \fun{H}(u)}{\delta u} = 0.
\end{equation}
In terms of the operator $P$ and the function $\TT$, collision-like brackets
that preserve $\fun{H}$ are defined by 
\begin{equation}
  \label{eq:gen-clb}
  (\fun{F},\fun{G}) \coloneqq \int_{\mathcal{O}}
  P \frac{\delta \fun{F}(u)}{\delta u} \cdot \TT(u)
  P \frac{\delta \fun{G}(u)}{\delta u}\, d\nu.
\end{equation}
Symmetry and positive semidefiniteness of the bracket is ensured by the fact
that $\TT(u)$ is symmetric and positive semidefinite, while
Eq.~(\ref{eq:T-energy-cond}) implies $(\fun{F},\fun{H}) = 0$ for any $\fun{F}$.
Hence, Eq.~(\ref{eq:gen-clb}) defines a metric bracket on the class of functions
$\fun{F}$ such that $\delta \fun{F}(u)/\delta u \in \Phi$. In most cases,
$\TT(u)$ is the operator of multiplication by a function $\TT(u;x)$ with values
in the space of real symmetric, positive semidefinite
$\tilde{N} \times \tilde{N}$ matrices.

Upon introducing the dual operator $P' \colon  \tilde{W} \to \Phi'$ defined by
\begin{equation}
  \label{eq:P-dual}
  \big\langle w, P'\tilde{w} \big\rangle = \int_{\mathcal{O}} \tilde{w} \cdot
  Pw \,d\nu, \qquad \text{for all } w \in \Phi,
\end{equation}
where $\langle \cdot, \cdot \rangle \colon  \Phi \times \Phi' \to \R$ is the
duality pairing between $\Phi$ and $\Phi'$, bracket~(\ref{eq:gen-clb}) can be
equivalently written as
\begin{equation*}
  (\fun{F},\fun{G}) = \Big\langle
  \frac{\delta \fun{F}(u)}{\delta u}, P'\Big[
    \TT(u) P \frac{\delta \fun{G}(u)}{\delta u}
    \Big]\Big\rangle.
\end{equation*}
If $u \in C^1([0,T],V)$ is a trajectory in $V$ and $\fun{F} \in C^1(V)$
has a functional derivative $\delta \fun{F}(u)/\delta u$ in $\Phi$, then
$t \mapsto \fun{F}\big(u(t)\big)$ is differentiable and 
\begin{equation*}
  \frac{d}{dt} \fun{F} \big( u(t) \big) = \Big\langle
  \frac{\delta \fun{F}\big(u(t)\big)}{\delta u}, \partial_t u\Big\rangle.
\end{equation*}
Therefore, given an entropy $\fun{S}$ such that
$\delta \fun{S}(u) / \delta u \in \Phi$, Eq.~(\ref{eq:metric-system-equation})
for $u(t)$ amounts to
\begin{equation}
  \label{eq:clb-ev}
  \partial_t u = - P'\Big[\TT(u) P \frac{\delta \fun{S}(u)}{\delta u} \Big]
  \quad \text{in } \Phi'.
\end{equation}
This construction is summarized in Fig.~\ref{fig:8}. 

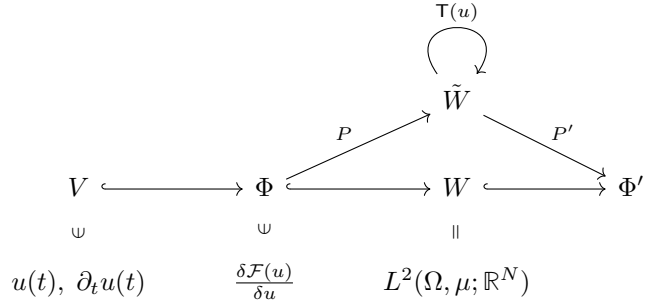
\begin{figure}[t]
  \begin{center}
    \begin{tikzcd}
      && {\tilde{W}} \\
      V & \Phi & W & {\Phi'} \\
      {u(t),\; \partial_t u(t)} & {\frac{\delta \fun{F}(u)}{\delta u}} &
      {L^2(\Omega,\mu;\R^N)} & {}
      \arrow["{\TT(u)}", from=1-3, to=1-3, loop, in=55, out=125, distance=10mm]
      \arrow["{P'}", from=1-3, to=2-4]
      \arrow[hook, from=2-1, to=2-2]
      \arrow["P", from=2-2, to=1-3]
      \arrow[hook, from=2-2, to=2-3]
      \arrow[hook, from=2-3, to=2-4]
      \arrow["\vin"{description}, draw=none, from=3-1, to=2-1]
      \arrow["\vin"{description}, draw=none, from=3-2, to=2-2]
      \arrow["\veq"{description}, draw=none, from=3-3, to=2-3]
    \end{tikzcd}
  \end{center}
  \caption{\label{fig:8} Construction of the operators $P$ and $P'$ in
    Eqs.~(\ref{eq:gen-clb}) and (\ref{eq:clb-ev}).}
\end{figure}

The operator $P$ maps an $\R^N$-valued
function over the $d$-dimensional domain $\Omega$ into an
$\R^{\tilde{N}}$-valued function over the $n$-dimensional domain $\mathcal{O}$, 
and we are particularly interested in the case $n>d$, $\tilde{N} \geq N$. As we
shall see, increasing the number of dimensions has some advantages that,
however, come at the price of a higher computational cost. The function $\TT$
will be referred to as the \emph{kernel of the bracket} and, in general, it
depends on both $P$ and $\fun{H}$, because of~(\ref{eq:T-energy-cond}), but for
simplicity, this dependence is not explicitly indicated in the notation.  

There are of course many ways to choose $\TT$ such that
condition~(\ref{eq:T-energy-cond}) is satisfied and we give two particularly 
relevant examples below. Among the various choices, we are interested in those
that satisfy:  
\begin{subequations}
  \label{eq:MDC}
  \begin{align}
    \label{eq:PI}
    \|w\|_W \leq C_{P} \|Pw\|_{\tilde{W}}, \qquad &w \in \dom(P), \\
    \label{eq:KC}
    \tilde{w} \in \ker \TT(u) \cap \rng P \iff
    \tilde{w} = \lambda P \frac{\delta \fun{H}(u)}{\delta u}, \quad
    &\text{for $\lambda \in \R$ constant,}
  \end{align}
\end{subequations}
for a constant $C_P$. Here $\ker \TT(u)$ and $\rng P$ denote the null space of
the operator $\TT(u)$ and the range of the operator $P$, respectively, and they
are both subspaces of $\tilde{W}$. When $P = \nabla$, Eq.~(\ref{eq:PI}) is the
Poincar\'e inequality \cite{Evans1998}.  

Formally at least, conditions~(\ref{eq:MDC}) imply that the bracket defined in
Eq.~(\ref{eq:gen-clb}) is \emph{minimally degenerate} in the sense of
Eq.~(\ref{eq:minimal-degenerate}). In fact, $(\fun{F},\fun{F}) = 0$ is
equivalent to  
\begin{equation*}
  P \frac{\delta \fun{F}(u)}{\delta u} \in \ker \TT(u) \cap \rng(P),
\end{equation*}
and condition~(\ref{eq:KC}) implies that there is a constant $\lambda$ such that
\begin{equation*}
  P \frac{\delta \fun{F}(u)}{\delta u} =
  \lambda P \frac{\delta \fun{H}(u)}{\delta u}.
\end{equation*}
Then condition~(\ref{eq:PI}) yields
\begin{equation*}
  \Big\|\frac{\delta \fun{F}(u)}{\delta u} -
  \lambda \frac{\delta \fun{H}(u)}{\delta u} \Big\|_W \leq
  C_P \Big\| P \Big[\frac{\delta \fun{F}(u)}{\delta u} -
    \lambda \frac{\delta \fun{H}(u)}{\delta u}\Big] \Big\|_{\tilde{W}} = 0,
\end{equation*}
which implies condition~(\ref{eq:minimal-degenerate}) with
$\lambda_\alpha \not = 0$ only if $\fun{I}^\alpha = \fun{H}$.

It follows that metric brackets of the form~(\ref{eq:gen-clb}), with the
defining operator $P$ and the kernel $\TT$ satisfying (\ref{eq:MDC}) could be
used to construct a relaxation method for variational problems of the
form~(\ref{eq:entropy-principle}).  

We now give examples of (\ref{eq:gen-clb}) for which conditions~(\ref{eq:MDC})
are, at least formally, satisfied.

\begin{example}
  \label{ex:Landau}
  Morrison's brackets for the Landau collision operator \cite{Morrison1984} and
  its generalization \cite{Morrison1986} can be obtained as  special cases of
  (\ref{eq:gen-clb}). This motivates our choice of the name
  \emph{collision-like}. 
  
  \smallskip

  The configuration space $V$ is the space of particle distribution functions
  $u(x) = f(\xx,\vv)$, where $\xx$ is the spatial position and $\vv$ is the
  velocity of the particles. We assume that $\xx \in \Omega_{\xx}$ and
  $\vv \in \Omega_{\vv}$, with both domains
  $\Omega_{\xx},\; \Omega_{\vv} \subset \R^D$ being bounded, $D \geq 2$, and
  $f(\xx,\cdot)$ compactly supported in $\Omega_{\vv}$, i.e., the velocity-space
  domain is large enough to contain all particle velocities.
  Then, $\Omega = \Omega_{\xx} \times \Omega_{\vv}$, $d=2D$, and $N=1$,
  since $f$ is a scalar field. The measure on $\Omega$ is the Lebesgue measure
  $d\mu(x) = d\xx d\vv$.
  
  We choose $\mathcal{O} = \Omega \times \Omega_{\vv}$, $n = 3D$, 
  $d\nu(\xx, \vv, \vv') = d\xx d\vv d\vv'$,
  \begin{equation*}
    Pg(\xx,\vv,\vv') = \nabla_{\vv} g(\xx,\vv) - \nabla_{\vv'} g(\xx,\vv'),
  \end{equation*}
  with $\dom P$ given by the functions $g \in L^2(\Omega)$ such that
  $\nabla_\vv g \in L^2(\Omega;\R^D)$;
  hence $Pg(\xx,\vv,\vv') \in \R^{\tilde{N}}$ with $\tilde{N} = D$. The
  Hamiltonian is
  \begin{equation*}
    \fun{H}(f) = \int_{\Omega} \big[\frac{1}{2} m \vv^2 + \mathrm{V}(\xx)\big]
    f(\xx,\vv) d\xx d\vv,
  \end{equation*}
  where $m$ is the mass of the considered particle species and $\mathrm{V}$ is a
  potential energy. We have $P (\delta \fun{H}(f)/\delta f) = m (\vv - \vv')$.
  Therefore, a possible choice of the operator $\TT(u) = \TT(f)$ is the
  multiplication by the matrix-valued kernel
  \begin{equation}
    \label{eq:Landau-kernel}
    \TT_{\mathrm{L}}(f; \xx, \vv, \vv') = \frac{\nu_c}{2}
    M\big(f(\xx, \vv)\big) M\big(f(\xx, \vv')\big)
    U_{\mathrm{L}}(\vv-\vv'),
  \end{equation}
  where $\nu_c >0$ is a constant collision frequency, $M \colon \R_+ \to \R_+$
  is arbitrary, and 
  \begin{equation*}
    U_{\mathrm{L}}(\vv-\vv') \coloneqq \frac{1}{|\vv-\vv'|}
    \Big(I - \frac{(\vv-\vv') \otimes (\vv-\vv')}{|\vv-\vv'|^2}\Big).
  \end{equation*}
  Condition~(\ref{eq:T-energy-cond}) holds, and bracket~(\ref{eq:gen-clb}) 
  reduces to  
  \begin{multline}
    \label{eq:Landau-bracket}
    (\fun{F},\fun{G}) = \frac{\nu_c}{2} \int_{\mathcal{O}}
    M\big(f(\xx, \vv)\big) M\big(f(\xx, \vv')\big) \Big[
      \nabla_{\vv} \frac{\delta \fun{F}(f)}{\delta f} (\xx,\vv)
      - \nabla_{\vv'} \frac{\delta \fun{F}(f)}{\delta f} (\xx,\vv')
      \Big] \\
    \cdot U_{\mathrm{L}}(\vv-\vv') \Big[
      \nabla_{\vv} \frac{\delta \fun{G}(f)}{\delta f} (\xx,\vv)
      - \nabla_{\vv'} \frac{\delta \fun{G}(f)}{\delta f} (\xx,\vv')
      \Big] d\xx d\vv d\vv' ,
  \end{multline}
  which is the bracket of Eq.~(44) in  Morrison's paper~\cite{Morrison1986}.
  One can notice that in Eq.~(\ref{eq:Landau-kernel}) the orthogonal projection
  onto the direction of
  \begin{equation*}
    m(\vv-\vv') = \nabla_{\vv} \frac{\delta \fun{H}(f)}{\delta f}(\xx, \vv) -
    \nabla_{\vv'} \frac{\delta \fun{H}(f)}{\delta f}(\xx, \vv')
  \end{equation*}
  ensures property~(\ref{eq:T-energy-cond}). In most examples considered in this
  paper, the kernel of the bracket is constructed from a similar projector.
  We also note that, in the construction given here, there is no need to use
  a singular distributional kernel, cf. the Dirac's delta in Eq.~(45) of
   \cite{Morrison1986}.

  The operator $P'$ defined in Eq.~(\ref{eq:P-dual}) acting on a function
  $\tilde{w}(z) = \tilde{g}(\xx,\vv,\vv')$ can be formally computed after
  integration by parts, with the result that 
  \begin{equation*}
    P'\tilde{g}(\xx,\vv) = - \div_{\vv} \Big[
      \int_{\Omega_{\vv}} \big( \tilde{g}(\xx,\vv,\mathsf{w})
      - \tilde{g}(\xx,\mathsf{w},\vv) \big)d\mathsf{w},
      \Big],
  \end{equation*}
  and the evolution equation (\ref{eq:clb-ev}) takes the form
  \begin{multline*}
    \partial_t f = \div_{\vv} \Big[ \nu_c \int_{\Omega_{\vv}}
      M\big(f(\xx, \vv)\big) M\big(f(\xx, \mathsf{w})\big) \\
      \times\  U_{\mathrm{L}}(\vv-\mathsf{w}) \Big(
      \nabla_{\vv}
      \frac{\delta \fun{S}(f)}{\delta f}(\xx,\vv) -
      \nabla_{\mathsf{w}}
      \frac{\delta \fun{S}(f)}{\delta f}(\xx,\mathsf{w}) \Big)
      d\mathsf{w} \Big],
  \end{multline*}
  which reduces to the Landau operator for Coulomb collisions when
  $\fun{S}(f) = \int f \log f d\xx d\vv$ and $M(f) = f$.
  
  As for conditions~(\ref{eq:MDC}), given $g(\xx,\vv)$ in the domain of the
  operator $P$, we have 
  \begin{equation*}
    \int_{\Omega_{\vv}} \nabla_{\vv} g(\xx,\vv) d\vv = 0,
  \end{equation*}
  provided that
  $g(\xx,\cdot)$ vanishes near the boundary of $\Omega_{\vv}$. Then,
  \begin{equation*}
    \|Pg\|_{\tilde{W}}^2 = 2 |\Omega_{\vv}| \int_{\Omega}
    \big|\nabla_{\vv} g(\xx,\vv)\big|^2 d\xx d\vv,
  \end{equation*}
  where $|\Omega_{\vv}| = \int_{\Omega_{\vv}}d\vv'$. The
  minimum of $\|Pg\|_{\tilde{W}}^2$ subject to the constraint $\|g\|_W = 1$ is
  attained for
  \begin{equation*}
    -\Delta_{\vv} g = \lambda_0 g, \qquad
    \int_{\Omega} |g(\xx,\vv)|^2 d\xx d\vv = 1,
  \end{equation*}
  where $\lambda_0$ is the minimum eigenvalue of the Laplace operator
  $\Delta_{\vv}$ on the velocity domain $\Omega_{\vv}$ with
  homogeneous Dirichlet boundary conditions. Hence
  \begin{equation*}
    \|Pg\|_{\tilde{W}}^2 = 2 |\Omega_{\vv}| \int_{\Omega}
    \big|\nabla_{\vv} g(\xx,\vv)\big|^2 d\xx d\vv
    \geq 2 |\Omega_{\vv}| \lambda_0 \|g\|^2_W,
  \end{equation*}
  which shows (formally) that condition~(\ref{eq:PI}) holds.
  We shall use this type of argument to study the brackets considered
  below, for which assuming homogeneous Dirichlet boundary conditions is
  natural. For the Landau collision operator, however, there is no physical
  reason to assume that $g = 0$ on the boundary of $\Omega_\vv$.
  If we drop this unphysical requirement, then $P$ no longer satisfies
  condition~(\ref{eq:PI}) as it has a nontrivial null space, $\ker P$, given
  by the functions $g \in \dom P$ such that
  \begin{equation*}
    \nabla_\vv g(\xx,\vv) - \nabla_{\vv'} g(\xx,\vv') = 0.
  \end{equation*}
  More specifically, the velocity gradient of an element of $\ker P$ is
  necessarily constant in velocity $\vv$ for almost all $\xx$, and thus 
  \begin{equation*}
    g \in \ker P \iff g(\xx,\vv) = a(\xx) + b(\xx) \cdot \vv,
  \end{equation*}
  where $a : \Omega_\xx \to \R$ and $b : \Omega_\xx \to \R^D$ are arbitrary
  functions.
  As for condition~(\ref{eq:KC}), a function $\tilde{g}(\xx,\vv,\vv')$ belongs
  to $\ker \TT_{\mathrm{L}}$ when
  \begin{equation*}
    \tilde{g}(\xx,\vv,\vv') = \Lambda(\xx,\vv,\vv') \big(\vv - \vv'\big),
  \end{equation*}
  almost everywhere (a.e.) in $\mathcal{O}$,
  where $\Lambda(\xx,\vv,\vv') \in \R$. If, in addition,
  $\tilde{g} \in \rng (P)$, we must have
  \begin{equation*}
    \nabla_{\vv} g(\xx,\vv) - \nabla_{\vv'} g(\xx,\vv') =
    \Lambda(\xx,\vv,\vv') \big(\vv - \vv'\big),
    \quad \text{a.e. in } \mathcal{O}.
  \end{equation*}
  If $g$ and $\Lambda$ satisfying this condition exist, then necessarily
  $\Lambda(\xx,\vv,\vv') = \Lambda(\xx,\vv',\vv)$ and, upon fixing a point
  $\vv' = \mathsf{a} \in \Omega_{\vv}$, 
  \begin{equation*}
    \nabla_\vv g(\xx,\vv) = \nabla_{\vv'} g(\xx,\mathsf{a}) +
    \Lambda(\xx,\vv,\mathsf{a}) (\vv - \mathsf{a}),
  \end{equation*}
  and
  \begin{align*}
    \nabla_{\vv} g(\xx,\vv) - \nabla_{\vv'} g(\xx,\vv') &=
    \Lambda(\xx,\vv,\mathsf{a}) (\vv - \mathsf{a})
    -\Lambda(\xx,\vv',\mathsf{a}) (\vv' - \mathsf{a}) \\
    &  = \Lambda(\xx,\vv,\vv') \big((\vv - \mathsf{a}) - (\vv'-\mathsf{a})).
  \end{align*}
  Then we must have
  \begin{equation*}
    \big[\Lambda(\xx,\vv,\vv') - \Lambda(\xx,\vv,\mathsf{a})\big]
    (\vv - \mathsf{a}) -
    \big[\Lambda(\xx,\vv,\vv') - \Lambda(\xx,\mathsf{a},\vv')\big]
    (\vv' - \mathsf{a}) = 0.
  \end{equation*}
  The two vectors $\vv-\mathsf{a}$ and $\vv'-\mathsf{a}$ in $\R^D$ are linearly
  dependent only if they are proportional, which can only happen for a set of
  points $(\xx,\vv,\vv')$ of measure zero in $\mathcal{O}$. Hence, we deduce
  that $\Lambda$ must satisfy the necessary conditions
  \begin{equation*}
    \left\{
    \begin{aligned}
      \Lambda(\xx,\vv,\vv') - \Lambda(\xx,\vv,\mathsf{a}) &= 0, \\
      \Lambda(\xx,\vv,\vv') - \Lambda(\xx,\mathsf{a},\vv') &= 0,
    \end{aligned}
    \right.
    \quad \text{ a.e. in } \mathcal{O},
  \end{equation*}
  and this must hold for almost any choice of the arbitrary point $\mathsf{a}$.
  This is possible only if there is a function $m c(\xx)$ of position $\xx$ only
  (we factor out the mass $m$ for convenience) such that
  \begin{equation*}
    \Lambda(\xx,\vv,\vv') = m c(\xx) = \text{ a.e. in } \mathcal{O}.
  \end{equation*}
  We deduce that a function $\tilde{g} \in \ker T(f) \cap \rng P$ must be of
  the form
  \begin{equation*}
    \tilde{g}(\xx,\vv,\vv') = m c(\xx) (\vv - \vv') =
    c(\xx) P \frac{\delta \fun{H}(f)}{\delta f}(\xx,\vv,\vv').
  \end{equation*}
  This in not exactly Eq.~(\ref{eq:KC}), since $c$ is not a constant on the
  whole extended domain $\mathcal{O}$, but only in velocity space.
  This residual dependence of $\xx$ should be expected since the collision
  operator only acts in velocity space, pointwise in $\xx$. Therefore,
  conditions~(\ref{eq:MDC}) are not satisfied for this bracket. We can however
  obtain a complete characterization of the null space of this bracket by
  using the results on $T(f)$ and $P$ obtained above. In fact, we have
  \begin{equation*}
    \big(\fun{F},\fun{F}\big)(f) = 0 \iff
    P \frac{\delta \fun{F}(f)}{\delta f} \in \ker T(f),
  \end{equation*}
  hence
  \begin{equation*}
    P\frac{\delta \fun{F}(f)}{\delta f}
    = mc(\xx) (\vv-\vv') = c(\xx) P\frac{\delta \fun{H}(f)}{\delta f},
  \end{equation*}
  and, since $c$ is independent on $(\vv,\vv')$, we have
  \begin{equation*}
    P\Big(\frac{\delta \fun{F}(f)}{\delta f} - c
    \frac{\delta \fun{H}(f)}{\delta f}\Big) = 0.
  \end{equation*}
  The elements of the null space of $P$ have been computed above, and can
  conclude that
  \begin{equation*}
    \big(\fun{F},\fun{F}\big)(f) = 0 \iff
    \frac{\delta \fun{F}(f)}{\delta f} = a \frac{\delta \fun{N}(f)}{\delta f}
    + b \cdot \frac{\delta \fun{P}(f)}{\delta f}
    + c \frac{\delta \fun{H}(f)}{\delta f},
  \end{equation*}
  where $a$, $b$, and $c$ are functions of $\xx$ only and
  \begin{equation*}
    \fun{N}(f) = \int_\Omega f(\xx,\vv)d\xx d\vv, \quad
    \fun{P}(f) = \int_\Omega f(\xx,\vv) \vv d\xx d\vv,
  \end{equation*}
  together with $\fun{H}(f)$ constitute the three collision invariants
  \cite{Lenard1960, Villani1999}, namely, the total particle number, momentum
  (per unit mass), and energy. In summary, the null space of Morrison's
  bracket is spanned by a linear combination of the derivatives of the three
  collision invariants, but with coefficients depending on $\xx$. Therefore
  the bracket is not minimally degenerate, since energy is not the only
  invariant, and it is not even specifically degenerate since the coefficients
  $a$, $b$, and $c$ are functions of space. This is due to the well known
  fact that the collision operator acts on velocity space only, and the
  relaxation of the full distribution function $f(\xx,\vv)$ in both space and
  velocity requires the interaction of the collision operator with the ideal
  phase-space transport dynamics \cite{Villani2007}. This bracket does become
  specifically degenerate with respect to the three invariants if we restrict
  the phase space to functions of velocity only.

\end{example}

\begin{example}
  \label{ex:proj-revisited}
  The bracket based on the $L^2$-orthogonal projector discussed in
  Section~\ref{sec:projector-based-metric-bracket} can be obtained as a special
  case of (\ref{eq:gen-clb}). In fact, we can utilize Eq.~(\ref{eq:gen-clb}) in
  order to generalize (\ref{eq:projector-brackets}) to the case of $\R^N$-valued
  fields.
  
  \smallskip

  With this aim, let $\Omega \subset \R^d$, $N \in \N$,
  $\mathcal{O} = \Omega \times \Omega$, and $\tilde{N} = 2N$, that is, we double
  both the dimension of the domain and of the field. Then,
  with $\Phi = W = L^2(\Omega,\mu;\R^N) = \Phi'$,
  $\tilde{W} = L^2(\mathcal{O},\nu;\R^{\tilde{N}})$, and
  $d\nu(x,x') = d\mu(x) d\mu(x')$, we define
  \begin{equation*}
    P : W \to \tilde{W}, \qquad
    Pw(x,x') = \begin{pmatrix}
      w(x) \\
      w(x')
    \end{pmatrix} \in \R^{2N}.
  \end{equation*}
  Given a Hamiltonian function $\fun{H}(u)$ with
  $\delta \fun{H}(u)/\delta u \in W$, we choose $\TT(u)$ to be the
  multiplication operator by the kernel 
  \begin{equation*}
    \TT(u;x,x') = \kappa(u) \begin{pmatrix}
      |h(x')|^2 I_N         &   -h(x) \otimes h(x') \\
      - h(x') \otimes h(x)  &          |h(x)|^2 I_N
    \end{pmatrix},
  \end{equation*}
  where $I_N$ is the $N \times N$ identity block, and
  $h = \delta \fun{H}(u)/\delta u$ is a short-hand notation for the functional
  derivative ($h$ may still depend on $u$). One can check that
  condition~(\ref{eq:T-energy-cond}) is satisfied. Inserting these choices into 
  Eq.~(\ref{eq:gen-clb}) yields
  \begin{align*}
    (\fun{F},\fun{G}) &= 2 \kappa(u) \int_{\mathcal{O}}
    \frac{\delta \fun{F}(u)}{\delta u} (x) \cdot \Big[
      \Big| \frac{\delta \fun{H}(u)}{\delta u} (x') \Big|^2
      \frac{\delta \fun{G}(u)}{\delta u} (x) \\
      & \hspace{40mm}
      -\frac{\delta \fun{H}(u)}{\delta u} (x) \Big(
      \frac{\delta \fun{H}(u)}{\delta u} (x') \cdot
      \frac{\delta \fun{G}(u)}{\delta u} (x') \Big)\Big] d\mu(x) d\mu(x') \\
    &= 2 \kappa(u) \Big\|\frac{\delta \fun{H}(u)}{\delta u} \Big\|^2_{L^2}
    \int_{\Omega} \frac{\delta \fun{F}(u)}{\delta u} (x)
    \cdot \Pi_{\fun{H}} \frac{\delta \fun{G}(u)}{\delta u}(x) d\mu(x),
  \end{align*}
  where $\Pi_{\fun{H}}$ is the $L^2$-orthogonal projector defined in
  Eq.~(\ref{eq:L2-projector}), but generalized to the case of $\R^N$-valued
  fields. If $2\kappa$ is set to the inverse of $\|h\|^2_{L^2}$, this reduces to
  (\ref{eq:projector-brackets}) when $N=1$.
  
  Condition~(\ref{eq:PI}) follows from
  \begin{equation*}
    \|Pw\|^2_{L^2(\mathcal{O},\nu;\R^{2N})} =
    \int_{\Omega\times\Omega} \big(|w(x)|^2 + |w(x')|^2 \big) d\mu(x) d\mu(x')
    = 2 \mu(\Omega) \|w\|^2_{L^2(\Omega,\mu;\R^N)},
  \end{equation*}
  where $\mu(\Omega) = \int_{\Omega} d\mu$. As for condition~(\ref{eq:KC}),
  if $h \not= 0$, $\tilde{w} \in \ker \TT(u) \cap \rng(P)$ implies that there is
  $w \in W$ such that $\tilde{w} = Pw$, and
  \begin{equation*}
    |h(x')|^2 w(x) - h(x) \big(h(x') \cdot w(x')\big) = 0 \in \R^N,
  \end{equation*}
  for almost all $(x,x')$. Upon multiplying by $h(x)$, we have
  \begin{equation*}
    |h(x')|^2 \big(h(x) \cdot w(x)\big) = |h(x)|^2 \big(h(x') \cdot w(x')\big).
  \end{equation*}
  This is equivalent to saying that 
  $\big(-|h(x')|^2, |h(x)|^2 \big)$ and
  $\big(h(x)\cdot w(x), h(x') \cdot w(x')\big)$ are orthogonal in $\R^2$, or
  that 
  \begin{equation*}
    \begin{pmatrix}
      h(x)\cdot w(x)\\ h(x') \cdot w(x')
    \end{pmatrix}
    = \Lambda(x,x')
    \begin{pmatrix}
      |h(x)|^2 \\ |h(x')|^2
    \end{pmatrix}
    ,
  \end{equation*}
  for a function $\Lambda$, which in general depends on $(x,x')$ and this holds
  for almost all $(x,x')$. We deduce that $\Lambda$ must be constant almost
  everywhere in $\mathcal{O}$, i.e. $\Lambda(x,x') = \lambda$ for almost all
  $(x,x')$ and thus
  \begin{equation*}
    \lambda |h(x)|^2 = h(x) \cdot w(x) \iff w(x) = \lambda h(x),
  \end{equation*}
  which implies condition~(\ref{eq:KC}). Hence, metriplectic brackets
  constructed by means of an $L^2$-orthogonal projection are minimally
  degenerate as already shown in
  Section~\ref{sec:projector-based-metric-bracket} for the scalar case ($N=1$). 
\end{example}

\begin{example}
  \label{ex:old-clb}
  As in example~\ref{ex:proj-revisited},
  let $\mathcal{O} = \Omega \times \Omega$, $n = 2d$, with coordinates
  $z = (x,x') \in \Omega \times \Omega$, and $d\nu(x,x') = d\mu(x) d\mu(x')$.
  The operator $P$ is given by
  \begin{equation*}
    Pw(x,x') = P_L w(x,x') = Lw(x) - Lw(x'),
  \end{equation*}
  where $L : W \to L^2(\Omega,\mu; \R^{\tilde{N}})$ is possibly an unbounded
  linear operator with domain $\dom(L) = \Phi \subseteq W$ and taking values in
  the space of $\R^{\tilde{N}}$-valued, squared-integral functions, with
  $\tilde{N} \geq 2$. In general, we allow $\tilde{N}$ to be different from $N$.
  Specific cases will be considered in Secs.~\ref{sec:c-div-grad}
  and~\ref{sec:c-curl-curl} below. 
  
  \smallskip 

  As for the kernel, a way to satisfy condition~(\ref{eq:T-energy-cond})
  utilizes the matrix
  \begin{equation}
    \label{eq:Q}
    Q_k(z) \coloneqq |z|^2 I_k - z \otimes z, \qquad z \in \R^k,
  \end{equation}
  where $k \in \N$ and $I_k$ is the $k \times k$ identity block.
  Specifically, let $\TT(u)$ be the multiplication operator by the matrix
  \begin{equation}
    \label{eq:T-L}
    \TT(u;x,x') = \frac{1}{2} \kappa(u;x,x') Q_{\tilde{N}} \Big(
    P_L \frac{\delta \fun{H}(u)}{\delta u} (x,x')
    \Big),
  \end{equation}
  where $\kappa(u;\cdot,\cdot)$ is a positive scalar weight function satisfying the
  symmetry condition $\kappa(u;x,x') = \kappa(u;x',x)$. (This choice of the
  kernel is trivial if $\tilde{N}=1$, which is excluded.) With the foregoing
  choices, Eq.~(\ref{eq:gen-clb}) becomes 
  \begin{multline}
    \label{eq:cb-clb1}
    (\fun{F},\fun{G}) \coloneqq \frac{1}{2} \int_\Omega \int_\Omega
    \kappa(u;x,x') \Big[P_L \frac{\delta \fun{F}(u)}{\delta u}(x,x') \Big] \\
    \cdot Q_{\tilde{N}} \Big(
    P_L \frac{\delta \fun{H}(u)}{\delta u}(x,x')\Big)
    \Big[P_L \frac{\delta \fun{G}(u)}{\delta u}(x,x')\Big] d\mu(x) d\mu(x').
  \end{multline}
  This is the form of collision-like bracket as originally formulated by
  Bressan et al. \cite{Bressan2023,Bressan2018}. The dual operator $P'$ can be
  determined in terms of the dual of $L$, which is the linear operator
  $L' : L^2(\Omega,\mu;\R^{\tilde{N}}) \to \Phi'$ defined for all
  $\varpi \in L^2(\Omega,\mu;\R^{\tilde{N}})$ by
  \begin{equation}
    \label{eq:L-dual}
    \langle w, L' \varpi \rangle = \int_{\Omega} \varpi(x) \cdot Lw(x) d\mu(x),
    \qquad \text{for all } w \in \Phi,
  \end{equation}
  and we find
  \begin{align*}
    \int_{\mathcal{O}} \tilde{w}(x,x') \cdot P_Lw(x,x') d\nu(x,x') &=
    \int_\Omega Lw(x) \int_\Omega \big[\tilde{w}(x,x') - \tilde{w}(x',x)\big]
    d\mu(x') d\mu(x) \\
    &= \Big\langle w, L'\Big[
      \int_\Omega \big[\tilde{w}(\cdot,x') - \tilde{w}(x',\cdot)\big] d\mu(x')
      \Big] \Big\rangle \\
    &= \big\langle w, P_L' \tilde{w}\big\rangle.
  \end{align*}
  In terms of the operator $L'$, the evolution equation~(\ref{eq:clb-ev}) reads
  \begin{equation}
    \label{eq:old-clb-ev}
    \partial_t u = -L' \Big[ \mathbb{D}(u) L\frac{\delta \fun{S}(u)}{\delta u}
      - \mathbb{F}\Big(u,\frac{\delta \fun{S}(u)}{\delta u}\Big)\Big],
  \end{equation}
  where we have defined
  \begin{subequations}
    \label{eq:D-F}
    \begin{align}
      \label{eq:D}
      \mathbb{D}(u;x)  &\coloneqq \int_\Omega
      \kappa(u;x,x') Q_{\tilde{N}} \Big(
      P_L \frac{\delta \fun{H}(u)}{\delta u}(x,x')\Big) d\mu(x'),\\
      \label{eq:F}
      \mathbb{F}(u,v;x) &\coloneqq \int_\Omega \kappa(u;x,x')
      Q_{\tilde{N}} \Big(P_L \frac{\delta \fun{H}(u)}{\delta u}(x,x')\Big)
      Lv(x') d\mu(x').
    \end{align}
  \end{subequations}  
  Eq.~(\ref{eq:old-clb-ev}) has the same structure as the Landau operator for
  Coulomb collisions discussed in Example~\ref{ex:Landau}, with $\nabla_\vv$
  being replaced by $L$.  

  At last, we check conditions~(\ref{eq:MDC}). Since $d\mu(x) = m(x) dx$, with
  $m$ continuous on the closed domain $\ol{\Omega}$, and
  $d\nu(x,x') = d\mu(x) d\mu(x')$, we have
  \begin{equation*}
    \|Pw\|^2_{\tilde{W}} = \int_{\mathcal{O}} \big|L w(x) - L w(x')\big|^2
    d\nu(x,x') \geq m_0^2
    \int_{\mathcal{O}} \big|L w(x) - L w(x')\big|^2 dx dx',
  \end{equation*}
  where $m_0 = \min \{m(x) \colon x \in \ol{\Omega}\}$. For
  condition~(\ref{eq:PI}) to hold, it is therefore sufficient that $L$ satisfies
  \begin{equation}
    \label{eq:PI-L}
    \|w\|_W \leq C_L \|Lw\|_{L^2(\Omega)}, \qquad
    \int_\Omega Lw(x) dx = 0,
  \end{equation}
  so that
  \begin{equation*}
    \int_{\mathcal{O}} \big|L w(x) - L w(x')\big|^2 dx dx'
    = 2 |\Omega| \int_\Omega \big|Lw(x)|^2 dx - 2 \Big|
    \int_\Omega Lw(x)dx\Big|^2\geq 2 |\Omega| C_L^{-2} \|w\|^2_W,
  \end{equation*}
  which give Eq.~(\ref{eq:PI}). All considered cases in the applications below
  satisfy condition~(\ref{eq:PI-L}).

  As for the kernel of the bracket, we argue as in the case of
  Example~\ref{ex:Landau}. With the kernel given in Eq.~(\ref{eq:T-L}),
  a function $\tilde{w} \in \tilde{W}$ belongs to $\ker \TT(u) \cap \rng(P)$
  only if there is a function $w \in W$ and
  $\Lambda : \Omega \times \Omega \to \R$, such that
  \begin{equation*}
    Lw(x) - Lw(x') = \Lambda(x,x') \big[Lh(x) - Lh(x')\big],
  \end{equation*}
  where $h = \delta \fun{H}(u) / \delta u$ and necessarily
  $\Lambda(x,x') = \Lambda(x',x)$ for $x \not= x'$.
  If we fix an arbitrary point
  $x' = a \in \Omega$, we obtain an explicit expression for $Lw(x)$,
  \begin{equation*}
    Lw(x) = Lw(a) + \Lambda(x,a) X_a(x), \qquad X_a(x) \coloneqq Lh(x) - Lh(a).
  \end{equation*}
  Therefore
  \begin{align*}
    Lw(x) - Lw(x') &= \Lambda(x,a) X_a(x) - \Lambda(a,x') X_a(x') \\
    &= \Lambda(x,x') \big[X_a(x) - X_a(x')\big],
  \end{align*}
  or equivalently
  \begin{equation*}
  \big[\Lambda(x,x') - \Lambda(x,a)\big] X_a(x) -
  \big[\Lambda(x,x') - \Lambda(a,x')\big] X_a(x') = 0.
  \end{equation*}
  Under the assumption that $X_a(x)$ and $X_a(x')$ are linearly independent for
  almost all $(x,x') \in \Omega \times \Omega = \mathcal{O}$, we deduce
  $\Lambda(x,x') = \lambda =$ constant as in Example~\ref{ex:Landau}. Hence, if
  the operator $L$ satisfies (\ref{eq:PI-L}) and the Hamiltonian is such that
  $X_a(x)$ and $X_a(x')$ are linearly independent almost everywhere in
  $\Omega \times \Omega$, the bracket~(\ref{eq:cb-clb1}) is minimally
  degenerate. 
\end{example}

In general we observe that the brackets of the form discussed in
Example~\ref{ex:old-clb} are all special cases of the metriplectic $4$-bracket
introduced recently by Morrison and Updike \cite{pjmU24}, since the kernel
depends quadratically on $\delta \fun{H} / \delta u$. Specifically, they can
be obtained from the metriplectic $4$-bracket constructed by utilizing the
Kulkarni-Nomizu product as discussed in Section~III.D.1 of Ref.~\cite{pjmU24}.

Example~\ref{ex:old-clb} will be used as a template for the construction of
dissipative operators for the solution of some of the variational problems
introduced in Section~\ref{sec:testcases}. Before addressing the applications,
we specialize bracket~(\ref{eq:cb-clb1}) for two relevant choices of the
operator $L$.

\subsection{Collision-like brackets based on div--grad operators}
\label{sec:c-div-grad}

In Eq.~(\ref{eq:cb-clb1}), let us consider the case of a scalar field ($N=1$)
and choose
\begin{equation*}
  L w = \nabla w,
\end{equation*}
with domain $\dom(L) = H^1_0(\Omega)$, the space of functions in $L^2(\Omega)$
with weak derivatives also in $L^2(\Omega)$ and satisfying homogeneous Dirichlet
boundary conditions on $\partial \Omega$. This operator satisfies condition
(\ref{eq:PI-L}), and $\tilde{N} = d$ (the inequality, in particular, is the
classical Poincar\'e inequality for $H_0^1$, cf.\ Theorem 3 in Section~5.6.1 of
Ref.~\cite{Evans1998}). The dual operator $L'$ can be obtained from
Eq.~(\ref{eq:L-dual}). Given $\varpi \in L^2(\Omega,\mu; \R^d)$, sufficiently
regular, we have  
\begin{equation*}
  \langle w, L' \varpi \rangle = \int_\Omega \varpi(x) \cdot \nabla w(x) d\mu(x)
  = - \int_\Omega w(x) \div_\mu \varpi(x) d\mu(x),
\end{equation*}
and thus $-L' \varpi = \div_\mu \varpi = \frac{1}{m} \sum_i \partial_{x_i}
\big[m \varpi^i\big]$ is the divergence operator associated to the volume form 
$d\mu (x) = m(x) dx$. For a generic $\varpi$, we write
$L' = -\widetilde{\div_\mu}$ to denote that the divergence is defined weakly. As
for the kernel, we choose  
\begin{equation*}
  \kappa (u;x,x') = M\big(x,u(x)\big) M\big(x',u(x')\big),
\end{equation*}
where $M(x,y) > 0$ is a positive function over $\Omega \times \R$, which can
be used to simplify the bracket \cite{Morrison1986} as in
Example~\ref{ex:Landau}. 

As for the entropy, we shall restrict to functions of the form
\begin{equation}
  \label{eq:entropy-2D}
  \fun{S}(u) = \int_\Omega s\big(x,u(x)\big) dx,
\end{equation}
where $s(x,y)$ is a given profile, convex in $y$, and possibly depending on
position $x$. Then 
\begin{equation*}
  \frac{\delta \fun{S}(u)}{\delta u} (x) = \partial_y s\big(x,u(x)\big),
\end{equation*}
and
\begin{equation*}
  \nabla \frac{\delta \fun{S}(u)}{\delta u} (x) =
  \partial_y^2 s\big(x,u(x)\big) \nabla u(x) +
  \partial_x\partial_y s\big(x,u(x)\big).
\end{equation*}
The arbitrary function $M$ can be chosen so that \cite{Morrison1986}
\begin{equation}
  \label{eq:M-condition}
  M(x,y) \partial_y^2s(x,y) = 1,
\end{equation}
under the condition $\partial_y^2 s(x,y) > 0$, consistently with the convexity
assumption for the profile. Eq.~(\ref{eq:M-condition}) introduces a dependence
of the kernel of the bracket on the entropy function. 
With this choice, Eq.~(\ref{eq:old-clb-ev}) becomes
\begin{equation}
  \label{eq:div-grad-ev}
  \partial_t u = \widetilde{\div_\mu} \Big[\mathbb{D}_s(u)
    \big(\nabla u + M(x,u)\partial_x\partial_y s(x,u)\big) -
    M(x,u) \mathbb{F}_s(u, \nabla u)\Big],
\end{equation}
and
\begin{align*}
  \mathbb{D}_s(u;x) &=
  \int_\Omega Q_2\Big(\nabla \frac{\delta \fun{H}(u)}{\delta u}(x) -
  \nabla \frac{\delta \fun{H}(u)}{\delta u}(x')\Big)
  M\big(x',u(x')\big)d\mu(x'), \\
  \mathbb{F}_s(u;x) &= 
  \int_\Omega Q_2\Big(\nabla \frac{\delta \fun{H}(u)}{\delta u}(x) -
  \nabla \frac{\delta \fun{H}(u)}{\delta u}(x')\Big) \\
  & \hspace{10mm}
  \big[\nabla u(x') + M\big(x',u(x')\big)
    \partial_x\partial_y s\big(x',u(x')\big) \big] d\mu(x').
\end{align*}
Eq.~(\ref{eq:div-grad-ev}) still depends on the choice of the measure
$\mu$, the entropy profile $s$, and the Hamiltonian function $\fun{H}$. This
choices will be specified below separately for the cases of the reduced Euler
and Grad-Shafranov equations in two-dimensions ($d = \tilde{N} = 2$).

\subsection{Collision-like brackets based on curl--curl operators}  
\label{sec:c-curl-curl}

With magnetic fields in mind, we consider an example of
bracket~(\ref{eq:gen-clb}) for divergence-free fields. Then, $\Omega$ is
bounded, simply connected domain in $\R^3$  with sufficiently regular connected
boundary, $d = N = 3$, and $d\mu(x) = dx$. We choose
\begin{equation*}
  L w = \curl w,
\end{equation*}
with domain
\begin{equation*}
  \Phi = \{w \in H(\curl,\Omega) \cap H(\div,\Omega) \colon
  \text{ $\div w = 0$ in $\Omega$, $n \times w = 0$ on $\partial \Omega$}\},
\end{equation*}
where $H(\curl,\Omega)$ and $H(\div,\Omega)$ are the spaces of vector fields
$w \in L^2(\Omega;\R^3)$ such that $\curl w \in L^2(\Omega;\R^3)$ and
$\div w \in L^2(\Omega)$, respectively. Specifically, $w \in \Phi = \dom(L)$ is
a divergence-free field with zero tangential component $n \times w$ on
the boundary, and $n : \partial \Omega \to \R^3$ is the outward unit normal on
$\partial \Omega$. It follows that condition~(\ref{eq:PI-L}) is satisfied: the
inequality amounts to the Poincar\'e inequality for divergence-free fields on a
simply connected domain with connected boundary (cf. Corollary 3.51 in
Ref.\cite{Monk2003}), while
\begin{equation*}
  \int_\Omega \curl w dx = \int_{\partial \Omega} (n \times w) d\sigma = 0,
\end{equation*}
where $d\sigma$ is the surface element on $\partial \Omega$.

If $\varpi \in L^2(\Omega;\R^3)$ is sufficiently regular, Eq.~(\ref{eq:L-dual})
and integration by parts gives
\begin{equation*}
  \langle w, L'\varpi \rangle = \int_\Omega \varpi(x) \cdot \curl w(x) dx
  = \int_\Omega w(x) \cdot \curl \varpi(x) dx,
\end{equation*}
hence $L' \varpi = \curl \varpi$, while for a generic $\varpi$,
$L' = \widetilde{\curl}$ is the weak curl operator.

For the kernel~(\ref{eq:T-L}), we choose $\kappa = 1$, and the corresponding
evolution equation can be written as 
\begin{equation}
  \label{eq:curl-curl-ev}
  \partial_t u = -\widetilde{\curl} \Big[
    \mathbb{D}_v(u) \curl \frac{\delta \fun{S}(u)}{\delta u}
    - \mathbb{F}_v\Big(u, \frac{\delta \fun{S}(u)}{\delta u}\Big) \Big],
\end{equation}
where
\begin{align*}
  \mathbb{D}_v(u;x) &= \int_\Omega Q_3\Big(
  \curl \frac{\delta \fun{H}(u)}{\delta u}(x) -
  \curl \frac{\delta \fun{H}(u)}{\delta u}(x') \Big) dx', \\
  \mathbb{F}_v(u,v;x) &= \int_\Omega Q_3\Big(
  \curl \frac{\delta \fun{H}(u)}{\delta u}(x) -
  \curl \frac{\delta \fun{H}(u)}{\delta u}(x') \Big)
  \curl v (x') dx'.
\end{align*}
A non-trivial factor $\kappa$ of the form used in Section~\ref{sec:c-div-grad}
can easily be accounted for \cite{Bressan2023}.

The evolution equation (\ref{eq:curl-curl-ev}) preserves the divergence-free
constraint. In fact, for any $\varphi \in H_0^1(\Omega)$ the function
$\fun{F}_\varphi(u) = \int_\Omega u \cdot \nabla\varphi dx
= -\int_\Omega \varphi \div u dx$ is a constant of motion.

\subsection{Application to the reduced Euler equations}
\label{sec:app-euler}

In order to demonstrate the properties of the collision-like brackets, we
construct a relaxation method for the solution of the variational
principle~(\ref{eq:VP-Euler}) for equilibria of the reduced Euler equations,
with entropy and Hamiltonian functions given in Eq.~(\ref{eq:Euler-S-H}). 

This is the same problem addressed in Section~\ref{sec:simple}, with the
difference here being we consider a bounded domain $\Omega$ with homogeneous
Dirichlet boundary conditions as discussed in Section~\ref{sec:Euler-problem}.
Specifically, the domain is the unit square $\ol{\Omega} = [0,1] \times [0,1]$. 

When, in Eq.~(\ref{eq:Euler-S-H}), $s(\omega) = \omega^2/2$, we known from
Section~\ref{sec:Euler-problem} that the solution of the variational principle
(\ref{eq:VP-Euler}) is related to the eigenfunction of the Laplacian operator on
$\Omega$ corresponding to the smallest eigenvalue. On the unit square with
homogeneous Dirichlet boundary conditions, the eigenfunctions and the
corresponding eigenvalues are given by  
\begin{equation*}
  \phi_{m,n}(x) = \mathcal{N}_{m,n} \sin(m \pi x_1) \sin(n \pi x_2), \qquad
  \lambda_{m,n} = \pi^2 (m^2 + n^2),
\end{equation*}
labeled by two non-zero integers $m,n \in \N$ and normalized by a non-zero
constant $\mathcal{N}_{m,n}$. The smallest eigenvalue corresponds to the
function $\phi_{1,1}$, and energy conservation allows us to determine the
normalization constant $\mathcal{N}_{1,1}$, that is 
\begin{equation*}
  2 \fun{H}_0 = \int_\Omega \phi \omega dx = \lambda_{1,1}
  \|\phi_{1,1}\|^2_{L^2(\Omega)} = \lambda_{1,1} \mathcal{N}_{1,1}^2 /4,
\end{equation*}
from which we deduce the unique solution of (\ref{eq:VP-Euler}) with this
entropy, that is
\begin{equation}
  \label{eq:euler-ref-lin}
  \begin{aligned}
    \phi(x) &= \big(2\sqrt{\fun{H}_0} / \pi\big) \sin(\pi x_1) \sin(\pi x_2), \\
    \omega(x) = \lambda_{1,1} \phi(x)
    &= 4\pi \sqrt{\fun{H}_0}  \sin(\pi x_1) \sin(\pi x_2),
  \end{aligned}
  \qquad
  \lambda_{1,1} = 2 \pi^2 \approx 19.7392.
\end{equation}
More general choices of the entropy density $s(\omega)$ lead to the nonlinear
eigenvalue problems of the form~(\ref{eq:complete-relaxation-vorticity2d}). Then
no analytical solution is known, but an estimate of the eigenvalue $\lambda$ can
be found from energy conservation in a similar way. Upon multiplying by $\omega$
Eq.~(\ref{eq:complete-relaxation-vorticity2d}) and integrating over $\Omega$, we
find  
\begin{equation*} 
  \int_\Omega \omega s'(\omega) dx = \lambda \int_\Omega \omega \phi dx =
  2 \lambda \fun{H}_0,
\end{equation*}
from which we can obtain $\lambda$, provided that the integral on the left-hand
side can be evaluated, e.g. from the numerical solution. Specifically, if
$s(\omega) = \omega \log \omega$ we have
\begin{equation}
  \label{eq:euler-ref-log-lambda}
  \lambda =\frac{1}{2 \fun{H}_0} \big(\fun{M}(\omega) + \fun{S}(\omega) \big),
\end{equation}
where, in particular, $\fun{M}(\omega) = \int_\Omega \omega dx$. These
analytical results can be used to assess the result of the proposed relaxation
method.

As for the construction of the method itself, we consider the collision-like
metriplectic system of Section~\ref{sec:c-div-grad}. The state variable
$u(t) \in V \subseteq \Phi$ is identified with vorticity, $u(t,x)=\omega(t,x)$,
and evolved according to Eq.~(\ref{eq:div-grad-ev}) with $\fun{S}$ and $\fun{H}$
given in Eq.~(\ref{eq:Euler-S-H}). 

The numerical scheme for the solution of Eq.~(\ref{eq:div-grad-ev}) is based on
$H^1$-conforming finite elements, i.e., the discrete solution $u_h(t)$ belongs
to the same space $H_0^1(\Omega)$ that contains the phase space $V$. The scheme
preserves the discrete Hamiltonian $\fun{H}(u_h)$ (modulo round-off errors) and
we use the Crank-Nicolson scheme in time, which preserves the property of
monotonic dissipation of entropy in the quadratic case
($s(\omega) = \omega^2/2$). More details on the derivation of the scheme can be
found in Ref.~\cite{Bressan2023}.

The obtained numerical method has been implemented in the finite-element library
FEniCS \cite{Alnaes2015, Logg2012}, in which the weak form of the operator in
Eq.~(\ref{eq:div-grad-ev}) can be directly specified by means of the
domain-specific language UFL \cite{Alnaes2014}, and discretized by the
finite-element component DOLFIN \cite{Logg2010,Logg2012a}. Among the various
tests \cite{Bressan2023}, here we discuss in details three cases only.

\paragraph*{Single vortex}

In the simplest example, the domain $\ol{\Omega} = [0,1]^2$ is discretized by a
uniform grid of $64 \times 64$ nodes. The vorticity field $u = \omega$ is
approximated in the space of second-order Lagrange elements that are available
in FEniCS \cite{Logg2012}. The entropy density is quadratic, i.e.
$s(\omega) = \omega^2/2$, hence the analytical solution of the variational
principle is given in Eq.~(\ref{eq:euler-ref-lin}). The initial condition is
\begin{equation}
  \label{eq:sv-ic}
  u_0(x) = u_G(x), \;
  \text{with $u_G$ defined in Eq.~(\ref{eq:initial_gaussian})},
\end{equation}
and with parameters $x_{0,1} = x_{0,2} = 1/2$, $w_1^2 = 0.01$, $w_2^2 = 0.07$,
and $N=1$. The numerical scheme provably preserves the Hamiltonian, which in
this case is the kinetic energy of the fluid, and dissipates monotonically the
entropy, independently of the magnitude of the time step. 

\begin{figure}
  \centering
  \includegraphics[scale=0.24]{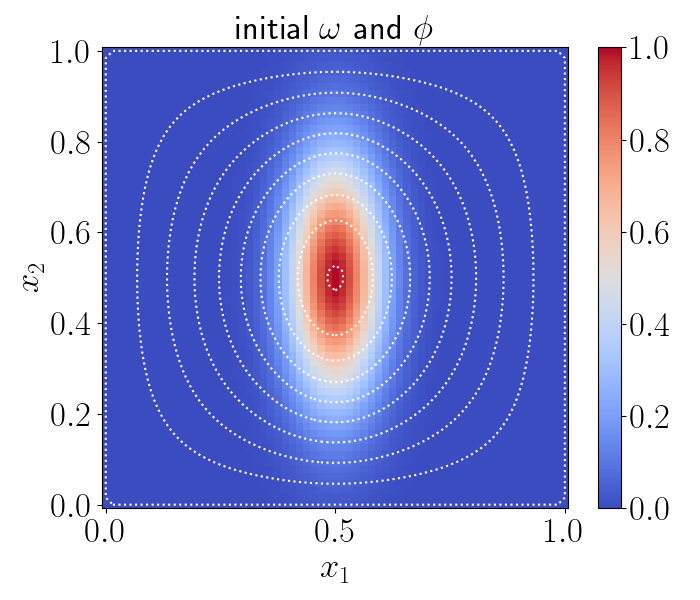}
  \includegraphics[scale=0.24]{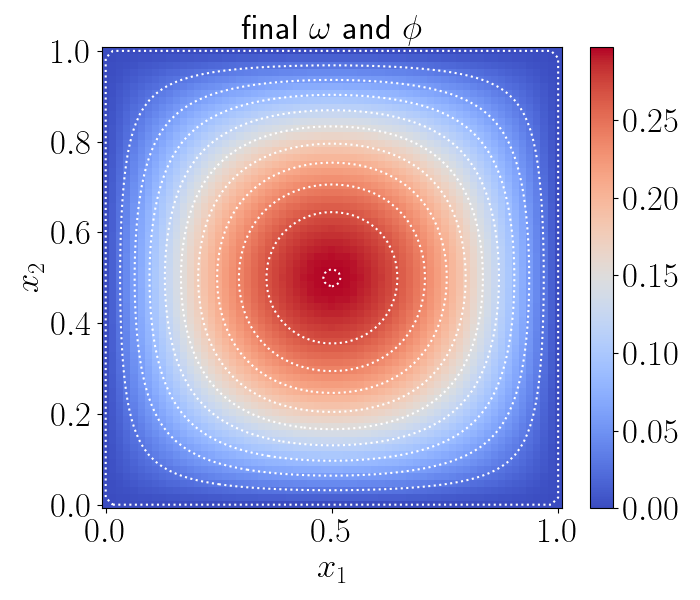}
  \includegraphics[scale=0.24]{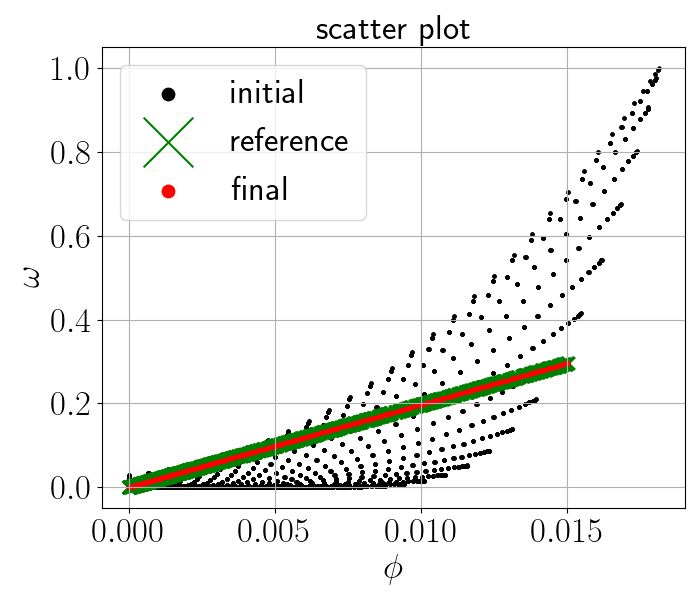}
  \caption{\label{fig:sv_sol} Relaxation of an initial vortex with initial
    vorticity given in Eq.~(\ref{eq:sv-ic}), according to the metriplectic
    system~(\ref{eq:div-grad-ev}) with the state variable $u$ being the fluid
    scalar vorticity $\omega$. The initial condition is shown in the
    left-hand-side panel: the color map represents the vorticity field $\omega$ 
    and the dashed (white) lines the contours of the corresponding streaming
    function $\phi$, cf. Eq.~(\ref{eq:Poisson-eq}). The relaxed state is
    displayed in the middle panel. The right-hand-side panel represents the
    functional relation between $\omega$ and $\phi$ for the initial condition
    (black dots), the final state (red bullets), and the expected linear
    relation $\omega = \lambda_{1,1} \phi$, cf.\ Eq.~(\ref{eq:euler-ref-lin}),
    plotted using the numerical solution for $\phi$ and the analytical value of
    $\lambda_{1,1}$ (green crosses).}  
\end{figure}

\begin{figure}
  \centering
  \includegraphics[scale=0.29]{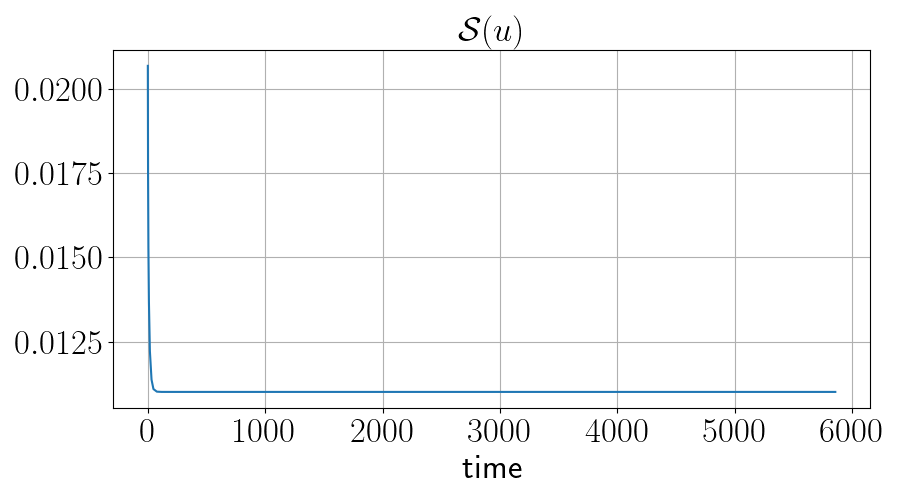}
  \includegraphics[scale=0.29]{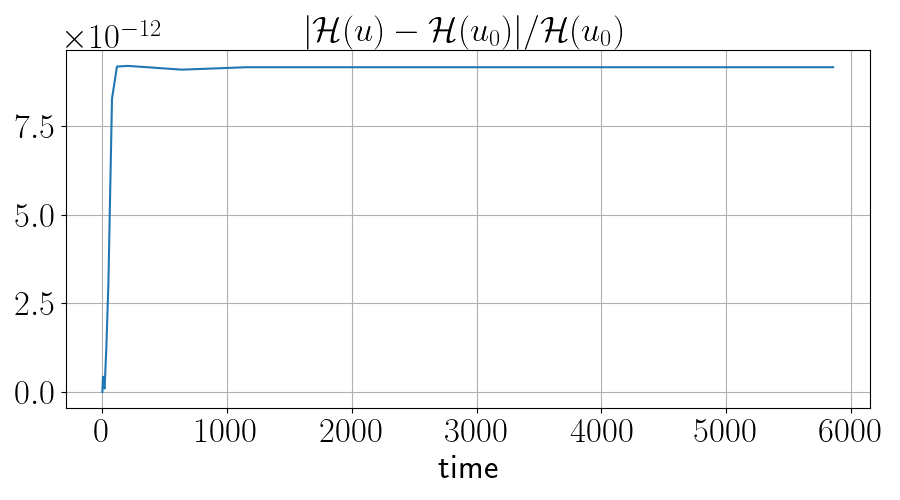}
  \caption{\label{fig:sv_diag} Evolution of entropy (left-hand-side panel) and
    of the variation of the Hamiltonian relative to its initial value
    (right-hand-side panel), for the case in Fig.~\ref{fig:sv_sol}.}  
\end{figure}

\begin{figure}
  \centering
  \includegraphics[scale=0.3]{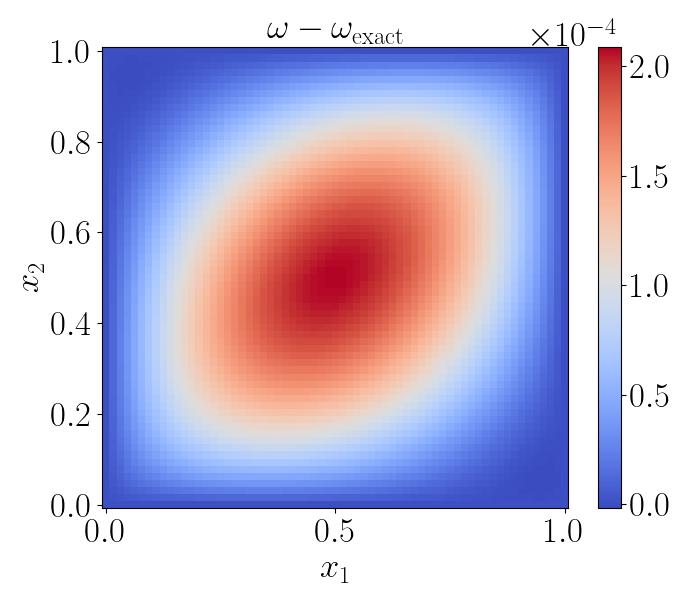}
  \includegraphics[scale=0.3]{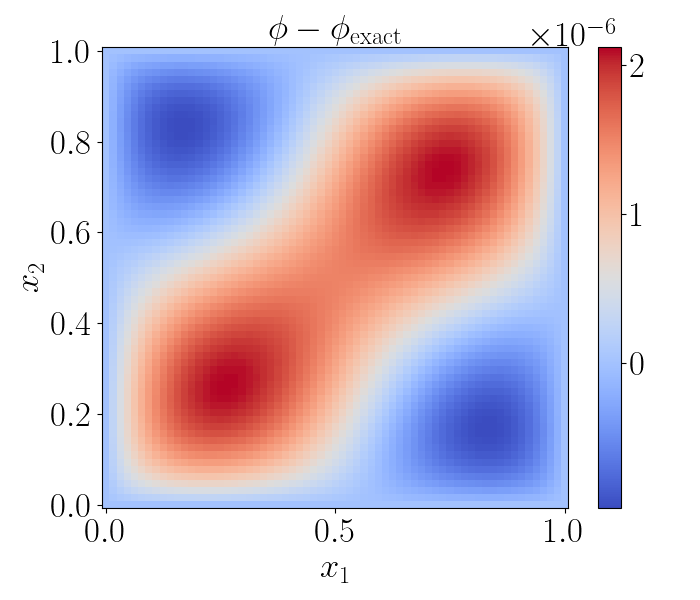}
  \caption{\label{fig:sv_err} Difference between the relaxed state and the
      exact analytical solution~(\ref{eq:euler-ref-lin}) of the variational
      principle (\ref{eq:VP-Euler}) for the case in Fig.~\ref{fig:sv_sol}. The
      difference of the vorticity fields is shown in the left-hand-side panel,
      while the difference of the corresponding potentials is shown in the
      right-had-side panel. For the evaluation of the analytical
      solution~(\ref{eq:euler-ref-lin}), the initial energy $\fun{H}_0$ has been
      computed numerically. }  
\end{figure}

Figure~\ref{fig:sv_sol} shows the initial condition, the final state, and the
``scatter plot'', which we use to identify functional relations between
different fields, cf. the analysis in Section~\ref{sec:simple}. Qualitatively,
we see that the initial condition is quite far from an equilibrium of the Euler
equations as the contours of the streaming function $\phi$ and those of
vorticity $\omega$ are misaligned. Metriplectic relaxation with collision-like
brackets yields a solution that appears to be an equilibrium, and from the
scatter plot (right-hand-side panel in Fig.~\ref{fig:sv_sol}), one can see that
the relaxed state is indeed an equilibrium characterized by the same linear
relation of the exact solution~(\ref{eq:euler-ref-lin}). Therefore, the
collision-like metric bracket appears to have completely relaxed the initial
solution in the sense discussed in Section~\ref{sec:remarks-relax-equil}. From
Fig.~\ref{fig:sv_diag}, one can verify energy conservation and entropy monotonic
dissipation.

For this test case, the exact solution of the variational problem
(\ref{eq:VP-Euler}) has been computed analytically,
Eq.~(\ref{eq:euler-ref-lin}), and we can evaluate the difference between the
the relaxed state of the metriplectic system and the solution of the
variational problem. Figure~\ref{fig:sv_err} shows that the relaxed state
appears to be close to the expected solution of the variational principle.

\paragraph*{Perturbed equilibrium}

We repeat the experiment of Fig.~\ref{fig:sv_sol} with an initial condition
close to a critical point of entropy restricted to the constant-Hamiltonian
surface. Specifically, the initial condition is
\begin{equation}
  \label{eq:pe-ic}
  u_0(x) = \sin(6\pi x_1)\sin(4\pi x_2) + u_G(x), \;
  \text{with $u_G$ defined in Eq.~(\ref{eq:initial_gaussian})},
\end{equation}
and with the same parameters as in Eq.~(\ref{eq:sv-ic}) except for $N=100$.

\begin{figure}
  \centering
  \includegraphics[scale=0.24]{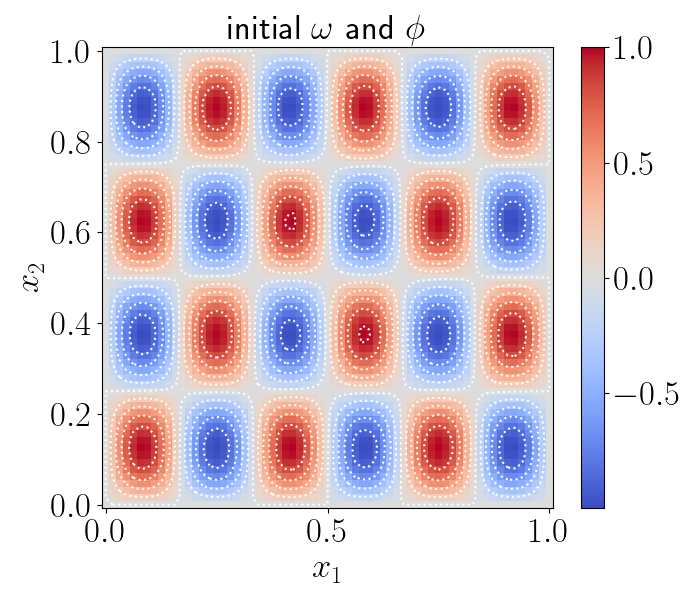}
  \includegraphics[scale=0.24]{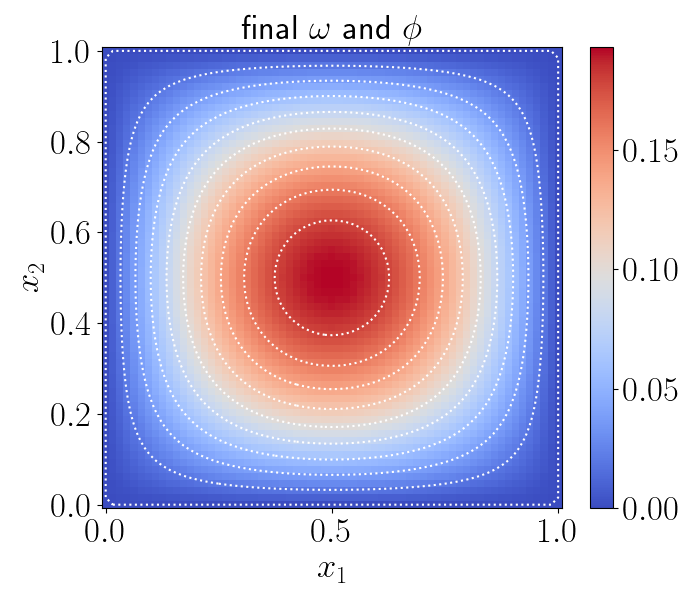}
  \includegraphics[scale=0.24]{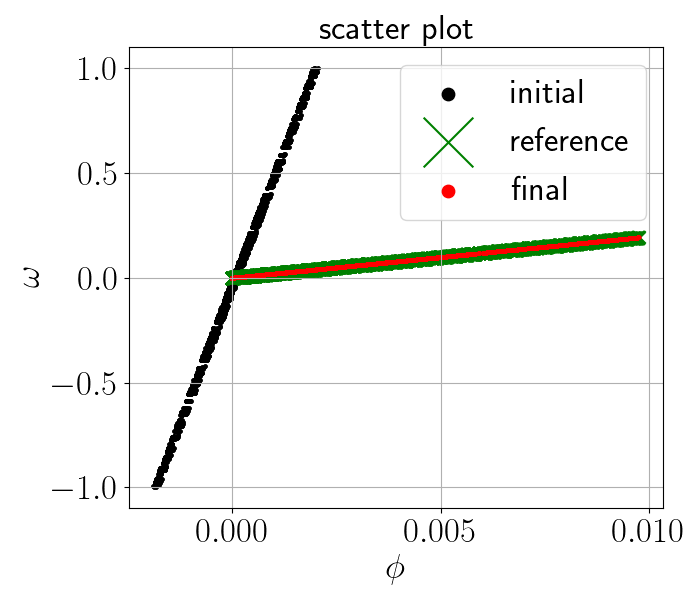}
  \caption{\label{fig:pe_sol} The same as Fig.~\ref{fig:sv_sol}, but for the
    initial condition~(\ref{eq:pe-ic}). } 
\end{figure}

\begin{figure}
  \centering
  \includegraphics[scale=0.29]{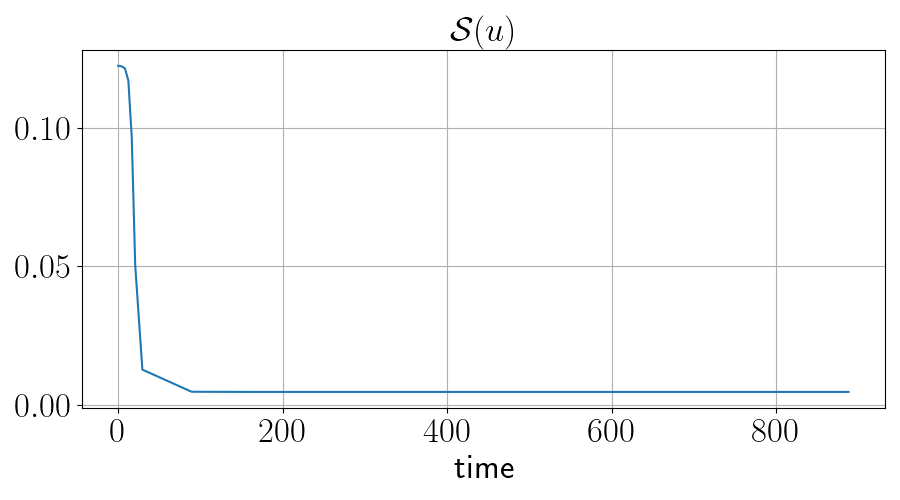}
  \includegraphics[scale=0.29]{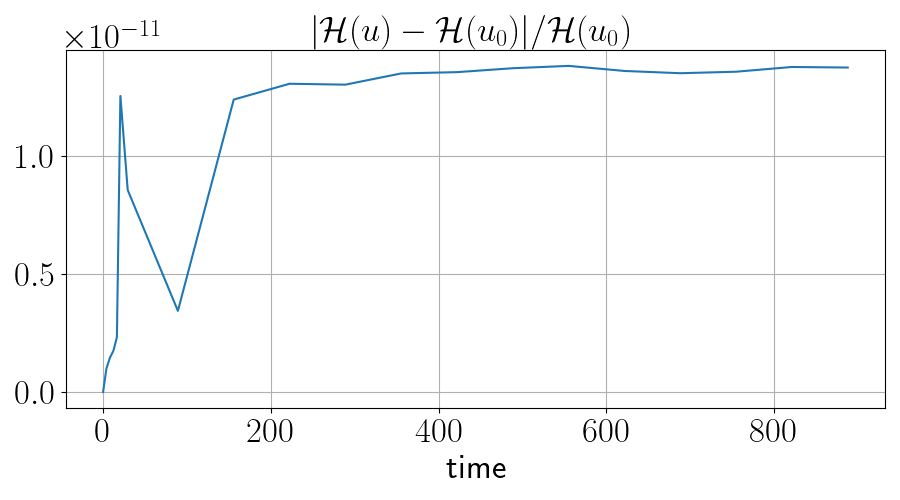}
  \caption{\label{fig:pe_diag} Evolution of entropy (left-hand-side panel) and
    of the variation of the Hamiltonian relative to its initial value
    (right-hand-side panel), for the case in Fig.~\ref{fig:pe_sol}.}  
\end{figure}

\begin{figure}
  \centering
  \includegraphics[scale=0.3]{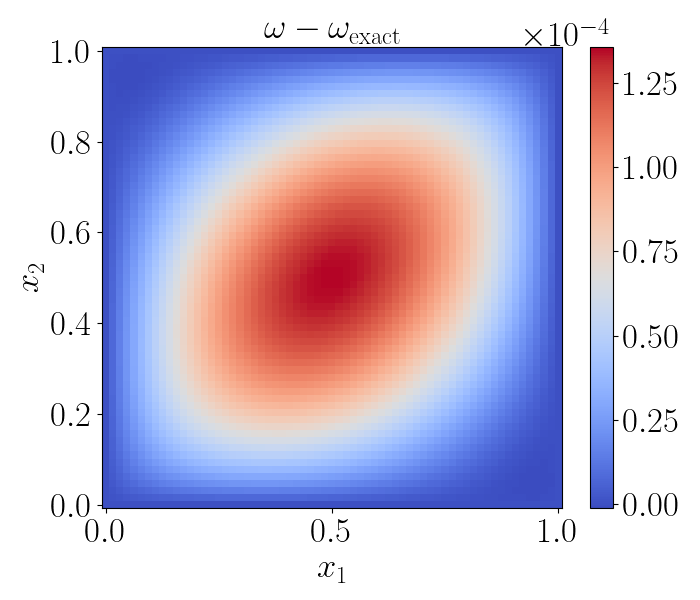}
  \includegraphics[scale=0.3]{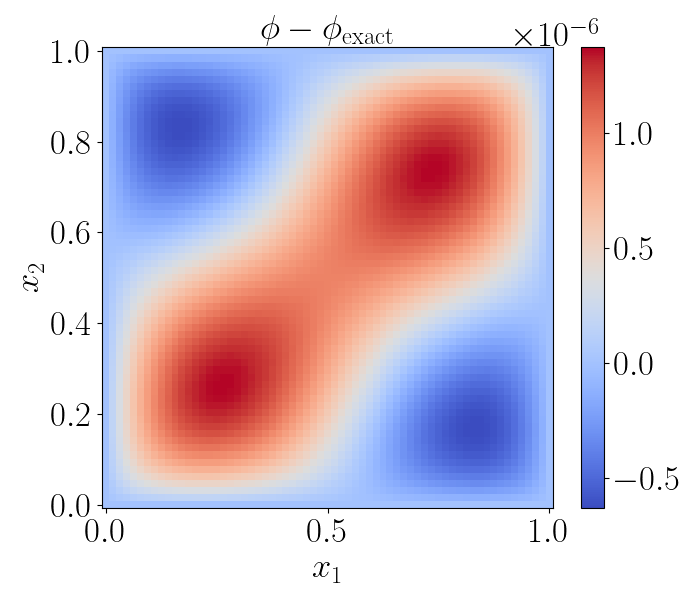}
  \caption{\label{fig:pe_err} The same as Fig.~\ref{fig:sv_err}, but for the
    initial condition~(\ref{eq:pe-ic}). } 
\end{figure}

Figure~\ref{fig:pe_sol} shows the initial condition, the final state, and
the usual scatter plot, which visualizes the relationship between $\omega$ and
$\phi$. The initial condition (Fig.~\ref{fig:pe_sol}, left-hand-side panel) is
basically an eigenfunction of the Laplace operator corresponding to a large
eigenvalue, the perturbation being hardly visible. This is confirmed by the
scatter plot (Fig.~\ref{fig:pe_sol} right-hand-side panel, black dots) in which
the initial state is concentrated on a straight line, with a small spread due to
the Gaussian perturbation. Therefore the initial condition is close to an
equilibrium. The relaxed state (Fig.~\ref{fig:pe_sol}, middle panel) is not
exactly the same as in Fig.~\ref{fig:sv_sol}, since the initial value
$\fun{H}_0$ of the Hamiltonian is different, but is it consistent with the exact
solution (\ref{eq:euler-ref-lin}) as shown in the scatter plot. Therefore, the
initial condition has been completely relaxed to a solution of variational
principle~(\ref{eq:VP-Euler}). In Fig.~\ref{fig:pe_diag}, one can see that the 
Hamiltonian function is preserved to machine accuracy, while entropy decays
monotonically as expected. However, initially entropy appears to remain
constant, due to the proximity of the initial condition to an equilibrium point.
The difference between the relaxed state and the analytical solution of the
variational principle~(\ref{eq:VP-Euler}) is shown in Fig.~\ref{fig:pe_err}.

\begin{figure}
  \centering
  \includegraphics[scale=0.24]{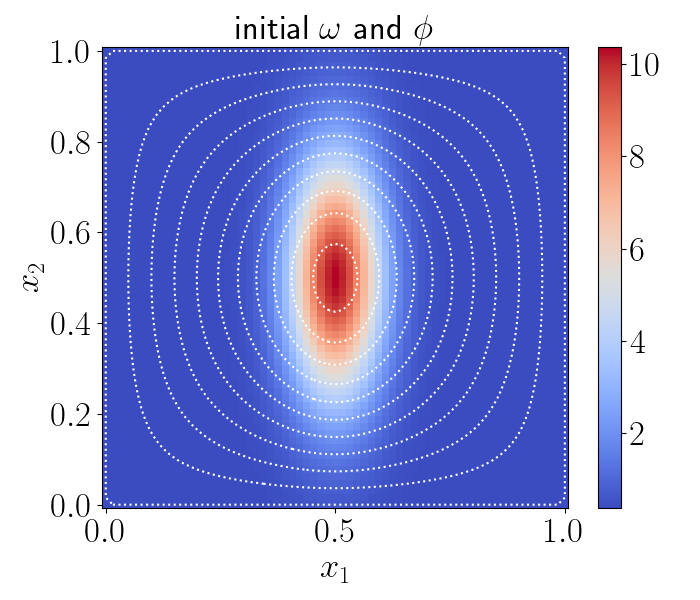}
  \includegraphics[scale=0.24]{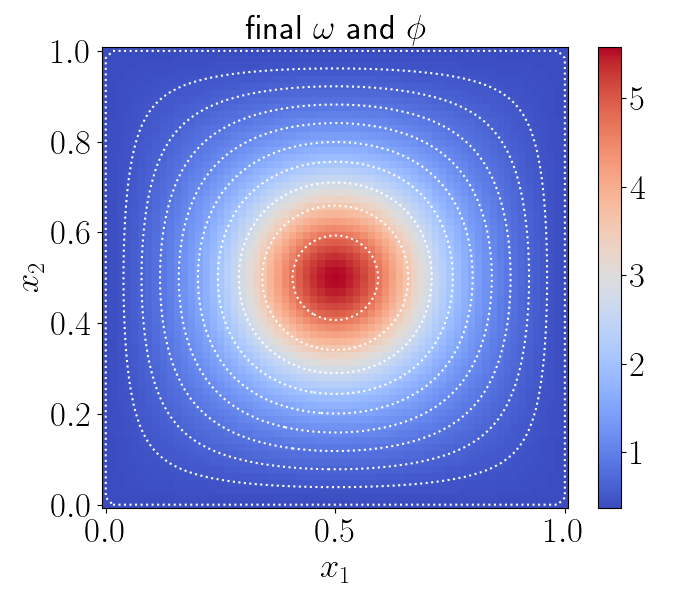}
  \includegraphics[scale=0.24]{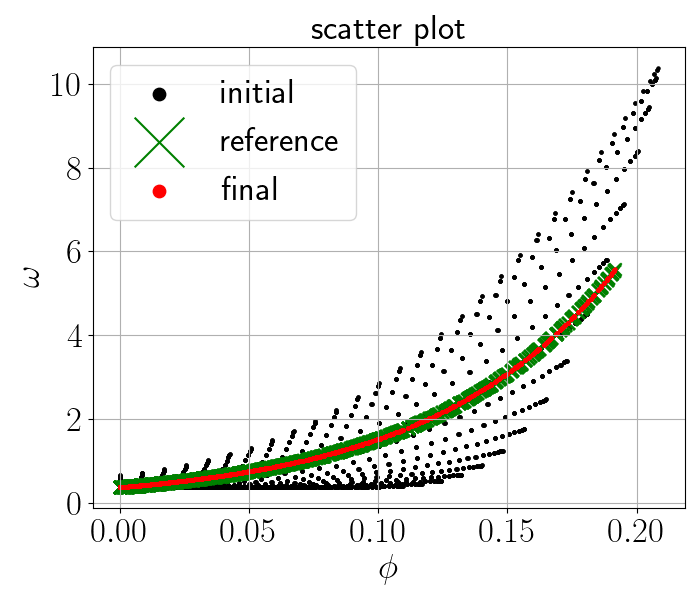}
  \caption{\label{fig:ge_sol} The same as Fig.~\ref{fig:sv_sol}, but with the
    entropy density $s(\omega) = \omega\log\omega$. For the reference solution
    (green crosses) we used Eq.~(\ref{eq:euler-ref-log}) with $\phi$ given by
    the numerical solution and $\lambda$ computed from
    Eq.~(\ref{eq:euler-ref-log-lambda}).}  
\end{figure}

\begin{figure}
  \centering
  \includegraphics[scale=0.29]{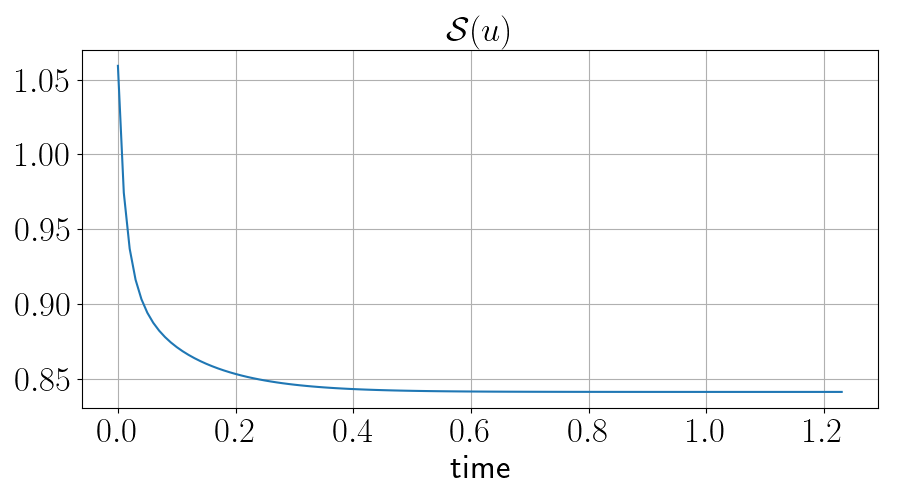}
  \includegraphics[scale=0.29]{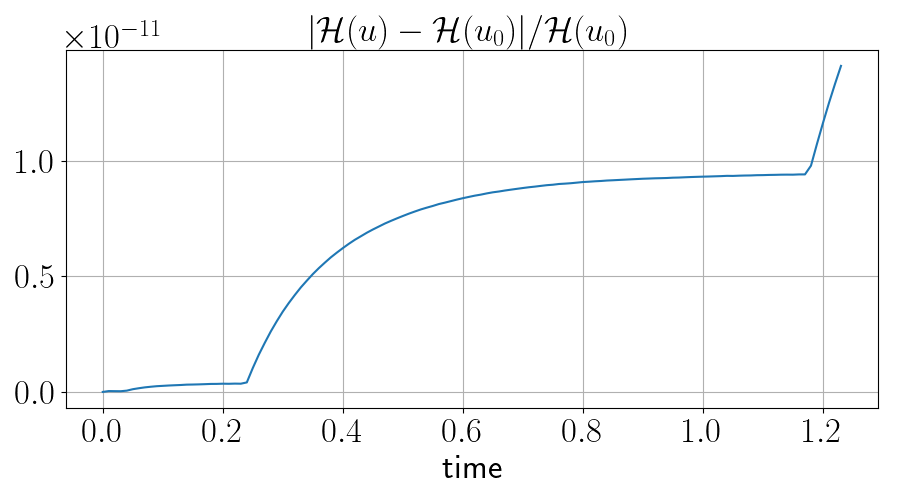}
  \caption{\label{fig:ge_diag} Evolution of entropy (left-hand-side panel) and
    of the variation of the Hamiltonian relative to its initial value
    (right-hand-side panel), for the case in Fig.~\ref{fig:ge_sol}.}  
\end{figure}

\paragraph*{Gibbs entropy density}

We consider now a more complicated entropy function, namely, the Gibbs entropy,
which is given by the entropy density $s(\omega) = \omega\log\omega$. In this
case the expected relaxed state is determined by,
cf. Eq.~(\ref{eq:complete-relaxation-vorticity2d}),
\begin{equation}
  \label{eq:euler-ref-log}
  1 + \log \omega = \lambda \phi \quad \iff \quad
  \omega = e^{\lambda \phi -1},
\end{equation}
where the eigenvalue $\lambda$ could be numerically estimated by means of
Eq.~(\ref{eq:euler-ref-log-lambda}). The initial condition is the same as the
one in Eq.~(\ref{eq:sv-ic}) except for the amplitude, which here is increased to
$1/N = 10$.

Figure~\ref{fig:ge_sol} shows the initial condition, the final state after
the relaxation and the usual scatter plot. In this case the solution of the
variational principle (green crosses) is computed using the numerically computed
values of $\fun{M}(\omega)$ and $\fun{S}(\omega)$ in
Eq.~(\ref{eq:euler-ref-log-lambda}). Again we see evidence of complete 
relaxation of the system toward the solution of the variational principle.
Figure~\ref{fig:ge_diag} shows the expected behavior of entropy and
Hamiltonian functions. It is worth noting that since the entropy is not
quadratic in $\omega$, the numerical scheme does not preserve the property of
monotonic entropy dissipation, hence sufficiently small time steps must be used
to ensure the the evolution of the system is approximated with sufficient accuracy.

\subsection{Application to Grad-Shafranov equilibria}
\label{sec:app-GS}

As a second example, we construct a relaxation method for axisymmetric MHD
equilibria, cf. Section~\ref{sec:GS-problem}. Essentially, this amounts to a
different iterative method for the solution of the Grad-Shafranov
equation. The metriplectic relaxation method ensures conservation of the
poloidal magnetic energy and monotonic dissipation of an ``ad hoc'' entropy, but
at a higher computational cost.  

Specifically, we construct a relaxation method for the variational principle
(\ref{eq:VP-GS}). On a bounded domain $\Omega$, satisfying
$\ol{\Omega} \subset \R_+ \times \R$ with coordinates $x = (x_1,x_2) = (r,z)$,
cf. Section~\ref{sec:GS-problem}, the state variable is a scalar field $u(t)$
proportional to the toroidal component of the plasma current,
$u(t,r,z) = (4\pi/c) r J_\varphi(t,r,z)$, and it is evolved in time according to
Eq.~(\ref{eq:div-grad-ev}) as in the case of the reduced Euler equations.

The entropy and Hamiltonian functions are chosen as in
Eq.~(\ref{eq:GradShafranov-S-H}), with entropy density and measure $\mu$ on
$\Omega$ given by
\begin{equation}
  \label{eq:mm-entropy}
  s(r,y) = \frac{1}{2} \frac{y^2}{C r^2 + D}, \qquad
  d\mu(r,z) = \frac{1}{r} dr dz,
\end{equation}
and the assumptions on the domain imply $r \geq r_0 > 0$ in $\ol{\Omega}$. Then,
the first condition in Eq.~(\ref{eq:GradShafranov-complete-relaxation}) defines
the toroidal current 
\begin{equation*}
  \frac{4\pi}{c} J_\varphi = \lambda \Big(Cr + \frac{D}{r}\Big) \psi,
\end{equation*}
which is the current of the equilibrium found by Herrnegger
\cite{Herrnegger1972} and Maschke \cite{Maschke1973},
cf. also Mc Carthy \cite[Section~II.B]{Carthy1999}.
Equivalently, since the state variable is $u = (4\pi/c) rJ_\varphi$, the
condition in Eq.~(\ref{eq:GradShafranov-complete-relaxation}) can be written as
\begin{equation}
  \label{eq:gs-ref}
  \frac{u}{Cr^2 +D} = \lambda \psi.
\end{equation}
In order to have a reference solution, we resort to the direct numerical
solution of the Grad-Shafranov equation by using the classical iterative scheme
\cite[pp. 22--23, Eqs.~(2.111) and (2.112)]{Takeda1991}. In the following
experiments $C = 0.6$ and $D = 0.2$.

\paragraph*{Rectangular domain}

We start with a rectangular domain $\ol{\Omega} = [1,7] \times [-9.5,+9.5]$,
with coordinates $x = (x_1,x_2) = (r,z)$, discretized by a uniform grid of
$64 \times 64$ nodes. The initial condition $u_0$ for the state variable is the
same as in Eq.~(\ref{eq:sv-ic}) with parameters
$x_{0,1} = r_0 = 4$, $x_{0,2} = z_0 = 0$, $w_1^2 = 0.5$,  $w_2^2 = 3.2$, and
$N=1$.
  
Figure~\ref{fig:gsr_sol} shows the results of this numerical experiment.
Instead of plotting the state variable $u$ directly, the color plot represents
the field $u / (Cr^2 + D)$, which should be proportional to $\psi$ if the system
reaches a state consistent with Eq.~(\ref{eq:gs-ref}). Qualitatively the results
are similar to those of Fig.~\ref{fig:sv_sol} for the reduced Euler equations:
the initial condition evolves toward an equilibrium state consistent with
Eq.~(\ref{eq:gs-ref}). The scatter plot in Fig.~\ref{fig:gsr_sol} shows that the
functional relation between $u / (Cr^2 + D)$ and the potential $\psi$ is
linear. The reference solution (green cross) is obtained computing the field
$u / (Cr^2 + D)$ from Eq.~(\ref{eq:gs-ref}), with $\psi$ being the numerical
solution and with the eigenvalue $\lambda = 0.030302$ being obtained from the
standard iterative solver of the Grad-Shafranov equation, which has also been
implemented in FEniCS. Figure~\ref{fig:gsr_diag} confirms the expected behavior
of the entropy and Hamiltonian functions. 

\begin{figure}
  \centering
  \includegraphics[scale=0.24]{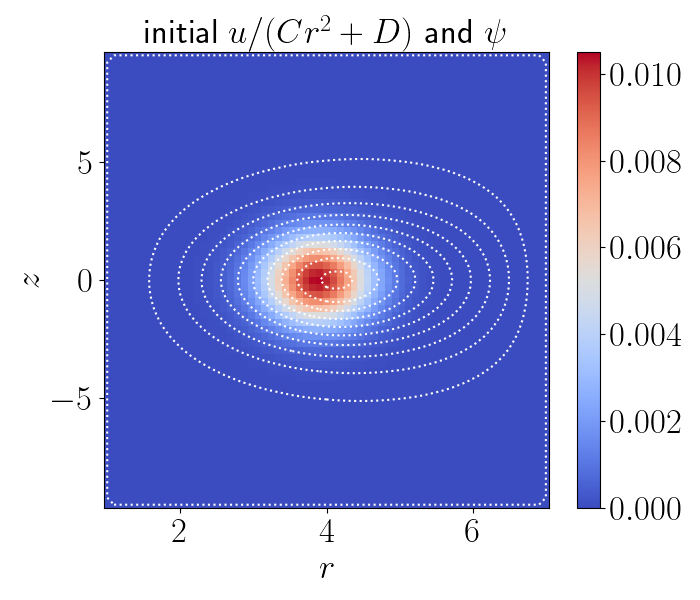}
  \includegraphics[scale=0.24]{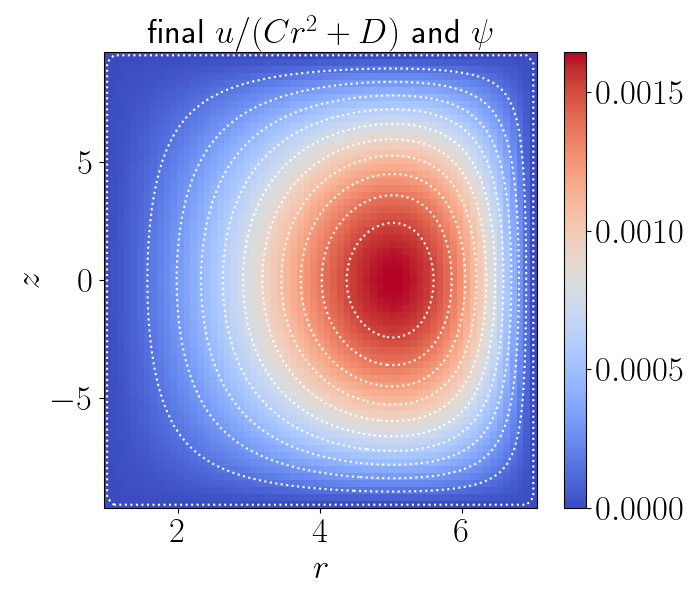}
  \includegraphics[scale=0.24]{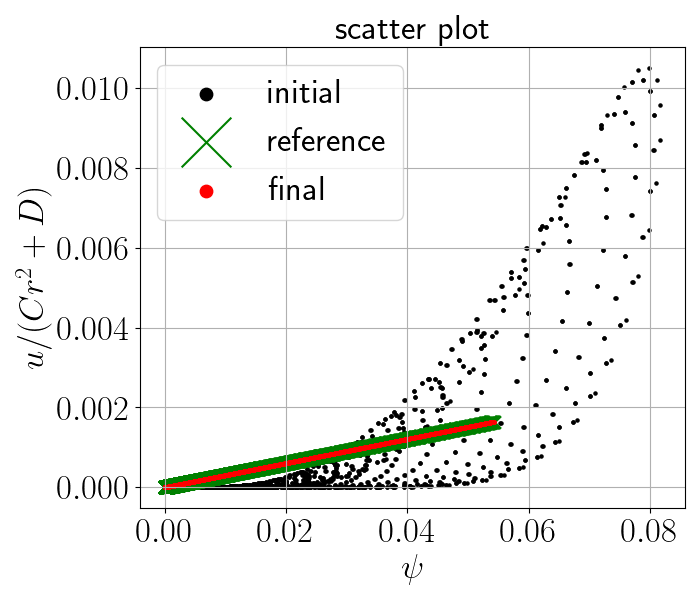}
  \caption{\label{fig:gsr_sol} Relaxation of an initial current
    $J_\varphi = (c/4\pi r) u_0$, with $u_0$ Gaussian, according to the
    evolution equation~(\ref{eq:div-grad-ev}) applied to the Grad-Shafranov
    problem~(\ref{eq:VP-GS}) with entropy~(\ref{eq:mm-entropy}) on a rectangular
    domain. The initial condition and the final state of the system are given in
    left-hand-side and middle panels, respectively, while the right-hand side
    panel shows the scatter plot, with the same color/symbol code as in
    Fig.~\ref{fig:sv_sol}. The color map represents the field $u/(Cr^2 + D)$,
    and the white contours are the flux function $\psi$, so that condition
    (\ref{eq:gs-ref}) is easily checked. Analogously the axes in the scatter
    plot refer to the values of the flux function $\psi$ and the field
    $u/(Cr^2 + D)$. The reference solution (green crosses) is computed using
    relation (\ref{eq:gs-ref}), with $\psi$ being given by the numerical
    solution and with the eigenvalue $\lambda$ computed by a standard
    Grad-Shafranov solver. Here, $C = 0.6$ and $D = 0.2$.} 
\end{figure}

\begin{figure}
  \centering
  \includegraphics[scale=0.29]{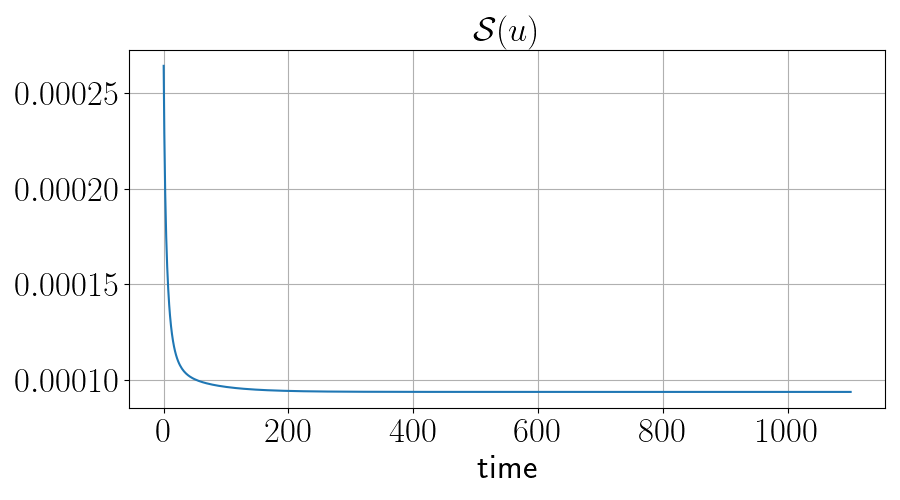}
  \includegraphics[scale=0.29]{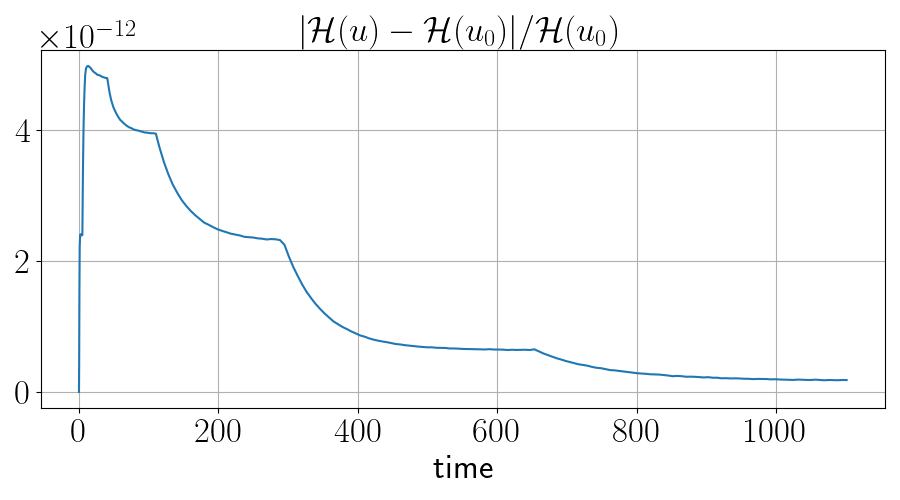}
  \caption{\label{fig:gsr_diag} Evolution of entropy (left-hand-side panel) and
    of the variation of the Hamiltonian relative to its initial value
    (right-hand-side panel), for the case in Fig.~\ref{fig:gsr_sol}.}  
\end{figure}

\paragraph*{Mapped domain}

At last, we give an example of the relaxation method for the Grad-Shafranov
equation on a non-trivial mapped domain with a smooth boundary. The domain
$\Omega$ is obtained by mapping the unit disk $\{z \in \C \colon |z|<1\}$ with
the map defined by
\begin{equation}
  \label{eq:mapping}
  \begin{aligned}
    r &= a \Big[b + \frac{1}{\varepsilon} \Big(1 -
    \sqrt{1 + \varepsilon (\varepsilon + 2 s \cos\theta) }\Big)\Big], \\
    z &= c \frac{e \xi s\sin\theta}{
      2-\sqrt{1 + \varepsilon(\varepsilon + 2 s \cos\theta)}},
  \end{aligned}
\end{equation}
where $z = s \exp(i\theta)$ is a point in the unit disk, and the parameters
are $e = 1.4$, $\varepsilon = 0.3$, $a = 4$, $b = 3$, and $c=6.3$, with
$\xi = 1 / \sqrt{1-\varepsilon^2 / 4}$. This map is a slightly modified version
of the one used by
Zoni and G\"ucl\"u \cite[Eq.~(3) and references therein]{Zoni2019}.
The initial condition is the same as for the rectangular domain, except for
$r_0 = 12$, $w_1^2 = 0.6$, and $w_2^2 = 6.0$.

The results for the mapped domain are shown in Fig.~\ref{fig:gsc_sol}. The color
map represents the field $u/(Cr^2+D)$, as in the previous case. The relaxed
state is again consistent with Eq.~(\ref{eq:gs-ref}), with the eigenvalue
$\lambda = 0.002599$ computed by a standard Grad-Shafranov solver.
Figure~\ref{fig:gsc_diag} shows the expected monotonic entropy dissipation
and the conservation of the Hamiltonian to machine precision.

\begin{figure}
  \centering
  \includegraphics[scale=0.24]{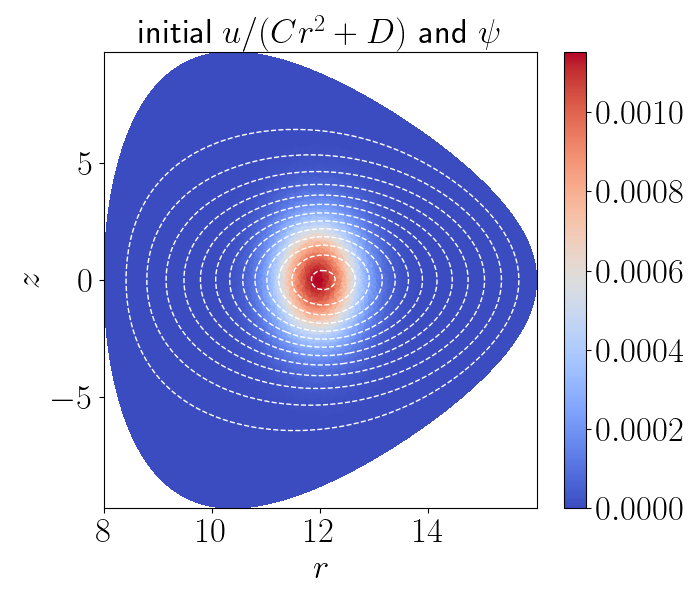}
  \includegraphics[scale=0.24]{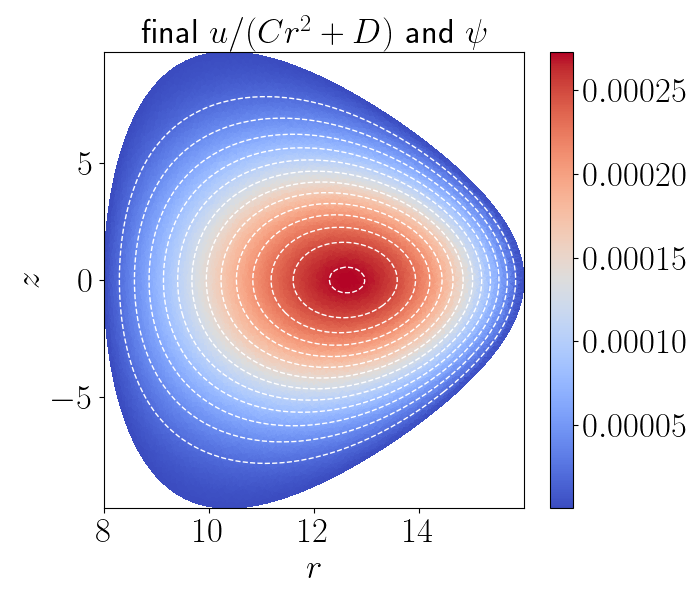}
  \includegraphics[scale=0.24]{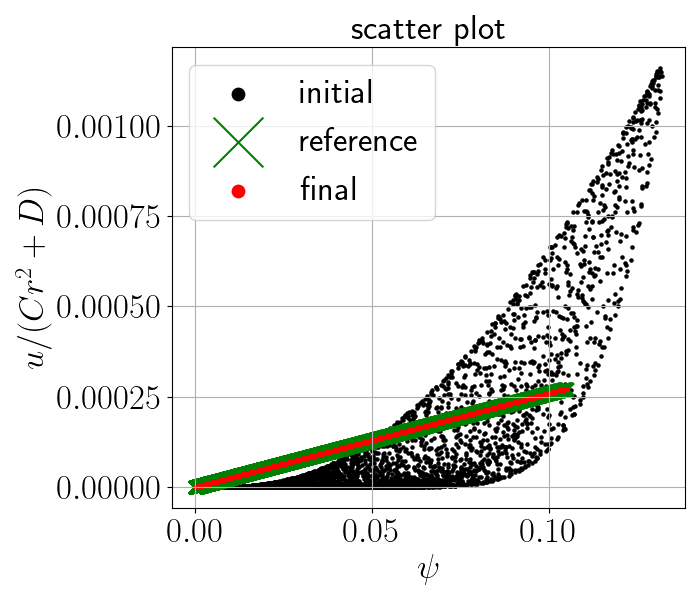}
  \caption{\label{fig:gsc_sol} The same as in Fig.~\ref{fig:gsr_sol}, but on the
    domain obtained mapping the unit disk with Eq.~(\ref{eq:mapping}).} 
\end{figure}

\begin{figure}
  \centering
  \includegraphics[scale=0.29]{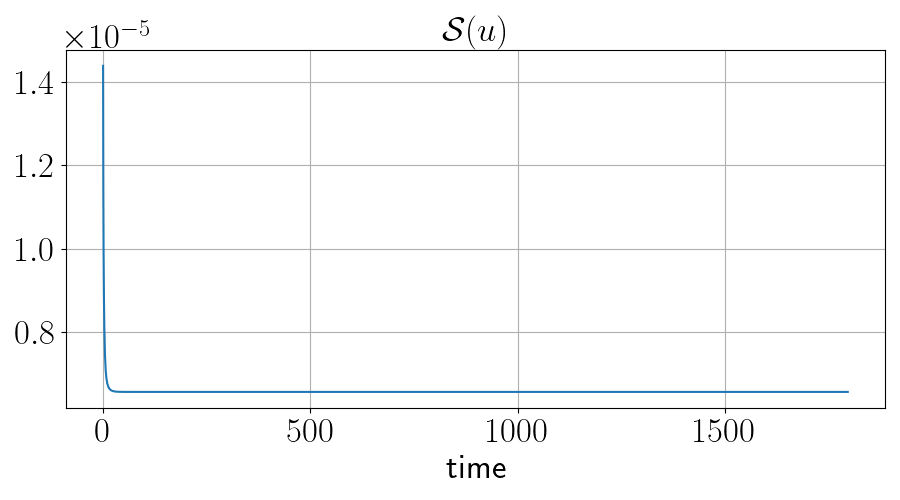}
  \includegraphics[scale=0.29]{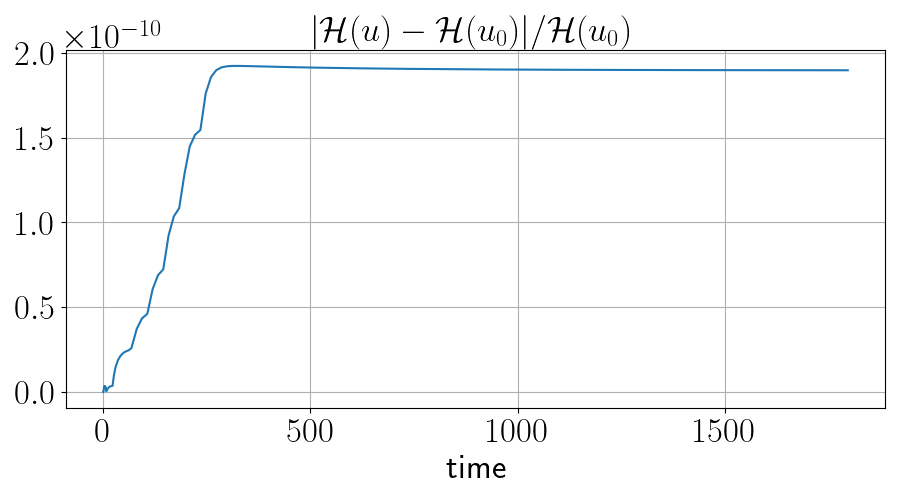}
  \caption{\label{fig:gsc_diag} Evolution of entropy (left-hand-side panel) and
    of the variation of the Hamiltonian relative to its initial value
    (right-hand-side panel), for the case in Fig.~\ref{fig:gsc_sol}. }  
\end{figure}

\section{Diffusion-like metric brackets}
\label{sec:diff-like-metr}

A feature of the general collision-like bracket is that the generalized
diffusion tensor and friction flux~(\ref{eq:D-F}) are \emph{nonlocal} functions
of the unknown $u$. Therefore, their evaluation requires an integration over the
whole domain $\Omega$. This poses an issue of computational complexity even
harder than that of the standard Landau collision operator, which is local in
half of the variables (cf.\ Example~\ref{ex:Landau}). In order to mitigate the
computational cost of a relaxation method based on these brackets, we have
studied the special case of brackets (\ref{eq:gen-clb}) that corresponds to
choosing $\mathcal{O} = \Omega$, i.e., we do not increase the size of the domain
($n = d$). We will refer to this case as a diffusion-like bracket, and reserve
the name ``collision-like'' for  the case $n > d$. As we shall see, in general,
one cannot expect complete relaxation (in the sense defined in
Section~\ref{sec:remarks-relax-equil}) from the diffusion-like brackets. The
metric double bracket~(\ref{eq:gmdb}) is a special case of a diffusion-like
bracket.

\subsection{General construction of the brackets}
\label{sec:d-general}

With the same setup of Section~\ref{sec:c-general}, which is summarized in
Fig.~\ref{fig:8}, let us consider the case $\mathcal{O} = \Omega$, $\nu = \mu$,
but allow $\tilde{N} \not = N$, so that in general $\tilde{W} \not = W$.

Then bracket~(\ref{eq:gen-clb}) reduces to
\begin{equation}
  \label{eq:gen-dlb}
  (\fun{F},\fun{G}) \coloneqq \int_\Omega
  P \frac{\delta \fun{F}(u)}{\delta u} \cdot \TT(u)
  P \frac{\delta \fun{G}(u)}{\delta u} d\mu,
\end{equation}
where $P : W \to \tilde{W}$ is a linear (possibly unbounded) operator, and
$\TT(u) \in \mathcal{B}(\tilde{W})$ is symmetric, positive semidefinite,
bounded linear operator that satisfies~(\ref{eq:T-energy-cond}). The evolution
equation generated by the bracket~(\ref{eq:gen-dlb}) is formally the same as
Eq.~(\ref{eq:clb-ev}), but we shall see in the examples that the dual operator
$P'$ does not involve any integral operator. 

Conditions~(\ref{eq:MDC}) are of course still sufficient conditions for brackets
of the form~(\ref{eq:gen-dlb}) to be minimally degenerate, but we shall see that
condition (\ref{eq:KC}) is usually not satisfied in this case.

\begin{example}
  \label{ex:mdb-rev}

  Brackets of the form~(\ref{eq:gmdb}), i.e., metriplectic double
  brackets acting on scalar fields, are special cases of diffusion-like
  brackets~(\ref{eq:gen-dlb}). In order to see this, let $N=1$, thus
  $W = L^2(\Omega,\mu)$, $\Phi = C^\infty(\ol{\Omega})$,  and
  \begin{equation*}
    Pw = \nabla w.
  \end{equation*}
  Hence, $\tilde{N} = d$ and $\tilde{W} = L^2(\Omega,\mu;\R^d)$ is the space of
  $L^2$ vector fields over $\Omega$. Given a function $J(x)$ of class
  $C^\infty(\ol{\Omega})$ taking values in the space of antisymmetric
  $d \times d$ matrices, we can define the antisymmetric bilinear operation   
  \begin{equation*}
    [w_1, w_2]_J = \nabla w_1(x) \cdot J(x) \nabla w_2(x).
  \end{equation*}
  As in Section~\ref{sec:metr-double-brackets}, it is not necessary that
  $[\cdot,\cdot]_J$ satisfies the Jacobi identity. In addition let us consider a
  symmetric, positive-definite bi-linear form $\gamma \colon W \times W \to \R$,
  together with the associated linear bounded, symmetric positive definite
  operator $\Gamma \colon  W \to W$, that is, cf.~\ref{sec:bilinear-forms},
  \begin{equation*}
    \gamma(w_1,w_2) = \int_\Omega  w_1(x) \cdot \Gamma w_2(x) d\mu(x),
  \end{equation*}
  for all $w_1, w_2 \in W$. In terms of $J$ and $\gamma$, we define the kernel
  \begin{equation*}
    \TT(u) = X_h(u) \circ \Gamma \circ \transpose{X_h(u)}, 
  \end{equation*}
  where $X_h(u)$ and $\transpose{X_h(u)}$ are the operators of multiplication by
  the vector fields 
  \begin{equation*}
    X_h(u;x) = J(x) \nabla h(x), \quad
    \transpose{X_h(u;x)} = -\nabla h(x) \cdot J(x),
  \end{equation*}
  respectively, and
  $h = \delta \fun{H}(u) / \delta u \in \Phi = C^\infty(\ol{\Omega})$.
  If $J$ is a Poisson tensor, i.e., $[\cdot,\cdot]_J$ satisfies the Jacobi
  identity, $X_h(u;\cdot)$ is the $d$-dimensional Hamiltonian vector field
  generated by the Hamiltonian function $h = \delta \fun{H}(u) / \delta u$.
  Since $X_h(u;\cdot) \in C^\infty(\ol{\Omega};\R^d)$, $\TT(u)$ maps $\tilde{W}$
  into itself. The operator $\Gamma$ is symmetric and positive definite, hence
  $\TT(u)$ is symmetric and positive semidefinite, and we have
  $\nabla h \in \ker \TT(u)$ as required by condition (\ref{eq:T-energy-cond}).
 
  With the foregoing choices of $P$ and $\TT(u)$, Eq.~(\ref{eq:gen-dlb}) becomes
  \begin{align*}
    \big(\fun{F},\fun{G}) &= \int_\Omega
    \Big( \nabla \frac{\delta \fun{F}(u)}{\delta u} \cdot
    J \nabla \frac{\delta \fun{H}(u)}{\delta u} \Big) \Gamma
    \Big( \nabla \frac{\delta \fun{G}(u)}{\delta u} \cdot
    J \nabla \frac{\delta \fun{H}(u)}{\delta u} \Big) d\mu \\
    &= \gamma\Big(
    \Big[\frac{\delta \fun{F}(u)}{\delta u},
      \frac{\delta \fun{H}(u)}{\delta u} \Big]_J,
    \Big[\frac{\delta \fun{G}(u)}{\delta u},
      \frac{\delta \fun{H}(u)}{\delta u} \Big]_J\Big),
  \end{align*}
  which is Eq.~(\ref{eq:gmdb}). As discussed in
  Section~\ref{sec:metr-double-brackets}, this bracket is in general not
  minimally degenerate. In fact, while condition~(\ref{eq:PI}) amounts to the
  Poincar\'e inequality and holds true on a bounded domain $\Omega$,
  condition (\ref{eq:KC}) fails since for any sufficiently regular function
  $f : \R \to \R$, any function of the form $\tilde{w} = \nabla f(h)$, with
  $h = \delta \fun{H}(u){\delta u}$, belongs to $\ker \TT(u) \cap \rng(P)$.
\end{example}

\subsection{Diffusion-like brackets based on div--grad operators}
\label{sec:d-div-grad}

We address the diffusion-like version of the bracket introduced in
Section~\ref{sec:c-div-grad}. For scalar fields ($N=1$) on a bounded domain
$\Omega \subset \R^d$, $d\geq 2$, let $P = \nabla$ with domain
$\Phi = H_0^1(\Omega)$. The kernel of the bracket is constructed from the matrix
$Q_d$, cf. Eq.~(\ref{eq:Q}),
\begin{equation*}
  \TT(u;x) = \kappa(u;x)
  Q_d\Big(\nabla \frac{\delta \fun{H}(u)}{\delta u} (x)\Big),
\end{equation*}
where $\kappa(u;x)$ is a positive scalar function, and $\TT(u)$ is defined as
the operator of multiplication by $\TT(u;\cdot)$. Then, bracket 
(\ref{eq:gen-dlb}) becomes \cite{Bressan2023}
\begin{equation}
  \label{eq:d-div-grad}
  (\fun{F},\fun{G}) = \int_\Omega \kappa(u)
  \nabla \frac{\delta \fun{F}(u)}{\delta u}
  \cdot Q_d\Big(\nabla \frac{\delta \fun{H}(u)}{\delta u} \Big)
  \nabla \frac{\delta \fun{G}(u)}{\delta u} d\mu,
\end{equation}
and the corresponding evolution equation is
\begin{equation*}
  \partial_t u = \widetilde{\div_\mu} \Big[\kappa(u)
    Q_d\Big(\nabla \frac{\delta \fun{H}(u)}{\delta u} \Big)
    \nabla \frac{\delta \fun{S}(u)}{\delta u}\Big] \quad
  \text{in } \Phi' = H^{-1}(\Omega).
\end{equation*}
This is a ``local version'' of Eq.~(\ref{eq:div-grad-ev}) which  justifies the
name ``diffusion-like'' for this bracket. Condition~(\ref{eq:KC}) fails in the
same way as in Example~\ref{ex:mdb-rev}.

It is worth noting that in two spatial dimensions, $d=2$, one has
\begin{equation*}
  Q_2(\nabla h) = X_h \otimes X_h = X_h \transpose{X_h}, \qquad
  h = \delta \fun{H}(u) / \delta u,
\end{equation*}
where $X_h = J_c \nabla h = (-\partial_2 h, \partial_1 h)$ is the canonical
Hamiltonian vector field generated by $h(x)$ and $\transpose{X_h}$ denotes its
transpose. This means that the diffusion-like bracket~(\ref{eq:d-div-grad}) in a
two dimensional domain amounts to the metric double bracket addressed in
Example~\ref{ex:mdb-rev}, with $\Gamma = I$ being the identity operator. For
$d>2$ the bracket~(\ref{eq:d-div-grad}) is however different from the metric
double brackets in Example~\ref{ex:mdb-rev}. In fact, if $X_h \not = 0$, the
null space of the matrix $Q_d$ is always one dimensional for any dimension $d$,
while the null space of $X_h \otimes X_h$ is $d-1$ dimensional, hence the two
matrices have the same null space only if $d=2$. 
  
Nonetheless, for $d \geq 3$, one can write the matrix $Q_d(\nabla h)$ in terms
of suitable pairing of two antisymmetric operations by using the identity
\begin{equation*}
  \frac{1}{(d-2)!} \sum_{i_1, \ldots, i_{d-2}}
  \epsilon_{i_1,\ldots,i_{d-2}, i, k} \epsilon_{i_1,\ldots,i_{d-2}, j, l} =
  \delta_{ij} \delta_{kl} - \delta_{il} \delta_{jk},
\end{equation*}
with $\epsilon_{i_1,\ldots,i_d}$ being the completely antisymmetric symbol.
We obtain
\begin{equation}
  \label{eq:Levi-Civita-identities}
  \begin{aligned}
    [Q_d(\nabla h)]_{ij} &= |\nabla h|^2 \delta_{ij} - \partial_i h \partial_j h
    = \sum_{kl} \big[\delta_{ij} \delta_{kl} - \delta_{il} \delta_{jk}\big]
    \partial_k h \partial_l h \\
    &= \frac{1}{(d-2)!} \sum_{i_1, \ldots, i_{d-2}} \sum_{k,l}
    \epsilon^{i_1,\ldots,i_{d-2}, i, k} \epsilon^{i_1,\ldots,i_{d-2}, j, l}
    \partial_k h \partial_l h,
  \end{aligned}
\end{equation}
where $\partial_j h = \partial h/\partial x_j$, and thus,
Eq.~(\ref{eq:d-div-grad}) can be written equivalently as
\begin{equation*}
  (\fun{F},\fun{G}) = \frac{1}{(d-2)!} \sum_\alpha
  \int_\Omega \kappa(u)\,  \mathcal{E}_d^{\alpha}(\nabla f, \nabla h)
 \,  \mathcal{E}_d^{\alpha}(\nabla g, \nabla h)\,  d\mu,
\end{equation*}
where $\alpha = (i_1,\ldots,i_{d-2})$ is a multi-index,
$f = \delta \fun{F}(u)/\delta u$, $g = \delta \fun{G}(u)/\delta u$, and
\begin{equation*}
  \mathcal{E}_d^{\alpha}(\nabla \varphi, \nabla \psi)
  = \sum_{i, k} \epsilon^{i_1,\ldots,i_{d-2}, i, k} \partial_i \varphi
  \partial_k \psi,  \quad \alpha = (i_1,\ldots,i_{d-2}).
\end{equation*}
In dimension $d = 3$, $\mathcal{E}_3$ defines a Lie bracket in $\R^3$. This is
the standard Lie algebra structure on $\R^3$ given by the cross product
arising in the case of rigid body rotation \cite{Morrison1986}. 

Yet another form of this bracket makes use of the Kulkarni-Nomizu (K-N) product
and the metriplectic $4$-bracket structure \cite{pjmU24}. Using the first
identity in~(\ref{eq:Levi-Civita-identities}), we can write
\begin{equation*}
  (\fun{F},\fun{G}) = \frac{1}{2} \sum_{i,j,k,l} \int_\Omega \kappa(u)
  (\delta \owedge \delta)_{ijkl}
  \Big[\partial_i \frac{\delta \fun{F}(u)}{\delta u}\Big]
  \Big[\partial_j \frac{\delta \fun{H}(u)}{\delta u}\Big]
  \Big[\partial_k \frac{\delta \fun{G}(u)}{\delta u}\Big]
  \Big[\partial_l \frac{\delta \fun{H}(u)}{\delta u}\Big]
  d\mu,
\end{equation*}
where
$(\delta\owedge\delta)_{ijkl}=2(\delta_{ik}\delta_{jl}-\delta_{il}\delta_{jk})$
is the K-N product of two identity tensors.

\subsection{Diffusion-like brackets based on curl--curl operators}
\label{sec:d-curl-curl}

As a last example, we address the diffusion-like version of the $\curl$-$\curl$
brackets of Section~\ref{sec:c-curl-curl}. We consider a vector field over a
bounded domain $\Omega \subset \R^3$ with Lebesgue measure $d\mu(x) = dx$, hence
$d=N=3$. We choose
\begin{equation*}
  Pw = \curl w,
\end{equation*}
so that $\tilde{N} = N = 3$ and $\tilde{W} = W$, with $\dom(P)$ given by
\begin{equation*}
  \Phi = \{w \in H(\curl,\Omega) \cap H(\div,\Omega) \colon
  \text{ $\div w = 0$ in $\Omega$, $n \cdot w = 0$ on $\partial \Omega$}\}.
\end{equation*}
This differs from the space $\Phi$ considered in Section~\ref{sec:c-curl-curl}
by the ``opposite'' choice of boundary conditions: the normal component is set
to zero instead of the tangential component. The Poincar\'e inequality for the
operator $\curl$ holds for this space as well \cite[Corollary 3.51]{Monk2003},
so that condition~(\ref{eq:PI}) holds true. (Here, we have the choice of the
boundary condition since we do not need to satisfy the second identity in
Eq.~(\ref{eq:PI-L}).)

As an example, let the kernel be once again constructed from $Q_3$, defined in
Eq.~(\ref{eq:Q}), 
\begin{equation*}
  \TT(u;x) = \kappa(u;x) Q_3\Big(\curl \frac{\delta \fun{H}(u)}{\delta u}\Big),
\end{equation*}
where $\kappa(u;x)$ is a positive function. Even though the operator $P = \curl$
with domain $\dom(P) = \Phi$ satisfies a Poincar\'e inequality, in general, the
kernel fails to satisfy condition (\ref{eq:KC}): a function
$\tilde{w} \in \ker \TT(u) \cap \rng(P) \subset \tilde{W} = L^2(\Omega;\R^3)$
must satisfy $\tilde{w} = \curl w$, with $w \in \Phi$ and
$\tilde{w}(x) = \Lambda(x) b(x)$, $b = \curl h$, $h = \delta\fun{H}/\delta u$,
and thus the pair $(\Lambda, w)$ must solve
\begin{equation*}
  \left\{
  \begin{aligned}
    \curl w &= \Lambda b, && \text{in } \Omega, \\
    \div w &= 0, && \text{in } \Omega,\\
    b \cdot \nabla \Lambda &= 0, && \text{in } \Omega, \\
    n \cdot w &= 0, && \text{on } \partial \Omega.
  \end{aligned}
  \right.
\end{equation*}
As a special case let $b$ be a nonlinear Beltrami field, i.e., a solution of
(\ref{eq:nonlinear-Beltrami}) such that $\curl b = \mu b$ with $\mu(x)$ not a
constant, then $\Lambda = \mu$ and $w = b$ is a solution that violates condition
(\ref{eq:KC}). 

With the foregoing choices, Eq.~(\ref{eq:gen-dlb}) amounts to
\begin{equation}
  \label{eq:d-curl-curl}
  (\fun{F},\fun{G}) = \int_\Omega \kappa(u)
  \curl \frac{\delta \fun{F}(u)}{\delta u}
  \cdot Q_3\Big(\curl \frac{\delta \fun{H}(u)}{\delta u} \Big)
  \curl \frac{\delta \fun{G}(u)}{\delta u} dx,
\end{equation}
and the corresponding evolution equation becomes
\begin{equation*}
  \partial_t u = -\widetilde{\curl} \Big[\kappa(u)
    Q_3\Big(\curl \frac{\delta \fun{H}(u)}{\delta u} \Big)
    \curl \frac{\delta \fun{S}(u)}{\delta u}\Big] \quad
  \text{in } \Phi' = H'(\curl,\Omega).
\end{equation*}
This bracket can be written in terms of the antisymmetric bilinear operator
\begin{equation*}
  \mathcal{E}_3(X,Y) = [X,Y]_{\R^3} \coloneqq X \times Y, \qquad X,Y \in \R^3,
\end{equation*}
which is the standard Lie bracket in $\R^3$. In fact Eq.~(\ref{eq:d-curl-curl})
can be shown to be a special case of the following
(cf. \cite{Morrison1986,Gay-Balmaz2014}): 
\begin{equation}
  \label{eq:d-curl-curl-2}
  (\fun{F},\fun{G}) = \int_\Omega 
  \Big[\curl \frac{\delta \fun{F}(u)}{\delta u}, 
    \curl \frac{\delta \fun{H}(u)}{\delta u} \Big]_{\R^3} \Gamma
  \Big[\curl \frac{\delta \fun{G}(u)}{\delta u}, 
    \curl \frac{\delta \fun{H}(u)}{\delta u} \Big]_{\R^3} dx,
\end{equation}
where $\Gamma(u) \in \mathcal{B}\big(L^2(\Omega;\R^3)\big)$ is a symmetric,
positive definite operator; Eq.~(\ref{eq:d-curl-curl}) is obtained for
$\Gamma(u) = \kappa(u)$, the multiplication operator by the function
$\kappa(u;\cdot)$. Equation (\ref{eq:d-curl-curl-2}) is a metric double
bracket of the form~(\ref{eq:gmdb}).
Applied to magnetic fields this gives a generalization of the relaxation method
of Chodura and Schl\"uter \cite{Chodura1981} with constant pressure, cf. also
Moffatt~\cite{Moffatt2021}. An explicit example will be briefly reported in
Section~\ref{sec:app-Beltrami} below.

\subsection{Application to nonlinear Beltrami fields} 
\label{sec:app-Beltrami}

So far we have focused on examples of equilibrium problems for which complete
relaxation of the solution is essential. We have shown that a metriplectic
relaxation method for such problems should be based on metric brackets that are
minimally degenerate (or specifically degenerate if more than one constraint is
considered), cf.\  Section~\ref{sec:remarks-relax-equil}. Diffusion-like
brackets do not appear to be appropriate for those problems.

For sake of completeness, we address an example of equilibrium problems that are
characterized as minima of a function subject to topological constraints. This
is the case of nonlinear Beltrami fields, for which the variational principle is
given in Lagrangian representation, cf. Section~\ref{sec:beltrami-problem} and
\ref{sec:VP}. Full three-dimensional MHD equilibria satisfy the same type of
Lagrangian variational principle. 

Because of their larger null space, metric double brackets of the
form~(\ref{eq:d-curl-curl-2}) allow us to obtain an evolution equation that
preserves the necessary constraints. To this end, we
identify the state variable $u(t)$ with a magnetic field $u(t,x) = B(t,x)$
on a simply connected, bounded domain $\Omega \subset \R^3$. More specifically,
we assume that $B(t) \in V \subset \Phi$, where $\Phi$ is the same space defined
in Section~\ref{sec:c-curl-curl}. The evolution equation for $B(t)$ is given by
Eq.~(\ref{eq:metric-system-equation}) and bracket (\ref{eq:d-curl-curl-2}),
with $\Gamma = I$, the identity operator, for simplicity, and with entropy and
Hamiltonian given in Eq.~(\ref{eq:Beltrami-S-H}). Therefore, if an orbit of this
metriplectic system completely relaxes, it would converge in time to a 
\emph{linear} Beltrami field, cf. Section~\ref{sec:beltrami-problem}.
In fact, this bracket has a much larger null space. The equilibrium
points, given by $B \in \Phi$ such that $(\fun{S},\fun{S})(B) = 0$, satisfy
the Beltrami condition $(\curl B) \times B = 0$, in the weak formulation
discussed in Section~\ref{sec:beltrami-problem}. 

The resulting evolution equation amounts to 
\begin{equation}
  \label{eq:CS-bracket-evolution}
  \partial_t B = \widetilde{\curl} \big[
    B \times \big(B \times \curl B \big) \big],
\end{equation}
where we have accounted for the identity
$\curl[\delta \fun{H}(B) /\delta B] = \curl A = B$. If $B$ is sufficiently
regular, we can replace $\widetilde{\curl}$ by $\curl$ and write
\begin{equation}
  \label{eq:CS-method}
  \left\{
  \begin{aligned}
    \partial_t B - \curl \big[V \times B\big] &= 0, && \text{ in } \Omega, \\
    V - (\curl B) \times B &=0, && \text{ in } \Omega, \\
    n \cdot B &=0, &&  \text{ on } \partial \Omega\,, 
  \end{aligned}
  \right.
\end{equation}
which shows that the magnetic field $B$ is advected by the flow of the effective
``velocity'' field $V$. Hence, so long as the solution remains smooth, the field
lines of $B$ are frozen into the flow (actually flux), i.e., they cannot change
their topological properties. This is a much stronger constraint than just
preservation of magnetic helicity $2\fun{H}(B)$. As anticipated,
Eq.~(\ref{eq:CS-method}) is exactly the relaxation method of Chodura and
Schl\"uter \cite{Chodura1981} with constant pressure. The method itself is
therefore not new. In solar physics this relaxation method is known as the
magneto-frictional method \cite{Yang1986,Klimchuk1992,Valori2007,Valori2010},
and it has been applied to the computation of force-free magnetic fields
in coronal active regions \cite{Wiegelmann2012}. The bracket formalism, however,
opens the way to possible generalizations by means of different choices of the
kernel. This possibility will be explored in future work.
Since this relaxation method is based on the MHD induction equation,
smoothness of the solution may be lost in a finite time due to the formation
of current sheets, as conjectured by Parker and discussed in
Section~\ref{ssec:Oequil}. In this work we allow for weak solutions. In fact,
Eq.~(\ref{eq:CS-bracket-evolution}) is reformulated with
$B(t) \in H_0(\div,\Omega)$ only. More precisely, we search for
$B \in C^1\big(([0,T]; H_0(\div,\Omega)\big)$ and auxiliary variables
$E, j, H \in C\big([0,T]; H_0(\curl,\Omega)\big)$ satisfying
\begin{equation}
  \label{eq:weak-CS-method}
  \left\{
  \begin{aligned}
    \partial_t B + \curl E &= 0,
    \qquad \text{in } H_0(\div,\Omega), \\
    (H, G)_{L^2}-(B, G)_{L^2} &= 0,
    \qquad \forall G \in H_0(\curl,\Omega), \\
    (j, k)_{L^2}-(B, \curl k)_{L^2} &= 0,
    \qquad \forall k \in H_0(\curl,\Omega), \\
    (E,F)_{L^2} - (H \times j, H \times F)_{L^2} &= 0,
    \qquad \forall F \in H_0(\curl,\Omega),
  \end{aligned}
  \right.
\end{equation}
pointwise in time, with $F, G, k \in H_0(\curl,\Omega)$ being test functions.
Faraday's equation is posed strongly as an identity in $H_0(\div,\Omega)$.
As a result the condition $\div B = 0$ is preserved. One can also show
directly that a solution of this system preserves magnetic helicity and
dissipate magnetic energy, that is, the properties of the bracket hold for
this reformulation. In particular, we observe that
\begin{equation*}
  \frac{1}{2}\frac{d}{dt} \int_\Omega |B|^2 dx =
  - \big\|j \times H\big\|_{L^2}^2,
\end{equation*}
and the equilibrium condition is $j \times H = 0$, which is the weak
formulation of the Beltrami condition anticipated in
Section~\ref{sec:beltrami-problem}.
Here we present a single numerical experiment obtained by a structure-preserving
numerical scheme \cite{Bressan2023}, which we derived by adapting the
finite-element exterior calculus (FEEC) scheme of Hu et al.\  \cite{Hu2021} for
incompressible MHD. The scheme provably preserves the Hamiltonian (magnetic
helicity), the constraint $\div B = 0$, and the monotonic behavior of entropy
(magnetic energy), but it does \emph{not} preserve the topology of the field
lines exactly. Similar work has been recently published by
He et al. \cite{He2025}. 
Previously, the magnetic relaxation problem has been dealt with by means of
Lagrangian \cite{Candelaresi2014} and finite difference \cite{Guo2016}
methods. More recently, a different kind of Lagrangian numerical scheme has been
proposed \cite{Padilla2022,Gross2023}, which is based on the discretization of
the domain in narrow flux tubes, each one being relaxed by a curve-shortening
flow in a modified metric. This interesting scheme therefore preserves the
topological properties of the field lines. Yet with the domain discretized by a
finite set of lines, the reconstruction of the magnetic field at arbitrary
points of the domain, needs to be addressed.  

Before describing the considered test case, let us address the role of
magnetic-helicity conservation. In a domain $\Omega$ where the Poincar\'e
inequality for the $\curl$ operator holds true, magnetic helicity
$H_m(B) = 2\fun{H}(B)$ provides a lower bound for magnetic energy.
In fact, one has \cite{Arnold1998}
\begin{equation}
  \label{eq:Hm-bound}
  \big|H_m(B)\big| \leq \|A\|_{L^2(\Omega)} \|B\|_{L^2(\Omega)} \leq
  C_P \|B\|_{L^2(\Omega)}^2.
\end{equation}
For an initial condition $B_0$ with $H_m(B_0) = 0$, it is possible that the
solution of~(\ref{eq:CS-method}) with the chosen boundary conditions
($B \cdot n = 0$ on $\partial \Omega$) relaxes to a trivial field, i.e.,
$|B(t)| \to 0$ for $t \to +\infty$, even if the topology of the field lines is
preserved. This is the case for the class of one-dimensional solutions
of~(\ref{eq:CS-method}), which are obtained, for instance, by assuming
\begin{equation*}
  B(t,x) = \begin{pmatrix}
    0 \\  0  \\  b(x_1)
  \end{pmatrix}
  = \curl
  \begin{pmatrix}
    0  \\  a(x_1)  \\   0
  \end{pmatrix},
\end{equation*}
where $a'(x_1) = b(x_1)$, $x = (x_1,x_2,x_3)$ and the field is constant in
$(x_2,x_3)$. We have $A \cdot B = 0$ and thus $H_m(B)=0$. Correspondingly,
equations~(\ref{eq:CS-method}) reduce to 
\begin{equation*}
  \partial_t b - (b^2 b')' = 0,
\end{equation*}
where a prime denotes spatial differentiation. This is a standard heat equation.
The steady states are solution to $b^2 b' =$ constant, which gives
$b(x_1) = (c_1 + c_2 x_1)^{1/3}$, with $c_1, c_2$ being integration constants.
For instance homogeneous boundary conditions for $b$ on an interval yields the
unique solution $b(x_1) = 0$. Magnetic relaxation in one dimension has been
recently considered by Yeates \cite{Yeates2022} and compared to the
corresponding full MHD relaxation, thus exposing the limitations of the
magneto-frictional method. 

It is therefore meaningful to consider initial conditions with non-trivial
magnetic helicity $\fun{H}(B) = \frac{1}{2} H_m(B) \not= 0$. We construct such
an initial condition from the vector potential 
\begin{equation*}
  \tilde{A}(x) \coloneqq a \begin{pmatrix}
    (n/\sqrt{m^2+n^2}) \sin(\pi m x_1) \cos(\pi n x_2) \\
    -(m/\sqrt{m^2+n^2}) \cos(\pi m x_1) \sin(\pi n x_2) \\
    \sin(\pi m x_1) \sin(\pi n x_2)
  \end{pmatrix},
\end{equation*}  
with $a \in \R$ and $m, n \in \N$ being parameters (we shall choose $m=n=1$).
For any choice of the parameters, $\tilde{A}$ is divergence-free and a linear
Beltrami field, periodic is all variables; it is an eigenvalue of $\curl$
corresponding to the eigenvalue $\lambda_{m,n} = \pi(m^2+n^2)^{1/2}$.
We localize this field in the unit cube $\ol{\Omega} = [0,1]^3$ by means of the
cut-off function 
\begin{equation*}
  \eta(x) \coloneqq \chi(x_1) \chi(x_2) \chi(x_3), \qquad
  \chi(y) \coloneqq y^2 (1-y)^2, \quad y \in [0,1].
\end{equation*}
We have $\eta(x) = 0$ and $\nabla \eta(x) = 0$ for $x \in \partial \Omega$ since
both $\chi$ and its derivative $\chi'$ vanish for $y=0$ and $y=1$. We construct
the initial condition on the domain $\ol{\Omega} = [0,1]^3$ as
\begin{equation}
  \label{eq:Beltrami-ic}
  \begin{aligned}
    A_0 &\coloneqq \eta \tilde{A}, \\
    B_0 &\coloneqq \curl A_0
    = \nabla\eta \times \tilde{A} + \eta \curl \tilde{A},
  \end{aligned}
\end{equation}
and $A_0 \in H_0(\curl,\Omega)$, $B_0 \in H_0(\div,\Omega)$ with $\div B_0=0$.
As for magnetic helicity, 
\begin{equation*}
  H_m(B_0) = 2\fun{H}(B_0) = \int_\Omega A_0 \cdot B_0 dx =
  \lambda_{m,n} \|A_0\|^2_{L^2(\Omega)} > 0.
\end{equation*}
We can control the initial helicity by means of the parameters $a \in \R$ and
$m,n \in \N$. The magnetic field (\ref{eq:Beltrami-ic}) is represented in
Fig.~\ref{fig:B_sol} (top row), by means of a Poincar\'e plot using the plane
$x_2=1/2$ as a Poincar\'e section. From the plot (Fig.~\ref{fig:B_sol}, top-left
panel), one can identify a rather complex topology of the field lines, with, in
particular, four large islands of period two that are rendered in three
dimensions in Fig~\ref{fig:B_sol}, top-right panel, by tracing a few selected
field lines for each island.

\begin{figure}
  \centering
  \includegraphics[scale=0.195]{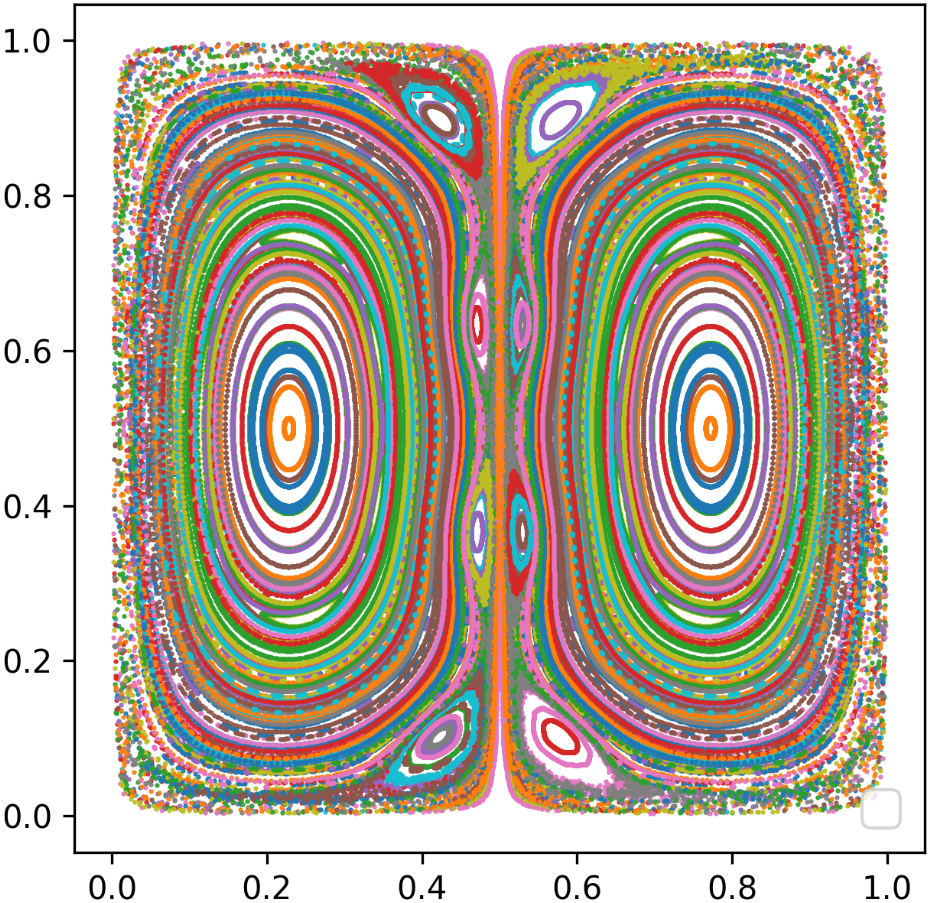}
  \includegraphics[scale=0.25]{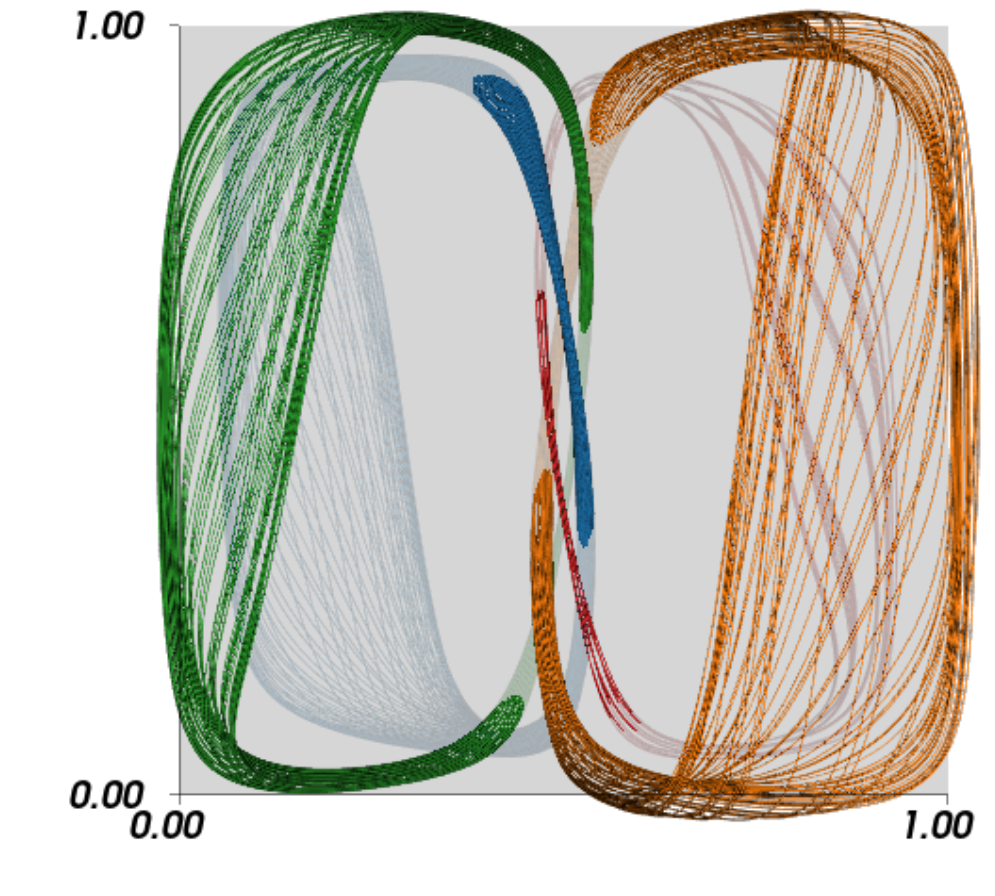}
  \includegraphics[scale=0.2]{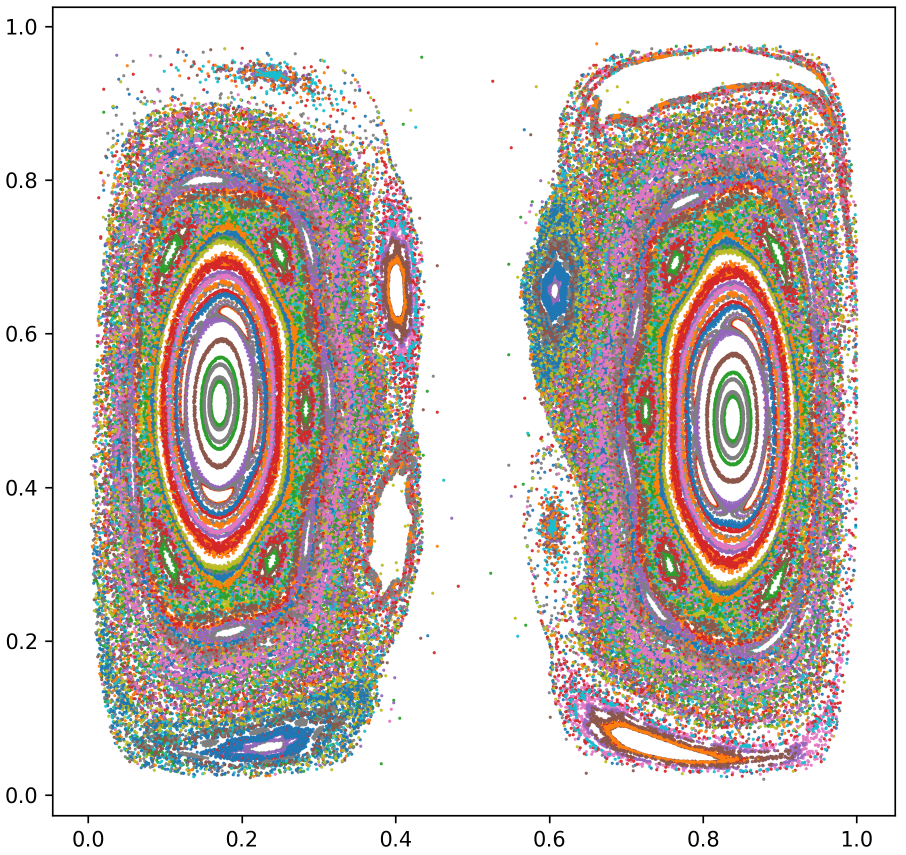}
  \includegraphics[scale=0.245]{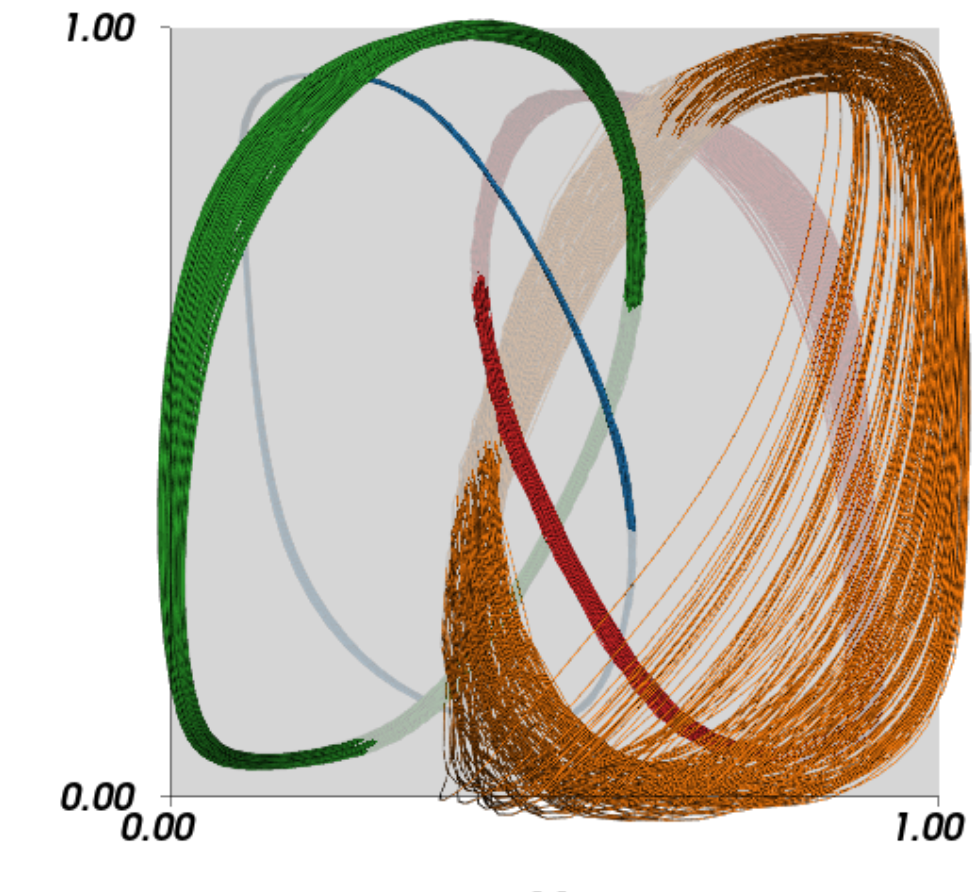}
  \caption{\label{fig:B_sol} Poincar\'e plot in the plane $x_1$-$x_2$ and
    selected field lines of the magnetic field $B$, for the initial condition
    (top row) and the final state (bottom row), after the relaxation process.
    The initial condition is given in Eq.~(\ref{eq:Beltrami-ic}) with $m=n=1$
    and $a=1$, while the evolution equation is the magneto-frictional
    method~(\ref{eq:CS-method}). The selected field lines correspond to the four
    large islands visible in the Poincar\'e plot around $x_1=1/2$. The
    Poincar\'e section is defined by $x_2=1/2$ and it is shown in light gray in
    the panels on the right-hand side. The initial points of the field lines are
    sampled differently for the initial and finals state of the field, hence
    they are not exactly the evolution of one another.} 
\end{figure}

\begin{figure}
  \centering
  \includegraphics[scale=0.21]{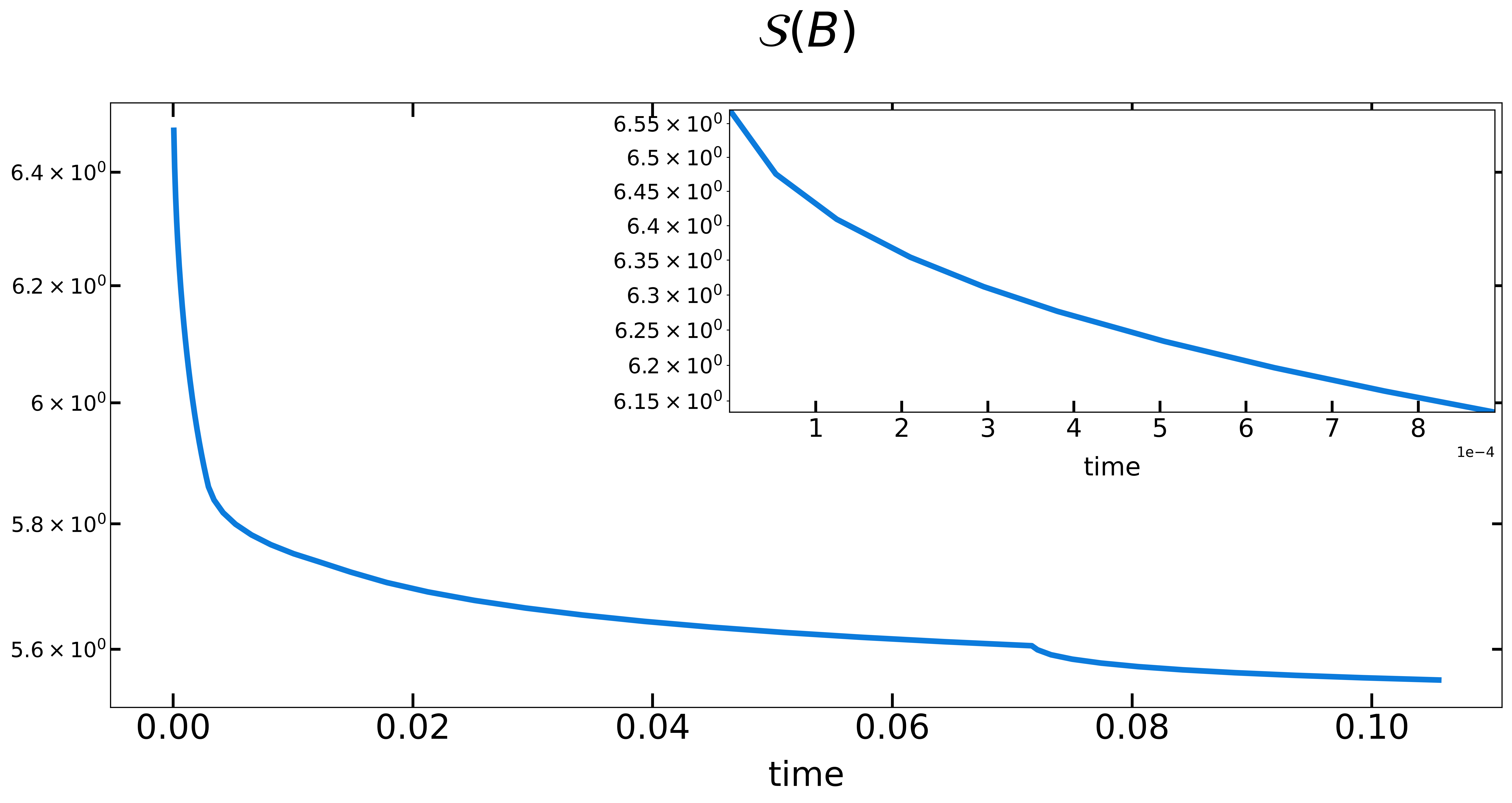}
  \includegraphics[scale=0.21]{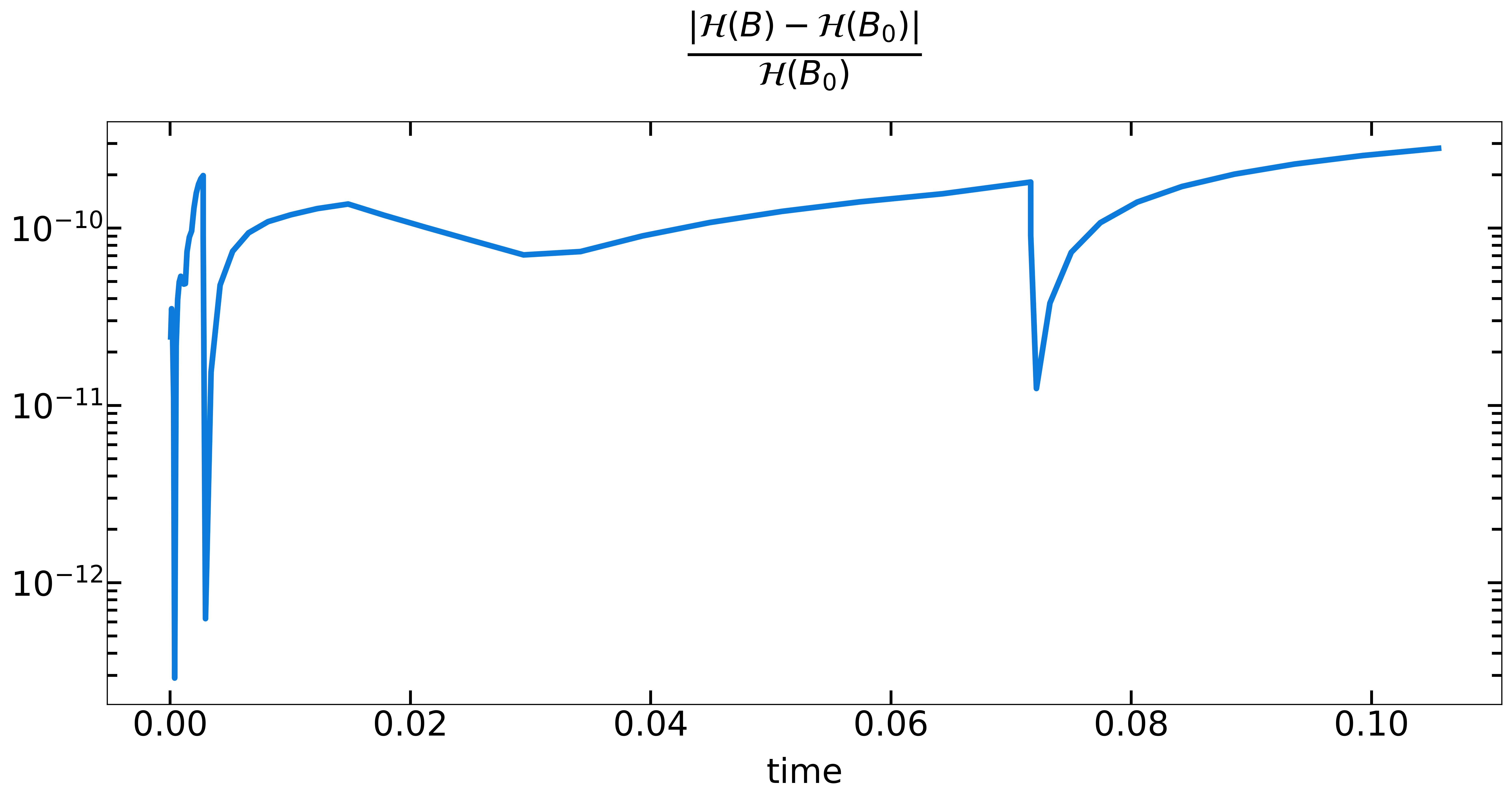}
  \includegraphics[scale=0.21]{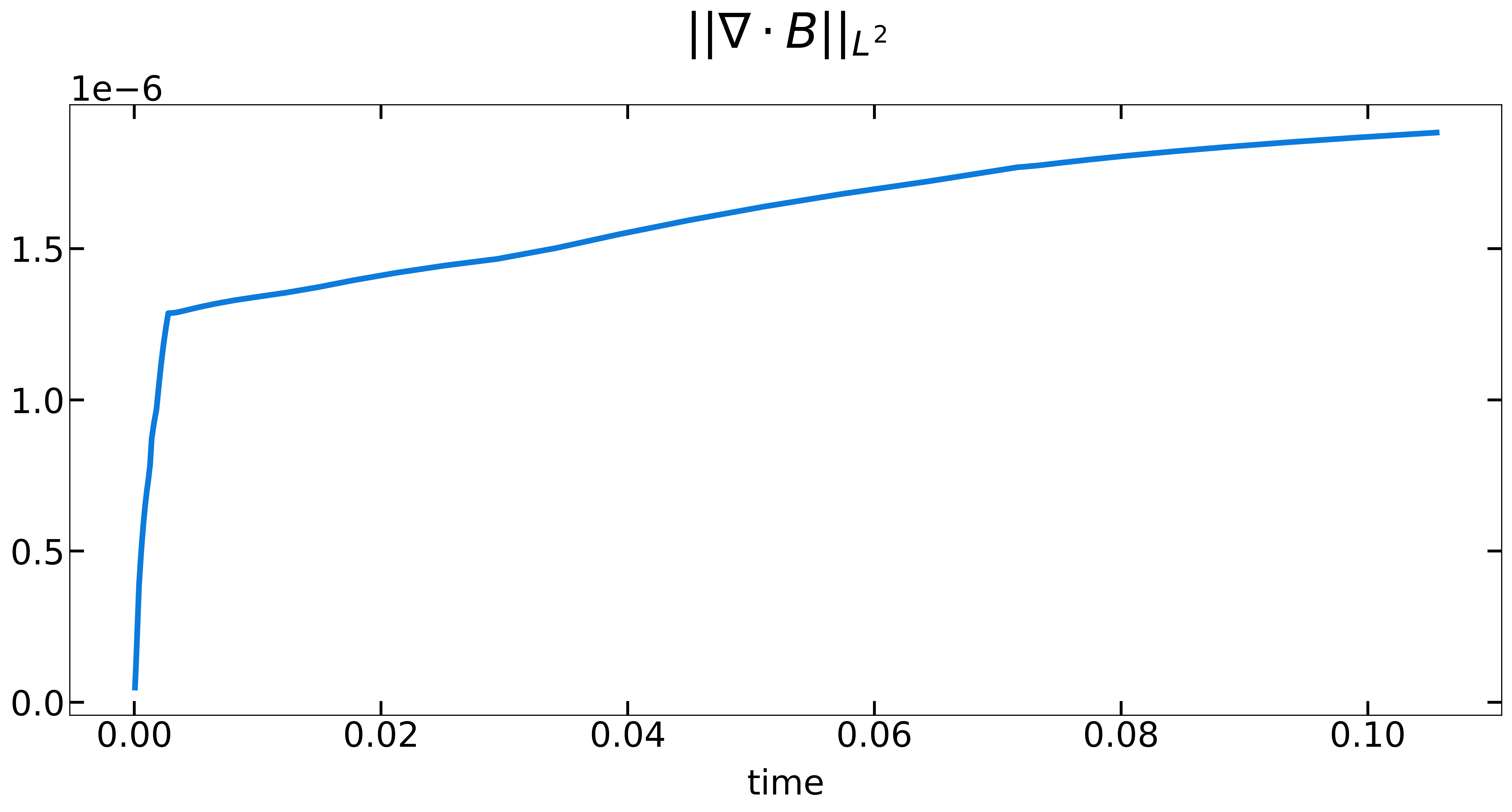}
  \includegraphics[scale=0.21]{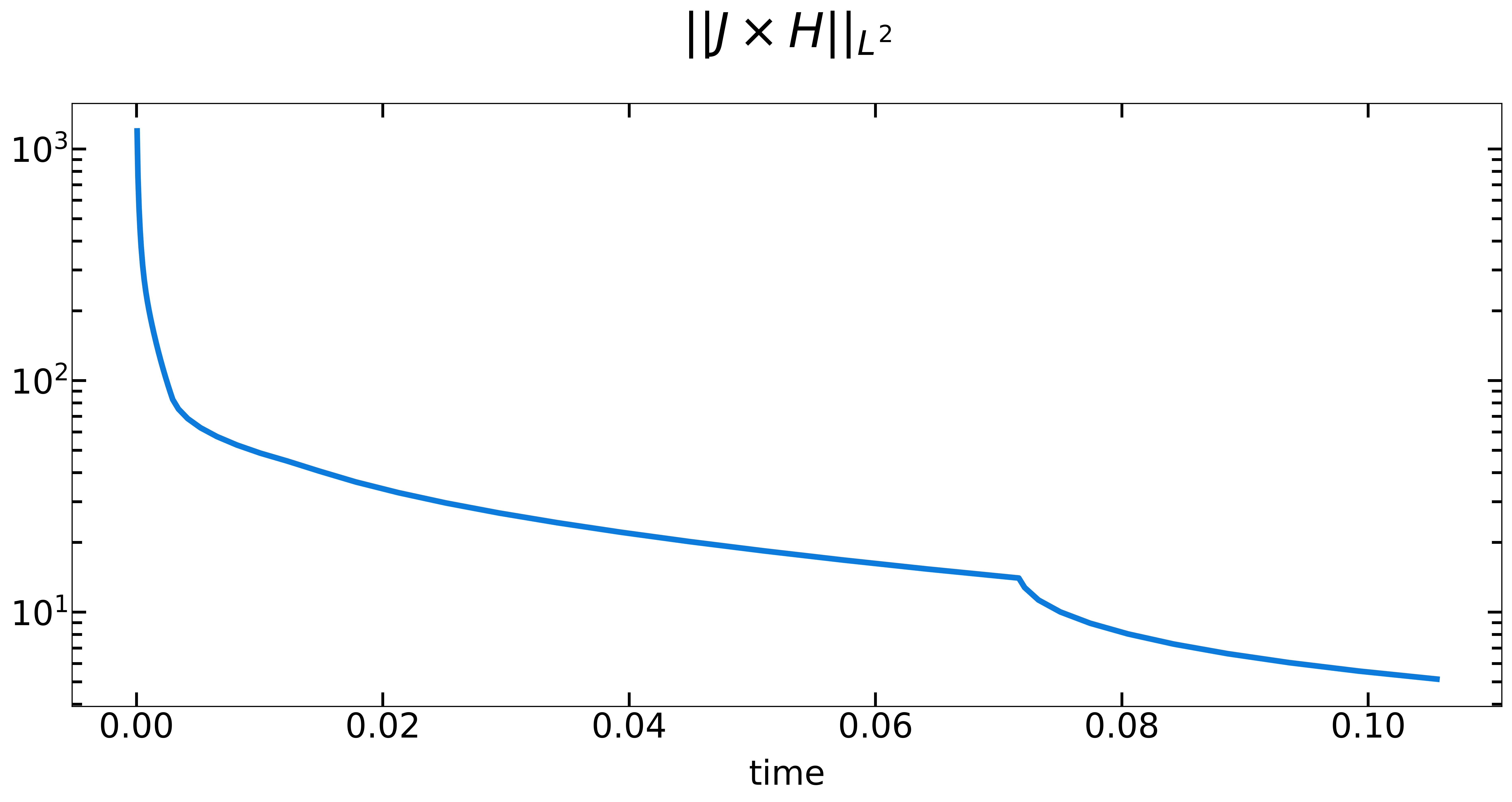}
  \caption{\label{fig:B_diag} From top to bottom, time evolution of the entropy
    (magnetic energy), the relative variation of the Hamiltonian (magnetic
    helicity), the $L^2$ norm of the divergence, and of the vector $j \times H$,
    where $j = \widetilde{\curl} B$ is the current density (computed with the
    weak $\curl$ operator), and $H$ is the $L^2$-orthogonal projection of $B$
    onto $H_0(\curl,\Omega)$. The equilibrium condition for the considered
    numerical scheme reads $j \times H = 0$. Corners and jumps in the time
    traces corresponds to restarts with larger time steps.}
\end{figure}

The time evolution of the initial condition~(\ref{eq:Beltrami-ic}) is obtained
numerically by means of the FEEC scheme, which has been implemented in FEniCS as
in the case of the tests reported in Section~\ref{sec:coll-like-metr}. Here, we
use a relatively course resolution: the domain $\ol{\Omega} = [0,1]^3$ has been
discretized by a uniform grid of $32^3$ nodes. The time step is adapted, but
limited to a maximum of $10^{-3}$. The obtained final state is represented in
Fig.~\ref{fig:B_sol} (bottom row), again by means of a Poincar\'e plot,
Fig.~\ref{fig:B_sol}, bottom-left panel. The four large islands appear to have
been preserved by the relaxation process and are rendered in
Fig.~\ref{fig:B_sol}, bottom-right panel. Other large islands appear to have
been preserved, but a closer analysis shows that the field line topology is not
exactly preserved \cite{Bressan2023}. Indeed the numerical scheme preserves
magnetic helicity only, and this alone does not completely guarantee the exact
preservation of the field line topology. 

When using Poincar\'e plots for the visualization of the field line topology,
one should address the effect of the error of the projection onto the
finite-element space used for the representation of the magnetic field. In this
case, $B$ is approximated in the space of linear Raviart-Thomas elements for
computations, and the discrete approximation is further projected onto the space
of linear Lagrange elements, for visualization purposes (precise definitions of
these finite-element spaces can be found,  e.g.,  in the FEniCS book 
\cite{Logg2012}). Lagrange elements are nodal so that the degrees of freedom
coincide with the value of the field at the grid nodes and can be directly
interpolated. We have qualitatively checked the effect of all those operations
on the results by comparing the Poincar\'e sections of the analytical field
(\ref{eq:Beltrami-ic}) with that of its projection onto the finite element space
on the considered grid.

Figure \ref{fig:B_diag} shows the evolution in time of the main quantities of
interest, namely, the entropy, the relative variation of the Hamiltonian with
respect to its initial value, and the $L^2$-norms of $\div B$ and of the vector
$j \times H$, where $j = \widetilde{\curl} B$ is (proportional to) the current
density (computed weakly), while $H(t)$ is the $L^2$-orthogonal
projection of $B(t) \in H_0(\div,\Omega)$ onto the space
$H_0(\curl,\Omega)$.
For sufficiently regular fields, we have $j \times H = (\curl B) \times B$, but
in general $B(t) \in H_0(\div,\Omega)$ and
$H(t) \in H_0(\curl,\Omega)$ are different and the numerical equilibrium
condition is $j \times H = 0$. We recall that in this application the entropy
and the Hamiltonian coincide with the magnetic energy and the magnetic helicity,
respectively. From Fig.~\ref{fig:B_diag}, we see that the qualitative properties
of the relaxation method are preserved: the entropy decreases monotonically, the
Hamiltonian is constant within a relative error of $10^{-10}$, and $\div B = 0$
within an absolute error of $2 \times 10^{-6}$ measured by the norm in
$L^2(\Omega)$. We also see that $j\times H$ decreases, which indicates that the
solution is approaching a configuration that satisfies the Beltrami condition
$j \times H = 0$.

\section{Summary and conclusions}
\label{sec:conclusions}

We have considered the question of whether, given a metriplectic system with
Hamiltonian function $\fun{H}$ and (dissipated) entropy $\fun{S}$, the orbit
with initial condition $u_0$, in the long-time limit, converges to a minimum of
$\fun{S}$ on the surface of constant Hamiltonian
$\{ u \colon \fun{H}(u) = \fun{H}(u_0)\}$.
This question is interesting in itself, since many physical systems are
metriplectic, but our work is mainly motivated by the idea of utilizing
artificial metriplectic systems as relaxation methods for the computation of
equilibria of fluids and plasmas. 

We have shown that, in general, the answer is negative. For finite-dimensional
metriplectic systems, we have given a sufficient condition,
Proposition~\ref{th:mod-Lyapunov}, under which an orbit relaxes to a constrained
entropy minimum. These results are proven by means of a natural extension of the
Lyapunov stability theorem for systems with constants of motion. One key
assumption in Proposition~\ref{th:mod-Lyapunov} is that the metric bracket 
should be specifically degenerate with respect to a given finite set of
constants of motion, or minimally degenerate if the Hamiltonian is the only
constant of motion. 
Recall, a metric bracket is specifically degenerate if its null space is spanned
by the gradients of a finite number of invariant functions $\fun{I}^\alpha$,
the constants of motion, and minimally degenerate if its null space is spanned
by the gradient of the Hamiltonian alone. The introduction of the concepts of
specifically and minimally degenerate brackets is justified by
Proposition~\ref{th:mod-Lyapunov} for finite dimensional systems, and
generalized without proofs to the case of infinite-dimensional systems in
section~\ref{sec:infinite-dim}.
In addition, we have generalized the Polyak--{\L}ojasiewicz condition for the
exponential convergence of gradient flows, to the case of metriplectic system.
The finite-dimensional results have been extended to the infinite-dimensional
case without proof, and supported by a number of examples.
In Section~\ref{sec:simple}, we have studied quite in detail a specific
equilibrium problem for the reduced Euler equation. We constructed 
two relaxation methods based upon two metriplectic systems: one is specifically
degenerate and the other one is not. The results of our numerical experiments
with these two relaxation mechanisms can be explained in terms of the
theoretical results. 

In the second part of the paper, we have proposed a class of metric
brackets that have been put forward as a generalization of the Landau
collision operator, which is included in the class as a special case,
Section~\ref{sec:coll-like-metr}. For this reason we propose the term
``collision-like'' brackets. Checking if collision-like brackets are
specifically degenerate reduces to checking two separate conditions, and this
is usually simpler.

We demonstrate the use of such brackets 
as the basis for relaxation methods for various equilibrium problems for both
the reduced Euler equations and axisymmetric MHD equilibria, the latter being
equivalent to solving the Grad-Shafranov equation. These are well-known
equilibrium problems, for which various methods of solution exist, and are used
here only as a proof of concept. From a purely computational point of view, the
direct solution of the Grad-Shafranov equation, in particular, is usually faster
than the relaxation method constructed here, but the latter provides a recipe
that can be adapted to more complicated equilibrium problems.  
Specifically, we have in mind equilibrium problems in kinetic theories,
such as the Vlasov-Maxwell system, drift- and gyro-kinetic equations. There is
also the possibility of generalizing the variational formulation for the
Grad-Shafranov equation to non-monotonic equilibrium profiles, which have not
been treated in the present work. 

At last, we have discussed a simplified class of brackets, in which the
nonlocal nature of collision-like brackets is removed,
Section~\ref{sec:diff-like-metr}. These brackets lead to
diffusion-like evolution equations and reduce to a number of known brackets in
special cases. Without the nonlocality, characteristic of collision-like
operators, these diffusion-like brackets are not minimally degenerate, but they
can still be used to construct relaxation methods for equilibrium problems with
stronger constraints. In the simplest case, we recover the known magnetic
relaxation method of Chodura and Schl\"uter, and we give a numerical example
based upon a structure-preserving numerical scheme obtained via finite-element
exterior calculus (FEEC).

The results presented in this paper are far from complete. The proof of
convergence for infinite-dimensional systems has not been addressed, and the
formal argument for the minimal degeneracy of the collision-like bracket in
example~\ref{ex:old-clb} is incomplete as it relies on a technical condition
being true. 
  
Our results however can have consequences in designing relaxation methods for
equilibrium problems.
Specifically, for equilibrium problems that can be characterized by a
variational principle of the form (\ref{eq:entropy-principle}), that is
finding a (local) minimum of entropy subject to the constraint of energy and 
possibly other quantities being constant, one should make sure that the
relaxation method is based on specifically degenerate brackets, with the
kernel generated by the gradients of exactly the same quantities defining the
constraints. If this is not the case, examples show that the relaxation method
may not find a solution of the considered problem. Equilibria of the reduced
Euler equations, Grad-Shafranov equilibria, and linear Beltrami fields are
examples of problems that belong to this category. Alternatively, one might
search for equilibria that can be characterized by entropy minima subject to
stronger constraints. This is the case, for instance, of nonlinear Beltrami
fields, which can be characterized as minima of magnetic energy over the set of
fields that are smooth deformations (push-forward) of a given initial
configuration, cf. Section~\ref{sec:beltrami-problem} and
Appendix~\ref{sec:VP}. For this type of problem the metric bracket cannot be
specifically degenerate, but must be designed to satisfy the needed constraints.
  
The actual implementation of the relaxation methods can be more subtle, due to
the fact that, depending on the initial condition, the corresponding evolution
equation might not admit a smooth solution, so that low-regularity solutions
need to be considered. As an example in Section~\ref{sec:app-Beltrami}, we
have discussed the case of a relaxation method for Beltrami fields, and the
need for a weak formulation of the evolution equation and the corresponding
equilibrium condition.

As for physical metriplectic systems, example~\ref{ex:Landau} in
Section~\ref{sec:c-general} shows how the techniques developed here could be  
used to study physically relevant metric brackets (in this example, Morrison's
bracket for the Landau collision operator). By checking if the bracket is
specifically degenerate, one can gain some information on the long-time limit
of the solution. We find that, in general, Morrison's bracket is not
specifically degenerate with respect to the three collision invariants, but
this is only due to the fact that the Landau collision operator acts pointwise
in space.

\appendix

\section{Bilinear forms and Leibniz identity}
\label{sec:Leibniz}

In this appendix we recall the basic definition of a Poisson structure and give
a formal derivation of~(\ref{eq:J-K-kernels}). The material is standard but not
always easy to find in textbooks. Let $V \subseteq L^2(\R^d,\mu;\R^N)$ be a
Banach space of squared-integrable functions over a domain
$\Omega \subseteq \R^d$ with values in $\R^N$. We define functional
derivatives of a function in $C^1(V)$ as an element of $L^2(\Omega,\mu;\R^N)$
according~(\ref{eq:functional-derivatives}) with the pairing given by the
$L^2$ scalar product,  
\begin{equation*}
  \langle w,v \rangle = \int_\Omega  w(x) \cdot  v(x) d\mu(x),
\end{equation*}
for all $v \in V$ and $w \in L^2(\Omega,\mu;\R^N)$, where $\mu$ is a
measure on $\Omega$, e.g., the Lebesgue measure.

We consider in particular the consequences of the Leibniz identity on the
structure of the bracket.
Let $\alpha: C^\infty(V) \times C^\infty(V) \to C^\infty(V)$
be a generic bi-linear map satisfying the symmetry condition 
\begin{equation}
  \label{eq:alpha-symmetry}
  \alpha(\fun{F},\fun{G}) = \sigma \alpha(\fun{G},\fun{F}), \quad \sigma^2 = 1,
\end{equation}
and the Leibniz identity,
\begin{equation}
  \label{eq:alpha-Leibniz}
  \alpha(\fun{F},\fun{G}\fun{H}) = \alpha(\fun{F},\fun{G}) \fun{H} +
  \fun{G} \alpha(\fun{F},\fun{H}).
\end{equation}
Hence, $\alpha$ is either symmetric or antisymmetric, depending on whether
$\sigma = 1$ or $\sigma = -1$, and a derivation in all arguments. Both Poisson
and metric brackets, cf. Section~\ref{sec:metriplectic}, are special cases of
such a bilinear form. 

Given the restriction functions $\fun{R}_{x,j}(u) = u_j(x)$, which are defined
for functions $u$ that are at least continuous, let 
\begin{equation*}
  \mathscr{A}_{ij}(u;x,x') \coloneqq
  \alpha(\fun{R}_{x,i}, \fun{R}_{x',j})(u),
\end{equation*}
and we have
\begin{equation*}
  \mathscr{A}_{ij}(u;x,x') = \sigma \mathscr{A}_{ji}(u;x',x).
\end{equation*}
We claim that a bilinear form $\alpha$ satisfying~(\ref{eq:alpha-symmetry})
and~(\ref{eq:alpha-Leibniz}) is such that 
\begin{itemize}
\item[1)] it vanishes on constants, that is,
  \begin{equation*}
    \fun{G}(v) = a \in \R \text{ for all $v$} \implies
    \alpha(\fun{F}, \fun{G}) = 0 \text{ for all $\fun{F}$;}
  \end{equation*}
\item[2)] it has the representation, for any $\fun{F},\fun{G}$,
  \begin{equation*}
    \alpha(\fun{F},\fun{G}) = \sum_{i,j} \int_\Omega \int_\Omega
    \frac{\delta \fun{F}(u)}{\delta u_i} (x)
    \mathscr{A}_{ij}(u;x,x') \frac{\delta \fun{F}(u)}{\delta u_j} (x') 
    d\mu(x') d\mu(x).
  \end{equation*}
\end{itemize}
In particular, 2) implies Eqs.~(\ref{eq:J-K-kernels}).

Claim 1) follows from the bilinearity of the form and the Leibniz property.
If $\fun{G}(v) = a$ is a constant function, for any pair of functions $\fun{F}$
and $\fun{H}$, one has
\begin{align*}
  a \alpha(\fun{F}, \fun{H}) &= \alpha(\fun{F}, a\fun{H}) =
  \alpha(\fun{F}, \fun{G} \fun{H}) \\
  &= \alpha(\fun{F}, \fun{G} ) \fun{H} + a \alpha(\fun{F}, \fun{H}),
\end{align*}
from which one deduces $\alpha(\fun{F}, \fun{G} ) = 0$ as claimed.

Claim 2) requires Taylor's formula: For any $u,u_0 \in V$
\begin{equation*}
  \fun{F}(u) = \fun{F}(u_0) + \theta_{\fun{F}}(u_0,u)(u-u_0),
\end{equation*}
where, with $v = u-u_0$,
\begin{align*}
  \theta_{\fun{F}}(u_0,u)v &= \int_0^1 D\fun{F}\big((1-t)u_0 + tu\big)v dt \\
  &= \int_0^1 \int_\Omega  \frac{\delta \fun{F}}{\delta u}
  \big((1-t)u_0 + tu\big) \cdot v d\mu(x) dt \\
  &= \sum_{i} \int_0^1 \int_\Omega  \frac{\delta \fun{F}}{\delta u_i}
  \big((1-t)u_0 + tu\big)  v_i d\mu(x) dt.
\end{align*}
For a fixed $u_0$, $\fun{F}(u_0)$ and $\fun{G}(u_0)$ are constants and using
claim 1) we have 
\begin{align*}
  \alpha(\fun{F},\fun{G})
  &= \alpha\big(\fun{F} - \fun{F}(u_0),\fun{G}-\fun{G}(u_0)\big) \\
  &=\alpha\big(\theta_{\fun{F}}(u_0,\cdot)(\cdot-u_0),
  \theta_{\fun{G}}(u_0,\cdot)(\cdot-u_0) \big) \\
  &= \sum_{i,j} \int_0^1 \int_0^1 \int_\Omega \int_\Omega
  A_{ij} dt ds d\mu(x) d\mu(x'),
\end{align*}
where we have \emph{formally} exchanged the integrals and the bi-linear form and
for brevity we have defined
\begin{multline*}
  A_{ij} = \alpha\Big(
  \frac{\delta \fun{F}((1-t)u_0 + tu)}{\delta u_i}(x) (u(x)-u_0(x))_i, \\
  \frac{\delta \fun{G}((1-s)u_0 + su)}{\delta u_j}(x') (u(x')-u_0(x'))_j\Big).
\end{multline*}
In the latter expression the first argument of $\alpha$ is the product of the
functions $u \mapsto \delta \fun{F}((1-t)u_0 + tu)/\delta u_i|_x$ and
$u \mapsto (u - u_0)_j|_x$; analogously for the second argument. We can use
Leibniz identity and evaluate at $u=u_0$ with the result that
\begin{equation*}
  A_{ij} = \frac{\delta \fun{F}(u_0)}{\delta u_i}(x)
  \alpha\big( \fun{R}_{x,i} , \fun{R}_{x',j} \big)(u_0)
  \frac{\delta \fun{G}(u_0)}{\delta u_j}(x').
\end{equation*}
Therefore,
\begin{equation*}
  \alpha(\fun{F},\fun{G})(u_0) =
  \sum_{i,j} \int_\Omega \int_\Omega
  \frac{\delta \fun{F}(u_0)}{\delta u_i}(x)
  \mathscr{A}_{ij}(u_0,x,x') \frac{\delta \fun{G}(u_0)}{\delta u_j}(x')
  d\mu(x') d\mu(x),
\end{equation*}
and since the point $u_0 \in V$ is arbitrary this is claim 2). This argument
however is purely formal: The restriction function $\mathcal{R}_{x,j}$ is
defined only for functions that can be evaluated at a point $x$, e.g. continuous
functions, thus excluding $L^p$ functions for any $p$. We have assumed that
functional derivative exists and in exchanging the integral with the form
$\alpha$ one needs some continuity in order to pass to the limit after
approximating the integrals by finite sums. We have also freely exchanged the
integration order.

\section{On continuous bilinear forms on Hilbert spaces}
\label{sec:bilinear-forms}

Let $H$ be a Hilbert space over the fields of real numbers and with scalar
product $(\cdot,\cdot)$ and induced norm $\|\cdot\|$.
If $a : H \times H \to \R$ is a continuous, positive
bilinear form, where continuity means that there exists $C>0$ such that
\begin{equation*}
  0 \leq a(u,v) \leq C \|u\| \|v\|,
\end{equation*}
for all $u,v \in H$, then one can find a bounded, symmetric, positive-definite
linear operator $A:H\to H$ such that
\begin{equation*}
  a(u,v) = (u, Av),
\end{equation*}
for all $u,v \in H$. In order to find $A$ one observes that, for any $u$ fixed,
$\ell(v) = a(u,v)$ is a bounded linear function from $H \to \R$. The Riesz
representation theorem \cite[Theorem~8.12]{Hunter2001} yields a unique element
$w \in H$ such that 
\begin{equation*}
  \ell(v) = (w, v),
\end{equation*}
for all $v \in H$ and $\|w\| = \sup\{\ell(v): v\in H,\; \|v\|=1\} \leq C \|u\|$.
Since $w$ is unique, one can set $w = Au$, and $A$ is a bounded linear operator
on $H$. Then
\begin{equation*}
  a(u,v) = \ell(v) = (Au, v).
\end{equation*}
Positivity and symmetry follow from the positivity and symmetry of $a$.

\section{A Lagrangian variational principle for Beltrami fields}
\label{sec:VP}

In this appendix, we give a self-contained overview of the variational
principle for Beltrami fields. This is the constant-pressure version of the
variational principle for full MHD equilibria obtained by Kendall
\cite{Kendall1960}, and formulated in a modern language.

While the special case of linear Beltrami fields obey Woltjer's principle of
least magnetic energy at constant magnetic helicity,
cf. Section~\ref{sec:beltrami-problem}, general Beltrami fields minimize energy
under a much stronger constraint. 

On a bounded (not necessarily simply connected) domain $\Omega \subset \R^3$, we
fix a reference magnetic field $B_0 \in V$, where $V$ is the space of vector
fields $B \in [L^2(\Omega)]^3$ satisfying the conditions
\begin{equation}
  \label{eq:B0-psi0-conditions}
  \begin{aligned}
    \div B &= 0,  &&\text{ in } \Omega, \\
    n \cdot B &= 0, &&\text{ on } \partial \Omega.
  \end{aligned}
\end{equation}
For any $\Phi : \Omega \to \Omega$ an element of the group
$\mathrm{Diff}(\Omega)$ of diffeomorphisms of the domain $\Omega$, we define 
\begin{equation}
  \label{eq:push-forwards}
  B = \Phi_* B_0 = \frac{D\Phi B_0}{\det D\Phi} \circ \Phi^{-1}, 
\end{equation}
where $D\Phi$ is the Jacobian matrix of $\Phi$ (defined by
$(D\Phi)_{ij}=\partial_{x_j}\Phi_i$) and $\det D\Phi \not=0$ is its
determinant. Then $B$ is the push-forward of the fields $B_0$ with the map
$\Phi$. A direct calculation show that  
\begin{equation*}
  (\det D\Phi) \div B = \div B_0,
\end{equation*}
hence $\div B_0 = 0$ imply $\div B = 0$. Analogously one can show that the
boundary condition $B_0 \cdot n = 0$ on $\partial \Omega$ is preserved by the
diffeomorphism since if $x = \Phi(x_0)$ and $x_0 \in \partial \Omega$,
then $x \in \partial\Omega$, 
\begin{align*}
  n(x) &=
  \frac{\transpose{D\Phi^{-1}(x)} n(x_0)}{|\transpose{D\Phi^{-1}(x)}n(x_0)|},\\
  n(x) \cdot B(x) &= \frac{B_0(x_0) \cdot n(x_0)}
  {\det D\Phi \cdot |\transpose{D\Phi^{-1}(x)}n(x_0)|}.
\end{align*}
(This can be proven by recalling that for a sufficiently regular domain, near a
point $x_0 \in \partial \Omega$ there is a function $f$ such that $f>0$ in
$\Omega$ and $f=0$ on $\partial \Omega$; then $n(x_0) \propto \nabla f(x_0)$ and
this transforms like a $1$-form under $\Phi$.) 
Therefore the push-forward formula~(\ref{eq:push-forwards}) maps $B_0 \in V$
into $B \in V$.

Given $B_0 \in V$, we define the entropy functional on the group
$\mathrm{Diff}(\Omega)$ as the magnetic energy stored in $B$, that is
\begin{equation}
  \label{eq:full-MHD-entropy}
  \fun{S}(\Phi) = \fun{S}(\Phi; B_0)
  = \int_\Omega \frac{|B|^2}{8\pi} dx,
\end{equation}
where $B = \Phi_*B_0$.
The entropy depends parametrically on the initial field $B_0$.

We can now state the variational principle for~(\ref{eq:nonlinear-Beltrami}).
For any $B_0 \in V$ fixed, if $\Phi$ is a critical point
of~(\ref{eq:full-MHD-entropy}), then 
$B = \Phi_* B_0 \in V$ satisfies~(\ref{eq:nonlinear-Beltrami}). More explicitly,
this means that if 
\begin{equation}
  \label{eq:full-MHD-principle}
  \frac{d}{d\varepsilon} \fun{S}(\Phi^\varepsilon)\Big|_{\varepsilon = 0} = 0,
\end{equation}
for any curve $\varepsilon \mapsto \Phi^\varepsilon \in \mathrm{Diff}(\Omega)$
such that $\Phi^\varepsilon|_{\varepsilon=0} = \Phi$, then $B = \Phi_* B_0$ is a
Beltrami field obtained by mapping the given field $B_0$ by the action of the
diffeomorphism $\Phi$. 

In order to prove the variational principle~(\ref{eq:full-MHD-principle}) let
$B^\varepsilon (x) = \Phi^\varepsilon_* B_0(x)$, and introduce the
displacement field 
\begin{equation}
  \label{eq:v-epsilon}
  \xi^\varepsilon =
  \partial_\varepsilon \Phi^\varepsilon \circ (\Phi^\varepsilon)^{-1}.
\end{equation}
The definition is equivalent to
\begin{equation*}
  \partial_\varepsilon \Phi^\varepsilon (x_0) = \xi^\varepsilon(x), \quad
  x = \Phi^\varepsilon (x_0).
\end{equation*}
Then, one obtains
\begin{equation}
  \label{eq:dragging-B}
  \partial_\varepsilon B^\varepsilon =
  \curl(\xi^\varepsilon \times B^\varepsilon),
  \quad B^\varepsilon|_{\varepsilon=0} = B, \\
\end{equation}
and we compute from~(\ref{eq:full-MHD-principle}),
\begin{equation*}
  \frac{d}{d\varepsilon} \fun{S}(\Phi^\varepsilon)\Big|_{\varepsilon = 0} =
  - \int_\Omega \Big[\frac{1}{4\pi} (\curl B)\times B \Big]
  \cdot \xi dx 
  + \int_\Omega n\cdot[(\xi\times B)\times B] d\sigma= 0,
\end{equation*}
where $\xi = \xi^\varepsilon|_{\varepsilon=0}$. The boundary term vanishes due
to the identity $(\xi\times B)\times B = (B\cdot \xi) B - B^2 \xi$ and the
boundary condition $B\cdot n = 0$, $\xi \cdot n=0$; the latter follows from the
fact that $\Phi$ preserves the boundary, hence $\xi|_{\partial\Omega}$ must be
tangent to $\partial\Omega$. Since the derivative of $\fun{S}(\Phi^\varepsilon)$
has to vanish for any curve $\Phi^\varepsilon\in \mathrm{Diff}(\Omega)$ and thus
for every $\xi$, we deduce that $B$ satisfies~(\ref{eq:nonlinear-Beltrami}).   

We remark that, since $B$ is the push-forward of a known field $B_0$, the
field-line topology of $B$ is the same as that of $B_0$. Magnetic helicity is
also preserved.

This is a variant of the variational principle~(\ref{eq:full-MHD-principle}) at
the basis of the relaxation method of Chodura and Schl\"uter
\cite{Chodura1981, Wiegelmann2012}, Moffatt \cite{Moffatt2021}, and of the
SIESTA code \cite{Hirshman2011}.

\bibliographystyle{elsarticle-num}
\bibliography{metriplectic_relaxation_bibliography}

\end{document}